\def\cqedsymbol{\ifmmode$\lrcorner$\else{\unskip\nobreak\hfil
\penalty50\hskip1em\null\nobreak\hfil$\lrcorner$
\parfillskip=0pt\finalhyphendemerits=0\endgraf}\fi} 
\newcommand{\cqed}{\renewcommand{\qed}{\cqedsymbol}}
\newcommand{\Oof}{\mathcal{O}}
\renewcommand{\preceq}{\preccurlyeq}
\renewcommand{\succeq}{\succcurlyeq}
\newcommand{\wh}{\widehat}
\newcommand{\wt}{\widetilde}
\newcommand{\ind}[2][]{%
  \mathrel{
    \mathop{
      \vcenter{
        \hbox{\oalign{\noalign{\kern-.3ex}\hfil$\vert$\hfil\cr
              \noalign{\kern-.7ex}
              $\smile$\cr\noalign{\kern-.3ex}}}
      }
    }^{#2}\displaylimits_{#1}
  }
}
\newcommand{\from}{\colon}
\newcommand{\str}[1]{\mathbf{#1}}
\renewcommand{\cal}[1]{\mathcal {#1}}
\newcommand{\CC}{\mathscr C}
\newcommand{\DD}{\mathscr D}
\newcommand{\EE}{\mathscr E}
\renewcommand{\le}{\leqslant}
\renewcommand{\ge}{\geqslant}
\renewcommand{\leq}{\leqslant}
\renewcommand{\geq}{\geqslant}
\renewcommand{\phi}{\varphi}
\newcommand{\wcol}{\mathrm{wcol}}
\newcommand{\scol}{\mathrm{scol}}
\newcommand{\WReach}{\mathrm{WReach}}
\newcommand{\SReach}{\mathrm{SReach}}
\newcommand{\mathsym}[1]{{}}
\newlist{enumeratep}{enumerate}{10}
\setlist[enumeratep]{label=\quad\textit{\arabic*'.},ref=\arabic*',leftmargin=*}
\newenvironment{romanlist'}[0]
{\begin{list}{\makebox[0.5cm][l]{\textit{\roman{enumi}')}}}{\usecounter{enumi}}}
{\end{list}}
\newcommand{\savelabel}[2]{\expandafter\newtoks\csname#1\endcsname
  \global\csname#1\endcsname={#2} \label{#1} #2}
\newcommand{\loadlabel}[1]{\noindent {\bf Lemma~\ref{#1}. } \textit{\the\csname#1\endcsname}
\medskip

}
\renewcommand{\setminus}{-}
\newcommand{\loadlabelthm}[1]{\medskip\noindent {\bf Theorem~\ref{#1}. }
  \noindent  \textit{\the\csname#1\endcsname}
\medskip
}
\newcommand{\loadlabelprop}[1]{\noindent {\bf Proposition~\ref{#1}. }
  \noindent  \textit{\the\csname#1\endcsname}
\medskip

}
\newcommand{\N}{\mathbb{N}}
\renewcommand{\subset}{\subseteq}
\newcommand{\atleast}[1]{{\ge n}}
\newcommand{\less}[1]{{<n}}
	\newcommand{\notacol}[2]{}
\newsavebox{\quoteitbox}
\hspace*{\fill}{\upshape(\usebox{\quoteitbox})}\end{quote}%
\newenvironment{quoteit*} 
{\begin{sloppypar}\noindent\slshape\begin{quote}\itshape} 
	{\end{quote}\ignorespaces\end{sloppypar}\noindent\ignorespacesafterend}
\newenvironment{quotetag*}
{~\par
	\begingroup                  
	\begin{equation*}
		 \begin{minipage}[c]{115mm}
			\it\noindent{\par}
}
{
		\end{minipage}
	\end{equation*}
	\endgroup                        
\par
\textnormal
\medskip
}
\newcommand{\Ff}{{\mathcal F}}
\newcommand\set[1]{\ensuremath{\{#1\}}}
\newcommand{\setof}[2]{\set{#1\mid#2}}
\DeclareMathOperator{\tp}{tp}
\newcommand{\quo}[2]{\sfrac{#1}{#2}}
\newtheoremstyle{theoremstyle}
  {3pt}
  {3pt}
  {\itshape}
  {0pt}
  {\bfseries}
  {.}
  {4pt}
  {}
\theoremstyle{theoremstyle}
\newtheorem{theorem}{Theorem}[section]
\newtheorem{conjecture}{Conjecture}
\newtheorem*{theorem*}{Theorem}
\newtheorem{lemma}[theorem]{Lemma}
\newtheorem{corollary}[theorem]{Corollary}
\newtheorem{proposition}[theorem]{Proposition}
\newtheorem{claim}{Claim}
\newtheoremstyle{remarkstyle}
  {3pt}
  {10pt}
  {}
  {0pt}
  {\itshape}
  {}
  {4pt}
  {\thmname{#1}\thmnumber{ #2}\thmnote{ (#3)}.}
\theoremstyle{remarkstyle}
\newtheorem{example}{Example}[section]
\newtheoremstyle{definitionstyle}
  {3pt}
  {3pt}
  {}
  {0pt}
  {\bfseries}
  {}
  {4pt}
  {\thmname{#1}\thmnumber{ #2}\thmnote{ (#3)}.}
\theoremstyle{definitionstyle}
\newtheorem{definition}{Definition}
\numberwithin{equation}{section}
\newlength{\wideaslength}
\renewcommand{\subset}{\subseteq}
\newcommand{\seta}[1]{}
\def\lsim{\mathrel{\rlap{\lower4pt\hbox{\hskip1pt$\sim$}}
    \raise1pt\hbox{$<$}}}                
\definecolor{gray1}{rgb}{0.99,0.99,0.99}
\definecolor{gray2}{rgb}{0.97,0.97,0.97}
\definecolor{gray3}{rgb}{0.95,0.95,0.95}
\definecolor{gray4}{rgb}{0.93,0.93,0.93}
\definecolor{gray5}{rgb}{0.91,0.91,0.91}
\definecolor{gray6}{rgb}{0.89,0.89,0.89}
\definecolor{gray7}{rgb}{0.87,0.87,0.87}
\definecolor{gray8}{rgb}{0.85,0.85,0.85}
\definecolor{gray9}{rgb}{0.83,0.83,0.83}
\definecolor{gray10}{rgb}{0.81,0.81,0.81}
\definecolor{gray20}{rgb}{0.71,0.71,0.71}
\definecolor{gray40}{rgb}{0.51,0.51,0.51}
\newcommand{\ERCagreement}{
\footnotetext{\noindent
{\begin{minipage}[t]{0.7\textwidth}
\vspace{-15pt}
\small This paper is a part of projects {\sc{LIPA}} (JG) and {\sc{BOBR}} (MP, SzT) that have received funding from the European Research Council (ERC) under the European Union's Horizon 2020 research and innovation programme (grant agreements No 683080 and 948057, respectively). 
\vspace{10pt}
 \end{minipage} \hspace{5pt}
 \begin{minipage}{.23\textwidth} \vspace{5pt} 
 \includegraphics[width=\textwidth]{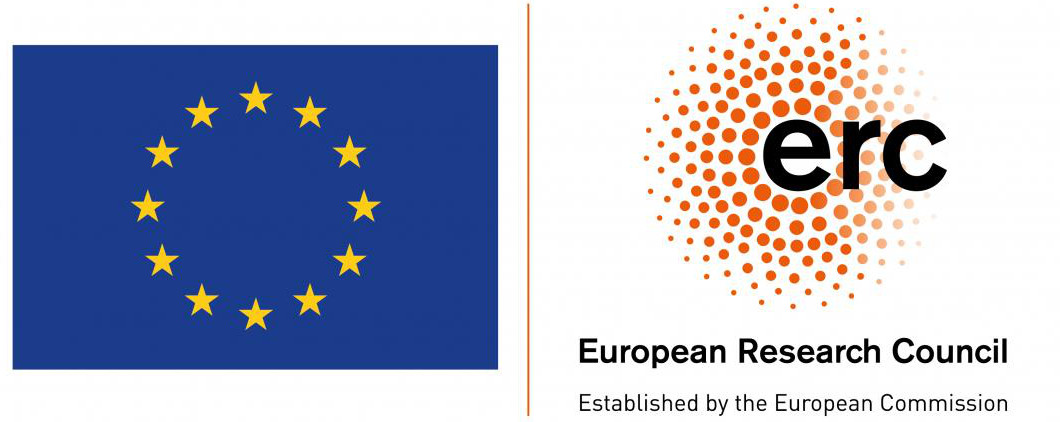}\end{minipage}\hfill}}}
 \title{Stable graphs of bounded twin-width\ERCagreement}
 \author{Jakub Gajarsk{\'y}\thanks{University of Warsaw, Poland, \texttt{jakub.gajarsky@mimuw.edu.pl}}, 
 Micha{\l} Pilipczuk\thanks{University of Warsaw, Poland, \texttt{michal.pilipczuk@mimuw.edu.pl}}, Szymon Toru{\'n}czyk\thanks{University of Warsaw, Poland, \texttt{szymtor@mimuw.edu.pl}}}
\begin{document}
\maketitle

\begin{abstract}
 We prove that every class of graphs $\CC$ that is monadically stable
 and has bounded twin-width can be transduced from some class with bounded sparse twin-width. This generalizes analogous results for classes of bounded linear cliquewidth~\cite{nesetril2021linrw_stable} and of bounded cliquewidth~\cite{nesetril2021rw_stable}. It also implies that monadically stable classes of bounded twin-width are linearly $\chi$-bounded.
\end{abstract}


\section{Introduction}
\newcommand{\msole}{\preceq_{\mathrm{MSO}}}
\newcommand{\cmsole}{\preceq_{\mathrm{CMSO}}}
\newcommand{\fole}{\preceq_{\mathrm{FO}}}
\newcommand{\msoge}{\succeq_{\mathrm{MSO}}}
\newcommand{\cmsoge}{\succeq_{\mathrm{CMSO}}}
\newcommand{\foge}{\succeq_{\mathrm{FO}}}
\newcommand{\foeq}{\equiv_{\mathrm{FO}}}

A line of work in  structural graph theory
 seeks to generalize results obtained for 
sparse graphs to graphs which are possibly dense, but also well-structured in some sense.
A classic example of this principle is the case of tree-like graphs. The standard graph parameter measuring tree-likeness for sparse graphs is treewidth, while its natural analogue in the dense setting is cliquewidth (or, equivalently, rankwidth). By now, this analogy has been well-understood from multiple points of view. For instance, the boundedness of treewidth and of cliquewidth delimits the area of algorithmic tractability of two natural variants of the monadic second-order logic (MSO) on graphs, in the sense of the existence of a fixed-parameter algorithm for model checking~\cite{courcelle90tw,courcelle2000cw}. Further, both parameters admit duality theorems linking them to the largest size of a grid that can be embedded in the considered graph as a minor (for treewidth) or as a vertex-minor (for cliquewidth)~\cite{robertson86gridminor,geelen2020vertex_gridminor}. Finally, cliquewidth ``projects'' to treewidth once we restrict attention to sparse graphs in the following sense: every class of graphs $\CC$ that has bounded cliquewidth and is weakly sparse, in fact has bounded treewidth. Here, we say that a class $\CC$ has bounded parameter $\pi$ if there is a universal upper bound on the value of $\pi$ in the members of~$\CC$, and $\CC$ is {\em{weakly sparse}} if there is $s\in \N$ such that all members of $\CC$ exclude the biclique $K_{s,s}$ as a subgraph.

%
%



Arguably, requiring that a class of graphs has bounded  treewidth or cliquewidth is very restrictive, as even very simple graph classes, such as grids, have unbounded values of these parameters. 
While treewidth and cliquewidth explain well the limits of tractability of problems expressible in MSO, the analogous realm for the first-order logic (FO) is much broader, and  not yet fully understood. The ultimate goal of completing this understanding is the fundamental motivation behind this work.

So far, the limit of tractability of model-checking FO  has been thoroughly explored in classes of sparse structures. In this context, {\em{nowhere denseness}} has been identified as the main dividing line. Roughly, a class of graphs $\CC$ is {\em{nowhere dense}} if for every $r\in \N$, one cannot obtain arbitrarily large complete graphs by contracting mutually disjoint connected subgraphs of radius at most $r$ in graphs from $\CC$. This notion is very general, as it encompasses most well-studied concepts of sparsity in graphs, including having bounded treewidth, bounded degree, excluding a fixed (topological) minor, or having bounded expansion.
As it turns out, under plausible complexity-theoretic assumptions, for every subgraph-closed class of graphs $\CC$ the model-checking problem for FO is fixed-parameter tractable on $\CC$ if and only if $\CC$ is nowhere dense~\cite{grohe2017fo_nd}. 

In this statement, the assumption that $\CC$ is subgraph-closed is crucial. For instance, FO model-checking is fixed-parameter tractable on any class of bounded cliquewidth, however these classes are not nowhere dense. The explanation here is that they are not subgraph-closed either. Identifying the dividing line for fixed-parameter tractability of model checking FO on all classes of graphs is the central open problem in the area.

%

\paragraph*{Transductions.}
Drawing inspiration from model theory,
to study the expressive power of FO on a given class $\CC$ of graphs, we look at the classes $\DD$ of graphs which can be obtained from graphs from $\CC$ using transformations definable in FO. This idea is best formalized by the notion of an (FO) \emph{transduction}. Write $\DD\fole\CC$, and say that $\DD$ can be {\em{transduced}} from $\CC$,
if every graph  $H\in \DD$ can be obtained from
some graph $G\in \CC$ by first creating a fixed number of copies of $G$,  then coloring the vertices of these copies arbitrarily,  applying a fixed FO-formula $\phi(x,y)$  (which can use the colors just introduced, and distinguish copies of the same vertex), thus defining a new edge relation, and finally, taking an induced subgraph of the resulting graph.
For example, if $\DD$ is the class consisting of edge-complementations of graphs from $\CC$, then $\DD\fole\CC$, as we can take the formula $\phi(x,y)=\neg E(x,y)$, where $E(x,y)$ is the edge relation.

The relation $\fole$ defines a quasi-order on graph classes. Two graph classes $\CC$ and $\DD$ are \emph{transduction equivalent}, denoted $\CC\foeq \DD$, if $\CC\fole \DD\fole \CC$, that is, each can be transduced from the other.


%
%
%

\begin{figure}[h]\centering
    \includegraphics[page=1,scale=0.60]{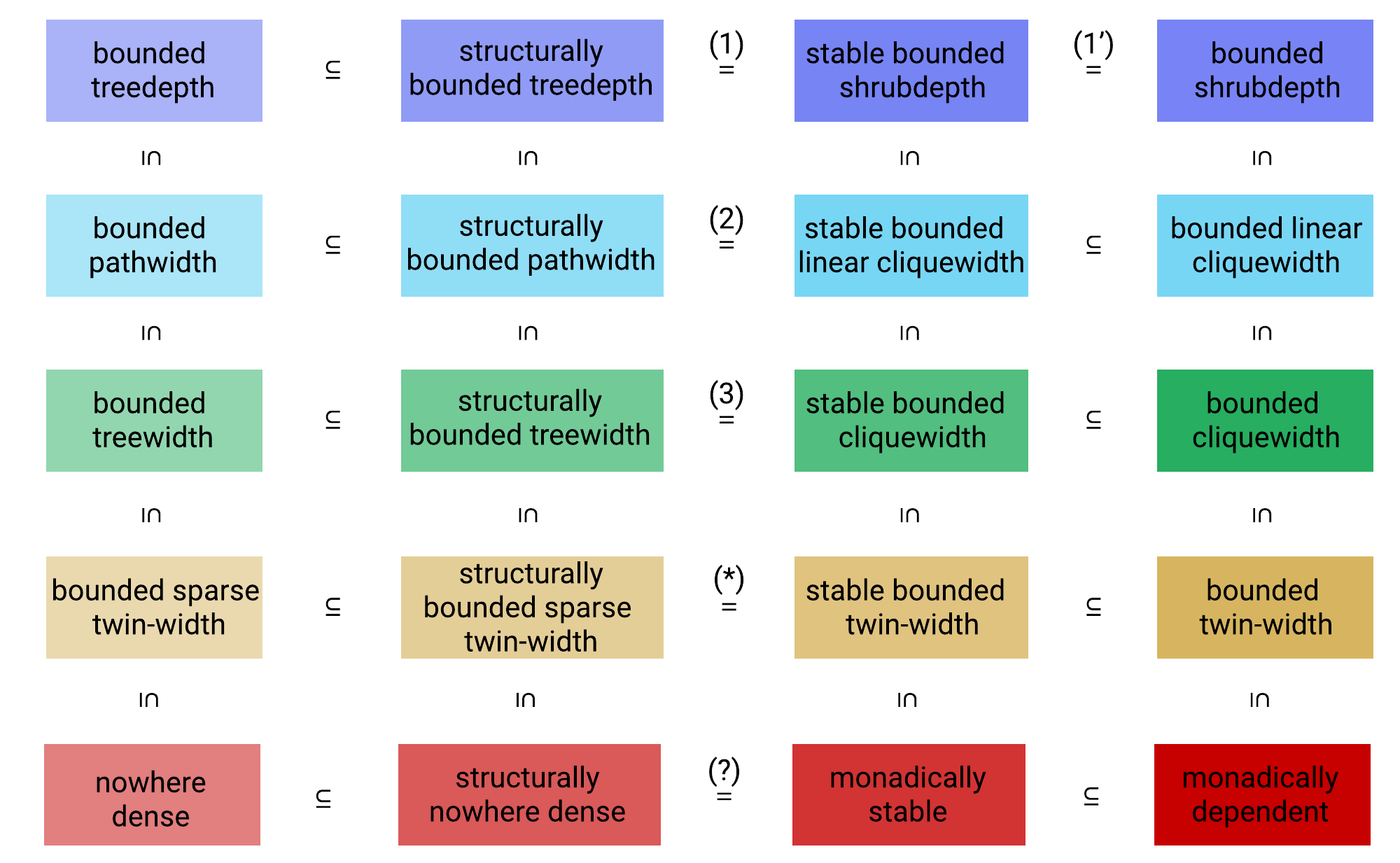}
    \caption{A roadmap of properties of hereditary graph classes. 
    If a class satisfies one of the properties in the first column, then it is weakly sparse, that is, excludes some biclique as a subgraph. 
    The property `structurally $\cal P$' consists of transductions of classes with property $\cal P$. The property `stable $\cal P$' consists of those classes that satisfy property $\cal P$ and are stable, that is, exclude 
    some ladder (cf. Fig.~\ref{fig:ladders}) as a semi-induced subgraph.
    Each property in the second, third and fourth column is a transduction ideal (is downward closed with respect to $\fole$).
    For every row $(\cal P_1,\cal P_2,\cal P_3, \cal P_4)$ in the table,
    the property $\cal P_1$ consists of all classes in $\cal P_4$ that are weakly sparse;
    the property $\cal P_2$ is the property of being `structurally $\cal P_1$'; the property $\cal P_3$ consists of all classes in $\cal P_4$ which are stable.
    The inclusion $\cal P_2\subset \cal P_3$ holds in each row, and equality $\cal P_2=\cal P_3$ holds for the first four rows $(1)$, $(2)$, $(3)$, $(*)$,
    with $(*)$ being our main result, Theorem~\ref{thm:main}.
    Equality $(?)$ is
    Conjecture~\ref{conj:sparsification}. All remaining inclusions in the figure are strict.
    }
    \label{fig:intro}
\end{figure}

The most general notion of well-structuredness that one can consider in this context is \emph{monadic dependence}, defined as follows: a class of graphs $\CC$ is \emph{monadically dependent} if it is not transduction equivalent to the class of all graphs. Here \emph{monadically} refers not to the logic, but to the ability of transductions to apply arbitrary colorings which can be then accessed by the formulas.
It appears that all the mentioned properties of graph classes, in particular nowhere denseness and having bounded cliquewidth, imply monadic dependence.
See Fig.~\ref{fig:intro} for a roadmap of various properties of graph classes which we will discuss later.

Remarkably, it turns out that monadic dependence projects to nowhere denseness in the same sense as was discussed for cliquewidth and treewidth: every weakly sparse class that is monadically dependent is actually nowhere dense~\cite{Dvorak18} (see also~\cite{nesetril2021linrw_stable}). Thus, we have the following equivalence of notions of combinatorial, logical, and algorithmic nature:
\begin{theorem}\label{thm:nd}
Assuming $\mathsf{AW}[\star]\neq \mathsf{FPT}$, the following  conditions are equivalent for every weakly sparse hereditary class of graphs $\CC$:
\begin{enumerate}[nosep]
    \item $\CC$ is nowhere dense,  
    \item $\CC$ is monadically dependent,
and
\item  model checking first-order logic is fixed-parameter tractable on $\CC$.
\end{enumerate}
\end{theorem}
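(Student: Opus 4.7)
The plan is to establish the cycle of implications $(1) \Rightarrow (3) \Rightarrow (2) \Rightarrow (1)$, each of which essentially reduces to a result already advertised in the introduction.

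For $(1) \Rightarrow (3)$, the plan is to invoke directly the Grohe--Kreutzer--Siebertz theorem from \cite{grohe2017fo_nd}: on every nowhere dense class of graphs, model checking for first-order logic is fixed-parameter tractable (parameterized by the formula). No further argument is needed for this direction, and the assumption $\mathsf{AW}[\star]\neq\mathsf{FPT}$ is not used here.

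For $(3)\Rightarrow (2)$, I would argue by contraposition. Suppose $\CC$ is not monadically dependent; then $\CC\foeq \mathsf{GRAPHS}$, so there exists an FO transduction $\mathsf T$ producing every finite graph from some member of $\CC$. Given an input graph $H$ and an FO sentence $\phi$ to be evaluated on $H$, I would fix a graph $G\in\CC$ with $H\in \mathsf T(G)$ and rewrite $\phi$ into an FO sentence $\phi^\star$ on $G$, in which the auxiliary copies and colors used by $\mathsf T$ are handled by bounded-arity existential quantification over unary predicates, and the edge atom of $H$ is replaced by the defining formula of $\mathsf T$. This is the standard ``pullback along a transduction'' construction, and it translates an FPT algorithm for FO model checking on $\CC$ into one on $\mathsf{GRAPHS}$. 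Since FO model checking on $\mathsf{GRAPHS}$ is $\mathsf{AW}[\star]$-complete, this contradicts the assumption $\mathsf{AW}[\star]\neq\mathsf{FPT}$.

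For $(2)\Rightarrow (1)$, I would cite directly the result of Dvo\v r\'ak \cite{Dvorak18} already stated in the introduction: every weakly sparse monadically dependent class of graphs is nowhere dense. Since $\CC$ is assumed weakly sparse, this closes the cycle.

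None of the three implications is really the ``hard'' step in the proof; all of them are citations plus a standard translation. The only mildly technical point is the pullback construction in $(3)\Rightarrow(2)$, where one must verify carefully that the existential quantification over colorings, the finite blow-up into copies, and the choice of a preimage $G\in\CC$ of $H$ can all be carried out within the FPT framework, so that $\mathsf{AW}[\star]$-hardness transfers from $\mathsf{GRAPHS}$ to $\CC$. Since in our setting the transduction $\mathsf T$ is fixed once $\CC\foeq\mathsf{GRAPHS}$ is known, and the colorings it uses are of bounded cardinality, this is routine.
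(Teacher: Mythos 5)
The implications $(1)\Rightarrow(3)$ and $(2)\Rightarrow(1)$ are correct and are exactly the citations the paper intends (the paper does not give a self-contained proof of this theorem; it states it as a consequence of known results). The gap is in $(3)\Rightarrow(2)$. You write that the pullback sentence $\phi^\star$ handles the colors of $\mathsf T$ ``by bounded-arity existential quantification over unary predicates.'' Quantifying existentially over unary predicates is monadic second-order, not first-order, so $\phi^\star$ is not an FO sentence, and an FPT routine for FO model checking on $\CC$ does not apply to it. If instead the reduction itself fixes a coloring $c$, then the instance it produces is a colored structure $(k\times G, c)$, which lies in a unary expansion of $\CC$ rather than in $\CC$ itself; whether FO model checking on unary expansions of $\CC$ reduces back to FO model checking on the uncolored class $\CC$ is precisely the structural content that makes the hardness direction of~\cite{grohe2017fo_nd} nontrivial (there one locates shallow subdivisions of all graphs inside the class and uses them to encode the color predicates combinatorially), and it is not available for free. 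Your closing remark that ``the colorings it uses are of bounded cardinality, this is routine'' conflates the bounded number of color \emph{predicates} in $\mathsf T$ with the exponentially many \emph{colorings} of $V(G)$ over which one would have to quantify.

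Note also that your $(3)\Rightarrow(2)$, as written, never invokes weak sparsity or hereditariness, so if it were correct it would prove the hardness half of Conjecture~\ref{conj:dependent-mc} for all hereditary classes---which is not a routine consequence of the cited results. The intended chain of implications is rather to establish $(1)\Leftrightarrow(2)$ purely combinatorially (Dvo\v r\'ak~\cite{Dvorak18} for $(2)\Rightarrow(1)$, and the fact that nowhere dense classes are monadically stable and hence monadically dependent~\cite{adler2014interpreting} for $(1)\Rightarrow(2)$), and $(1)\Leftrightarrow(3)$ via~\cite{grohe2017fo_nd}, where the weak sparsity and hereditariness hypotheses do real work in transporting the hardness argument from the subgraph-closed setting.
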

Since both nowhere denseness and having bounded cliquewidth imply fixed-parameter tractability of model checking FO on a given class of graphs, while monadic dependence is their common generalization, this suggests the following conjecture\footnote{This conjecture has been
circulating in the community for some time, see e.g. 
the open problem session at the workshop on Algorithms, Logic and Structure in Warwick in 2016. See also \cite[Conjecture 8.2]{gajarsky2020bd_interp}.}.

\begin{conjecture}\label{conj:dependent-mc}
    For every hereditary class of graphs $\CC$,
    model checking first-order logic on $\CC$ is fixed-parameter tractable if, and only if  $\CC$ is monadically dependent. 
\end{conjecture}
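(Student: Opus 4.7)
The ``only if'' direction follows the pattern of Theorem~\ref{thm:nd} and is the easier half. If $\CC$ is not monadically dependent then by definition there exists a transduction $\tau$ such that every graph arises as $\tau(G)$ for some $G\in\CC$. Combined with the $\mathsf{AW}[\star]$-hardness of first-order model checking on the class of all graphs, an fpt-algorithm on $\CC$ would yield one on all graphs via the usual pullback of formulas and colorings along $\tau$, contradicting $\mathsf{AW}[\star]\neq\mathsf{FPT}$. The delicate technical point is the effective construction of a preimage $G\in\CC$ from a given input $H$; this has been handled in related settings by color-enumeration and exploitation of hereditariness of $\CC$, and the same toolbox should apply here.

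For the ``if'' direction, the plan I would pursue is to reduce, via transductions, to the nowhere dense case and then invoke the algorithm of Grohe, Kreutzer and Siebertz~\cite{grohe2017fo_nd}. Concretely, the target is a two-step \emph{sparsification}: first show that every hereditary monadically dependent class $\CC$ satisfies $\CC\fole \CC'$ for some weakly sparse $\CC'$, and then apply Theorem~\ref{thm:nd} to conclude that $\CC'$ is nowhere dense. The present paper establishes exactly such a sparsification under the additional hypotheses of monadic stability and bounded twin-width, extending~\cite{nesetril2021linrw_stable,nesetril2021rw_stable}. A natural program is to iteratively weaken the hypotheses: first drop the twin-width restriction, obtaining sparsification for arbitrary monadically stable classes, and only then relax stability to dependence. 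To complete the algorithmic reduction one must additionally produce an effective inverse for the sparsifying transduction, i.e.\ given $H\in\CC$ compute in fpt-time some $G\in\CC'$ with $H\in\tau(G)$; this ingredient is already delicate in the known instances and will have to be engineered alongside any new sparsification theorem.

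The principal obstacle is the sparsification step in the bare dependent case. Monadic stability forbids a single crisp combinatorial pattern --- the semi-induced ladder --- and therefore admits tools from classical model theory (definability of types over finite sets, indiscernible sequences, and the flip decompositions used by this paper and its predecessors). Monadic dependence, in contrast, is controlled only by the uniform boundedness of VC-dimension of definable set systems, and no analogous structural decomposition is presently available. Designing a transduction-invariant structural parameter that captures monadic dependence, coincides with nowhere denseness in the weakly sparse regime, and admits an algorithmic sparsification procedure, is, to my mind, the heart of Conjecture~\ref{conj:dependent-mc}; once in place, the remaining algorithmic pieces should fall into line along the same pattern that this paper establishes for monadically stable classes of bounded twin-width.
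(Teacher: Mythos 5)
This statement is Conjecture~\ref{conj:dependent-mc}, which the paper explicitly leaves \emph{open} --- there is no proof of it in the paper to compare against, so there is no sense in which a proposal can ``match the paper's proof''. You correctly recognize this and offer a roadmap rather than a proof; that part of your judgment is sound, as is your observation about the ``only if'' direction (which requires the complexity assumption $\mathsf{AW}[\star]\neq\mathsf{FPT}$ that the conjecture's phrasing omits, and which has the genuine issue of effectively inverting the transduction).

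However, there is a concrete error in your proposed ``if'' direction that goes beyond ``this step is hard''. You propose, as step one, to show that \emph{every} hereditary monadically dependent class $\CC$ satisfies $\CC\fole\CC'$ for some weakly sparse $\CC'$. This is \emph{false}, not merely open. A weakly sparse monadically dependent class is nowhere dense (Theorem~\ref{thm:nd} / \cite{Dvorak18}), hence monadically stable, hence every class transducible from it is monadically stable. But there are monadically dependent classes that are not monadically stable --- the class of all ladders, or more generally any unstable class of bounded twin-width --- and none of these can be transduced from a weakly sparse class. Sparsification in the sense you describe is exactly the content of Conjecture~\ref{conj:sparsification} and is only possible under the \emph{stability} hypothesis; dropping stability in favor of mere dependence is not a further relaxation of the same program but a change of regime where the sparsification plan provably cannot apply. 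Your last paragraph gestures at this by noting that dependence lacks the structural handles of stability, but the intermediate step of reducing dependence to the weakly sparse case has to be abandoned, not merely engineered harder. The paper's own discussion around Theorem~\ref{thm:ordered} points toward the kind of fix one would need: decompose a monadically dependent structure into ``orderless'' (stable, to be handled via sparsification) and ``orderfull'' (ordered, to be handled via twin-width) parts, rather than trying to sparsify the whole thing.
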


A positive verification of Conjecture~\ref{conj:dependent-mc} would place the dividing line for algorithmic tractability of FO on graph classes exactly at the notion of monadic dependence.

\paragraph*{Stability.} 
Observe that the discussed properties of classes of sparse graphs --- having bounded treewidth and nowhere denseness --- are not closed under taking FO transductions, as witnessed by edge complementation. On the other hand, monadic dependence and having bounded cliquewidth are closed under taking FO transductions. Hence, here is a natural question: every image of a class of bounded treewidth under an FO transduction has bounded cliquewidth, but is it the case that every class of bounded cliquewidth can be transduced from a class of bounded treewidth? The same can be asked about nowhere denseness and monadic dependence.

The answer here is negative and is delivered by another important dividing line originating in model theory: {\em{stability}}. We say that a class of graphs $\CC$ is {\em{monadically stable}} if $\CC \not\foge\textit{Ladders}$, where $\textit{Ladders}$ is the class of all ladders\footnote{Ladders are often also called {\em{half-graphs}} in the literature.}, as depicted in Fig.~\ref{fig:ladders}. Clearly, monadic stability is a property of a graph class that is preserved by FO transductions. Further, it turns out that every nowhere dense class is monadically stable~\cite{adler2014interpreting}, hence by applying an FO transduction to a nowhere dense class one can only obtain classes which are monadically stable. This explains the second and third column in Fig.~\ref{fig:intro}. Note that the class of ladders has bounded cliquewidth but is not monadically stable, hence it can serve as an example distinguishing notions from the third (stable) column and the fourth (dependent) column.

Let us remark that even though monadic stability is a notion originating in model theory, in case of monadically dependent classes of graphs it can be understood in purely graph-theoretical terms. As proved in~\cite{nesetril2021rw_stable}, a monadically dependent class is monadically stable if and only if it excludes some fixed ladder as a {\em{semi-induced subgraph}}, that is, as an induced subgraph except that we allow any adjacencies within the sides of the ladder. This means that the notions in the third column of Fig.~\ref{fig:intro} can be obtained from the notions in the fourth column by restricting attention to monadically stable classes, or equivalently to classes that exclude a fixed ladder as a semi-induced subgraph.

\begin{figure}\centering
    \includegraphics[page=2,scale=1]{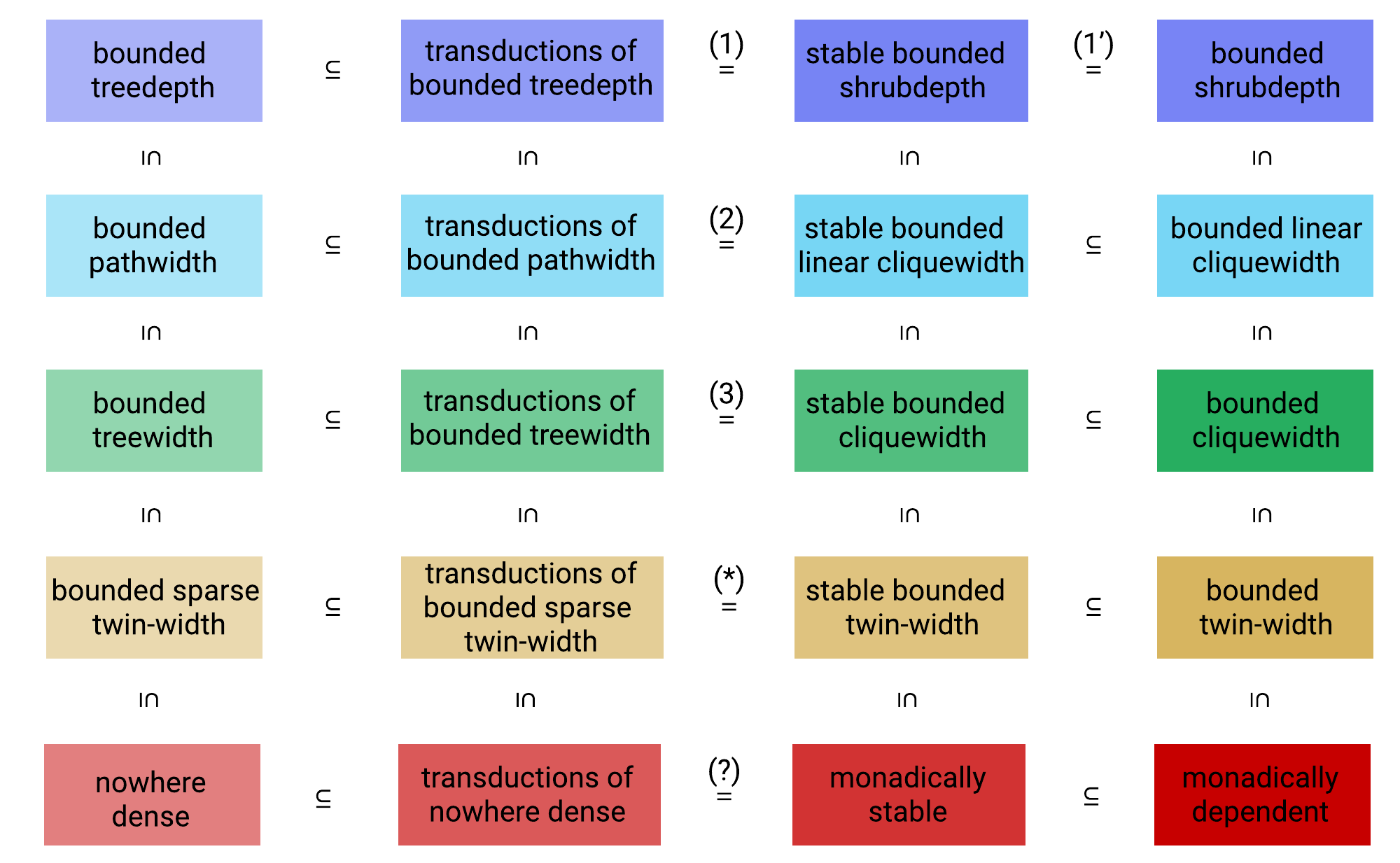}
    \caption{A ladder of length $6$.}
    \label{fig:ladders}
\end{figure}

Is it then the case that monadic stability exactly characterizes classes of graphs that can be transduced from classes of sparse graphs? The following conjecture says that this is the case.

\begin{conjecture}[\cite{Mendez21}]\label{conj:sparsification}
 For every monadically stable class of graphs $\CC$ there exists a nowhere dense class $\DD$ such that $\CC\fole \DD$. 
\end{conjecture}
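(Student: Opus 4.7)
The plan is to attack the sparsification conjecture head-on by constructing, for every monadically stable class $\CC$, an ambient sparse class $\DD$ from which $\CC$ is transducible, using the recent flip-theoretic characterization of monadic stability as a combinatorial handle. Since no structural decomposition theorem for general monadically stable classes is currently known (in contrast to the bounded twin-width case handled in this paper, or the bounded cliquewidth case of~\cite{nesetril2021rw_stable}), the strategy has to pass through a local-to-global argument rather than a direct structural recursion.

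First, I would invoke the characterization that a hereditary class $\CC$ is monadically stable if and only if it is flip-flat: for every radius $r\in\N$ there is a constant $k_r$ such that every $G\in\CC$ admits a bounded coloring and at most $k_r$ vertex-set flips (complementations definable from the coloring) after which every ball of radius $r$ has size bounded by a constant. The second step would be to organize these flips into a uniform, transduction-friendly gadget: concretely, I would build an auxiliary graph $H$ on the same vertex set as $G$, equipped with a sparse ``skeleton'' edge set derived from the post-flip local neighborhoods across a tower of scales $r=1,2,4,8,\ldots$, together with vertex labels recording each vertex's flip-profile at each scale. Third, using the stability assumption, I would argue that the total number of distinct profiles appearing in any $G\in\CC$ is polynomially bounded in the number of vertices (exploiting that stable formulas admit bounded-VC indiscernibility arguments), so the labels can be encoded by $\Oof(1)$ further copies, keeping the construction within a transduction. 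Fourth, the edge relation of $G$ would be recovered by a single FO formula $\phi(x,y)$ that, reading off the profiles of $x$ and $y$ and the presence of a skeleton path between them, decides whether an original flip restored the edge $xy$.

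The heart of the argument, and where the main obstacle lies, is proving that the resulting class $\DD$ of all such auxiliary graphs $H$ is nowhere dense. One would need to show that shallow minors of $H$ cannot be arbitrarily dense, which should reduce to a statement that in a stable class one cannot find arbitrarily large ``shallow cliques'' hidden inside the skeleton; this is morally implied by the exclusion of large semi-induced ladders, but turning this intuition into a quantitative nowhere denseness bound is nontrivial. The difficulty is that the flip sets at different scales and different vertices interact in a potentially non-coherent way, and controlling these interactions uniformly across $\CC$ seems to require a stable analogue of the present paper's structural contribution for twin-width.

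I fully expect the hardest step to be this global coherence of flips: all the available partial results (bounded linear cliquewidth, bounded cliquewidth, and now bounded twin-width) succeed precisely because an underlying decomposition---a linear order, a clique-decomposition tree, or a contraction sequence---provides a canonical scaffolding along which flips can be organized. Without such a scaffolding, completing the program most likely requires first proving a new structural theorem for monadically stable classes, perhaps of the form ``every monadically stable class has a canonical bounded-depth sparse quasi-decomposition,'' which would be a substantial result in its own right and should be regarded as the true obstacle standing between the present paper's theorem and the full Conjecture~\ref{conj:sparsification}.
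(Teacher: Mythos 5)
The statement you are trying to prove is Conjecture~\ref{conj:sparsification}, which is an \emph{open conjecture} in this paper (attributed to Ossona de Mendez); the paper does not prove it and does not claim to. What the paper proves is only the special case where $\CC$ additionally has bounded twin-width (Theorem~\ref{thm:main}), precisely because in that case a contraction sequence supplies the scaffolding along which the flips and the recursion on ladder index can be organized. So there is no proof in the paper to compare yours against, and your text should not be read as a proof of the conjecture either: you yourself flag that the decisive step --- showing that the auxiliary class $\DD$ of skeleton graphs is nowhere dense, equivalently achieving ``global coherence'' of the flips across scales and across vertices --- is missing and likely requires a new structural theorem for monadically stable classes. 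That is exactly the gap; a proposal whose central claim is conditional on an unproved structural decomposition is a research plan, not a proof.

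Beyond the self-acknowledged gap, one concrete step in your sketch would fail as written. In your third step you argue that the number of distinct flip-profiles in $G\in\CC$ is \emph{polynomially} bounded in $|V(G)|$ and that the labels ``can be encoded by $\Oof(1)$ further copies, keeping the construction within a transduction.'' A transduction (as defined in Section~\ref{sec:prelims}) permits only a fixed finite number of unary predicates and a fixed finite number of copies; it can therefore encode only a \emph{constant} number of profile values per vertex, not polynomially many. To make this step work you would need the number of relevant profiles to be bounded by a constant depending only on $\CC$ and the radius scale --- which is essentially what the flip-flatness constants $k_r$ give you for each fixed $r$, but you are stacking a tower of scales $r=1,2,4,\ldots$, and an unbounded tower of scales again breaks the constant-size requirement. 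Any honest attack along these lines has to fix a single finite collection of scales depending only on $\CC$ (or find another way to truncate the hierarchy), and justifying that truncation is part of the same missing structural theorem you point to.
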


One could intuitively understand Conjecture~\ref{conj:sparsification} as follows: whenever $\CC$ is monadically stable, for each $G\in \CC$ one can find a sparse ``skeleton'' graph $H$ such that $G$ can be encoded in $H$ in a way that is decodable by an FO transduction. The class $\DD$ comprising all skeleton graphs $H$ is nowhere dense.

Conjecture~\ref{conj:sparsification} is corroborated by the following two results on more restrictive properties.

\begin{theorem}[\cite{nesetril2021linrw_stable}]\label{thm:sparse-lin-cw}
 Every class of graphs that is monadically stable and has bounded linear cliquewidth is transduction equivalent to a class of bounded pathwidth.
\end{theorem}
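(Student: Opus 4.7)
Transduction equivalence demands two directions: $\CC\fole\DD$ and $\DD\fole\CC$ for some class $\DD$ of bounded pathwidth. Since any class of bounded pathwidth has bounded linear cliquewidth (and, being weakly sparse, is in particular monadically stable), the content of the theorem lies in the ``sparsification'' direction $\CC\fole\DD$. I plan to construct, for each $G\in\CC$, an explicit sparse skeleton $H_G$ of bounded pathwidth from which $G$ can be FO-interpreted; the class $\DD\coloneqq\{H_G:G\in\CC\}$ will then witness $\CC\fole\DD$. The reverse direction $\DD\fole\CC$ will come essentially for free, since the skeletons $H_G$ will be defined by an FO formula applied to $G$ equipped with a fixed palette of vertex colours, which is exactly what an FO transduction is allowed to do.

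Fix $G\in\CC$ and a linear $k$-expression for $G$. This provides a linear order $v_1<v_2<\dots<v_n$ of the vertices together with, for every $i$, a bounded-length ``label trace'' recording the sequence of labels carried by $v_i$. Crucially, the adjacency of $v_i$ and $v_j$ (for $i<j$) is determined by the pair of labels the two vertices carry at the moment $v_j$ is introduced, and by the subsequent add-edge operations fired before $v_j$ acquires its final label. Consequently, the vertex set $V(G)$ partitions into a bounded number $r=r(k)$ of label classes $A_1,\dots,A_r$, and adjacency is determined by a bounded table of ``adjacency rules'' indexed by ordered pairs of classes.

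Now I would invoke monadic stability. For every ordered pair $(A_s,A_t)$, the bipartite graph carved out of $G$ by restricting to pairs $a\in A_s$, $b\in A_t$ with $a<b$ contains no semi-induced ladder longer than some fixed~$\ell$. By a standard stability/Ramsey argument (a bipartite variant of the Shelah--Sauer style dichotomy used in \cite{nesetril2021rw_stable}), this implies that, after refining each class $A_s$ into boundedly many sub-classes, the adjacency between any two refined classes becomes \emph{interval-constant}: each class splits into a bounded-depth tree of intervals along the linear order such that, at the level of intervals, the bipartite adjacency is a single constant (full, empty, or determined by order). This is the combinatorial heart of the argument and, to my mind, the main obstacle: one must iterate the ladder-exclusion over all $O(r^2)$ label pairs while controlling how the interval decomposition interacts with the relabellings performed by the $k$-expression, so that the number and nesting depth of intervals remain uniformly bounded.

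With that structural description in hand, $H_G$ is built as follows. Its backbone is the path $v_1-v_2-\dots-v_n$ (realising the linear order), augmented by a bounded number of ``marker'' vertices per position $v_i$ that encode the interval in which $v_i$ lies at each level of the decomposition and the final label trace of $v_i$. A single globally attached ``rule gadget'' of constant size records, for every pair of intervals, the constant adjacency type. Because at every prefix of the path only boundedly many markers and gadget vertices are relevant, $H_G$ admits a path decomposition whose bags slide along the order and have bounded width; hence $\DD$ has bounded pathwidth. An FO formula $\phi(x,y)$ can recover the adjacency of any two vertices $x,y$ of $G$ inside $H_G$ by reading off their label traces and the intervals they occupy, then looking up the appropriate entry of the rule gadget. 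This yields $\CC\fole\DD$, and since the markers are FO-definable from $G$ together with a colouring encoding the label traces and interval decomposition, the converse transduction $\DD\fole\CC$ is immediate, completing the equivalence.
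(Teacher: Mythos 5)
Your plan has a genuine gap at the second step, before you even reach what you identify as the ``main obstacle.'' In a linear $k$-expression the label carried by a vertex is not a piece of bounded information: relabelling operations can change a vertex's label arbitrarily many times over the course of the expression (e.g.\ cycling through a three-element label set), so ``label traces'' have unbounded length, and the adjacency of $v_i$ and $v_j$ with $i<j$ genuinely depends on the entire synchronous evolution of both vertices' labels after $v_j$ is introduced, not on the labels at the moment of $v_j$'s introduction together with a bounded amount of auxiliary data. Consequently there is no bounded partition $A_1,\dots,A_r$ of $V(G)$ with adjacency governed by a fixed $r\times r$ rule table; if there were, every graph of bounded linear cliquewidth would already have bounded shrubdepth, which is false. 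The ``interval-constant after bounded refinement'' reduction you sketch is built on this premise, and the pathwidth bound for your $H_G$ (boundedly many markers per position) inherits it, so the rest of the construction cannot be salvaged without replacing this step.

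What is actually needed to control the unbounded relabelling history is a Ramsey-theoretic factorization of the operation sequence, and indeed this is precisely what the cited proof in~\cite{nesetril2021linrw_stable} uses: Simon's factorization theorem~\cite{simon90factorization} and its deterministic variant~\cite{colcombet2007factorization} yield a bounded-depth hierarchical decomposition of the operation word into idempotent factors, and it is relative to \emph{that} tree of factors that one can speak of bounded ``effective label'' data; stability then enters only to control alternations at each level of the factorization, not to produce a flat bounded partition of the vertex set. The present paper does not re-run that argument; instead it obtains Theorem~\ref{thm:sparse-lin-cw} as a corollary of the far more general Theorem~\ref{thm:main} (via Corollary~\ref{cor:sparsification-tww} with $\cal P$ taken to be ``bounded linear cliquewidth''), whose proof uses a freezing mechanism along an uncontraction sequence and induction on the quasi-ladder index rather than any factorization of a $k$-expression. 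Either route would be acceptable; your proposal follows neither, because its central structural claim about label classes is false.
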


\begin{theorem}[\cite{nesetril2021rw_stable}]\label{thm:sparse-cw}
 Every class of graphs that is monadically stable and has bounded cliquewidth is transduction equivalent to a class of bounded treewidth.
\end{theorem}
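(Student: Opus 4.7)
The plan is to establish the transduction equivalence by constructing, for each $G \in \CC$, a canonical ``skeleton'' of bounded treewidth from which $G$ can be recovered by a single FO transduction, and then verifying that the resulting skeleton class $\DD$ can itself be transduced back from $\CC$. The easy direction $\DD \fole \CC$ will follow because the skeleton is definable in $G$ after adding a bounded coloring that marks the decomposition; the substantial direction is $\CC \fole \DD$.

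For the hard direction, I would proceed in three main steps. First, take any witness of bounded cliquewidth for $G \in \CC$: a binary decomposition tree $T_G$ whose leaves are the vertices of $G$, so that at each internal node $t$ the bipartite graph of edges between the two subtrees below $t$ decomposes as a union of a bounded number of bicliques. The edge relation of $G$ is thus determined by which ``type class'' (among a fixed bounded set) each vertex occupies at each ancestor node, together with the bicliques specified at that node.

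Second, invoke monadic stability to compress these local type data into a globally bounded alphabet. This is the crux of the argument. The key lemma should say: if along some root-to-leaf path the sequence of type transitions had unbounded complexity over $\CC$, then by selecting many vertices with suitably chosen paths one could extract an arbitrarily large semi-induced ladder in $G$, contradicting monadic stability via the combinatorial characterization recalled earlier. After this reduction each vertex is labeled by a bounded-length sequence of types, and these labels together with the tree structure fully determine the adjacencies of $G$.

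Third, define $\DD$ as the class of all such canonical labeled trees. Each member of $\DD$ has bounded treewidth because it is a tree decorated with labels from a fixed finite alphabet. A single FO formula $\phi(x,y)$ then recovers adjacency in $G$ from $T_G$: it locates the lowest common ancestor of $x$ and $y$ and inspects its label to decide whether the types of $x$ and $y$ at that node place them in one of the associated bicliques. The main obstacle is the compression argument in the second step: one must establish that the exclusion of semi-induced ladders forces a uniform bound on the number of distinct type-transition patterns that can arise across all of $\CC$. This requires a delicate Ramsey-theoretic consolidation inside decomposition trees, in the spirit of, and presumably generalising, the technique used for linear cliquewidth in Theorem~\ref{thm:sparse-lin-cw}.
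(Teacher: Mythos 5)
This theorem is cited from \cite{nesetril2021rw_stable}; the present paper does not re-prove it directly but obtains it as a corollary of its own main result (Theorem~\ref{thm:main} via Corollary~\ref{cor:sparsification-tww}), via a twin-width argument that the paper explicitly describes as ``very different'' from the decomposition-tree proof in \cite{nesetril2021rw_stable}. Your sketch is an attempt to reconstruct the latter, so it is worth checking against that, and there are two concrete gaps.

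First, you dismiss the direction ``$\DD$ transduced from $\CC$'' as easy (``mark the decomposition''), but it is not: the cited work only proves $\CC \fole \DD$. The full transduction \emph{equivalence} is obtained by combining $\CC\fole\DD$ (which gives structurally bounded expansion) with the main theorem of \cite{gajarsky2020sbe} (Theorem~\ref{thm:SBE} here), which produces a canonical sparsification $\CC'\foeq\CC$ of bounded expansion; one then argues $\CC'$ is weakly sparse and of bounded cliquewidth, hence of bounded treewidth. This is exactly the argument carried out in the proof of Theorem~\ref{thm:main} on p.~\pageref{proof:main}. Your sketch conflates this nontrivial step with merely labelling a decomposition, and in any case the decomposition tree has more nodes than $G$ and is not FO-definable from $G$ even up to colouring.

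Second, and more importantly, the step ``$\phi(x,y)$ locates the lowest common ancestor of $x$ and $y$'' is where the real difficulty lives. Ancestorhood (hence LCA) in a tree of unbounded depth is not FO-definable: it is a transitive closure. Handing FO a labelled binary cliquewidth tree does not make adjacency recoverable. The role of the Simon-factorisation / Ramsey machinery in \cite{nesetril2021rw_stable} (and its deterministic variant \cite{colcombet2007factorization}) is not merely to ``compress type transitions to a bounded alphabet'' as you say, but to replace the raw decomposition tree by a recursively factored structure of \emph{bounded depth}, so that FO can navigate from a pair of leaves to the relevant ancestors in a bounded number of hops. Your second step speaks of bounding the ``number of distinct type-transition patterns'' via a ladder argument, which is a plausible consequence of stability, but you never address how FO is to reach the place where those patterns are to be compared. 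Without the bounded-depth factorisation, the third step of your plan fails at the very first quantifier.
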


Here, {\em{linear cliquewidth}} is a linear variant that relates to cliquewidth in a similar way as pathwidth relates to treewidth. Theorem~\ref{thm:sparse-lin-cw} and~\ref{thm:sparse-cw} correspond to equalities in the second and third row in Fig.~\ref{fig:intro}. 

Let us remark that the works~\cite{nesetril2021linrw_stable,nesetril2021rw_stable} claim only one direction of the implications: that every monadically stable class of bounded cliquewidth (resp. linear cliquewidth) can be transduced from a class of bounded treewidth (resp. pathwidth). The equivalence stated in Theorems~\ref{thm:sparse-lin-cw} and~\ref{thm:sparse-cw} follows by combining these results with the main result of~\cite{gajarsky2020sbe}; see the proof of Theorem~\ref{thm:main} in Section~\ref{sec:main-theorem} where we use the same argument.

\paragraph*{Twin-width.} Looking at the picture sketched above from a perspective, there seems to be a need for a combinatorially defined concept that would on one hand generalize the notion of bounded cliquewidth, and on the other hand capture classes of well-behaved, but not tree-like graphs, like planar graphs or graphs excluding a fixed minor. Such a concept has been introduced very recently by Bonnet et al.~\cite{bonnet2020tww} through the {\em{twin-width}} graph parameter. Intuitively, a graph has twin-width $d$ if it can be constructed by merging larger and larger parts so that at any moment during the construction, every part has a non-trivial interaction
with at most $d$ other parts
(trivial interaction between two parts means that either no edges, or all edges span across the two parts). Here are some facts proved in~\cite{bonnet2020tww} that may help the reader to properly place classes of bounded twin-width in Fig.~\ref{fig:intro}:
\begin{itemize}[nosep]
 \item Every class of bounded cliquewidth has also bounded twin-width.
 \item Every class that excludes a fixed minor has bounded twin-width. This in particular applies to planar graphs, or graphs embeddable in any fixed surface.
 \item The class of all graphs of maximum degree at most $3$ has unbounded twin-width. Thus, not all nowhere dense classes have bounded twin-width.
 \item Having bounded twin-width is preserved by applying FO transductions.
 \item Every class of bounded twin-width is monadically dependent (this follows from the last two items).
\end{itemize}
Classes that have bounded twin-width and are weakly sparse are said to have {\em{bounded sparse twin-width}}. As proved in~\cite{bonnet2021tww2}, every class of bounded sparse twin-width has {\em{bounded expansion}}, which is a more restrictive property than nowhere denseness. See also~\cite{dreier2021twinwidth} for concrete constructions and bounds in this context. 

Let us also remark that the notion of twin-width is not only applicable to graphs, but more generally to relational structures over binary signatures. Thus, we can for instance speak about the twin-width of permutations (sets equipped with two total orders) or ordered graphs (graphs equipped with a total order on the vertices).

As explained in Theorem~\ref{thm:nd}, monadic dependence equals nowhere denseness if one assumes that the class in question is weakly sparse.
It turns out that for classes of ordered graphs, monadic dependence is equivalent to having bounded twin-width.
\begin{theorem}[\cite{tww4a}]\label{thm:ordered}
Assuming $\mathsf{AW}[\star]\neq \mathsf{FPT}$, 
the following conditions are equivalent
for every hereditary class $\CC$ of ordered graphs:
\begin{enumerate}[nosep]
    \item $\CC$ has bounded twin-width,
    \item $\CC$ is monadically dependent,
and
\item  model checking first-order logic is fixed-parameter tractable on $\CC$.
\end{enumerate}
\end{theorem}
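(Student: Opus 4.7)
The plan is to prove the three implications separately, with the real work concentrated in $(2) \Rightarrow (1)$. The implication $(1) \Rightarrow (3)$ is immediate from the fixed-parameter model-checking algorithm for classes of bounded twin-width. For $(3) \Rightarrow (2)$ I argue by contraposition: if $\CC$ is not monadically dependent, then the class $\textit{Graphs}$ of all finite graphs satisfies $\textit{Graphs} \fole \CC$, so any FPT algorithm for FO model checking on $\CC$ would transfer, via the transduction, to an FPT algorithm on $\textit{Graphs}$, contradicting $\mathsf{AW}[\star] \neq \mathsf{FPT}$.

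The heart of the theorem is $(2) \Rightarrow (1)$, which I again prove by contraposition: assuming $\CC$ has unbounded twin-width, I would construct an FO transduction witnessing $\CC \foge \textit{Graphs}$. The first step is to pass to a matrix representation. An ordered graph $(G, <)$ can be encoded by its adjacency matrix with rows and columns arranged according to $<$, and for binary structures with a linear order twin-width admits characterizations in terms of ``grid-rank'' (or ``mixed minor'') parameters of that matrix. Consequently, the unbounded twin-width of $\CC$ translates to the matrices associated to $\CC$ containing arbitrarily large combinatorial obstructions of this form.

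The next step is to normalize the obstruction. Using a Ramsey-type compactness argument, from arbitrarily large mixed minors one extracts, up to passing to a subsequence, a single canonical pattern belonging to a finite catalogue: for instance the adjacency matrix of a large half-graph, of a large matching, of an anti-matching, or a suitably structured permutation pattern. The ordered setting is essential, because the linear order rigidifies the grid of rows and columns and lets FO formulas address positions inside the obstruction by their relative rank.

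The main obstacle, and the point where the ordered setting is genuinely needed, is to argue that each canonical pattern actually enables an FO transduction of $\textit{Graphs}$. The half-graph case is classical, as it already witnesses $\textit{Ladders} \fole \CC$ and hence monadic instability, from which monadic independence follows in the ordered setting. For permutation- or matching-type obstructions the argument uses the order to simulate a definable pairing between rows and columns of the pattern, and combines this pairing with a colouring applied by the transduction to plant an arbitrary bipartite adjacency matrix inside a single member of $\CC$; a single FO formula then decodes the edge relation of a target graph $H \in \textit{Graphs}$. This step is delicate, since pure permutation classes (without an ambient ordered graph) have bounded twin-width and remain monadically dependent, so the construction must make genuine use of the interaction between the combinatorial obstruction and the linear order.
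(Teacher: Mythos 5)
This theorem is cited in the paper from~\cite{tww4a} (``Twin-width IV: ordered graphs and matrices'') and the paper gives no proof of it, so there is no internal proof to compare against. I will therefore assess your outline against what is actually needed.

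Your strategic decomposition is the right one, and your sketch of $(2)\Rightarrow(1)$ is directionally correct: the argument in~\cite{tww4a} does go via adjacency matrices of ordered graphs, a grid/rank-division characterization of bounded twin-width, a Ramsey-type extraction of a finite catalogue of unavoidable patterns, and then, for each unavoidable pattern, an FO transduction that plants an arbitrary bipartite adjacency matrix. Your remark that the linear order is what makes the patterns FO-addressable (by rank within the grid) is exactly the crucial point, and your observation that one must exploit the interaction between the obstruction and the order, because pure permutation classes have bounded twin-width, is a real and necessary caveat.

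However, there is a concrete gap in your treatment of $(1)\Rightarrow(3)$. You claim it is ``immediate from the fixed-parameter model-checking algorithm for classes of bounded twin-width.'' It is not: the FO model-checking algorithm of~\cite{bonnet2020tww} requires a contraction sequence of bounded width to be supplied with the input, and it is \emph{open} in general whether one can compute such a sequence in FPT time for arbitrary graph classes of bounded twin-width. One of the central contributions of~\cite{tww4a} is precisely that for \emph{ordered} graphs of bounded twin-width, an approximate contraction sequence can be computed in polynomial time (again using the matrix/grid characterization). Without this, $(1)\Rightarrow(3)$ does not follow; with it, $(1)\Rightarrow(3)$ is not a separate easy step but is interleaved with the same machinery you invoke for $(2)\Rightarrow(1)$. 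Your proposal should at minimum flag that the effective computation of contraction sequences is required and nontrivial, rather than taking it for granted. The rest of the sketch — particularly the decoding step for matching- and permutation-type obstructions — is too compressed to check, but the obstruction catalogue and the transduction construction are indeed the technical heart and would need to be spelled out in detail along the lines you gesture at.
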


Theorem~\ref{thm:ordered} suggests a possible route of approaching Conjecture~\ref{conj:dependent-mc}. Namely, a ladder of length $k$ encodes, through its adjacency relation, a total order of length $k$. Thus, monadically stable classes can be equivalently defined as classes from which one cannot transduce all total orders. The other extreme are classes of ordered graphs, where a total order on all the vertices is explicitly present. It is conceivable that every structure from a monadically dependent class can be, in some sense, decomposed into parts that are either ``orderless'' or ``orderfull'', in the sense of definability of a total order on their elements. While Theorem~\ref{thm:ordered} could deliver twin-width-related tools for handling the orderfull parts, it is an imperative to understand also the other side of the spectrum: monadically stable classes.
Conjecture~\ref{conj:sparsification} suggests 
a way of understanding those classes.

\paragraph*{Our results.} In this work we prove Conjecture~\ref{conj:sparsification} for classes of bounded twin-width. More precisely, the main result is the following.

\begin{theorem}\label{thm:main}
 Every class of graphs that is monadically stable and has bounded twin-width is transduction equivalent to a class of bounded sparse twin-width. 
\end{theorem}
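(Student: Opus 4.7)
The plan is to prove the non-trivial direction, namely that every monadically stable class $\CC$ of bounded twin-width can be transduced from a class $\DD$ of bounded sparse twin-width. The reverse direction is immediate, since bounded twin-width is preserved under transductions and bounded sparse twin-width classes have bounded expansion \cite{bonnet2021tww2}, hence are monadically stable. Given both directions, transduction equivalence follows by the argument used in \cite{nesetril2021linrw_stable,nesetril2021rw_stable} for Theorems~\ref{thm:sparse-lin-cw} and~\ref{thm:sparse-cw}, which invokes the main sparsification result of \cite{gajarsky2020sbe}.

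Fix $d \in \N$ bounding the twin-width in $\CC$ and $k \in \N$ such that $\CC$ excludes the ladder of length $k$ as a semi-induced subgraph. For each $G \in \CC$, take a $d$-contraction sequence $\Pp_n \succ \Pp_{n-1} \succ \cdots \succ \Pp_1$ together with its contraction tree $T_G$, whose nodes correspond to contracted parts. I plan to construct a \emph{sparse skeleton} $H_G$ as follows: its vertex set consists of the vertices of $G$ and of $T_G$; its edges include those of $T_G$ and, for each vertex $v \in V(G)$ and each node $t$ on the root-to-$v$ path whose part $V_t$ has a red neighbor $V_{t'}$ in the corresponding trigraph, a bounded-size \emph{witness set} of edges from $v$ to designated representative vertices in $V_{t'}$.

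The critical combinatorial input from monadic stability is the following: in any bipartite graph $(A,B)$ containing no ladder of length $k$ as a semi-induced subgraph, there exists a pivot set $P \subseteq B$ of size bounded by a function of $k$ such that, for each $a \in A$, the adjacency type of $a$ to $B$ is determined by $a$'s adjacencies to $P$, up to a symmetric difference of bounded size. This is the combinatorial shadow of the model-theoretic \emph{definability of types} in stable theories, and it applies to the bipartite graphs arising between red-adjacent parts of the contraction sequence. Selecting such pivot sets for each red edge of the trigraph at every level supplies the witnesses populating $H_G$. A Ramsey-type count then certifies that $H_G$ is $K_{t,t}$-free for some $t = t(d,k)$ and that $H_G$ admits a contraction sequence (obtained by refining the one for $G$) with red-degree bounded by a function of $d$ and $k$, so $H_G$ has bounded sparse twin-width.

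The main obstacle is making the choice of pivots \emph{coherent} and \emph{first-order definable}: a single FO transduction must, when applied to $H_G$, recover $G$ by reading the pivot data and applying the corresponding boolean combination at each level of $T_G$. This forces a uniform, FO-definable pivot selection, rather than an ad hoc choice per bipartite piece. Once this is achieved, the reverse direction --- extracting $H_G$ from $G$ via a transduction --- is comparatively routine: it amounts to using the colors available to the transduction to mark $T_G$ and the pivot vertices, which in turn is feasible because $T_G$ and the pivots can be selected in an FO-interpretable way inside $G$.
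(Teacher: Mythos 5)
Your high-level reduction is sound: it is indeed enough to produce, for each $G\in\CC$, a sparse skeleton $H$ of bounded sparse twin-width and a single transduction recovering $G$ from $H$; the reverse direction and the upgrade to transduction equivalence via Theorem~\ref{thm:SBE} are then routine, and this matches how the paper closes the argument.

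However, the central combinatorial claim on which your proposal rests is false. You assert that in any bipartite graph $(A,B)$ with no semi-induced ladder of length $k$ there is a \emph{uniform} pivot set $P\subseteq B$ of size $f(k)$ such that, for every $a\in A$, the trace $N(a)\cap P$ determines $N(a)$ up to a symmetric difference of size $g(k)$. Consider the disjoint union of $m$ bicliques $K_{1,n}$, so $A=\{a_1,\dots,a_m\}$, $B=B_1\cup\dots\cup B_m$ with $|B_i|=n$ and $N(a_i)=B_i$. This bipartite graph has ladder index $2$ and bounded twin-width, yet for any fixed $P$ with $|P|\le f(k)$ and $m>f(k)$, most $a_i$ satisfy $N(a_i)\cap P=\emptyset$, so they would all be assigned the same approximate neighborhood; but the sets $B_i$ are pairwise disjoint of size $n$, so the symmetric difference is $\Theta(n)$, unbounded. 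Stability gives you \emph{definability of types} in the model-theoretic sense (the $\phi$-type of $a$ over $B$ is determined by a $\phi$-formula with parameters in $B$), and a stable regularity lemma with $\varepsilon|B|$ error, but not the finite, uniform, constant-error pivot lemma you posit. In particular, your $K_{t,t}$-freeness and bounded-red-degree count for $H_G$ both rely on this false bound.

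The paper's actual argument takes a structurally different and more indirect route: it runs the uncontraction sequence and \emph{freezes} a part as soon as its induced bipartite graph against every opposing part has strictly smaller quasi-ladder index. This produces a partition $\cal F$; after a \emph{bounded number of flips}, the quotient $H=\quo{G'}{\cal F}$ has bounded strong $2$-coloring number (so bounded star chromatic number), and --- critically --- each star in the resulting star forests induces a bipartite subgraph of $G$ of \emph{strictly smaller} ladder index. The construction is then closed by an \emph{induction on the ladder index}, with the output encoded as a bounded-arity equivalence structure (Lemma~\ref{lem:main-transduction}) and finally as a $K_{t+1,t+1}$-free graph. Your proposal has no analog of the flips (which absorb the large ``complete'' interactions that destroy any bounded-pivot claim) and no recursion on ladder index; both are load-bearing. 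To repair your approach you would need to replace the pivot lemma by something of this weaker but correct form: a bounded number of flips plus recursion on the induced subgraphs where the ladder index has genuinely dropped.
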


An immediate corollary of Theorem~\ref{thm:main} is the following.

\begin{corollary}\label{cor:sparsification-tww}
    Let $\cal P$ be any $\fole$-downward closed property of classes of graphs such that every class enjoying $\cal P$ has bounded twin-width.
    Then every monadically stable class  $\CC\in\cal P$
    is transduction equivalent to some weakly sparse class $\DD\in\cal P$.
\end{corollary}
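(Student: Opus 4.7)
The plan is to derive the corollary as an essentially immediate consequence of Theorem~\ref{thm:main}. Given $\CC\in\cal P$ that is monadically stable, the standing hypothesis that every class enjoying $\cal P$ has bounded twin-width guarantees, in particular, that $\CC$ itself has bounded twin-width. Theorem~\ref{thm:main} therefore applies to $\CC$ and produces a class $\DD$ of bounded sparse twin-width with $\CC\foeq\DD$. This $\DD$ will be the witness claimed in the corollary, so the only remaining verifications are that $\DD$ is weakly sparse and that $\DD\in\cal P$.

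The first of these is built into the definition: a class of bounded sparse twin-width is by construction weakly sparse (and also has bounded twin-width). For the second, we exploit the $\fole$-downward closure of $\cal P$. The transduction equivalence $\CC\foeq\DD$ unpacks as $\CC\fole\DD$ and $\DD\fole\CC$; using the second direction together with $\CC\in\cal P$ and the closure assumption on $\cal P$, we obtain $\DD\in\cal P$, completing the argument. I do not anticipate any real obstacle here, since all the work has been delegated to Theorem~\ref{thm:main}; the corollary is purely a repackaging that makes transparent what Theorem~\ref{thm:main} says for every subproperty of ``bounded twin-width'' that is preserved under transductions (e.g.\ bounded cliquewidth, excluded minor, bounded twin-width itself), thereby recovering Theorems~\ref{thm:sparse-lin-cw} and~\ref{thm:sparse-cw} in the stable-plus-bounded-twin-width regime and extending them to any such intermediate $\cal P$.
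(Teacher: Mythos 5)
Your proof is correct and matches the paper's intent: the paper itself labels this an ``immediate corollary'' of Theorem~\ref{thm:main} and gives no further argument, and your two-step unpacking (bounded sparse twin-width is weakly sparse by definition; $\DD\fole\CC$ plus $\fole$-downward closure of $\cal P$ gives $\DD\in\cal P$) is exactly the reasoning intended.
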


Note that Theorems~\ref{thm:sparse-lin-cw} and~\ref{thm:sparse-cw} follow from Corollary~\ref{cor:sparsification-tww}, where as $\cal P$ we consider the properties of having bounded linear cliquewidth and having bounded cliquewidth, respectively. 

\medskip

Our proof of Theorem~\ref{thm:main} is actually very different from the proofs of Theorems~\ref{thm:sparse-lin-cw} and~\ref{thm:sparse-cw}, presented in~\cite{nesetril2021linrw_stable} and~\cite{nesetril2021rw_stable}. These proofs heavily rely on suitable decompositions for the linear cliquewidth and cliquewidth parameters that expose  respectively the path-like and the tree-like structure. The main combinatorial component is a Ramseyan tool --- Simon's factorization~\cite{simon90factorization} and its deterministic variant~\cite{colcombet2007factorization} --- using which the decomposition is analyzed. The assumption about stability is exploited in a rather auxiliary way within this analysis. On the other hand, our reasoning leading to the proof of Theorem~\ref{thm:main} places stability in the spotlight: we use the largest length of a ladder that can be found in a given graph as a complexity measure bounding the depth of induction. Thus, the proof is completely new, more general, and arguably simpler than the ones presented in~\cite{nesetril2021linrw_stable,nesetril2021rw_stable} for classes of bounded (linear) cliquewidth.

A priori, Theorem~\ref{thm:main} provides no direct implications for Conjecture~\ref{conj:sparsification}. However, we believe that the general scheme of reasoning, and in particular the form of a decomposition implicitly constructed in the proof, may be insightful for the future work in the context of arbitrary monadically stable classes.

\medskip

Finally, we observe that our work has implications in the context of $\chi$-boundedness. We say that a graph class $\CC$ is {\em{$\chi$-bounded}} if there exists a function $f\colon \N\to \N$ such that for every graph $G\in \CC$ we have $\chi(G)\leq f(\omega(G))$, where $\chi(G)$ is the {\em{chromatic number}} of $G$ --- the minimum number of colors needed for a proper coloring of $G$ --- and $\omega(G)$ is the {\em{clique number}} of $G$ --- the maximum number of pairwise adjacent vertices in $G$. The concept of $\chi$-boundedness was introduced by Gy\'arf\'as in~\cite{gyarfas} as a relaxation of perfectness, and has since grown to be one of major notions of interest in contemporary structural graph theory. The reason is that $\chi$-boundedness typically witnesses the well-structuredness in the considered graph class, and trying to establish this property is a perfect excuse to understand the structure of studied graphs better. Also, there is a variety of $\chi$-bounded graph classes originating from different settings, for instance geometric intersection graphs, graphs admitting certain decompositions, or graphs excluding fixed induced subgraphs. We invite the reader to the recent survey of Scott and Seymour~\cite{ScottS20} for a broader introduction.

Coming back to our work, we note that by combining Theorem~\ref{thm:main} with the results of~\cite{gajarsky2020sbe} one can conclude that monadically stable classes of bounded twin-width are {\em{linearly $\chi$-bounded}}, that is, $\chi$-bounded with a linear $\chi$-bounding function $f$.

\begin{theorem}\label{thm:lin-chi}
 Let $\CC$ be a class of graphs that is monadically stable and has bounded twin-width. Then there exists a constant $c\in \N$ such that
 $\chi(G)\leq c\cdot \omega(G)$, for all $G\in \CC$.
\end{theorem}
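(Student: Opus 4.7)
The plan is to reduce the claim to a known $\chi$-boundedness result for sparse classes via Theorem~\ref{thm:main}, and then invoke the transfer machinery of~\cite{gajarsky2020sbe}. First, apply Theorem~\ref{thm:main} to produce a class $\DD$ of bounded sparse twin-width such that $\CC \foeq \DD$; in particular, every $G \in \CC$ is obtainable by an FO transduction from some $H \in \DD$, for a fixed transduction depending only on $\CC$.

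Next, use the result of Bonnet et al.~\cite{bonnet2021tww2} that classes of bounded sparse twin-width have bounded expansion. Classes of bounded expansion have bounded degeneracy, so there is a constant $d$ (depending only on $\CC$) with $\chi(H) \leq d+1$ for all $H \in \DD$. Thus $\DD$ is trivially linearly $\chi$-bounded. The key remaining task is to transfer this linear $\chi$-boundedness along the transduction $\CC \fole \DD$.

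The transfer is accomplished by combining with the structural result of~\cite{gajarsky2020sbe}. The point is that the transduction realizing $\CC \fole \DD$ creates edges in $G$ from small FO-definable configurations in a colored copy of $H$; in the monadically stable regime, the contribution of these configurations to the clique number of $G$ is controlled, up to a multiplicative constant, by cliques (and bounded-depth substructures) already visible in $H$. Formally one shows that for every fixed transduction $\tau$ from a bounded expansion class, there is a constant $c_\tau$ such that $\chi(\tau(H)) \leq c_\tau \cdot \omega(\tau(H))$ whenever $\tau(H)$ lies in a monadically stable class. Applied to our transduction this yields the desired inequality $\chi(G) \leq c \cdot \omega(G)$.

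The main obstacle is the last step: FO transductions can in general inflate chromatic number far beyond clique number (consider, e.g., squaring a bipartite incidence graph to produce a graph with large $\chi$ but modest $\omega$). What makes the argument go through is precisely the stability hypothesis, which rules out the ladder-like patterns that would otherwise let a transduction encode such a blow-up; this is exactly the regime handled by~\cite{gajarsky2020sbe}, and so the role of Theorem~\ref{thm:main} here is to put us in a position where those tools become applicable.
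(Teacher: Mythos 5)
Your plan reproduces the route the paper itself alludes to in the introduction: combine Theorem~\ref{thm:main} with the fact that classes of bounded sparse twin-width have bounded expansion~\cite{bonnet2021tww2}, conclude that $\CC$ has structurally bounded expansion, and then invoke the linear $\chi$-boundedness of structurally bounded expansion classes from~\cite{gajarsky2020sbe}. This chain is valid. However, the paper deliberately does \emph{not} prove Theorem~\ref{thm:lin-chi} this way; Section~\ref{sec:lin-chi} gives a direct, self-contained argument that is independent of both Theorem~\ref{thm:main} and~\cite{gajarsky2020sbe}. Concretely, the paper proves a quantitative statement (Theorem~\ref{thm:cograph-coloring}): a graph of twin-width at most $d$ and quasi-ladder index at most $k$ admits a vertex coloring with at most $(2d+4)^{k-1}$ colors such that every color class induces a cograph; perfectness of cographs then gives $\chi(G)\le(2d+4)^{k-1}\cdot\omega(G)$. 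That argument is constructed from a lightweight variant of the freezing mechanism of Section~\ref{sec:main-lemma} and yields explicit bounds, whereas your route gives no explicit constant and requires the full machinery of~\cite{gajarsky2020sbe}.

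Two remarks on your exposition of the transfer step. First, the observation that $\chi(H)\le d+1$ for $H\in\DD$ is not what actually transfers --- a constant bound on $\chi$ does not survive transductions, and the fact actually used is only that $\DD$ has bounded expansion (so $\CC$ has structurally bounded expansion). Second, presenting monadic stability as an \emph{additional} hypothesis that "makes the argument go through" in the transfer step is misleading: any transduction of a bounded expansion class is automatically monadically stable, so stability is not a side condition one must separately verify there. Moreover, the mechanism behind the linear $\chi$-boundedness of structurally bounded expansion classes is not that cliques in $G$ are "controlled by cliques already visible in $H$"; it is that such classes admit colorings with a bounded number of colors in which each color class induces a graph of bounded shrubdepth (in fact a cograph in the relevant refinement), and cographs are perfect. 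That is exactly the phenomenon that Section~\ref{sec:lin-chi} reproves directly in the twin-width setting.
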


It is known that classes of bounded twin-width are $\chi$-bounded~\cite{bonnet2020tww3}. Without the assumption of monadic stability, the $\chi$-bounding function cannot be expected to be linear, see~\cite{BonamyP20,nesetril2021linrw_stable}, but it is open whether it can be polynomial~\cite{bonnet2020tww3}. Linear $\chi$-boundedness of monadically stable classes of bounded cliquewidth has been established in~\cite{nesetril2021rw_stable} using a reasoning similar to the one presented here.

While Theorem~\ref{thm:lin-chi} can be seen as a consequence of Theorem~\ref{thm:main}, in Section~\ref{sec:lin-chi} we give a self-contained proof of this result. This proof can be seen as a light-weight and purely combinatorial version of the proof of Theorem~\ref{thm:main}, which nevertheless contains many of the key ideas. Therefore, the reader might consider reading Section~\ref{sec:lin-chi} first in order to gather intuition before the main argument, presented in Sections~\ref{sec:main-lemma} and~\ref{sec:main-theorem}.

\paragraph{Structure of the paper and order of reading.}
In Section~\ref{sec:overview}
we give a high-level overview of the main proof, explaining the main ideas. This overview assumes a basic understanding of twin-width and transductions, which are introduced more formally in the preliminaries in Section~\ref{sec:prelims}. In Section~\ref{sec:main-lemma} we present the proof of the main lemma, while in Section~\ref{sec:main-theorem} we use it to prove the main result, Theorem~\ref{thm:main}. In Section~\ref{sec:lin-chi} we directly prove that monadically stable classes of bounded twin-width are linearly $\chi$-bounded. The proof there is independent of the main proof, and can be read independently of Sections \ref{sec:overview},
\ref{sec:main-lemma}, and \ref{sec:main-theorem}.

We finish with Section~\ref{sec:discussion}, where we discuss the broader context of Fig.~\ref{fig:intro}, state multiple conjectures related to it, and make some preliminary observations towards those conjectures.

\section{Overview of the proof}
\label{sec:overview}
We now present the main ideas behind the proof of Theorem~\ref{thm:main}. 
Beware that the description below is not completely accurate, but it should convey the main ideas. All the notions  discussed below are introduced formally in the preliminaries in Section~\ref{sec:prelims}.

Let $\CC$ be a monadically stable class of graphs of bounded twin-width.
Our task is to exhibit two transductions  $S$ and $T$ such that $S(\CC)$ is a class of bounded sparse twin-width and $\CC \subseteq T(S(\CC))$.  We focus on proving the following weaker statement: There exists a class $\DD$ of bounded sparse twin-width and a transduction $T$ such that $\CC \subseteq T(\DD)$. The stronger statement then follows easily from results of~\cite{gajarsky2020sbe}
(see proof of Theorem~\ref{thm:main}
on p. \pageref{proof:main}).

\medskip

Our goal is therefore the following. Given a graph $G \in \CC$, construct a graph $R$ such that:
\begin{enumerate}[nosep]
    \item $R$ omits some biclique as a subgraph,
    \item $R$ has small twin-width,  and
    \item $G$ can be obtained from $R$ by some transduction $T$.
\end{enumerate}
Crucially, the excluded biclique, the bound on twin-width, and the transduction $T$ should depend only on $\CC$ and not on the particular choice of $G$. 

For technical reasons it is more convenient to work with bipartite graphs $G$ rather than usual graphs. As every class  of graphs 
is transduction equivalent with a class of bipartite graphs (see Lemma~\ref{lem:bipartite-reduction}), and transductions preserve stability and bounded twin-width,
this allows us to reduce our  problem to the case of bipartite graphs.

\paragraph{Ladder index.} A key conceptual ingredient of our approach is to measure the complexity of bipartite graphs on which we induct in terms of the largest size of a ladder that can be found in them. More precisely, if $G$ is a bipartite graph with sides $L$ and $R$, then the {\em{ladder index}} of $G$ is the largest size of a ladder that can be found in $G$ where one side is contained in $L$ and the other in $R$. For technical reasons, in the actual proof we work with a functionally equivalent notion of the {\em{quasi-ladder index}}; the difference is immaterial for the purpose of this overview.

Since $G$ belongs to the fixed class $\CC$ that is monadically stable, in particular $G$ excludes some ladder as a semi-induced subgraph, so the ladder index of $G$ is bounded by a constant depending on $\CC$ only. This allows us to use the ladder index as a measure of progress in an inductive argument, as always inducting on subgraphs with a smaller ladder index yields a reasoning with constant induction depth.

\paragraph{High level description.}
Let us now describe the main construction, of $R$ from $G$, on a high level in order to introduce the necessary concepts.
Using the contraction sequence (sequence of partitions witnessing bounded twin-width) of $G$ `in reverse order' --- starting from the bipartition of $G$ and in each step splitting one part into two --- we construct
a partition $\cal F$ of $V(G)$ and a graph $G'$ with the same vertex set as $G$. These have the following properties.
First, $G'$  can be obtained 
from $G$ by applying a bounded number of 
\emph{flips} (complementations of the edge relation between a subset of the left side and a subset of the right side). Second, the \emph{quotient  graph} $H\coloneqq \quo{G'}{\cal F}$ is sparse. Here $\quo{G'}{\cal F}$ is a graph on vertex set $\cal F$ where two parts $A,B\in \cal F$ are adjacent if and only if in $G'$ there exists an edge with one endpoint in $A$ and second in $B$.
\emph{Sparsity} of $H$ means in particular that $H$ can be edge-partitioned into a bounded number of induced star forests (disjoint unions of stars), say $H=F_1\cup\cdots\cup F_s$.
Importantly,
each star $S$ of $F_i$, say with vertices $K_0,\ldots,K_m\in\cal F$ and center $K_0$,
induces in $G$ a bipartite subgraph $G_S\coloneqq G[K_0,K_1\cup\cdots\cup K_m]$ 
(with $K_0$ on one side and $K_1\cup\cdots\cup K_m$ on the other) of ladder index strictly smaller than that of $G$.
Hence, we can induct on each graph $G_S$, and thus 
 represent it by a sparse graph $R_S$ which has bounded twin-width, omits a fixed biclique as a subgraph, and from which $G_S$ can be recovered using a fixed transduction.
We then combine all the graphs~$R_S$, for all stars $S$ in the star forests $F_1,\ldots,F_s$, yielding the sparse graph $R$ from which $G$ can be recovered by a transduction.

We now give some more details concerning the techniques used to bound the twin-width of $R$ and the sizes of bicliques in $R$. Then we explain the main lemma, which from $G$ produces the graph $G'$ and the partition $\cal F$.

\paragraph{Bounding the twin-width.}
One way of showing that the constructed graph $R$ has bounded twin-width is 
 to explicitly construct a contraction sequence of bounded width for $R$. Another way is to exhibit 
a vertex-ordering which avoids a fixed grid-minor (a certain pattern in the adjacency matrix).
While these approaches could work in our proof, 
  we use yet another approach, namely we show that the graph $R$ can be obtained from $G$ using a fixed transduction. By the results of~\cite{bonnet2020tww}, this implies that the twin-width of $R$ is bounded in terms of the twin-width of $G$ (and the transduction).
Note that our transduction involves additionally a suitable order $\le$ on $V(G)$, which turns $G$ into an ordered bipartite graph $(G,\le)$ of bounded twin-width. Such an order always exists, and is easily obtained from a contraction sequence for $G$.
In fact, we may use any order $\le$ on $V(G)$ such that all parts in the contraction sequence are convex with respect to $\le$. We call such an order a \emph{compatible} order on $G$.

Hence, to accomplish our goal, we achieve the following, alternative goal: given a bipartite graph $G\in \CC$ with a compatible order $\le$, construct a graph $R$ such that:
\begin{enumerate}[nosep]
    \item $R$ omits some biclique as a subgraph,
    \item $R$ can be obtained from $(G,\le)$ by some fixed transduction,  and
    \item $G$ can be obtained from $R$ by some  fixed transduction.
\end{enumerate}
Then by~\cite{bonnet2020tww}, a fixed bound on the twin-width of $G$ entails a fixed bound on the twin-width of $R$.

\paragraph{Bounding the bicliques.}
Instead of directly constructing a graph $R$ which omits a fixed biclique as a subgraph, we
 construct a \emph{$t$-equivalence structure} $\str S$: a set furnished with $t$ equivalence relations, where $t$ is a constant depending only on $\CC$. Such a structure can be represented by a 
  graph $R_{\str S}$ whose vertex set comprises of all the elements of $\str S$, plus for each equivalence class of each of the $t$ equivalence relations we add a vertex representing this class. Every element $e$ of $S$ is adjacent to each of the $t$ vertices representing the $t$ equivalence classes of which $e$ is a member. Thus, by construction,
$R_{\str S}$ omits $K_{t+1,t+1}$ as a subgraph. Moreover, $R_{\str S}$ can be obtained from $\str S$ using a fixed transduction, and vice-versa
(see Lemma~\ref{lem:equivalences}).
So instead of constructing a graph $R$ as in our previous goal, it is enough to construct a $t$-equivalence structure~$\str S$, for some fixed $t$. 

Hence, our new goal can now be rephrased as follows: 
given a bipartite graph $G\in \CC$ with a compatible order $\le$, construct a $t$-equivalence structure $\str S$, for some fixed $t$, such that:
\begin{enumerate}[nosep]
    \item $\str S$ can be obtained from $(G,\le)$ by some fixed transduction,  and
    \item $G$ can be obtained from $\str S$ by some fixed transduction.
\end{enumerate}

\medskip

With the ground prepared, we now explain the statement of our main lemma (Lemma~\ref{lem:main}). Then we describe how the main lemma is applied to achieve the goal outlined above, and finally we sketch the proof of the main lemma.

 \paragraph{Statement of the main lemma.}
 Recall that we are given a bipartite graph $G$ of bounded twin-width, with a compatible order $\leq$, and we assume that the ladder index of $G$ is bounded, say it is equal to $k$. The main lemma intuitively states that by applying a bounded number of flips one can ``sparsify'' $G$ a bit, so that afterwards it can be covered by a sparse network of subgraphs of strictly smaller ladder index. Formally, the main lemma provides a graph $G'$, on the same vertex set as $G$, and a partition $\cal F$ of $V(G)$, with the following properties satisfied:
 \begin{itemize}[nosep]
  \item Every part of $\cal F$ is contained in either the left side or the right side of $G$. Moreover, every part of $\cal F$ is convex in $\leq$. 
  \item $G'$ can be obtained from $G$ by applying a bounded number of flips. Note that thus, $G'$ can be transduced from $G$ using a fixed transduction.
  \item Define the {\em{quotient graph}} $H\coloneqq \quo{G'}{\cal F}$ on vertex set $\cal F$ as described before: parts $A,B\in \cal F$ are adjacent in $H$ if in $G'$ there is an edge with one endpoint in $A$ and second in $B$. Then $H$ is sparse, and in particular it has a bounded {\em{star chromatic number}}: it is possible to color $H$ with a bounded number of colors so that every pair of colors induces a star forest.
  \item Consider any star $S$ in any star forest $F$ among the ones described above. Say $S$ has center $K_0$ and petals $K_1,\ldots,K_m$, where $K_0,K_1,\ldots,K_m\in \cal F$. Then the bipartite subgraph $G[K_0,K_1\cup\ldots\cup K_m]$ induced by $S$ has ladder index strictly smaller than $k$.  
 \end{itemize}
This summarizes the statement of the main lemma.
That the parts of $\cal F$ are convex in $\leq$ will be important for constructing the final $t$-equivalence structure from $G$ by means of a transduction. 

\paragraph{Applying the main lemma.}
 We now explain how the main lemma is used to achieve our final goal: transducing from $(G,\le)$ a $t$-equivalence structure $\str S$, for some fixed $t$, so that $G$ can be recovered from $\str S$ by a transduction.
 This description corresponds to the proof of Lemma~\ref{lem:main-transduction}.
 
 First, $G'$ can be transduced from $G$ by applying a bounded number of flips.
 Thanks to the convexity of the parts in $\cal F$, the equivalence relation corresponding to the partition $\cal F$ can be constructed by a transduction, by using a unary predicate marking the smallest element in each part of $\cal F$. Having $G'$ and $\Ff$, we can interpret the edge relation of the quotient graph $H=\quo{G'}{\cal F}$, hence we can imagine that it is available for further transductions. Next, a star coloring of $H$ with a bounded number of colors can be guessed by introducing a bounded number of unary predicates. Let $F_1,\ldots,F_s$ be the star forests induced by pairs of colors of this coloring. Note that for every $i\in \{1,\ldots,s\}$, we can also transduce the equivalence relation of being in the same star of the star forest $F_i$. This is because stars have bounded radius.
 
Summarizing, we can use the main lemma to obtain the following equivalence relations from $G$ by means of a transduction:
\begin{itemize}[nosep]
\item A relation $\sim$ such that $u \sim v$ if and only if $u$ and $v$ are in the same part of $\cal F$.
\item For each $i \in 1,\ldots,s$ a relation $\sim_i$ such that $u \sim_i v$ if and only if $u$ and $v$  belong to the same star of the star forest $F_i$.
\end{itemize}
 As the bipartite graph $G[K_0,K_1\cup\cdots \cup K_m]$ induced by any star in any star forest $F_i$ has a strictly smaller ladder index, we can apply induction on it. Thus we may encode $G[K_0,K_1\cup\cdots \cup K_m]$  using a $t'$-equivalence structure, for some fixed $t'$ obtained from induction for a strictly smaller ladder index. While there can be arbitrarily many stars in each forest $F_i$,  they are disjoint and so their $t'$-equivalence structures can be merged together to form a single $t'$-equivalence structure which represents all edges of $G$ between any two parts $A,B$ in $F_i$.
 This $t'$-equivalence structure is additionally expanded with the equivalence relation $\sim_i$, yielding a $(t'+1)$-equivalence structure.
 Doing this for all star forests $F_i$ and overlaying the results, we obtain the desired $t$-equivalence structure~$\str S$, where $t=s(t'+1)$.
  
  To sum up, the structure $\str S$ can be obtained from $(G,\le)$ using a transduction (here we rely on convexity of the parts of $\cal F$ and the 
  bounded radius of the stars).
  Conversely,
 each of the bipartite graphs $G[K_0,K_1\cup\cdots \cup K_m]$ can be recovered from $\str S$ by inductive assumption. In particular, each of the bipartite graphs $G[A,B]$, for parts $A,B\in \cal F$ which are adjacent in $H$, can be reconstructed from $\str S$, whereas for parts $A,B\in\cal F$ which are non-adjacent in $H$, the graph $G[A,B]$ can be obtained by reverting the bounded number of flips that were used to obtain $G'$ from~$G$. Therefore, we can recover $G$ from $\str S$ using a transduction. Hence, our goal is achieved, proving the main result, Theorem~\ref{thm:main}.

\paragraph{Proof of the main lemma.}
Recall that we work with a bipartite graph $G$ of bounded twin-width, say $d$, and bounded {ladder index}, say $k$.
As $G$ has twin-width $d$, it has an \emph{uncontraction sequence} of width $d$.
This is a sequence of partitions of $V(G)$ which starts with the partition into two parts --- the left and the right side of $G$ --- and in each step splits some part into two, eventually reaching a discrete partition. That the uncontraction sequence has width $d$ means that at every step, every part is {\em{impure}} towards at most $d$ other parts, in the sense that the parts are neither complete nor anti-complete towards each other. Also, at every point, all parts of the current partition are convex in the compatible order $\leq$.

We follow the uncontraction sequence and apply a mechanism of {\em{freezing}} parts; this mechanism is inspired by the proof of $\chi$-boundedness of classes of bounded twin-width~\cite{bonnet2021tww2}. Specifically, when we consider any time moment in the uncontraction sequence, a part $A$ of the current partition gets frozen at this moment if the following condition is satisfied:
\begin{quote}
 For every part $B$ belonging to the other side of $G$, the induced bipartite graph $G[A,B]$ has ladder index strictly smaller than $k$. 
\end{quote}
We remark that once a part $A$ gets frozen, it still participates in further uncontractions, but no descendant part of $A$ will be frozen again. That is, we freeze a part only if none of its ancestors were frozen before. Since the uncontraction sequence ends with a discrete partition, it is not hard to see that the collection of parts which got frozen at any point forms a partition of the vertex set of $G$. This is the partition $\cal F$ provided by the lemma.

Note that every element of $\cal F$ is convex in $\leq$, because at the moment of freezing it was a member of a partition in the uncontraction sequence. Further, the elements of $\cal F$ can be naturally ordered by their freezing times. Denote this order by $\preceq$ and note that it is unrelated with the compatible order $\leq$ on $V(G)$.

The next step in the proof is an analysis of the properties implied by the freezing mechanism, with the goal of understanding the interaction between the parts $\cal F$. Omitting some technicalities, this analysis yields the following conclusion: if for a part $B\in \cal F$ we consider all parts $A\in \cal F$ with $A\prec B$, then there is a set $S(B)\subseteq \{A\colon A\prec B\}$ of {\em{exceptional parts}} that has bounded size, and otherwise $B$ is either complete or anti-complete towards $\bigcup_{A\prec B} A\setminus \bigcup S(B)$. Therefore, with each part $B\in \cal F$ we can associate the {\em{type}} of $B$, which is $+$ if $B$ is complete towards $\bigcup_{A\prec B} A\setminus \bigcup S(B)$, and $-$ if it is anti-complete.

Consider now the sequence of types of the elements of $\cal F$, as ordered by $\preceq$. This is a sequence over symbols $\{+,-\}$. It turns out that there can be only a bounded number of {\em{alternations}} in this sequence --- switches from $+$ to $-$ or vice versa --- for otherwise we can find a large ladder in $G$. Therefore, the sequence of types can be partitioned into a bounded number of blocks, each consisting of the same symbols. From this one can define a bounded number of flips --- one per each block of symbols $+$ --- that intuitively ``flip away'' all the complete interactions signified by $+$ symbols. Applying these flips turns $G$ into the graph $G'$ that the lemma returns.

Once $\cal F$ and $G'$ are defined, it remains to analyze the quotient graph $H=\quo{G'}{\cal F}$. From the construction it follows that whenever parts $A$ and $B$, say with $A\prec B$, are adjacent in $H$, $A$ must be an exceptional part for $B$, that is, $A\in S(B)$. This means that the $\preceq$ ordering is an ordering of bounded degeneracy for the graph $H$, so in particular $H$ is sparse. With more insight into the properties implied by the freezing condition, it is possible to prove that $\preceq$ has not only bounded degeneracy, but even bounded {\em{strong $2$-coloring number}}. From the classic construction of Zhu~\cite{zhu2009colouring} it then follows that $H$ has a bounded star chromatic number.

The star coloring with a bounded number of colors obtained from the argument above is almost what we wanted. More precisely, from the freezing condition it easily follows that for every pair of parts $A,B\in \cal F$ contained in distinct sides of $G$, the induced bipartite graph $G[A,B]$ has ladder index strictly smaller than $k$. This is because if say $A\prec B$, then at the moment of freezing $A$, $B$ was contained in some ancestor part $B'\supseteq B$, and the fact that $A$ got frozen at this point implies that the ladder index of $G[A,B']$ is strictly smaller than $k$. Therefore, if $S$ is a star in any of the induced star forests coming from the star coloring, say with center $K_0$ and petals $K_1,\ldots,K_m$, then each of the induced bipartite subgraphs $G[K_0,K_i]$ has ladder index strictly smaller than $k$. However, the goal was to obtain this conclusion for  the whole subgraph $G[K_0,K_1\cup \ldots\cup K_m]$ induced by the star $S$. A priori this condition may fail, but we can again use the properties provided by the freezing condition to show that each star forest can be edge-partitioned into a bounded number of subforests that already satisfy the desired property.

This concludes the sketch of the proof of the main lemma and this overview.

\section{Preliminaries}\label{sec:prelims}
For every natural $n\ge0$, the set $\set{1,\ldots,n}$ is denoted $[n]$.
By \emph{order} we mean \emph{total order}.
A \emph{convex subset} of an ordered set $X$ is a subset $U$ of $X$ such that $x\le y\le z$ and $x,z\in U$ implies $y\in U$. 

\subsection{Graphs}
We consider finite, undirected, and simple graphs.
The vertex set and the edge set of a graph $G$ are denoted $V(G)$ and $E(G)$, respectively.
If $G$ is a graph and $X\subset V(G)$ is a set of its vertices, then the subgraph of \emph{$G$ induced by $X$}
is the graph $G[X]$ with vertex set $X$ such that two vertices $x,y\in X$ are adjacent in $G[X]$ if and only if they are adjacent in $G$. 
If $G$ and $H$ are graphs, we say that $G$ is \emph{$H$-free} if it does not contain $H$ as a subgraph.

A {\em{bipartite graph}} is a tuple $G=(V,L,R,E)$ such that $(V,E)$ is a graph, $L$ and $R$ form a partition of $V$ and 
every edge in $E$ has one endpoint in $L$ and one endpoint in $R$.
The sets $L$ and $R$ are the \emph{sides} of $G$. Note that whenever we speak about a bipartite graph, the bipartition $(L,R)$ is considered fixed and provided with the graph. When $G$ is a bipartite graph with sides $L$ and $R$, and $X\subseteq L$ and $Y\subseteq R$ are subsets of the sides, then by $G[X,Y]$ we denote the induced bipartite subgraph whose sides are $X$ and $Y$ and whose edge set comprises of all edges of $G$ with one endpoint in $X$ and the other in $Y$. 

An \emph{ordered bipartite graph} is a tuple $G=(V,L,R,E,\le)$ such that $(V,L,R,E)$ is a bipartite graph and $\le$ is a total order on $V$ such that every vertex in $L$ is smaller than every vertex in $R$.

A {\em{division}} of an ordered bipartite graph $G$ with sides $L$ and $R$ is a partition $\cal F$ of the vertex set of $G$ such that each part of $\cal F$ is convex and is entirely contained either in $L$ or in $R$. Then by $\cal F^L$ and $\cal F^R$ we denote the partitions of $L$ and $R$ consisting of parts of $\cal F$ contained in $L$ and $R$, respectively. We also define the {\em{quotient graph}} $\quo{G}{\cal F}$ as the graph on vertex set $\cal F$ where $A\in \cal F^L$ and $B\in \cal F^R$ are adjacent if and only if there are $a\in A$ and $b\in B$ that are adjacent in $G$. Note that thus, $\quo{G}{\cal F}$ is a bipartite graph with sides $\cal F^L$ and $\cal F^R$.


Let $G$ be a graph and $X,Y\subseteq V(G)$ be two disjoint subsets of vertices. We say that the pair $X,Y$ is {\em{complete}} if every vertex of $X$ is adjacent to every vertex of $Y$, and {\em{anti-complete}} if there is no edge with one endpoint in $X$ and the other in $Y$. The pair $X,Y$ is {\em{pure}} if it is complete or anti-complete, and {\em{impure}} otherwise. If $X,Y$ is pure, then its {\em{purity type}} is $+$ if it is complete, and $-$ if it is anticomplete.

A {\em{flip}} of a bipartite graph $G$ with sides $L$ and $R$ is any graph $G'$ obtained from $G$ by taking any subsets $X\subseteq L$ and $Y\subseteq R$ and flipping the adjacency relation in $X\times Y$: all edges $xy$ with $x\in X$ and $y\in Y$ become non-edges, and all such non-edges become edges.
    We shall also say that $G'$ is obtained from $G$ by {\em{flipping the pair $X,Y$}}.
    Note that a flip is still a bipartite graph with sides $L$ and $R$. For $q\in \N$, we say that $G'$ is a {\em{$q$-flip}} of $G$ if $G'$ it can be obtained from $G$ by applying the flip operation at most $q$ times, that is, there is a sequence $G=G_0,G_1,\ldots,G_{q'}=G'$ such that $q'\leq q$ and $G_i$ is a flip of $G_{i-1}$ for each $i\in [q']$.

\paragraph{Generalized coloring numbers.}Let $G$ be a graph and $\le$ be an order on its vertices. Fix a number $r\in\N$.
For two vertices $v$ and $w$ of $G$, we say that 
$w$ is \emph{strongly $r$-reachable} from $v$ (with respect to $\le$) if $w\leq v$ and there is a path of length at most $r$ in $G$ connecting $v$ and $w$ such that all vertices on the path apart from $v$ and $w$ are larger than $v$ in $\leq$.
Similarly, $w$ is \emph{weakly $r$-reachable} from $v$ if $w\leq v$ and there is a path of length at most $r$ in $G$ connecting $v$ and $w$ such that $w$ is the least (with respect to $\le$) vertex on that path.
We define $\SReach_r^{G,\le}[v]$ to be the set of vertices which are $r$-reachable from $v$ and analogously $\WReach_r^{G,\le}[v]$ to be the set of vertices which are weakly $r$-reachable from $v$. Finally, we define $\scol_r(G,\le)$ and $\wcol_r(G,\le)$ as follows:

 $$ \scol_r(G,\le) = \max_{v \in V(G)}|\SReach_r^{G,\le}[v]| \qquad  \wcol_r(G,\le) = \max_{v \in V(G)}|\WReach_r^{G,\le}[v]| $$


As shown by Zhu~\cite{zhu2009colouring}, weak and strong $r$-coloring numbers are functionally equivalent in the following sense: for every graph $G$, order $\leq$ on the vertex set of $G$, and $r\in \N$, we have
\begin{equation}\label{eq:wcol-scol}
\scol_r(G,\le)\le \wcol_r(G,\le)\le \scol_r(G,\le)^r.
\end{equation}
In this work we will need only a bound for the particular case $r=2$.
\begin{proposition}\label{prop:scol-wcol}
For every graph $G$ and order $\leq$ on $V(G)$,
\[
    \wcol_2(G,\le)\le \scol_2(G,\le)+(\scol_1(G,\le)-1)^2.\]
\end{proposition}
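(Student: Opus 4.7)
The plan is to compare the two sets $\WReach_2^{G,\le}[v]$ and $\SReach_2^{G,\le}[v]$ directly, for a fixed vertex $v$, and control the size of the ``extra'' vertices in the weak set via a double counting through the intermediate vertex of a length-$2$ witnessing path. The bound will then follow by taking a maximum over $v$.

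\textbf{Step 1: inclusion and characterization of the difference.} First I would observe that $\SReach_2^{G,\le}[v] \subseteq \WReach_2^{G,\le}[v]$, since any path certifying strong reachability also certifies weak reachability ($v$ itself, and any neighbor $w \le v$, lie in both sets). Thus it suffices to bound $|\WReach_2^{G,\le}[v] \setminus \SReach_2^{G,\le}[v]|$. Unpacking the definitions, a vertex $w$ lying in the difference must satisfy $w < v$, $w \not\sim v$ (else the trivial length-$1$ path witnesses strong reachability), and there exists some $u$ with $v \sim u \sim w$ and $w \le u$, while every such common neighbor $u$ of $v$ and $w$ satisfies $u \le v$ (otherwise $w$ would be strongly $2$-reachable via $u$).

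\textbf{Step 2: assign a witness to each such $w$.} For each $w$ in the difference, choose an arbitrary witness $u = u(w)$ with $v \sim u \sim w$ and $w \le u \le v$. By construction $u \in \SReach_1^{G,\le}[v] \setminus \{v\}$, so the number of possible witnesses is at most $\scol_1(G,\le) - 1$. For a fixed witness $u$, any $w$ with $u(w) = u$ satisfies $w \sim u$ and $w \le u$, while $w \neq u$ (since $u \sim v$ but $w \not\sim v$). Thus $w \in \SReach_1^{G,\le}[u] \setminus \{u\}$, and the number of such $w$ is at most $\scol_1(G,\le) - 1$.

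\textbf{Step 3: combine.} Multiplying the two bounds yields
\[
\bigl|\WReach_2^{G,\le}[v] \setminus \SReach_2^{G,\le}[v]\bigr| \;\le\; (\scol_1(G,\le) - 1)^2,
\]
and summing with $|\SReach_2^{G,\le}[v]| \le \scol_2(G,\le)$ and taking a maximum over $v$ gives the claimed inequality. There is no real obstacle here; the only point requiring a moment of care is verifying that $w$ in the difference is genuinely distinct from $v$ and from $u(w)$, and that the witness $u(w)$ lies in the strictly smaller strong $1$-reachable set (hence the $-1$ in both factors).
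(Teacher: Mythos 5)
Your proof is correct and takes essentially the same approach as the paper: the paper's one-line proof observes that every weakly $2$-reachable vertex is either strongly $2$-reachable from $v$, or strongly $1$-reachable from some $v'\neq v,w$ that is itself strongly $1$-reachable from $v$, which is exactly your decomposition into $\SReach_2^{G,\le}[v]$ plus the vertices counted by the pair $(u(w),w)$.
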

\begin{proof}
   Observe that a vertex $w$ is weakly $2$-reachable from a vertex $v$ 
if and only if it is either strongly $2$-reachable from $v$, or is strongly $1$-reachable from a vertex $v'$ which is strongly $1$-reachable from $v$, where $v'$ is different from $v$ and $w$.
\end{proof}

We will also need the connection between weak $2$-coloring number and star colorings. This connection was also established by  Zhu~\cite{zhu2009colouring}, but we repeat his reasoning in order to make some technical assertions explicit. 
\begin{lemma}\label{lem:star-coloring}
    Let $G$ be a graph and $\le$ an order on $V(G)$.
    There is a coloring $\lambda\from V(G)\to [p]$ using $p\coloneqq \wcol_2(G,\le)$ colors such that for every two colors $c,d\in [p]$,
    every connected component $D$ of $G[\lambda^{-1}(\set{c,d})]$ 
is a star whose center is the vertex of $V(D)$ which is least with respect to $\le$. 
\end{lemma}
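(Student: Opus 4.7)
The natural approach is Zhu's greedy construction. Let $p \coloneqq \wcol_2(G,\le)$. I would process the vertices of $G$ in increasing order with respect to $\le$, and at the moment of processing $v$, I would assign to $v$ any color $\lambda(v)\in [p]$ that is not used by any vertex in $\WReach_2^{G,\le}[v]\setminus\{v\}$. Since $v\in \WReach_2^{G,\le}[v]$ trivially (via the empty path), the set of forbidden colors has size at most $p-1$, so such a $\lambda(v)$ exists.

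The core technical step is the following claim, which would drive everything else: for every three distinct vertices $u,v,w$ such that $uv,vw\in E(G)$, if $\lambda(u)=\lambda(w)\ne \lambda(v)$, then $v$ is the smallest element of $\{u,v,w\}$ with respect to $\le$. To prove this, I would argue by contradiction: if $v$ is not the smallest, then one of $u,w$ is, say $u$; but then when $w$ was being colored, both $v$ (reachable via the edge $vw$) and $u$ (reachable via the path $w\text{--}v\text{--}u$, of which $u$ is the least vertex) belong to $\WReach_2^{G,\le}[w]\setminus\{w\}$, forcing $\lambda(w)\ne \lambda(u)$, a contradiction.

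With the claim in hand, I would fix two colors $c,d\in [p]$, set $H \coloneqq G[\lambda^{-1}(\{c,d\})]$, and rule out every configuration other than stars in its components. Since $\lambda$ is a proper coloring restricted to two colors, $H$ contains no odd cycle and in particular no triangle. It also contains no path $v_1\text{--}v_2\text{--}v_3\text{--}v_4$ on four distinct vertices: applying the claim to $v_1\text{--}v_2\text{--}v_3$ forces $v_2<v_3$, while applying it to $v_2\text{--}v_3\text{--}v_4$ forces $v_3<v_2$, a contradiction. The same argument applied to any four consecutive vertices of a cycle of length $\ge 4$ (including a $C_4$) excludes all even cycles as well. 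Thus each connected component of $H$ is acyclic and has diameter at most $2$, i.e., is a star (possibly a single vertex or a single edge).

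Finally, in each star component with at least two leaves, I would pick two distinct leaves $u,w$ and apply the claim to the path $u\text{--}v\text{--}w$ (where $v$ is the center) to conclude $v<u$ and $v<w$, so $v$ is the least vertex in the component. For components consisting of a single edge or an isolated vertex, the smaller endpoint (respectively, the unique vertex) can be declared the center, trivially satisfying the statement. I do not foresee any serious obstacle: the main point requiring care is the symmetric treatment of $u$ and $w$ in the claim, which follows directly from the fact that we greedily avoid colors of weakly $2$-reachable vertices from below.
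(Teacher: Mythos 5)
Your proposal is correct and takes essentially the same approach as the paper: Zhu's greedy coloring with respect to $\le$, followed by the observation that along any path $u$--$v$--$w$ in $G[\lambda^{-1}(\{c,d\})]$ the middle vertex must be the $\le$-minimum of the three. One tiny imprecision: in the case analysis where $v$ is not the $\le$-smallest of $\{u,v,w\}$ and you take $u$ to be, the assertion $v\in\WReach_2^{G,\le}[w]$ can fail (if $w<v$), but this is harmless since your contradiction only uses $u\in\WReach_2^{G,\le}[w]$, which does hold.
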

\begin{proof}    
    Color $V(G)$ greedily using $p=\wcol_2(G,\le)$ colors as follows: process the vertices in the order $\leq$ from smallest to largest, and assign to each $v\in V(G)$ 
    any color that is not present among the vertices $w$ that are different from $v$ and weakly $2$-reachable from $v$. (Note that these vertices were colored earlier.) Since every vertex weakly $2$-reaches at most $p-1$ vertices other than itself, $p$ colors are sufficient to construct such a coloring. Call it $\lambda\colon V(G)\to [p]$.

    Fix $c,d\in[p]$. Observe that no two adjacent vertices have the same color, since one is weakly $1$-reachable from the other. 
    In particular, sets $\lambda^{-1}(c)$ and $\lambda^{-1}(d)$ both induce edgeless subgraphs of $G$. 

    Let $D$ be the vertex set of a connected component of $G[\lambda^{-1}\set{c,d}]$ and let $u$ be the $\le$-minimal element of $D$. Without loss of generality assume $\lambda(u)=c$. Then for every $v\in D$ that is adjacent to $u$ in $G$ we must have $\lambda(v)=d$, implying in particular that all neighbors of  $u$ in $D$ are pairwise non-adjacent in $G$. Suppose now that some $v\in D$ is simultaneously adjacent to $u$ and to some other $w\in D$. By the choice of $u$ we have that $u$ is $2$-weakly reachable from $w$, so $w$ cannot have color $c$ by construction. But $w$ also cannot have color $d$ due to being adjacent to $v$, a contradiction. This implies that $D$ consists only of $u$ and the neighbors of $u$, hence $G[D]$ is a star with the center being the $\le$-minimal element. 
\end{proof}

We will consider classes of {\em{bounded expansion}}, which is a notion of uniform sparsity in graphs. There are multiple equivalent definitions of this notion --- see the monograph of Ne\v{s}et\v{r}il and Ossona de Mendez~\cite{sparsity}, for a broad introduction --- but for the purpose of this paper it is sufficient to rely on a characterization through coloring numbers.

\begin{definition}
    A class of graphs $\CC$ has \emph{bounded expansion} if 
    for every $r\in\N$ there is a constant $c_r$ such that for every graph $G\in\CC$ there is an order $\le$ on $V(G)$ such that $\scol_r(G,\le)$ is at most $c_r$.   
\end{definition}
By~\eqref{eq:wcol-scol}, replacing $\scol_r$ by $\wcol_r$ would yield an equivalent definition.

\paragraph{Ladder index.} We now introduce notions inspired by model theory, which intuitively define graphs where no large total order can be found.

\begin{definition}
    Let $H$ be a bipartite graph with sides $L$ and $R$. 
    A \emph{ladder} of order $k$ in $H$ of consists of two sequences 
    $x_1,\ldots,x_k\in L$ and $y_1,\ldots,y_k\in R$ such that for all $i,j\in \set{1,\ldots,k}$,
    $x_i$ is adjacent to  $y_j$ if and only if $i\leq j$ (see Fig.~\ref{fig:ladders}).
    A \emph{quasi-ladder} order $k$ in $H$ consists of two sequences
    $x_1,\ldots,x_k\in L$ and $y_1,\ldots,y_k\in R$ such that
    for each $i\in \set{1,\ldots,k}$, 
    one of the following conditions holds:
    \begin{itemize}
        \item
$x_i$ is adjacent to all of $y_1,\ldots,y_{i-1}$ and $y_i$ is non-adjacent to all of $x_1,\ldots,x_{i-1}$; or
\item $x_i$ is non-adjacent to all of $y_1,\ldots,y_{i-1}$ and $y_i$ is adjacent to all of $x_1,\ldots,x_{i-1}$.
    \end{itemize}
    The \emph{ladder index} (resp. \emph{quasi-ladder index}) of a bipartite graph 
    is the largest $k$ such that $H$ contains a ladder (resp. a quasi-ladder) of order $k$.
\end{definition}

Note that in the definition above we do {\em{not}} require vertices $x_1,\ldots,x_k$ or $y_1,\ldots,y_k$ to be different.

\medskip

The ladder index is more commonly used in the literature, but in this work we will find it useful to work with the quasi-ladder index. The next lemma clarifies that the two notions are functionally equivalent.

\begin{lemma}
\label{lem:quasiladder}
    The following inequalities hold for every bipartite graph $H$:
\[\textit{ladder-index}(H)\le 
\textit{quasi-ladder-index}(H)
\le 4\cdot \textit{ladder-index}(H)+4.\]
\end{lemma}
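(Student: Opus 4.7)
The plan is to prove both inequalities separately. For the first inequality, $\textit{ladder-index}(H) \le \textit{quasi-ladder-index}(H)$, I would simply verify that every ladder is in particular a quasi-ladder: if $x_1,\ldots,x_k$ and $y_1,\ldots,y_k$ form a ladder, then for every $i$ the vertex $x_i$ is non-adjacent to $y_1,\ldots,y_{i-1}$ (since in a ladder $x_i \sim y_j$ iff $i \le j$), while $y_i$ is adjacent to $x_1,\ldots,x_{i-1}$, so the second clause of the quasi-ladder definition holds at each index.

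For the second inequality, the strategy is two rounds of pigeonhole followed by index reversal. Starting from a quasi-ladder $x_1,\ldots,x_k$, $y_1,\ldots,y_k$ of order $k$, classify each index $i$ as \emph{type A} if $x_i$ is adjacent to all earlier $y_j$'s and $y_i$ is non-adjacent to all earlier $x_j$'s, and \emph{type B} otherwise. By pigeonhole, some type --- say type A, by symmetry --- contains at least $\lceil k/2 \rceil$ indices $i_1 < \cdots < i_m$. After restricting to this subsequence and relabelling, we have sequences $x_1,\ldots,x_m$ and $y_1,\ldots,y_m$ such that for all $a < b$, $x_b \sim y_a$ and $x_a \not\sim y_b$; the only undetermined adjacencies are the diagonal ones $x_i y_i$.

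Next, pigeonhole on the diagonal: at least half of the indices, say $j_1 < \cdots < j_{m'}$ with $m' \ge \lceil m/2 \rceil$, satisfy the same diagonal condition (either all $x_{j_a} \sim y_{j_a}$, or all $x_{j_a} \not\sim y_{j_a}$). In the first case, setting $x_a' := x_{j_{m'+1-a}}$ and $y_a' := y_{j_{m'+1-a}}$ yields $x_a' \sim y_b'$ iff $a \le b$, i.e.\ a ladder of order $m'$. In the second case, the same reversal gives $x_a' \sim y_b'$ iff $a < b$; then dropping one index (taking $X_a := x_a'$ and $Y_a := y_{a+1}'$) produces a ladder of order $m'-1$.

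In either case we obtain a ladder of order at least $\lceil k/4 \rceil - 1$, hence $\textit{quasi-ladder-index}(H) \le 4 \cdot \textit{ladder-index}(H) + 4$. No step looks like a real obstacle --- the only mild care is the index reversal needed because type A gives the ``wrong'' orientation of the ladder, and the loss of one vertex in the non-diagonal case, both of which are absorbed into the additive constant.
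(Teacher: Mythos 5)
Your proof is correct and follows essentially the same approach as the paper's: both classify indices by the two quasi-ladder clauses and by the diagonal adjacency $x_i \sim y_i$, apply pigeonhole to isolate a large homogeneous quadrant, and then read off a ladder (the paper does a single four-way pigeonhole where you do two binary ones, and you are a bit more explicit about the index reversal needed in the ``type A'' case, but these are cosmetic differences).
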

\begin{proof}
    The first inequality is immediate, since every ladder is also a quasi-ladder. For the second inequality we show that a quasi-ladder of order $\ell=4(k+1)$ contains a ladder of order $k$.

Let $x_1,\ldots,x_{\ell}$ and $y_1,\ldots, y_{\ell}$ form a quasi-ladder of order $\ell$ in $H$. Let $I\subset\set{1,\ldots,\ell}$ 
    be the set of those indices $i\in\set{1,\ldots,\ell}$ for which 
    $x_i$ is adjacent to all of $y_1,\ldots,y_{i-1}$ and $y_i$ is non-adjacent to all of $x_1,\ldots,x_{i-1}$,
    and let $J=\set{1,\ldots,\ell}\setminus I$.
Let $A\subseteq\set{1,\ldots,\ell}$ be the set of those indices $i$ for which $x_i$ is adjacent to $y_i$, and let $B=\set{1,\ldots,\ell}\setminus A$ be its complement.
    Then $\set{1,\ldots,\ell}$ is the disjoint union of $I\cap A, J\cap A, I\cap B$, and $J\cap B$,
    so one of those sets must contain at least $k+1$ elements, by the choice of $\ell$.

If $|I\cap A|\ge k+1$, then for any distinct elements $\set{i_1,\ldots,i_k}$ of $I\cap A$, 
 the sequences $x_{i_1},\ldots,x_{i_k}$ and $y_{i_1},\ldots,y_{i_k}$ form a ladder of order $k$ in $H$.
If $|I\cap B|\ge k+1$, then for any distinct $k+1$ elements $\set{i_0,\ldots,i_k}$ of $I\cap B$, the sequences $x_{i_0},\ldots,x_{i_{k-1}}$ and $y_{i_1},\ldots,y_{i_k}$ form a ladder of order $k$ in $H$.
We proceed symmetrically in the cases when $|J\cap A|\ge k+1$ and $|J\cap B|\ge k+1$, in each case concluding that $H$ has a ladder of order $k$.
\end{proof}

\begin{corollary}
    A class of bipartite graphs $\CC$ has bounded ladder index if and only if it has bounded quasi-ladder index.
\end{corollary}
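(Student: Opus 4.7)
The plan is to derive this corollary as an immediate consequence of the two inequalities already established in Lemma~\ref{lem:quasiladder}, applied uniformly across the members of $\CC$. Since the statement is a biconditional about boundedness of two numerical graph parameters on a class, and these parameters are pinned to each other by explicit linear bounds for every individual graph, no further combinatorial work is required beyond quoting the lemma.

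More precisely, I would argue as follows. For the forward direction, suppose $\CC$ has bounded ladder index, and fix some uniform bound $k\in\N$ such that every $H\in\CC$ has ladder index at most $k$. By the second inequality in Lemma~\ref{lem:quasiladder}, every such $H$ satisfies
\[
\textit{quasi-ladder-index}(H)\le 4\cdot \textit{ladder-index}(H)+4\le 4k+4,
\]
so $\CC$ has bounded quasi-ladder index, with bound $4k+4$. For the converse direction, suppose $\CC$ has bounded quasi-ladder index, witnessed by some $\ell\in\N$. By the first inequality in Lemma~\ref{lem:quasiladder}, every $H\in\CC$ satisfies $\textit{ladder-index}(H)\le\textit{quasi-ladder-index}(H)\le\ell$, so $\CC$ has bounded ladder index.

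There is essentially no obstacle here: the nontrivial content of the equivalence has already been absorbed into Lemma~\ref{lem:quasiladder}, whose second inequality was proved by a pigeonhole argument on the four subcases arising from the definition of a quasi-ladder. The corollary is purely a restatement of that lemma at the level of classes rather than individual graphs, so the proof is a two-line invocation of the lemma in each direction.
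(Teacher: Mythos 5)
Your proof is correct and is exactly the intended argument: the paper states this corollary without proof, immediately after Lemma~\ref{lem:quasiladder}, precisely because it follows by quoting the two inequalities of that lemma uniformly over the class, as you do.
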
    



As mentioned, in this paper it will be more convenient to work with the quasi-ladder index.
Henceforth, by \emph{index} we mean the quasi-ladder index.

\medskip 

The notion of the (quasi-)ladder index, as discussed above, applies only to bipartite graphs. We may extend the notation to general graphs as follows: if $G$ is a graph, then its (quasi-)ladder index is defined as the (quasi-)ladder index of the bipartite graph whose partite sets are two copies of $V(G)$, and where a vertex $u$ from the first copy is adjacent with a vertex $v$ from the second copy if and only if $u$ and $v$ are adjacent in $G$. Note that, maybe a bit counterintuitively, the (quasi-)ladder index of a bipartite graph $H$ is not necessarily equal to its (quasi-)ladder index when it is considered as a general graph (but it is not hard to see that the two quantities are functionally equivalent). A graph class $\CC$ is called {\em{graph-theoretically stable}} if there is a constant~$k$ such that the ladder index (equivalently, the quasi-ladder index) of all members of $\CC$ is upper bounded by $k$.

\subsection{Logic}
\paragraph{Structures.}
We only consider relational signatures consisting 
of unary and binary relation symbols.
We may say that a structure $\str A$ is a \emph{binary} structure to emphasize that its signature is such. A binary structure $\str A$ is \emph{ordered}
if its signature contains the symbol $\le$ which is interpreted in $\str A$ as a total order.

Graphs are viewed as binary structures over the signature consisting of one binary relation~$E$ signifying the adjacency relation. 
Similarly, bipartite graphs are viewed as binary structures equipped with the binary relation $E$ and unary relations $L$ and $R$ marking the two parts of the graph.
Ordered bipartite graphs are viewed as binary structures equipped with the binary relations $E$ and $\le$ and unary relations $L$ and $R$.

\paragraph{Interpretations.}
Interpretations are a means of producing 
new structures out of old ones, where
 each relation of the new structure is defined
 by a fixed first-order formula.
In this work we only consider a restricted fragment which are sometimes called
\emph{simple interpretations}.

Fix relational signatures $\Sigma$ and $\Gamma$.
A (simple) \emph{interpretation}  $I\from \Sigma\to \Gamma$ consists of a \emph{domain} formula $\delta(x)$ and for each $R\in\Gamma$ of arity $k$, 
a formula $\phi_R(x_1,\ldots,x_k)$.
The \emph{output} of such an interpretation $I$ on a given $\Sigma$-structure $\str A$ is the $\Gamma$-structure $I(\str A)$
with domain $B$ consisting of all
elements $a$ of $\str A$ satisfying $\delta(x)$ in $\str A$, 
and in which 
every relation symbol $R\in\Gamma$ of arity $k$ is interpreted as the set of tuples $(a_1,\ldots,a_k)\in B^k$ satisfying $\phi_R(x_1,\ldots,x_k)$ in $\str A$. If $\CC$ is a class of $\Sigma$-structures then $I(\CC)$ denotes the class of structures $I(\str A)$, for all $\str A\in\CC$.

A class $\DD$ is an \emph{interpretation} of a class $\CC$ if there is an interpretation $I$ such that $\DD\subset I(\CC)$.

\medskip
The following standard result says that interpretations are closed under compositions.
\begin{proposition}\label{prop:composition}
    Let $I\from \Sigma\to\Gamma$  and $J\from \Gamma\to\Delta$ be  interpretations. There is an interpretation $J\circ I\from  \Sigma\to \Delta$ such that  $(J\circ I)(\str A)=J(I(\str A))$, for all $\Sigma$-structures $\str A$.
\end{proposition}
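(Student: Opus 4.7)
The plan is to construct the composed interpretation by substituting the defining formulas of $I$ into those of $J$, after suitably relativizing quantifiers. This is the standard ``syntactic composition'' argument, and the heart of the matter is an induction on formula complexity.

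Concretely, suppose $I$ is given by a domain formula $\delta_I(x)$ and formulas $\phi_R^I(\bar x)$ for each $R\in\Gamma$, and $J$ by $\delta_J(x)$ and $\phi_S^J(\bar x)$ for each $S\in\Delta$. First I would define a syntactic translation $\psi\mapsto\psi^I$ from $\Gamma$-formulas to $\Sigma$-formulas by: replacing every atomic formula $R(y_1,\ldots,y_k)$ (for $R\in\Gamma$) with $\phi_R^I(y_1,\ldots,y_k)$, replacing equality and the propositional connectives by themselves, and relativizing each quantifier to $\delta_I$ (i.e.\ $\exists y\,\chi \mapsto \exists y\,(\delta_I(y)\wedge \chi^I)$ and $\forall y\,\chi \mapsto \forall y\,(\delta_I(y)\to \chi^I)$).

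The key step is the \emph{substitution lemma}: for every $\Gamma$-formula $\psi(x_1,\ldots,x_k)$, every $\Sigma$-structure $\str A$ and every tuple $a_1,\ldots,a_k$ of elements of $\str A$ satisfying $\delta_I$,
\[
I(\str A)\models \psi(a_1,\ldots,a_k) \quad\Longleftrightarrow\quad \str A\models \psi^I(a_1,\ldots,a_k).
\]
This is proved by a routine induction on the structure of $\psi$: the atomic case is the definition of $I(\str A)$, the Boolean cases are immediate, and the quantifier cases use the fact that the universe of $I(\str A)$ is exactly $\{a\in A:\str A\models\delta_I(a)\}$, which is precisely what the relativization accounts for.

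Granted the substitution lemma, I would define $J\circ I$ by the domain formula $\delta(x):=\delta_I(x)\wedge (\delta_J)^I(x)$ and, for each relation symbol $S\in\Delta$ of arity $k$, by $\phi_S(\bar x):=(\phi_S^J)^I(\bar x)$. Applying the substitution lemma then yields that the universe of $(J\circ I)(\str A)$ coincides with that of $J(I(\str A))$, and that the interpretation of each $S\in\Delta$ is the same in both structures; hence $(J\circ I)(\str A)=J(I(\str A))$, as required. The only mildly delicate point is correctly relativizing \emph{nested} quantifiers during the translation, but this is handled uniformly by defining $\psi^I$ by recursion on formula structure, so no real obstacle arises.
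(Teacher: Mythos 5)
Your proof is correct and complete: the syntactic translation $\psi\mapsto\psi^I$ with quantifier relativization, plus the substitution lemma proved by induction on formula structure, is exactly the standard argument. The paper states Proposition~\ref{prop:composition} as a standard fact without supplying a proof, so there is nothing to compare against, but the route you take is the canonical one being implicitly invoked.
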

\begin{corollary}
\label{cor:interps_compose}
    If $\CC''$ is an interpretation of $\CC'$ and $\CC'$ is an interpretation of $\CC$ then $\CC''$ is an interpretation of $\CC$.
\end{corollary}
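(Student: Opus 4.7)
The plan is to build $J\circ I$ by syntactically substituting the formulas of $I$ into the formulas of $J$. Write $I$ as $(\delta_I(x),(\phi^I_R)_{R\in\Gamma})$ and $J$ as $(\delta_J(x),(\phi^J_S)_{S\in\Delta})$. The central tool will be a \emph{relativization} operation $\psi\mapsto \psi^\ast$ that converts an arbitrary first-order $\Gamma$-formula $\psi(x_1,\ldots,x_n)$ into a first-order $\Sigma$-formula $\psi^\ast(x_1,\ldots,x_n)$, defined by induction on the structure of $\psi$: an atomic formula $R(y_1,\ldots,y_k)$ is replaced by $\phi^I_R(y_1,\ldots,y_k)$; equality is left unchanged; Boolean connectives commute with $\ast$; and each quantifier is relativized to $\delta_I$, by setting $(\exists y.\chi)^\ast := \exists y.\,(\delta_I(y)\wedge \chi^\ast)$ and $(\forall y.\chi)^\ast := \forall y.\,(\delta_I(y)\to \chi^\ast)$.

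The first key step will be to prove, by induction on $\psi$, that for every $\Sigma$-structure $\str A$ and every tuple $\bar a$ from the universe of $I(\str A)$ (that is, every $a_i$ satisfying $\delta_I$ in $\str A$),
\[
I(\str A)\models \psi(\bar a)\quad\Longleftrightarrow\quad \str A\models \psi^\ast(\bar a).
\]
The atomic case is immediate from the definition of $I$, the Boolean cases are trivial, and the quantifier cases work precisely because the relativization to $\delta_I$ ensures that the quantifiers in $\str A$ range over the same elements as in $I(\str A)$.

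Given this, I define the composed interpretation $J\circ I\from \Sigma\to \Delta$ by setting its domain formula to $\delta_{J\circ I}(x) := \delta_I(x)\wedge (\delta_J(x))^\ast$, and for each $S\in \Delta$ of arity $k$, setting $\phi^{J\circ I}_S(x_1,\ldots,x_k) := (\phi^J_S(x_1,\ldots,x_k))^\ast$. Applying the relativization lemma to $\psi=\delta_J$ shows that the universe of $(J\circ I)(\str A)$ consists of exactly those elements of $\str A$ which lie in the universe of $I(\str A)$ and, viewed inside $I(\str A)$, satisfy $\delta_J$; this is exactly the universe of $J(I(\str A))$. Applying it to $\psi=\phi^J_S$ likewise shows that the interpretation of $S$ coincides on the two sides, yielding $(J\circ I)(\str A)=J(I(\str A))$ as required.

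The only mildly delicate point is bookkeeping in the quantifier case of the inductive relativization lemma, i.e.\ making sure that quantifying over the universe of $I(\str A)$ is faithfully simulated by quantifying in $\str A$ and restricting via $\delta_I$; but this is entirely routine once the definition of $\psi^\ast$ is in place. Corollary~\ref{cor:interps_compose} then follows by noting that $\CC''\subseteq J(\CC')\subseteq J(I(\CC))=(J\circ I)(\CC)$.
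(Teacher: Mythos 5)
Your proposal is correct and matches the approach the paper takes: the corollary is an immediate consequence of Proposition~\ref{prop:composition}, which the paper states without proof as a standard fact, and your relativization argument is precisely the standard proof of that proposition. The final containment chain $\CC''\subseteq J(\CC')\subseteq J(I(\CC))=(J\circ I)(\CC)$ correctly closes the gap between the proposition and the corollary, including the easily-overlooked conjunct $\delta_I(x)$ in the composed domain formula.
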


\paragraph{Transductions.}
 For a $\Sigma$-structure $\str A$ and $k\in\N$, let $k\times \str A$ denote the structure 
 obtained from $\str A$ by taking the disjoint union 
 of $k$ copies of $\str A$, and expanding it by a fresh binary relation $M$ which relates any two copies of the same element.

 A transduction is an operation which inputs a structure, copies it a fixed number of times, then
 adds some unary predicates in an arbitrary way,
 and finally, applies a fixed interpretation, thus obtaining an output structure. This is formalized below.

 A \emph{transduction} $T\from  \Sigma\to \Gamma$ consists of:
 \begin{itemize}[nosep]
     \item a number $k\in\N$,
     \item unary relation symbols $U_1,\ldots,U_\ell$,
     \item an interpretation $I\from \Sigma'\to \Gamma$, where $\Sigma'=\Sigma\sqcup\set{U_1,\ldots,U_\ell}\sqcup\set{M}$.
 \end{itemize}
 The transduction $T$ is \emph{copyless} if $k=1$. It is \emph{domain-preserving} if it is copyless, and the the domain formula of the underlying interpretation is $(x=x)$.

 Given a $\Sigma$-structure $\str A$ and a $\Gamma$-structure $\str B$, we say that $\str B$ is an \emph{output} of $T$ on $\str A$ if there is 
 some unary expansion $\wh{\str A}$ of the structure $k\times \str A$ such that $\str B=I(\wh{\str A})$.
 Let $T(\str A)$ denote the set of all structures $\str B$ which are outputs of $T$ on $\str A$.
 If $\CC$ is a class of structures then $T(\CC)$ denotes $\bigcup_{\str A\in\CC} T(\str A)$. For a class $\DD$, we say that $\DD$ can be \emph{transduced} from $\CC$ if there is a transduction $T$ such that $\DD\subseteq T(\CC)$. Two classes $\CC$ and $\DD$ are {\em{transduction equivalent}} if each can be transduced from the other.
 
 Like interpretations, transductions are closed under compositions. The following result is standard, see e.g. \cite[Lemma 2]{gajarsky2020sbe} for a proof.
 \begin{lemma}
    [Composition of transductions]
     Let $T\from\Sigma\to\Gamma$ and  $T'\from \Gamma\to\Delta$ be  transductions. There is a transduction \[(T'\circ T)\from \Sigma\to\Delta\] such that $K(\str A)=T'(T(\str A))$, for all $\Sigma$-structures $\str A$.
 \end{lemma}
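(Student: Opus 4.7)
The plan is to unpack $T$ and $T'$ into their components (number of copies, unary predicates, interpretation) and pack both rounds of copying and guessing into a single transduction, then apply Proposition~\ref{prop:composition} to compose the two interpretations. Write $T=(k,(U_1,\ldots,U_\ell),I)$ and $T'=(k',(U'_1,\ldots,U'_{\ell'}),I')$. By definition, an output $\str C\in T'(T(\str A))$ arises from some unary expansion $\widehat{\str A}$ of $k\times \str A$ yielding $\str B=I(\widehat{\str A})$, followed by a unary expansion $\widehat{\str B}$ of $k'\times \str B$ yielding $\str C=I'(\widehat{\str B})$. I would take the number of copies in $T'\circ T$ to be $k\cdot k'$, organized as a $k'\times k$ grid in which position $(i,j)$ represents the $j$-th copy of $\str A$ inside the $i$-th copy of $\str B$ consumed by $T'$.

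The unary predicates of $T'\circ T$ split into three groups. First, auxiliary predicates $C_1,\ldots,C_k$ and $R_1,\ldots,R_{k'}$ label the columns and rows of the grid; the composed interpretation will include a sanity check in its domain formula to confirm that this is a bona-fide partition, so any invalid guess yields empty output (harmless, since the claim concerns the set of possible outputs). Second, $\ell$ predicates encode $\widehat{\str A}$, but only their restriction to row $1$ will be consulted, yielding the single outer expansion applied uniformly across all $k'$ rows. Third, $\ell'$ predicates encode $\widehat{\str B}$, interpreted freely across the grid.

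The interpretation is built by composing $I'$ with $I$ on the grid. Each quantifier in an $I'$-formula ranges over elements of $\str B$ and is replaced by a quantifier over elements in row $1$ that satisfy $\delta_I$; the $M$-relation required by $I'$ on its input $k'\times \str B$ is defined as $M(x,y)\wedge \bigvee_{j\in[k]}(C_j(x)\wedge C_j(y))$, i.e.\ same underlying $\str A$-element and same column; and each atomic $\Gamma$-predicate occurring inside an $I'$-formula is replaced by the corresponding defining formula from $I$ (with its quantifiers restricted to row $1$ of the appropriate column). The domain formula of $T'\circ T$ is the conjunction of $\delta_I$ (read in row $1$) and of $\delta_{I'}$ on the reconstructed $\widehat{\str B}$. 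Each output relation of $\Delta$ is defined analogously by substitution, which is exactly the mechanism of Proposition~\ref{prop:composition} applied after the extra layer of copying is accounted for.

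The main obstacle is the bookkeeping around the two levels of copying: the $k'$ copies in $T'$ are copies of a single fixed $\str B$, so the expansion $\widehat{\str A}$ encoded in the composed transduction must be read uniformly across all $k'$ rows of the grid. The device of using only the row-$1$ values of the $U_s$-predicates is conceptually simple but has to be applied consistently in every translated formula, and one must check that it does not accidentally lose outputs. Once that is in place, the verification $(T'\circ T)(\str A)=T'(T(\str A))$ is a direct bijection argument: each pair $(\widehat{\str A},\widehat{\str B})$ corresponds to a valid unary expansion of $k\cdot k'\times \str A$ (padding the rows other than $1$ arbitrarily), and conversely every valid expansion yields such a pair by restricting to row $1$ and then reading off the $\widehat{\str B}$-predicates; the two interpretations compute the same output structure on matching inputs.
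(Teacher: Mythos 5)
The paper does not prove this lemma itself---it records it as a standard fact and cites \cite[Lemma 2]{gajarsky2020sbe}---so there is no in-text argument to compare against. Your overall scheme ($kk'$ copies arranged as a $k'\times k$ grid, auxiliary row/column predicates, the $T$-expansion read off row~$1$, and substitution of $I$-formulas into $I'$-formulas via Proposition~\ref{prop:composition}) is the right skeleton, but two steps, as written, are not correct.

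First, you say that each quantifier in an $I'$-formula ``ranges over elements of $\str B$ and is replaced by a quantifier over elements in row $1$ that satisfy $\delta_I$.'' But $I'$ is applied to a unary expansion of $k'\times\str B$, not of $\str B$, so its quantifiers must range over \emph{all} $k'$ copies of $\str B$---in your grid, over all triples $(i,j,a)$ with $i$ arbitrary whose column projection passes the (row-$1$-evaluated) $\delta_I$-test. Relativizing these quantifiers to row $1$ only collapses $k'\times\str B$ to a single copy of $\str B$; in particular, the $M$-relation you (correctly) define as ``same $\str A$-element and same column'' relates elements \emph{across} rows, so under your relativization it becomes reflexive-only, and a formula such as $\exists y\,(M(x,y)\wedge y\neq x\wedge U'_1(y))$ would wrongly evaluate to false. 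The correct bookkeeping is: outer ($I'$) quantifiers over all rows subject to the column-wise $\delta_I$-test; inner (substituted $I$) quantifiers relativized to row $1$; $T$'s $M$-relation translated as ``same $\str A$-element and same row''; $T'$'s $M$-relation translated as you wrote.

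Second, the ``sanity check yields empty output; harmless'' step is not sound for the equality the lemma asserts. The lemma claims $K(\str A)=T'(T(\str A))$, so every output of $K$ must be a genuine output of $T'\circ T$, not just vice versa. With your construction, a bad guess of the $R_i,C_j$ predicates yields the empty structure, which need not lie in $T'(T(\str A))$: take $T$ with $k=2$, no unary predicates, edge formula $M(x,y)\wedge x\neq y$ (so $T(\str A)$ is the perfect matching on $2|\str A|$ vertices), and $T'$ the identity transduction; then $T'(T(\str A))$ is a singleton containing a nonempty graph whenever $\str A\neq\emptyset$, yet your $K$ also outputs the empty structure. Your fallback would be fine if transductions were allowed to simply \emph{fail} to produce an output on some expansions, but the paper's definition makes every unary expansion produce an output. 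So one must arrange that even a malformed guess of the auxiliary predicates is routed to a legitimate element of $T'(T(\str A))$, which requires a further idea beyond a blank rejection.
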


 \begin{corollary}\label{cor:transductions-compose}
     If $\CC''$ can be transduced from  $\CC'$ and $\CC'$ can be transduced from $\CC$, then $\CC''$ can be transduced from $\CC$.
 \end{corollary}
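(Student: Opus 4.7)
The plan is to apply the Composition of Transductions lemma stated immediately before the corollary. Concretely, unfold the definition of "can be transduced from" to obtain witnessing transductions, compose them, and then chase an arbitrary element of $\CC''$ through the inclusions.

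First I would fix transductions $T\from\Sigma\to\Gamma$ and $T'\from\Gamma\to\Delta$ such that $\CC'\subseteq T(\CC)$ and $\CC''\subseteq T'(\CC')$, where $\Sigma,\Gamma,\Delta$ are the signatures of $\CC,\CC',\CC''$ respectively. The Composition of Transductions lemma then yields a transduction $T'\circ T\from\Sigma\to\Delta$ satisfying $(T'\circ T)(\str A)=T'(T(\str A))$ for every $\Sigma$-structure $\str A$, and in particular this extends to classes: $(T'\circ T)(\CC)=T'(T(\CC))$.

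Next, take any $\str B\in\CC''$. From $\CC''\subseteq T'(\CC')$ there exists $\str A'\in\CC'$ with $\str B\in T'(\str A')$, and from $\CC'\subseteq T(\CC)$ there exists $\str A\in\CC$ with $\str A'\in T(\str A)$. Hence $\str B\in T'(T(\str A))=(T'\circ T)(\str A)\subseteq(T'\circ T)(\CC)$. Since $\str B$ was arbitrary, this gives $\CC''\subseteq(T'\circ T)(\CC)$, which is precisely the statement that $\CC''$ can be transduced from $\CC$ via the transduction $T'\circ T$.

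There is essentially no obstacle here: the entire content of the corollary is packaged in the Composition of Transductions lemma, and the remaining argument is a routine unfolding of definitions. The only care needed is to be explicit about the role of the subset inclusion in the definition of "transduced from" (as opposed to equality), but this causes no difficulty since transductions preserve inclusions of input classes.
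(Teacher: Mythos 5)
Your proof is correct and is exactly the intended argument: the corollary is an immediate consequence of the Composition of Transductions lemma together with an elementary element chase, and the paper gives no separate proof precisely because it regards this as routine. Your explicit unfolding of the subset inclusions is a sound and complete justification.
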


 We will use two additional operations on transductions.
Let $\str B_1,\str B_2$ be a structures over signatures $\Gamma_1$ and $\Gamma_2$, respectively. Suppose furthermore that $\str B_1$ and $\str B_2$ have the same domain $B$.
Let $\str B_1\&\str B_2$ denote the structure over the signature $\Gamma_1\sqcup \Gamma_2$
(the disjoint union of $\Gamma_1$ and $\Gamma_2$)
with domain $B$, obtained by superimposing the structures $\str B_1$ and $\str B_2$. The following lemma is straightforward.
\begin{lemma}[Combination of transductions]\label{lem:super-impose}
    Suppose $T_i\from \Sigma\to\Gamma_i$ are domain-preserving transductions for $i=1,2$.
    Then there is a domain-preserving transduction \[T_1\&T_2\from \Sigma\to(\Gamma_1\sqcup\Gamma_2)\]
    such that $(T_1\&T_2)(\str A)=\setof{\str B_1\&\str B_2}{\str B_1\in T_1(\str A),\str B_2\in T_2(\str A)}$.
\end{lemma}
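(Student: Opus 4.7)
The plan is to unpack the definitions of $T_1$ and $T_2$ and construct $T_1\&T_2$ by taking the disjoint union of the unary expansions of the two transductions, keeping the relation-defining formulas of each transduction intact. Since both $T_1$ and $T_2$ are domain-preserving, both have $k=1$ (no copying) and the trivial domain formula $x=x$, so there is no interaction between them on the level of the underlying universe.

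Concretely, for $i=1,2$ write $T_i=(k_i=1,\,U^i_1,\ldots,U^i_{\ell_i},\,I_i)$, where $I_i$ has domain formula $(x=x)$ and, for each $R\in\Gamma_i$, a defining formula $\phi^i_R(\bar x)$ over the signature $\Sigma\sqcup\{U^i_1,\ldots,U^i_{\ell_i}\}\sqcup\{M\}$ (but since $k_i=1$, the relation symbol $M$ is interpreted as the identity and may be eliminated from the formulas). Define $T_1\&T_2$ to have $k\coloneqq 1$, unary relation symbols $U^1_1,\ldots,U^1_{\ell_1},U^2_1,\ldots,U^2_{\ell_2}$ (viewed as the disjoint union of the two sets), trivial domain formula $(x=x)$, and, for each relation symbol $R$ in $\Gamma_1\sqcup\Gamma_2$, the defining formula $\phi^i_R$ inherited from the transduction $T_i$ corresponding to the side of the disjoint union that $R$ belongs to.

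It remains to verify that this transduction indeed produces exactly the structures of the form $\str B_1\&\str B_2$ with $\str B_i\in T_i(\str A)$. For the inclusion $\subseteq$, take any unary expansion $\wh{\str A}$ of $\str A$ by the combined predicates and let $\wh{\str A}_i$ denote the restriction of $\wh{\str A}$ to $\Sigma\sqcup\{U^i_1,\ldots,U^i_{\ell_i}\}$; then $I_i(\wh{\str A}_i)\in T_i(\str A)$, and because the formula defining each $R\in\Gamma_i$ uses only the predicates of $\wh{\str A}_i$, the combined output equals $I_1(\wh{\str A}_1)\&I_2(\wh{\str A}_2)$. For the converse, given $\str B_i\in T_i(\str A)$, witnessed by unary expansions $\wh{\str A}_i$, superimpose the two families of unary predicates on $\str A$ to obtain a single expansion $\wh{\str A}$ that witnesses $\str B_1\&\str B_2\in (T_1\&T_2)(\str A)$; this is possible precisely because the two sets of unary predicates are disjoint and the domain is common. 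The construction preserves $k=1$ and the trivial domain formula, so $T_1\&T_2$ is domain-preserving, as required. There is no real obstacle here; the only thing to be careful about is to treat the two sets of unary expansion predicates as formally disjoint so that choosing them for one transduction does not interfere with the other.
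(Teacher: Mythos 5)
Your proof is correct and follows essentially the same approach as the paper: form the disjoint union of the two families of unary expansion predicates and retain each transduction's defining formulas $\phi_R$ unchanged. The paper states this in three lines without the explicit verification of the two inclusions, but the underlying construction is identical.
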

\begin{proof}
    For $i\in\set{1,2}$, 
    let the interpretation underlying $T_i$ consist of formulas $\phi_R$ for ${R\in \Gamma_i}$.
    Suppose $T_1$ introduces unary predicates $U_1,\ldots,U_\ell$ while $T_2$ introduces unary predicates $U_{\ell+1},\ldots,U_m$.
    Then $T_1\&T_2$ introduces unary predicates $U_1,\ldots,U_m$ and its underlying interpretation consists of the formulas $\phi_R$, for ${R\in \Gamma_1\sqcup\Gamma_2}$.
\end{proof}

The following lemma allows to apply a single transduction in parallel on pairwise disjoint subsets of a given structure $\str A$, coming from a definable partition of $\str A$.
If $\approx$ is an equivalence relation on a subset $X$ of the domain of a $\Sigma$-structure $\str A$
and $J\from \Sigma\to\Gamma$ is a transduction,
then by \[\coprod_{C\in X/{\approx}}J(\str A[C])\]
we denote the set of $\Gamma$-structures  of the form $\coprod_{C\in X/{\approx}}\str B_C$, where $\str B_C\in J(\str A[C])$ for $C\in X/{\approx}$,
and $\str A[C]$ denotes the substructure of $\str A$ induced by $C$. Note that the domains of the structures $\str B_C$ are pairwise disjoint, for $C\in X/{\approx}$, since the sets $C\in X/{\approx}$ are pairwise disjoint and transductions preserve disjointness of domains.

The following lemma is essentially \cite[Lemma 29]{gajarsky2020sbe}.
\begin{lemma}[Parallel application of transductions]\label{lem:parallel-application}
    Suppose $J\from\Sigma\to\Gamma$ is a transduction and $\Sigma'=\Sigma\sqcup\set{\approx}$, where $\approx$ is a binary relation symbol.
    Then there is a transduction 
    \[(\coprod_\approx J)\from \Sigma'\to\Gamma\]
    such that for every $\Sigma'$-structure $\str A$ in which $\approx$ defines an equivalence relation on a subset $X$ of the domain of $\str A$,
    \[(\coprod_\approx J)(\str A)=\coprod_{C\in X/{\approx}}J(\str A[C])\]
    (the $\Sigma'$-structure $\str A[C]$ above is naturally treated as a $\Sigma$-structure).
\end{lemma}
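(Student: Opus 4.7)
The plan is as follows. Set $(\coprod_\approx J)$ to have the same copy factor $k$ and the same unary predicates $U_1,\ldots,U_\ell$ as $J$; all the work is in constructing a new underlying interpretation $I^*$. The intuition is that a global choice of unary predicates on $k\times\str A$ automatically yields, within the subset of tuples $(a,i)$ with $a$ in a fixed $\approx$-class $C$, an arbitrary choice of unary predicates on $k\times\str A[C]$, so one only needs to force $I^*$ to act ``block by block'' on the partition of $X$ induced by $\approx$ and to leave the complement of $X$ out of the domain.

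To achieve this I would first define a formula $\approx^{\!*}(x,y)$ over the signature $\Sigma\sqcup\{\approx,U_1,\ldots,U_\ell,M\}$ that captures ``$x$ and $y$ belong to the same $\approx$-class, regardless of the copy index''; explicitly,
\[
\approx^{\!*}(x,y)\;:=\;\exists x'\exists y'.\bigl((x{=}x'\lor M(x,x'))\land(y{=}y'\lor M(y,y'))\land x'\approx y'\bigr).
\]
Using the convention that a binary relation of $\str A$ lifts to $k\times\str A$ copy-wise, one checks that on $k\times\str A$ this formula holds on $((a,i),(b,j))$ iff $a\approx b$ in $\str A$; in particular $\approx^{\!*}(x,x)$ characterises membership in the set of pairs $(a,i)$ with $a\in X$.

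Next, for every formula $\psi$ occurring in $I$ and each of its free variables $x_1$, I would form the $x_1$-relativised variant $\psi^{\triangleleft x_1}$ by descending through the syntax: atomic formulas and Boolean connectives are untouched, while $(\exists y.\chi)^{\triangleleft x_1}$ becomes $\exists y.(\approx^{\!*}(y,x_1)\land\chi^{\triangleleft x_1})$ and dually for $\forall$. I then take $I^*$ to have domain formula $\delta^*(x):=\approx^{\!*}(x,x)\land \delta^{\triangleleft x}(x)$ and, for each $R\in\Gamma$ of arity $m$, relation formula
\[
\phi_R^*(x_1,\ldots,x_m)\;:=\;\bigwedge_{i=2}^{m}\approx^{\!*}(x_1,x_i)\;\land\;\phi_R^{\triangleleft x_1}(x_1,\ldots,x_m).
\]
The extra conjunction ensures that $\phi_R^*$ can only hold on tuples lying inside a single block, so relations across distinct classes $C,C'$ never appear; inside a block, the relativisation guarantees that $\phi_R^*$ evaluates on $k\times\str A$ exactly like $\phi_R$ evaluates on $k\times\str A[C]$.

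The final step is the correctness check: fix a $\Sigma'$-structure $\str A$ on which $\approx$ is an equivalence on $X$, and fix a unary expansion of $k\times\str A$. For each class $C\in X/{\approx}$, the substructure induced by $\{(a,i):a\in C,\ i\in[k]\}$ is canonically identified with the corresponding unary expansion of $k\times\str A[C]$, and on this block the formulas $\delta^*,\phi_R^*$ compute $\delta,\phi_R$. Summing over $C$ and observing that all other tuples are killed by the $\approx^{\!*}$-conjunctions gives $I^*(\widehat{k\times\str A})=\coprod_{C}I(\widehat{k\times\str A[C]})$, from which the statement follows once one lets the choice of unary predicates vary. The only real obstacle is the definition of $\approx^{\!*}$ and the verification that relativisation by $\approx^{\!*}$ faithfully confines each quantifier to its block despite the copy-wise lift of $\approx$; everything else is a routine induction on formula structure.
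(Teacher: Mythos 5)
Your argument is correct. Note that the paper does not actually supply its own proof of Lemma~\ref{lem:parallel-application} --- it simply cites \cite[Lemma 29]{gajarsky2020sbe} --- so there is no in-paper argument to compare against. Your construction is the standard relativization argument: lift $\approx$ across copies via the auxiliary formula $\approx^{\!*}$ (correctly handling both the $x=x'$ and $M(x,x')$ cases so that $\approx^{\!*}((a,i),(b,j))$ holds iff $a\approx b$), relativize every quantifier in the domain and relation formulas of $I$ to the block of a designated free variable, guard the relation formulas with the conjunction $\bigwedge_i \approx^{\!*}(x_1,x_i)$ so that tuples straddling two blocks are rejected, and restrict the domain by $\approx^{\!*}(x,x)$ (which picks out exactly the lift of $X$). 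The two directions of the set equality go through as you sketch: a unary expansion of $k\times\str A$ restricts to unary expansions of each $k\times\str A[C]$, and conversely, independent expansions on the blocks glue to a single expansion of $k\times\str A$ (colors on $V\setminus X$ being irrelevant). This is sound and is almost certainly the same construction as in the cited source.
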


\paragraph{Monadic stability and structurally bounded expansion.}
 
The following definition and theorem are from~\cite{gajarsky2020sbe}. We remark that the notion of transduction used there slightly differs from our definition, as it uses unary functions. However, unary functions can be modelled as binary relations and so the results from~\cite{gajarsky2020sbe} apply in our setting.  More explicitly, Theorem~\ref{thm:SBE} below follows immediately from Proposition~18 of~\cite{gajarsky2020sbe} after removing the adjectives ``quantifier-free'' and ``almost quantifier-free''.
\begin{definition}
A class $\CC$ of graphs has \emph{structurally bounded expansion} if it is a transduction of a class of bounded expansion.
\end{definition}


\begin{theorem}[\cite{gajarsky2020sbe}]
\label{thm:SBE}Let $\CC$ be a class of graphs which has structurally bounded expansion.
There is a pair of transductions $S,S'$ such that 
$S(\CC)$ is a class of bounded expansion and 
$G\in S'(S(G))$, for all $G\in \CC$.
In particular, $\CC$ is transduction equivalent with a class of bounded expansion.
\end{theorem}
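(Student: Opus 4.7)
The first half of the statement (existence of $S'$) is almost immediate: by definition of structurally bounded expansion there exists a class $\DD$ of bounded expansion and a transduction $T_0 \from \Sigma_{\DD}\to\Sigma_G$ with $\CC\subseteq T_0(\DD)$. We set $S'\coloneqq T_0$. The content of the theorem is the existence of a transduction $S$ going the other way, such that $S(\CC)$ is of bounded expansion and $G\in T_0(S(G))$ for every $G\in\CC$.

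The plan is to invert $T_0$ up to nondeterministic guessing. For $G\in \CC$, pick any $D_G\in \DD$ with $G\in T_0(D_G)$: by definition there is a unary expansion $\widehat{k\times D_G}$ of $k$ copies of $D_G$ such that $I_0(\widehat{k\times D_G})=G$, where $I_0$ is the interpretation underlying $T_0$. The transduction $S$ will, when given $G$, take $k'$ copies (for some fixed $k'$ depending only on $T_0$) and introduce enough unary predicates to nondeterministically guess (a) the embedding of the domain of $\widehat{k\times D_G}$ into the copies of $G$, and (b) the interpretation of every unary predicate of $\widehat{k\times D_G}$. It then uses a fixed interpretation to output the $\Sigma_{\DD}$-structure $\widehat{k\times D_G}$ itself (or the reduct thereof to the signature of $\DD$, together with the extra unary decorations). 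Since $\widehat{k\times D_G}$ is obtained from $D_G\in \DD$ by taking $k$ disjoint copies and adding a fixed number of unary predicates --- both operations that preserve bounded expansion in the appropriate sense --- the class $S(\CC)$ is contained in a class of bounded expansion. Setting $S'\coloneqq T_0$ then yields $G\in S'(S(G))$ by construction.

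The main obstacle is making the inversion work as a genuine transduction: we must be able to \emph{define}, via fixed FO formulas on $G$ together with the guessed unary predicates, the binary relations of $D_G$ (and the matching relation between copies). A generic transduction need not admit such an inverse, because the interpretation $I_0$ may involve arbitrarily complex FO formulas that destroy the atomic structure of $D_G$. The workaround, which is the technical heart of the cited result, is to first prove a normal form theorem asserting that any transduction from a class of bounded expansion to $\CC$ may be replaced, without loss of generality, by one whose interpretation part is of a very restricted syntactic form (``almost quantifier-free''). In this normal form the atomic relations of the input structure can be recovered from the output up to a bounded amount of extra unary information, which is exactly what $S$ is allowed to guess. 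This normal form is established via quantifier-elimination arguments on bounded-expansion classes, exploiting their low-shrubdepth covers and the resulting definable decompositions.

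In summary, once the normal form is available the construction of $S$ is essentially mechanical: copy, introduce predicates that guess the witness $\widehat{k\times D_G}$, and read it off via a fixed interpretation. The remaining verification is that the class of structures produced this way is of bounded expansion, which follows from closure of bounded expansion under disjoint copies and unary expansions.
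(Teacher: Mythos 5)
The paper does not prove Theorem~\ref{thm:SBE} itself; it cites it from~\cite{gajarsky2020sbe} and merely remarks that it ``follows immediately from Proposition~18 of~\cite{gajarsky2020sbe} after removing the adjectives `quantifier-free' and `almost quantifier-free'.'' Your sketch of the underlying proof is a faithful account of that cited argument: you correctly identify the almost-quantifier-free normal form for transductions of bounded-expansion classes (obtained via quantifier elimination over low-shrubdepth covers) as the technical ingredient that makes the inversion a genuine transduction, and the overall plan --- nondeterministically guess the witness $\widehat{k\times D_G}$ using extra copies and unary predicates, then read it off --- is exactly the mechanism of the cited result. Two small imprecisions are worth noting: $S'$ cannot literally be $T_0$ because $S$ outputs unary expansions of $k\times D_G$ rather than $D_G$ itself, so $S'$ should be the interpretation $I_0$ underlying $T_0$ (or $T_0$ precomposed with the reduct to $\Sigma_\DD$); and the guessing step implicitly requires that the domain of $D_G$ be embeddable into a bounded number of copies of $V(G)$, which is not automatic for arbitrary transductions and is precisely what the normal form secures. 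Neither imprecision affects the overall correctness of the sketch as a description of the proof of the cited proposition.
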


We now come to one of the central notions in this paper, which originates from the work of Baldwin and Shelah~\cite{baldwinshelah}.

\begin{definition}
    A class of structures $\CC$ is \emph{monadically stable}
    if there is no transduction $T$ such that $T(\CC)$ contains 
    the class of all ladders (see Fig.~\ref{fig:ladders}).
\end{definition}

It is known that every nowhere dense class of graphs, in particular, every class of bounded expansion, is monadically stable~\cite{adler2014interpreting}. Hence, every class of structurally bounded expansion is monadically stable, by Corollary~\ref{cor:transductions-compose}.

We will use the following straightforward characterization of monadic stability.
\begin{lemma}\label{lem:mon-stable-transduction}
    Let  $\CC$ be a class of structures.
    Then $\CC$ is not monadically stable if and only if there is a transduction  $T$ which outputs bipartite graphs such that $T(\CC)$ has unbounded ladder index.
\end{lemma}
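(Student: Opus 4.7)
The argument has two directions, and one is nearly immediate. I will handle the easy direction first to set up notation, and then describe the transduction required for the harder direction.

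\textbf{From non-stability to an index-witnessing transduction.} Suppose $\CC$ is not monadically stable. By definition, there is a transduction $T$ such that $T(\CC)$ contains every ladder. Since ladders are, by definition, bipartite graphs (with distinguished left and right sides), the output signature of $T$ is the bipartite graph signature, so $T$ already outputs bipartite graphs. Moreover, the ladder of order $k$ trivially has ladder index at least $k$, so $T(\CC)$ has unbounded ladder index. Thus $T$ itself witnesses the right-hand side of the equivalence.

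\textbf{From an index-witnessing transduction to non-stability.} Now assume there exists a transduction $T\from \Sigma\to \Gamma$ outputting bipartite graphs such that $T(\CC)$ has unbounded ladder index. My plan is to construct a second transduction $T'$ from bipartite graphs to bipartite graphs that, on input a bipartite graph $H$ containing a ladder of order $k$, outputs the ladder of order $k$ itself; then $T'\circ T$ is a transduction (by the composition lemma) with $\mathit{Ladders}\subseteq (T'\circ T)(\CC)$, showing that $\CC$ is not monadically stable.

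I define $T'$ as copyless and built on two fresh unary predicates $X,Y$. Given a bipartite graph $H$ with sides $L,R$ containing a ladder of order $k$, witnessed by sequences $x_1,\dots,x_k\in L$ and $y_1,\dots,y_k\in R$, the relevant unary expansion sets $X=\{x_1,\dots,x_k\}$ and $Y=\{y_1,\dots,y_k\}$. (Although the definition of a ladder does not explicitly require these vertices to be distinct, the adjacency pattern of a ladder forces them to be so.) The underlying interpretation then takes domain formula $\delta(v)\equiv X(v)\lor Y(v)$, left-side formula $L'(v)\equiv X(v)$, right-side formula $R'(v)\equiv Y(v)$, and edge formula $E'(v,w)\equiv E(v,w)$. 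Because $H$ is bipartite and $X\subseteq L$, $Y\subseteq R$, the induced bipartite subgraph $H[X,Y]$ is exactly the ladder of order $k$ by the defining property of the ladder. Hence the output of $T'$ on $H$ contains the ladder of order $k$.

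\textbf{Conclusion.} Fix any $k$. By hypothesis there exists $G_k\in\CC$ and $H_k\in T(G_k)$ whose ladder index is at least $k$, so $H_k$ contains a ladder of order $k$; then by the construction above, $T'(H_k)$ contains the ladder of order $k$, hence so does $(T'\circ T)(G_k)\subseteq (T'\circ T)(\CC)$. As this holds for all $k$, the transduction $T'\circ T$ witnesses that $\CC$ is not monadically stable. The only nontrivial content is the construction of $T'$, which is a routine ``mark-and-restrict'' transduction; the rest is bookkeeping with the composition lemma for transductions.
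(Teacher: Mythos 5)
Your proof is correct and follows the same two-step strategy as the paper: the forward direction is immediate from the definition, and the backward direction extracts ladders as induced substructures via a mark-and-restrict transduction and then invokes closure of transductions under composition. The only difference is that you spell out the transduction $T'$ explicitly (the paper merely asserts its existence); the parenthetical remark about distinctness of the ladder vertices is correct and a nice touch, though not strictly needed.
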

\begin{proof}
The left-to-right implication is trivial, since if $\CC$ is not monadically stable then 
there is a transduction $T$ such that $T(\CC)$ is the class of all ladders, and those (viewed as bipartite graphs) have unbounded ladder index.

For the right-to-left implication, suppose $T(\CC)$ is a class of bipartite graphs of unbounded ladder index, for some transduction $T$. 
As a bipartite graph of ladder-index $k$ contains a ladder of length $k$ as an induced substructure, there is a transduction $S$ such that $S(T(\CC))$ contains all ladders. 
    By Corollary~\ref{cor:transductions-compose}, $\CC$ is not monadically stable.
\end{proof}

Obviously, if a class of graphs is monadically stable, then it is also graph-theoretically stable. The converse is not necessarily true, as witnessed by the class of $1$-subdivided ladders. However, it turns out that if one restrict attention to monadically dependent classes, the two notions coincide.

\begin{theorem}[\cite{nesetril2021rw_stable}]\label{thm:mon-stab}
    A class of graphs $\CC$ is monadically stable if and only if it is monadically dependent and is graph-theoretically stable.
\end{theorem}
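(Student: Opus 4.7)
For the forward direction, assume $\CC$ is monadically stable. Monadic dependence follows because if some transduction of $\CC$ contained all graphs, then a further domain-restricting transduction would produce all ladders, contradicting monadic stability via Corollary~\ref{cor:transductions-compose}. Graph-theoretic stability follows similarly: if some $G\in\CC$ witnesses a semi-induced ladder of order $n$ via sequences $(x_i)$ and $(y_i)$ in $V(G)$, then a transduction that takes two copies of $G$, marks the $x_i$'s in the first copy with a unary predicate $L$ and the $y_i$'s in the second with $R$, restricts the domain to marked vertices, and inherits the inter-copy edge relation from $G$, outputs a bipartite graph of ladder index at least $n$. Thus unbounded ladder index in $\CC$ yields, through a single transduction, a class of bipartite graphs of unbounded ladder index, contradicting monadic stability via Lemma~\ref{lem:mon-stable-transduction}.

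For the backward direction, assume $\CC$ is monadically dependent and graph-theoretically stable, and suppose for contradiction that $\CC$ is not monadically stable. By Lemma~\ref{lem:mon-stable-transduction}, there is a transduction $T$ with $k$ copies, unary predicates $U_1,\dots,U_\ell$, and edge formula $\phi(x,y)$, such that $T(\CC)$ has unbounded ladder index. Unfolding, for every $n$ there are $G_n\in\CC$ and a monadic expansion $\widehat{G_n}$ of $k\times G_n$ together with sequences $(x_i^n)_{i\le n}$, $(y_i^n)_{i\le n}$ such that $\widehat{G_n}\models\phi(x_i^n,y_j^n)$ if and only if $i\le j$. Hence the formula $\phi$ has the \emph{order property} on the class of monadic expansions of copies of graphs from $\CC$. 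The core step is the following dichotomy: \emph{if a formula $\phi(x,y)$ has the order property on monadic expansions of graphs from $\CC$, then either the atomic edge formula $E(x,y)$ already has the order property on $\CC$ itself (contradicting graph-theoretic stability), or $\CC$ admits a transduction onto the class of all finite graphs (contradicting monadic dependence).}

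The dichotomy is proved by induction on the quantifier rank of $\phi$. The base case handles quantifier-free $\phi$ written as a Boolean combination of atoms over $\Sigma\sqcup\{U_1,\dots,U_\ell,M,=\}$. Applying the finite Ramsey theorem to the witnesses of the order property, we extract an indiscernible subsequence on which $\phi$ reduces to a single atomic pattern; if this pattern involves the edge relation $E$, we obtain, after absorbing the unary predicates into the choice of subsequence, a long semi-induced ladder in some $G_n$, contradicting graph-theoretic stability; if instead the pattern is purely unary or involves only the copy relation $M$ and equality, then the monadic colors can be chosen freely to realise arbitrary bipartite adjacency patterns between the two sides, producing via a further transduction all finite bipartite graphs, and hence all finite graphs, contradicting monadic dependence. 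The inductive step eliminates an outermost quantifier by Skolemization: for each $n$ the witnesses of the outermost $\exists$-quantifier on the ladder-witness sequences are recorded by a fresh unary predicate, yielding a formula of strictly smaller quantifier rank with the order property on an enlarged monadic expansion of $\CC$, which is still monadically dependent and whose underlying graph class is still graph-theoretically stable.

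The main obstacle is the bookkeeping in the inductive step: the Ramsey extraction, the introduction of Skolem predicates, and the passage to an enlarged monadic expansion must jointly preserve both hypotheses, so that the recursive call remains valid. In particular, one must choose the indiscernible subsequences to be homogeneous with respect to all auxiliary unary predicates simultaneously and argue that restricting to such subsequences cannot destroy graph-theoretic stability of the ambient class. The most delicate point is the base-case atom analysis, where the interaction between $E$, the copy relation $M$, and the unary predicates splits into several sub-cases, each of which must be traced back to exactly one of the two asserted contradictions.
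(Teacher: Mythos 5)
The statement is cited by the paper from~\cite{nesetril2021rw_stable} without a proof of its own, so there is no in-paper argument to compare against; I will therefore assess your attempt on its own merits.

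Your forward direction is correct and essentially routine: monadic stability gives monadic dependence because ladders are induced subgraphs of general graphs, and it gives graph-theoretic stability because the semi-induced ladders can be pulled out by a transduction that marks the two sides and takes an induced subgraph, which is the content of Lemma~\ref{lem:mon-stable-transduction}.

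The backward direction is where the actual content of the theorem lives, and there is a genuine gap in your inductive step. You propose to strip the outermost quantifier from $\phi(x,y)=\exists z\,\psi(x,y,z)$ by ``recording the witnesses by a fresh unary predicate.'' But a unary predicate can only mark a \emph{set} of elements; the Skolem data you need is a function $(i,j)\mapsto z_{ij}$, assigning a witness to each ordered pair, and this is not recoverable from the unordered set $\{z_{ij}\}$. Once that binding is lost there is no general way to rewrite $\exists z\,\psi(x,y,z)$ as a strictly lower-rank formula that still has the order property on the expanded structure. This is not a bookkeeping issue to be patched by being careful: the failure of monadic Skolemization for arbitrary first-order formulas is precisely the obstacle that makes the backward implication nontrivial, and the known proofs of this theorem (in~\cite{nesetril2021rw_stable} and in the model-theoretic literature on monadic NIP/monadic stability) route around it by using indiscernibles and structure theory for monadically dependent classes rather than a naive quantifier-rank induction. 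Separately, your base-case dichotomy is slightly off: on an indiscernible pair-sequence, atoms involving only unary predicates, equality, or the copy relation $M$ cannot realize the order property for $n$ larger than the number of copies, so the ``purely unary or $M$'' sub-case you introduce to reach the monadic-dependence contradiction does not actually arise; the only atom that can carry the order on an indiscernible subsequence is the edge relation, which feeds the ladder contradiction. This latter point is not fatal, but the Skolemization step is.
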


\subsection{Twin-width}
Let $\str A$ be a binary structure.
Generalizing the graph notation, we say that a pair of disjoint subsets $X,Y$ of the domain of $\str A$ is \emph{pure} if for every binary relation symbol $R\in \Sigma$, either $R(x,y)$ holds for all $x\in X$ and $y\in Y$, or $\lnot R(x,y)$ holds for all $x\in X$ and $y\in Y$.



\begin{definition}
An \emph{uncontraction sequence of width $d$} of a binary structure $\str A$ is a sequence $\cal P_1,\ldots,\cal P_n$ of partitions of the domain of $\str A$ such that:
\begin{itemize}[nosep]
    \item $\cal P_1$ is a partition with one part only;
    \item $\cal P_n$ is a partition into singletons;
    \item for $t=1,\ldots,n-1$,
    the partition $\cal P_{t+1}$ is obtained from $\cal P_t$  by splitting exactly one of the parts into two;
    \item for every part $U\in \cal P_t$ there are at most $d$  parts $W\in \cal P_t$ other than $U$ for which the pair $U,W$ is not pure.
\end{itemize} 
The \emph{twin-width} of $\str A$ is the least $d$ such that there is an uncontraction sequence of $\str A$ of width~$d$.
\end{definition}

We remark that the original definition of~\cite{bonnet2020tww} considers {\em{contraction sequences}}, which are reversals of uncontraction sequences. In this work it will be convenient to reverse the way of thinking, similarly as in~\cite{bonnet2021tww2,bonnet2020tww3,dreier2021twinwidth}.

Note that our definition of an uncontraction sequence completely ignores the unary predicates in the structure $\str A$. 

In the case of ordered bipartite graphs we will consider uncontraction sequences tailored to them:

\begin{definition}
A \emph{convex uncontraction sequence of width $d$} of an ordered bipartite graph $G$ with sides $L$ and $R$ is a sequence $\cal P_1,\ldots,\cal P_{n}$ of divisions of the vertex set of $G$ such that:
\begin{itemize}[nosep]
    \item $\cal P_1$ is a division with two parts $L$ and $R$;
    \item $\cal P_{n}$ is a division into singletons;
    \item for $t=1,\ldots,n-1$,
    the division $\cal P_{t+1}$ is obtained from $\cal P_t$  by splitting exactly one of the parts into two; and
    \item for every part $U\in \cal P^L_t$, there are at most $d$ parts $W\in \cal P^{R}_t$ for which the pair $U,W$ is impure, and the symmetric condition holds also for the parts of $\cal P^R_t$.
\end{itemize}  The \emph{convex twin-width} of a bipartite graph $G$ is the least $d$ such that $G$ has a convex uncontraction sequence of width $d$.
\end{definition}
 
Note that in convex uncontraction sequences of bipartite graphs we have $n=|L| + |R| -1 = |V(G)|-1$.

It is easily seen that one can turn a convex uncontraction sequence $\cal P_1,\ldots,\cal P_{n}$ of width $d$ of an ordered bipartite graph $G$ into an uncontraction sequence $\cal P_1',\ldots,\cal P_{n+1}'$  of $G$ regarded as a binary structure by setting $\cal P_1' = V(G)$ and $\cal P_{i+1}'= \cal P_{i}$ for $i=1,\ldots,n$. This transformation preserves the width.
This immediately implies the following.

\begin{lemma}
\label{lem:btww-tww}
If an ordered bipartite graph $G$ has convex twin-width at most $d$, then regarded as a binary structure, it has twin-width at most $d$.
\end{lemma}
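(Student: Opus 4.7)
The plan is to formalize the construction sketched by the authors in the paragraph immediately preceding the lemma. Given a convex uncontraction sequence $\cal P_1,\ldots,\cal P_n$ of width $d$ for the ordered bipartite graph $G$, I set $\cal P_1' \coloneqq \set{V(G)}$ and $\cal P_{i+1}' \coloneqq \cal P_i$ for $i=1,\ldots,n$, obtaining a sequence $\cal P_1',\ldots,\cal P_{n+1}'$ of partitions of $V(G)$. The structural conditions in the definition of an uncontraction sequence are immediate: $\cal P_1'$ has a single part, $\cal P_{n+1}' = \cal P_n$ is the partition into singletons, and each transition is obtained by splitting exactly one part into two. Indeed, $\cal P_2' = \cal P_1 = \set{L,R}$ is obtained from $\cal P_1'$ by splitting $V(G)$ into its two sides $L$ and $R$, while every subsequent transition $\cal P_{i+1}' \to \cal P_{i+2}'$ corresponds to the transition $\cal P_i \to \cal P_{i+1}$ in the convex sequence and therefore splits exactly one part.

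The only nontrivial point is to verify that the width of the new sequence is still at most $d$, which requires counting impure pairs in the binary structure consisting of the edge relation $E$ and the order $\le$ (the unary predicates $L,R$ are ignored by the definition of twin-width). For $\cal P_1'$ the width is $0$ since the partition has a single part. For $t\ge 1$, consider two distinct parts $U,W\in \cal P_{t+1}' = \cal P_t$. Since $\cal P_t$ is a division, both $U$ and $W$ are convex in $\le$ and each is entirely contained in $L$ or in $R$. If $U$ and $W$ lie on opposite sides, then by the ordering axiom of an ordered bipartite graph every element of the $L$-side is below every element of the $R$-side, so $\le$ restricted to the pair $U,W$ is constant. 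If $U$ and $W$ lie on the same side, then as two disjoint convex subsets of a totally ordered set, one is entirely below the other, so again $\le$ is constant on the pair $U,W$. Hence the order $\le$ never contributes an impure pair.

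It remains to account for impurity with respect to $E$. Since $G$ is bipartite, any pair $U,W$ contained in the same side is automatically $E$-pure (no edges). For a pair $U\in \cal P_t^L, W \in \cal P_t^R$, the $E$-impurity is precisely the impurity condition in the definition of a convex uncontraction sequence, which bounds the number of such impure partners of $U$ (and of $W$) by $d$. Combining the two observations, each part $U\in \cal P_{t+1}'$ has at most $d$ partners $W\neq U$ for which the pair $U,W$ is impure in the binary structure. This proves that $\cal P_1',\ldots,\cal P_{n+1}'$ is an uncontraction sequence of $G$, viewed as a binary structure, of width at most $d$, establishing the lemma. I do not anticipate any real obstacle: the argument is essentially a bookkeeping exercise, and the only substantive observation is the convexity argument showing that $\le$ never contributes impurity between two parts of a division.
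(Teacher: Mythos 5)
Your proof is correct and takes essentially the same approach as the paper: the paper states this exact construction (prepend $\set{V(G)}$, shift the rest) in the paragraph before the lemma and leaves the width verification to the reader, which you carry out carefully, including the key observation that $\le$ never contributes impurity because parts of a division are convex and lie within a single side.
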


We also need a converse.
\begin{lemma}
\label{lem:conv_tww}
    If a bipartite graph $G$ has twin-width at most $d$ when regarded as a binary structure,
then there is an ordering $\le$ on $V(G)$ 
such that $G$ equipped with $\le$ is an ordered bipartite graph 
of convex twin-width at most $d+1$.
\end{lemma}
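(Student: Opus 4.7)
Starting from an uncontraction sequence $\cal P_1,\ldots,\cal P_n$ of width $d$ witnessing that $G$, viewed as a bipartite binary structure, has twin-width at most $d$, I would first construct the order $\le$ on $V(G)$. The sequence induces a rooted binary tree whose root is $V(G)$, whose internal nodes are the parts that get split, and whose leaves are singletons. Fixing an arbitrary ordering of the two children at each internal node and performing a depth-first traversal of the leaves, then restricting to $L$-leaves to obtain an order on $L$ and similarly to $R$-leaves to obtain an order on $R$, and concatenating so that every element of $L$ is smaller than every element of $R$, yields an order $\le$ with the property that for every part $U$ ever appearing in the sequence, both $U\cap L$ and $U\cap R$ are convex in their respective sides.

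Next, I refine each $\cal P_t$ along the bipartition by setting
\[
  \cal Q_t \coloneqq \{U\cap L : U\in\cal P_t,\, U\cap L\neq\emptyset\}\cup\{U\cap R : U\in\cal P_t,\, U\cap R\neq\emptyset\}.
\]
Each $\cal Q_t$ is a division of $G$ with respect to $\le$, with $\cal Q_1=\{L,R\}$ and $\cal Q_n$ the division into singletons. To bound the convex width of $\cal Q_t$ by $d+1$, fix $A=U\cap L\in\cal Q_t^L$ and consider any $B\in\cal Q_t^R$ impure with $A$. Writing $B=W\cap R$ for some $W\in\cal P_t$, either $W=U$, contributing at most one impurity (with $U\cap R$), or $W\neq U$, in which case impurity of $(U\cap L, W\cap R)$ immediately forces impurity of $(U,W)$ in $\cal P_t$; the width-$d$ bound on $\cal P_t$ then yields at most $d$ such $W$. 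Altogether at most $d+1$ impure parts, and the symmetric argument applies to $\cal Q_t^R$.

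Finally, I would turn $\cal Q_1,\ldots,\cal Q_n$ into a genuine convex uncontraction sequence (one split per step) by inserting an intermediate division between $\cal Q_t$ and $\cal Q_{t+1}$ whenever the $\cal P$-split $U\to U_1\cup U_2$ induces two splits in $\cal Q$; this happens exactly when all four sets $U_i\cap L, U_i\cap R$ (for $i=1,2$) are nonempty. The main obstacle is to ensure that the intermediate division still has convex width at most $d+1$: a naive choice (for instance, splitting $U\cap L$ before $U\cap R$) can make $U\cap R$ impure with both $U_1\cap L$ and $U_2\cap L$, whereas it had been impure only with $U\cap L$ in $\cal Q_t$, threatening the budget. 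The resolution is to select which side to split first based on the purity status of $(U\cap L, U\cap R)$ in $\cal Q_t$, and to charge the new impurities across the split to the width-$d$ budget of the pair $(U_1, U_2)$ in $\cal P_{t+1}$, thereby keeping the intermediate division within the $d+1$ budget. Once this is established, the resulting sequence of divisions, together with the order $\le$, is the desired convex uncontraction sequence of $G$ of width at most $d+1$.
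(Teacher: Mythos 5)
Your construction mirrors the paper's proof almost exactly: refine each partition of the uncontraction sequence along the bipartition, build the order so that every refined part is convex, and insert an intermediate division whenever a single split of $\cal P_i$ induces two splits of $\cal Q_i$. The paper builds the order last (by iteratively reordering within the convex interval of the part being split), while you build it first from a DFS of the split-tree; this is a cosmetic difference.

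You put your finger on a genuine subtlety in the intermediate step, and one that the paper in fact glosses over. After asserting the width-$(d+1)$ bound for the unadjusted sequence $\cal P_1',\ldots,\cal P_n'$, the paper inserts an intermediate that splits the $L$-part first and simply declares that ``the width remains bounded by $d+1$'' without argument. This is not true for that fixed choice: take $L=\{a_1,a_2,w\}$, $R=\{b_1,b_2\}$ with edges $a_1b_1$, $a_2b_2$, $wb_1$, and the width-$1$ uncontraction sequence
$\{V\}$, $\{\{a_1,a_2,b_1,b_2\},\{w\}\}$, $\{\{a_1,b_1\},\{a_2,b_2\},\{w\}\}$, $\ldots$. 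The refined step $\cal P_2'\to\cal P_3'$ splits both $\{a_1,a_2\}$ and $\{b_1,b_2\}$; splitting $\{a_1,a_2\}$ first leaves $\{b_1,b_2\}$ impure towards $\{a_1\}$, $\{a_2\}$, \emph{and} $\{w\}$, giving width $3=d+2$. So you are right to be worried. However, your proposed resolution is not yet a proof. Choosing the split order by ``the purity status of $(U\cap L,U\cap R)$'' only settles the easy case (if that pair is pure the problem does not arise at all), and the phrase ``charge the new impurities to the width-$d$ budget of the pair $(U_1,U_2)$'' is not an argument: it is entirely possible that $U_1$ and $U_2$ are anticomplete (hence pure) in $\cal P_{i+1}$ even when $U^L,U^R$ is impure, so there is no impurity budget to charge against, and the external impurity counts of $U^L$ and $U^R$ can in principle both be as large as $d$. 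You would need a substantially finer case analysis to recover $d+1$. The pragmatic fix --- which is what the paper's construction actually delivers once one counts carefully --- is to weaken the bound to $d+2$: the only part that can gain an extra impurity in the intermediate is the yet-unsplit half of $U$, and it gains at most one; every other part of the intermediate either coincides with a part of $\cal P_i'$ against the same opposite-side partition, or with a part of $\cal P_{i+1}'$ against a coarsening of the opposite-side partition, so its impurity count is at most $d+1$. Since the lemma is only used to get \emph{some} bound on the convex twin-width in terms of $d$, the constant $d+2$ is perfectly adequate for everything downstream (Theorem~\ref{thm:main} in particular), and stating the lemma with $d+2$ would make the proof airtight.
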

\begin{proof}
Let $\cal P_1, \ldots, \cal P_n$ be an uncontraction sequence of $G$ of width $d$ and let $L$ and $R$ be the  sides of $G$. In particular, $n=|V(G)|$.  We will first construct a sequence $\cal P_1', \ldots, \cal P_{n-1}'$ of partitions of width at most $d+1$ such that $\cal P_1' = \{L,R\}$ and each part of any $\cal P_i'$ is a subset of $L$ or $R$, and then we will construct an ordering $\le$ of $V(G)$ such that each part of each $\cal P_i'$ is convex with respect to $\le$.

For any subset $S$ of $V(G)$ let $S^L$ and $S^R$ denote the sets $S \cap L$ and $S \cap R$,  respectively.
For every $i \in [n]$ let $\cal P_i'$ denote the partition of $V(G)$ obtained from $\cal P_i$  by replacing each part $A \in \cal P_i$ by two parts, $A^L$ and $A^R$.  Note that if $A\in \cal P_i$ is impure with respect to parts $B_1,\ldots,B_k\in\cal P_i$, where $k\le d$, then each part $B'$ in $\cal P_i'$ such that $A^L$ is impure with respect to $B'$ 
is among $B_1^R,\ldots,B_k^R$ and~$A^R$.
A symmetric statement holds for $A^R$.
Hence, each part in $\cal P_i'$ is impure towards at most $d+1$ parts.

The sequence of partitions $\cal P_1', \ldots, \cal P_n'$ does not have the property that for each $i$ the partition $\cal P_{i+1}'$ is obtained from $\cal P_i'$ by splitting exactly one part of $\cal P_i'$ into two.  We therefore adjust $\cal P_1', \ldots, \cal P_n'$ as follows for each $i\in [n]$: 
\begin{itemize}[nosep]
\item If $\cal P_{i+1}'=\cal P_i'$, then we remove $\cal P_{i+1}'$ from the sequence.
\item If $\cal P_{i+1}'$ is obtained from $\cal P_i'$ by splitting exactly one part, then we do nothing.
\item If $\cal P_{i+1}'$ differs from $\cal P_i'$ by splitting parts $A \subseteq L$ and $B \subseteq R$ into $A_1,A_2$ and $B_1,B_2$, then we add an intermediate partition between $\cal P_{i}'$  and $\cal P_{i+1}'$ which differs from $\cal P_i'$ by splitting $A$ into $A_1,A_2$.
\end{itemize}

After this, we adjust the indices to account for removed and added partitions and obtain $\cal P_1', \ldots, \cal P_{n-1}'$ in which each $\cal P_{i+1}'$ is obtained from $\cal P_i'$ by splitting exactly one part of $\cal P_i'$ into two and in which every part of every $\cal P_i'$ is either in $L$ or in $R$. The width of $\cal P_1', \ldots, \cal P_{n-1}'$ remains bounded by $d+1$.

It remains to construct an ordering of $V(G)$ so that each part in each $\cal P_i'$ is convex. 
Let $\le_1$  be an order in which all vertices in $L$ are before all vertices in $R$, and within $L$ and $R$ the vertices are ordered arbitrarily. 
For $i>1$ we construct $\le_i$ from $\le_{i-1}$ as follows. If $\cal P_i'$ is obtained from $\cal P_{i-1}'$ by splitting $A$ into $A_1$ and $A_2$, then we reorder the vertices in the convex interval corresponding to $A$ so that all vertices in $A_1$ are before all vertices in $A_2$ (and the vertices within the convex subintervals corresponding to $A_1$ and $A_2$ are ordered arbitrarily).  We take $\le$ to be $\le_{n-1}$. It follows from the construction that each $\cal P_i'$ is a division with respect to this order.
\end{proof}

\medskip

Bounded twin-width is preserved by transductions:
\begin{theorem}[\cite{bonnet2020tww}]
\label{thm:interp_bd_tww}
If a class of binary structures $\CC$ can be transduced from a class of bounded twin-width, then $\CC$ also has bounded twin-width.
\end{theorem}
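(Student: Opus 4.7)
\medskip

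The plan is to unpack the definition of a transduction and show that bounded twin-width is preserved by each of the three basic operations out of which a transduction is built. Recall that a transduction $T \colon \Sigma \to \Gamma$ applied to a structure $\str A$ first produces the $k$-fold copy $k \times \str A$ (with the fresh binary relation $M$ relating different copies of the same element), then expands it by an arbitrary unary expansion $\widehat{\str A}$, and finally applies a simple interpretation $I$. So it suffices to prove each of the following: (a) if $\str A$ has twin-width at most $d$, then $k \times \str A$ has twin-width bounded by some function of $d$ and $k$; (b) expansion by unary predicates preserves twin-width; (c) if $\str A$ has twin-width at most $d$, then $I(\str A)$ has twin-width bounded by some function of $d$ and~$I$.

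Step (b) is immediate: the definition of an uncontraction sequence given in the excerpt completely ignores unary predicates. For step~(a), the plan is to take an uncontraction sequence $\cal P_1,\ldots,\cal P_n$ of $\str A$ of width $d$ and lift it to $k \times \str A$ by maintaining at each step a partition whose parts are the $k$ copies $U^{(1)},\ldots,U^{(k)}$ of each $U \in \cal P_t$; a single split of $U$ into $U_1,U_2$ in the original sequence is realized by $k$ successive splits in the lifted sequence. To verify boundedness, observe that for parts $U^{(j)}, V^{(j')}$ with $U \neq V$ in $\cal P_t$, the impurity status under any relation of $\Sigma$ mirrors that of $U$ and $V$ in $\str A$, and $M$ is anti-complete between them; and for parts $U^{(j)}, U^{(j')}$ with $j \neq j'$, the $k-1$ pairs are impure due to $M$. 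Hence each part has at most $kd + (k-1)$ impurities, yielding width bounded by $kd + k$. One also has to check that the intermediate partitions occurring during the $k$ successive splits do not increase the width by more than a constant factor, which is a routine bookkeeping.

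Step (c) is the main obstacle. The plan is the following: let the formulas of the interpretation have quantifier rank at most $q$, and consider an uncontraction sequence $\cal P_1,\ldots,\cal P_n$ of $\str A$ of width $d$. The key observation is that in any partition $\cal P_t$ of bounded width $d$, each part $U \in \cal P_t$ is \emph{almost homogeneous} with respect to the rest: it is pure towards all but at most $d$ other parts. Using a Feferman--Vaught-style argument, one shows that for every FO formula $\phi(x_1,\ldots,x_k)$ of quantifier rank at most $q$, and for parts $U_1,\ldots,U_k \in \cal P_t$, the truth value of $\phi(a_1,\ldots,a_k)$ for $a_i \in U_i$ is determined by (i) the list of impurity-neighbors of each $U_i$, and (ii) the ``local type'' of each $a_i$ with respect to those impurity-neighbors. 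Since there are only boundedly many such configurations, the parts in $\cal P_t$ can be refined by a uniformly bounded factor into smaller parts that are \emph{pure} in the interpreted structure $I(\str A)$ towards all but a bounded number of other parts. Intersecting this refinement with the domain formula $\delta$ and applying the refined sequence then yields an uncontraction sequence of $I(\str A)$ whose width is bounded by some function of $d$, $k$, and~$q$.

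The hard part will be executing step~(c) rigorously. The Feferman--Vaught-style analysis must be carried out inductively on the quantifier rank, and one has to argue that the refinement blow-up does not compound across the steps of the uncontraction sequence. In the reference \cite{bonnet2020tww} this is done via the equivalent ``mixed-minor-free matrix'' characterization of twin-width and a careful tracking of how the matrix pattern of $I(\str A)$ is controlled by that of $\str A$; an alternative route goes through preservation of bounded twin-width by \emph{first-order transductions} proved via ``homogeneous blowups'' along the contraction sequence. Either way, the combinatorial core is to show that bounded-rank FO formulas can distinguish only boundedly many behaviors within the homogeneous parts produced by the uncontraction sequence.
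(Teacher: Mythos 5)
The paper does not actually prove Theorem~\ref{thm:interp_bd_tww}; it is imported as a black box from~\cite{bonnet2020tww}, so there is no in-text argument to compare your proposal against. Judged on its own terms, your reduction into the three basic operations (a) copying, (b) unary expansion, (c) simple interpretation is the right way to attack the statement, and steps (a) and (b) are essentially complete. One small imprecision in (a): for parts $U^{(j)}, V^{(j')}$ with $U \neq V$ and $j \neq j'$, the relations of $\Sigma$ do \emph{not} mirror those between $U$ and $V$ in $\str A$ --- since $k\times\str A$ is a disjoint union of copies, the $\Sigma$-relations are entirely absent across copies and the pair is pure regardless. You overcount impurities, but the conclusion (width at most $d+2(k-1)$, say, including the intermediate steps of each block of $k$ splits) still holds, so no harm done.

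The genuine gap is in step (c), which is where essentially all of the content of the theorem resides, and you do not carry it out --- you describe the Feferman--Vaught/local-type heuristic and then explicitly defer to the source. The obstacle you flag in passing is exactly the real difficulty and should not be dismissed as bookkeeping: refining each $\cal P_t$ independently into ``local types of bounded radius'' does not automatically give a \emph{nested} sequence of partitions, which the definition of an uncontraction sequence requires. The local type of an element relative to the $\le d$ impure neighbours of its part changes as the sequence progresses, because both the part and its impure neighbours are split over time, so a naive level-by-level refinement fails. The fix is to refine the whole tree of parts (the contraction tree) in a globally consistent way, tracking types along root-to-leaf paths; in~\cite{bonnet2020tww} this is what the machinery of near-twins and near-uniform partitions along the contraction sequence (or, in the alternative route, the mixed-minor-free matrix characterization) accomplishes. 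As written, your proposal correctly identifies the strategy and the obstruction, but step (c) is an outline with a deferral, not a proof.
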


Since there are graphs of arbitrarily high twin-width, from Theorem~\ref{thm:interp_bd_tww} it follows that every class of bounded twin-width is monadically dependent. Hence, by Theorem~\ref{thm:mon-stab}, the notions of monadic stability and graph-theoretic stability coincide for classes of bounded twin-width.

Since by Lemma~\ref{lem:btww-tww} bounded convex twin-width implies bounded twin-width, we also get the following corollary.
\begin{corollary}
\label{cor:interp_tww}
If a class of binary structures $\CC$ can be transduced from a class of ordered bipartite graphs of bounded convex twin-width, then $\CC$ has bounded twin-width.
\end{corollary}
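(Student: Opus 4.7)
The plan is to derive this corollary as an immediate combination of Lemma~\ref{lem:btww-tww} and Theorem~\ref{thm:interp_bd_tww}. Let $\DD$ be the class of ordered bipartite graphs of bounded convex twin-width from which $\CC$ is transduced, say $\DD$ has convex twin-width at most $d$ for some constant $d$.

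First, I would appeal to Lemma~\ref{lem:btww-tww}: each $G\in \DD$, when regarded as a binary structure over the signature $\{E,L,R,\le\}$, has (ordinary) twin-width at most $d$. Since this holds uniformly over $G\in\DD$, the class $\DD$ viewed as a class of binary structures has bounded twin-width. The point here is that the notions of transduction and twin-width used in Theorem~\ref{thm:interp_bd_tww} are set up in the preliminaries precisely for binary structures with unary and binary symbols, which is exactly how ordered bipartite graphs are formalized in this paper.

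Now, since $\CC$ is a transduction of $\DD$ and $\DD$ has bounded twin-width in the sense of binary structures, Theorem~\ref{thm:interp_bd_tww} directly yields that $\CC$ has bounded twin-width, completing the proof. There is no real obstacle here; the only thing worth stressing is the bookkeeping that convex twin-width bounds the twin-width of the same structure interpreted in the binary signature (where the unary predicates $L$, $R$ and the order $\le$ are present but do not affect the definition of twin-width, which only looks at the binary relations through the notion of purity).
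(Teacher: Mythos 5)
Your proposal is correct and follows the paper's own (one-line) derivation exactly: Lemma~\ref{lem:btww-tww} shows the source class has bounded twin-width as binary structures, and Theorem~\ref{thm:interp_bd_tww} then gives that the transduction has bounded twin-width. Your added bookkeeping remark that the unary predicates $L$, $R$ and the order $\le$ do not affect the twin-width definition is also consistent with the paper's remark that uncontraction sequences ignore unary predicates.
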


Finally, let us remark that not every class of bounded twin-width is stable,
as witnessed by the class of ladders.


\paragraph{Sparse twin-width.}
The following definition, proposed in~\cite{bonnet2021tww2}, introduces a restriction of the concept of twin-width to sparse graphs.
    
\begin{definition}
    A class $\CC$ of graphs has  \textit{bounded sparse twin-width} if there exist integers $d$ and $s$ such that every $G \in \CC$ has twin-width at most $d$ and does not contain $K_{s,s}$ as a subgraph.
        \end{definition}

It turns out that classes of bounded sparse twin-width are also sparse in the bounded expansion sense.
        
\begin{theorem}[\cite{bonnet2021tww2}]
\label{thm:sparse_tww_be}
    Every class of graphs of bounded sparse twin-width has bounded expansion.
\end{theorem}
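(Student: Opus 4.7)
The plan is to prove that for every $r \in \N$ there is a constant $c_r$ (depending on $r$, $d$, and $s$) such that every $G \in \CC$ admits an ordering $\le$ on $V(G)$ with $\wcol_r(G, \le) \le c_r$; by Proposition~\ref{prop:scol-wcol} this is equivalent to bounded $\scol_r$, i.e. bounded expansion.

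First, I would extract a contraction sequence (reverse of an uncontraction sequence) of $G$ of width $d$, which induces a rooted binary tree $T$ whose leaves are the vertices of $G$ and whose internal nodes record the merge operations. From $T$ I would derive an ordering $\le$ on $V(G)$ by a suitable depth-first traversal: for each internal node, one of its two children is declared ``first'' and its leaves come before the leaves of the other child. The choice of which child is first will be made so that heavy parts (those of size $\ge s$) are explored as late as possible; this is the place where $K_{s,s}$-freeness is first exploited.

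Next I would analyze $\wcol_r(G, \le)$. Fix a vertex $v$ and a weak $r$-witness path $v = v_0, v_1, \ldots, v_k = w$ with $k \le r$ and $w$ the $\le$-minimum on the path. The ordering is defined so that the parts $P_t(v_i)$ containing $v_i$ at each moment $t$ of the contraction are comparable in a controlled way with $P_t(v)$. At any time, the number of parts $W \ne P_t(v)$ for which $(P_t(v), W)$ is impure is at most $d$; meanwhile every \emph{pure} pair that is complete must, by $K_{s,s}$-freeness, have one side of size less than $s$. Tracking how a path of length $r$ can cross partition boundaries, this lets me bound, level by level, the number of distinct ancestors in $T$ that a weakly $r$-reachable vertex can sit in, and hence bound the total count by a function of $d$, $s$, and $r$. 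The argument is cleanest if set up inductively on $r$: from the bound on $\wcol_{r-1}$ I derive a bound on $\wcol_r$ by decomposing each weak $r$-witness into a weak $(r-1)$-witness plus one edge and applying the impurity/completeness dichotomy.

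The main obstacle I expect is the simultaneous management of three objects: the partition impurity (which is locally bounded by $d$ at every moment), the $K_{s,s}$-free condition (which is global), and the $r$-step reachability in $G$ (which couples across many time steps of the contraction). In particular, a complete pure pair contributes many edges in $G$ even though it contributes nothing to twin-width, and one needs the $K_{s,s}$-free condition precisely to ensure that such pairs are unbalanced in size and hence contribute a bounded ``left side'' to the count of weakly reachable vertices. Once this bookkeeping is in place, the induction on $r$ should deliver the desired bound $c_r$, completing the proof that $\CC$ has bounded expansion.
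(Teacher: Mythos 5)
This statement is cited from~\cite{bonnet2021tww2}; the paper you are reading does not prove it, so there is no ``paper's own proof'' to compare against. What I can do is assess your sketch on its own.

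Your approach --- build a compatible DFS ordering from the contraction sequence and bound $\wcol_r$ for each $r$ --- aims at a valid characterization of bounded expansion, but a few things need tightening. First, a small citation slip: Proposition~\ref{prop:scol-wcol} is specific to $r=2$; for the claim that bounded $\wcol_r$ for all $r$ is equivalent to bounded expansion you want the general inequality~\eqref{eq:wcol-scol} $\scol_r \le \wcol_r \le \scol_r^r$ (and in fact the paper's definition of bounded expansion is directly in terms of $\scol_r$, so you could argue about $\scol_r$ instead and skip the translation). Second, and more substantially, the hard step --- bounding the number of weakly $r$-reachable vertices by chasing a length-$r$ path through the contraction tree --- is stated as a plan, not carried out. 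This is exactly where the real difficulty lives: a complete pure pair $(A,B)$ contributes $|A|\cdot|B|$ edges and no red edges, so $K_{s,s}$-freeness is what forces $\min(|A|,|B|)<s$; but once you follow a path for $r$ steps you are hopping between many such pairs at many different moments of the contraction sequence, and the bookkeeping that keeps the total count bounded is nontrivial. Your heuristic of ``explore heavy parts as late as possible'' also needs justification --- for $\wcol$ you typically want a vertex's reachable predecessors to be few, which is about controlling what lies \emph{before} it, and it is not immediate that preferring heavy parts last achieves this once paths wander across pure complete pairs.

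So the outline is not wrong, but it stops short of the argument. If you want a route that you can actually complete, it may be easier to aim directly at density of bounded-depth minors (the definition of bounded expansion via grad/nabla), since the contraction sequence gives you fine-grained control over edge counts of subgraphs and that is where $K_{s,s}$-freeness plugs in most cleanly; this is essentially the line taken in the cited reference. Alternatively, if you insist on coloring numbers, you should commit to a concrete ordering and prove a clean inductive lemma for $\wcol_{r}$ or $\scol_{r}$ in which the contribution of impure pairs (at most $d$ per time step) and the contribution of unbalanced complete pairs (at most $s-1$ on the small side) are separated and summed.
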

The converse implication does not hold,
as witnessed by the class of cubic graphs which has bounded expansion, but does not have bounded twin-width~\cite{bonnet2020tww}.

\begin{proposition}
\label{prop:int_stable}
If a class of binary structures $\CC$ can be transduced from a class of graphs of bounded sparse twin-width, then $\CC$ is monadically stable and has bounded twin-width.
\end{proposition}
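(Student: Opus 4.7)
The plan is to assemble the statement from results already stated in the preliminaries; there is essentially no combinatorial content to invent, only a careful chaining of preservation properties. Fix a class $\DD$ of graphs of bounded sparse twin-width and a transduction $T$ with $\CC\subseteq T(\DD)$.

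For the bounded twin-width half, by definition of bounded sparse twin-width the class $\DD$ itself has bounded twin-width, so Theorem~\ref{thm:interp_bd_tww} applied to $T$ immediately yields that $T(\DD)$, and hence its subclass $\CC$, has bounded twin-width.

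For monadic stability, the first step is to observe that $\DD$ is monadically stable. Indeed, Theorem~\ref{thm:sparse_tww_be} guarantees that $\DD$ has bounded expansion, and the remark following Theorem~\ref{thm:SBE} (citing \cite{adler2014interpreting}) states that every class of bounded expansion is monadically stable. The second step is to transfer monadic stability through~$T$: if $\CC$ failed to be monadically stable then, by the definition of monadic stability, there would exist a transduction $S$ such that $S(\CC)$ contains all ladders. Using Corollary~\ref{cor:transductions-compose}, the composition $S\circ T$ is a transduction with $(S\circ T)(\DD)\supseteq S(\CC)$ containing all ladders, contradicting the monadic stability of $\DD$. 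Hence $\CC$ is monadically stable, completing the proof.

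The only point that requires a moment's thought is this last transfer argument, i.e.\ that monadic stability is downward closed under $\fole$; but it is a direct consequence of the definition combined with the composition lemma for transductions, so no substantial obstacle arises.
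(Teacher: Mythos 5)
Your argument is correct and matches the paper's reasoning: the twin-width half is Theorem~\ref{thm:interp_bd_tww}, and the stability half combines Theorem~\ref{thm:sparse_tww_be}, the fact from~\cite{adler2014interpreting} that bounded-expansion classes are monadically stable, and closure of monadic stability under transductions via Corollary~\ref{cor:transductions-compose}. The paper's one-line proof cites the same two ingredients (with~\cite{adler2014interpreting} already packaged as ``transductions of bounded-expansion classes are monadically stable''), so there is no substantive difference.
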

\begin{proof}
    Classes transducible from classes of bounded expansion (even from nowhere dense classes) are monadically stable~\cite{adler2014interpreting}, while classes transducible from classes of bounded twin-width have bounded twin-width by Theorem~\ref{thm:interp_bd_tww}.
\end{proof}

Our main result, Theorem~\ref{thm:main}, proves the converse to Proposition~\ref{prop:int_stable}:
every monadically stable class of bounded twin-width is a transduction of a class of bounded sparse twin-width.

\medskip
Note that in general, classes of bounded twin-width 
 are monadically dependent but are not necessarily monadically stable, as the class of all ladders has bounded twin-width. In particular, by Theorem~\ref{thm:mon-stab}, a class of bounded twin-width is graph-theoretically stable if and only if it is monadically stable. Hence, for simplicity,
we will sometimes 
talk about \emph{stable classes of bounded twin-width}, referring to monadically stable classes of bounded twin-width.
And so, our main result states that every stable class of bounded twin-width can be obtained from a class of bounded sparse twin-width by a  transduction, proving a converse of Proposition~\ref{prop:int_stable}.

\newcommand{\rv}[1]{\overline{#1}}

\section{Main lemma}\label{sec:main-lemma}
The following  lemma is our main technical tool. It says that every ordered bipartite graph of bounded convex twin-width and bounded (quasi-ladder) index
has a certain decomposition.
 This decomposition will be used in the next section to prove Theorem~\ref{thm:main}. 

 %





\begin{lemma}\label{lem:main}
    For all $k,d\in\N$, $k,d\geq 2$, there are $\ell,q\in \N$
    satisfying the following.
    Let $G$ be an ordered bipartite graph of convex  twin-width at most $d$ and quasi-ladder index at most $k$, with sides $L$ and $R$.
    Then there is a division $\cal F$ of $G$, 
    sets $\cal U_1,\ldots,\cal U_\ell\subset \cal F$, and an $q$-flip $G'$ of $G$ such that the following holds for $H\coloneqq \quo{G'}{\cal F}$:
    \begin{enumerate}[label=(\arabic*),ref=(\arabic*),leftmargin=*]          
        \item\label{c:impure} For every edge $AB$ of $H$ there exists $i\in [\ell]$ such that $A,B\in \cal U_i$.
        \item\label{c:star-simpler} Each set $\cal U_i$, $i\in [\ell]$, induces in $H$ a star forest. Moreover, for each star in this star forest, say with center $C$ and leaves $K_1,\ldots,K_m$, the index of $G[C,K_1\cup\cdots\cup K_m]$ is smaller than $k$.
    \end{enumerate}
\end{lemma}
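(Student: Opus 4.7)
The plan is to follow the freezing-based strategy outlined in Section~\ref{sec:overview}. Let $\cal P_1,\ldots,\cal P_n$ be a convex uncontraction sequence of $G$ of width $d$. Process the sequence chronologically and, at each step $t$, freeze a part $A\in\cal P_t$ as soon as the following \emph{freezing condition} holds: for every part $B\in\cal P_t$ on the opposite side of $G$, the bipartite graph $G[A,B]$ has quasi-ladder index strictly less than $k$. Once $A$ is frozen, no descendant of $A$ gets frozen later. Since $\cal P_n$ is the discrete partition, the collection $\cal F$ of all frozen parts forms a division of $G$ (convexity is inherited from the uncontraction sequence). Order $\cal F$ by the freezing time and denote this total order by $\preceq$.

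The core structural claim I aim to prove is that for every $B\in\cal F$ there is an \emph{exceptional set} $S(B)\subseteq\{A\in\cal F:A\prec B\}$ of size bounded by a function of $d$ and $k$, such that $B$ is either complete or anti-complete to $\bigcup_{A\prec B,\,A\notin S(B)}A$ in $G$. The exceptional set consists of those earlier-frozen $A$ which were witnesses of impurity for some ancestor of $B$ in the uncontraction sequence at the moment $A$ was frozen; the width bound $d$ controls how many such witnesses can accumulate, while the freezing condition (and its failure one step earlier) controls which pure direction must appear. Assigning to each $B\in\cal F$ a \emph{type} in $\{+,-\}$ reflecting this pure direction, I will then show that along $\preceq$ the sequence of types admits only a bounded number of alternations: a long alternation would let us read off a long quasi-ladder using pairs $(A,B)$ with $A\in S(B)$ arranged along $\preceq$, contradicting the assumption on the index of $G$.

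Grouping $\preceq$ into a bounded number of maximal monochromatic blocks of types then defines a bounded number $q$ of flips, one per block of type $+$, which when applied produce a graph $G'$ in which, for every $B\in\cal F$, the bulk interaction with $\bigcup_{A\prec B}A\setminus\bigcup S(B)$ is empty. Consequently, in $H\coloneqq\quo{G'}{\cal F}$, every edge $AB$ with $A\prec B$ satisfies $A\in S(B)$, so $\preceq$ witnesses bounded degeneracy of $H$. Sharpening the analysis of $S(\cdot)$ by one level gives a bound on $\scol_2(H,\preceq)$, hence on $\wcol_2(H,\preceq)$ via Proposition~\ref{prop:scol-wcol}. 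Applying Lemma~\ref{lem:star-coloring} produces a coloring $\lambda\from\cal F\to[p]$ with $p=\wcol_2(H,\preceq)$ such that for every pair of colors $c\neq c'$ the set $\cal U_{c,c'}\coloneqq\lambda^{-1}(\{c,c'\})$ induces a star forest in $H$, giving at most $\binom{p}{2}$ sets and establishing condition~\ref{c:impure} and the star-forest part of~\ref{c:star-simpler}.

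The main obstacle, which I expect to require the most delicate work, is strengthening the per-edge guarantee (``$G[C,K_i]$ has index $<k$ for each adjacent pair,'' immediate from the freezing rule) to the \emph{global} guarantee that $G[C,K_1\cup\cdots\cup K_m]$ has index $<k$ for every star in every $\cal U_i$. Individual leaves satisfying the pairwise bound do not a priori combine well: concatenating quasi-ladders across distinct leaves could in principle yield a longer quasi-ladder inside the star. To resolve this, I will use the exceptional-set structure to show that at each center $C$ the leaves $K_i$ fall into a bounded number of \emph{interaction patterns} with $C$ (characterized by how $K_i$ sits relative to the pure direction assigned to $C$, respectively to $K_i$, and to a few frozen ancestors). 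Coloring the leaves by these patterns and further refining each $\cal U_{c,c'}$ into a bounded number of sub-families according to the pattern produces the final $\cal U_1,\ldots,\cal U_\ell$: on each sub-star the leaves behave uniformly enough that a quasi-ladder of length $k$ in $G[C,K_1\cup\cdots\cup K_m]$ would still give a quasi-ladder of length $k$ in some single $G[C,K_i]$, contradicting the freezing condition. The parameters $\ell$ and $q$ obtained in this way depend only on $d$ and $k$, completing the proof.
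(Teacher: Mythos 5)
Your proposal follows the same high-level architecture as the paper's proof: freeze parts along the uncontraction sequence, order $\cal F$ by freezing time, assign purity types, bound alternations to obtain the flips, bound generalized coloring numbers of $H$ under $\preceq$, and then apply Lemma~\ref{lem:star-coloring}. Up to that point the plan is sound and essentially identical to the paper's.

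The gap is in the final refinement step, which you correctly identify as the hardest part but whose mechanism you get wrong. You claim that, after refining by ``interaction patterns,'' a quasi-ladder of order $k$ in $G[C,K_1\cup\cdots\cup K_m]$ would collapse to a quasi-ladder of order $k$ inside a single leaf $G[C,K_i]$. This cannot work: a quasi-ladder $x_1,\ldots,x_k\in C$, $y_1,\ldots,y_k$ with the $y_j$'s distributed over several leaves has no reason to compress into one leaf, even if all leaves exhibit the same purity pattern relative to $C$ --- the ladder may genuinely need vertices from distinct leaves, and uniformity of pattern does not transfer the ladder structure. The paper's Lemma~\ref{lem:refine} uses a different and crucial device: if $C\in\cal F$ was frozen at time $t$ (with $t'=\max(t,\tau)$), then by the sandwich argument every later-frozen $H$-neighbor of $C$ descends from one of at most $O(d^2)$ parts $C'\in\cal P_{t'+1}$, and one groups the leaves according to which $C'$ they descend from. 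For each group the union of its leaves is contained in $C'$, hence in the unique ancestor $C''$ of $C'$ in $\cal P_t$, and the freezing condition at time $t$ asserts that $G[C,C'']$ already has index $<k$; the conclusion then follows from monotonicity of the index under passing to subsets, not from collapsing ladders. The finitely many neighbors frozen before $t'$ are each placed in a singleton group, for which Lemma~\ref{lem:index-single-edge} gives the bound directly. Unless you replace your collapse argument with this containment argument, the proof does not go through.

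Two smaller omissions worth noting: the corner case $|L|\le d$ or $|R|\le d$ needs separate treatment (the paper handles it by flipping $G$ to the edgeless graph with at most $d$ flips); and the initial prefix $\cal S$ of parts frozen by time $\tau$ (the first moment both sides of the uncontraction sequence have more than $d$ parts) requires its own bookkeeping --- in particular the side $\cal S^R$ can be large and must be split into a ``hard'' part of bounded size and a ``simple'' part consisting of boundedly many twin-classes, which in turn requires extra flips. Your description of ``one flip per block of type $+$'' also undercounts: a sign-homogeneous block with parts on both sides needs three flips (left block vs.\ earlier right parts, right block vs.\ earlier left parts, and left block vs.\ right block).
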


The remainder of this section is devoted to the proof of Lemma~\ref{lem:main}. Whenever we speak about purity or impurity of some pair of sets of vertices, we mean purity or impurity in the graph $G$. Recall that if a pair of nonempty subsets $A\subseteq L$ and $B\subseteq R$ is pure, then its {\em{purity type}}  is $+$ if $A,B$ is complete, and $-$ if $A,B$ is anti-complete. A pair $A\subseteq L$ and $B\subseteq R$ {\em{matches}} a purity type $\sigma\in \{+,-\}$ if the pair is pure and of purity type $\sigma$. Otherwise $A,B$ {\em{mismatches}} $\sigma$. Note that if $A,B$ is impure, then it mismatches both purity types. 
Finally, when $Q\in \{L,R\}$, then by $\rv{Q}$ we denote $R$ if $Q=L$ and $L$ if $Q=R$.

\medskip

Fix $k,d\in\N$ with $k,d\geq 2$. We first resolve a corner case when $|L|\leq d$ or $|R|\leq d$. By symmetry suppose that $|L|\leq d$. Observe that the edgeless bipartite graph $G'$ with sides $L$ and $R$ is a $d$-flip of $G$. Indeed, it suffices flip the pairs $\{u\},N_G(u)$ for all $u\in L$. So, in this case we may take $\cal F$ to be the trivial division that puts every vertex into a separate part, $G'$ and $H$ to be edgeless graphs, and $\ell=0$ (that is, there are no sets $\cal U_1,\ldots,\cal U_\ell$). Hence, from now on we assume that $|L|>d$ and $|R|>d$.

Recall that $G$ is an ordered bipartite graph, hence the sides $L,R$ of $G$ are convex in $G$. Further, since $G$ has convex  twin-width at most $d$, there is a convex uncontraction sequence $\cal P_1,\ldots,\cal P_n$ of $G$ of width at most $d$, where $n\coloneqq |L\cup R|-1$.
Note that for each $t\in \{2,\ldots,n\}$, the difference between divisions $\cal P_t$ and  $\cal P_{t-1}$
is that one part of $\cal P_{t-1}$ is replaced by two its subsets in $\cal P_t$ (and all the other parts are the same).


For $s\leq t$, we say that a part $A\in\cal P_{s}$ is an \emph{ancestor} of a part $B\in\cal P_t$ if $A\supseteq B$. Then also $B$ is a {\em{descendant}} of $A$. Note that if $B\in \cal P_t$, then for each $s\leq t$ there is a unique ancestor of $B$ in $\cal P_s$. Note also that every part is considered an ancestor and a descendant of itself.

The following definition is crucial in our reasoning and is inspired by the proof of the $\chi$-boundedness of graphs of bounded twin-width, presented in~\cite{bonnet2020tww3}.
A part $A\in \cal P_t$, say belonging to $\cal P_t^{Q}$ where $Q\in \{L,R\}$, is \emph{frozen} at time $t$ if the following conditions hold:
\begin{itemize}
    \item no ancestor of $A$ was frozen at any time $s<t$, and
    \item for every $B\in \cal P^{\rv{Q}}_t$, the index of $G[A,B]$ is smaller than $k$.
\end{itemize}
For $Q\in \{L,R\}$, let $\cal F^Q_t\subseteq \cal P^Q_t$ be the set of parts of $\cal P^Q_t$ frozen at time $t$.
We note the following.

\begin{lemma}\label{lem:freezing-tiny}
 For every $t\in [n]$ and $Q\in \{L,R\}$, we have $|\cal F^Q_t|\leq d$.
\end{lemma}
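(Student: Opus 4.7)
The plan is to analyze how $\cal F^Q_t$ can change in a single step of the uncontraction sequence, and then to invoke the width bound. By symmetry I may assume $Q=L$. The base case $t=1$ is immediate, since $\cal P^L_1=\{L\}$ and thus $|\cal F^L_1|\leq 1\leq d$. For $t\geq 2$, let $C\in\cal P_{t-1}$ be the unique part that gets split into $C_1,C_2\in\cal P_t$. I would split the argument into two cases according to whether $C\subseteq L$ or $C\subseteq R$, and in each case identify which parts can lie in $\cal F^L_t$.

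First I would dispose of the easy case $C\subseteq L$, in which $\cal P^R_t=\cal P^R_{t-1}$. Any $A\in\cal P^L_t$ other than $C_1,C_2$ already belonged to $\cal P^L_{t-1}$. If such an $A$ lies in $\cal F^L_t$, then the clause ``no ancestor of $A$ was frozen at any time $s<t$'' in particular prevents $A$ itself from being frozen at time $t-1$; so the index clause must have failed at $t-1$, yielding some $B'\in\cal P^R_{t-1}=\cal P^R_t$ witnessing that the index of $G[A,B']$ is at least $k$. But the same $B'$ then also contradicts the index clause of the freezing condition at time~$t$. Hence $\cal F^L_t\subseteq\{C_1,C_2\}$ and $|\cal F^L_t|\leq 2\leq d$.

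The interesting case is $C\subseteq R$, where $\cal P^L_t=\cal P^L_{t-1}$. The same kind of argument produces, for every $A\in\cal F^L_t$, a witness $B'\in\cal P^R_{t-1}$ such that the index of $G[A,B']$ is at least $k$ at time~$t-1$; and for $A$ to satisfy the index clause at time~$t$, this witness must have been eliminated by the split, forcing $B'=C$. The key step — and the only non-bookkeeping point of the argument — is the observation that any pure pair has quasi-ladder index at most~$1$: if $X,Y$ is complete (resp.\ anti-complete), then for any $x_1,x_2\in X$ and $y_1,y_2\in Y$ both clauses appearing in the $i=2$ case of the quasi-ladder definition demand a non-edge (resp.\ an edge) between $X$ and $Y$, which does not exist. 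Since $k\geq 2$, the pair $(A,C)$ must therefore be \emph{impure}. Now the width-$d$ condition of the uncontraction sequence applied to the single part $C\in\cal P^R_{t-1}$ tells us that $C$ is impure towards at most $d$ parts of $\cal P^L_{t-1}$, and this immediately bounds $|\cal F^L_t|\leq d$.

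I do not anticipate any real obstacle: the entire argument is driven by the pure-implies-low-index observation, together with the fact that exactly one part changes per step, so every newly frozen part at time~$t$ must be ``caused by'' that particular split. If anything, the only point worth double-checking carefully is that the witness $B'$ produced at time~$t-1$ genuinely survives into $\cal P^R_t$ in the first case (because the split happens on $L$) and is genuinely unique to $C$ in the second case (because only $C$ is removed from $\cal P^R_{t-1}$).
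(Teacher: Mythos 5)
Your proof is correct and follows essentially the same route as the paper's: split on whether the part being divided at step $t$ lies on the $L$ or $R$ side, observe in the first case that the only candidates for new frozen parts in $\cal F^L_t$ are the two children of the split, and in the second case derive from the failure of the index clause at $t-1$ (together with the observation that a pure pair has quasi-ladder index at most $1$, so the offending pair must be impure when $k\ge 2$) that each $A\in\cal F^L_t$ is impure towards the split part, then apply the width bound. The only cosmetic difference is that you name the split as $C\to C_1,C_2$ while the paper names the two children $C,D$ and their union $C\cup D$; your spelled-out argument that pure pairs have index $\le 1$ is a small detail the paper leaves implicit.
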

\begin{proof}
We prove the claim for $Q=L$, the proof in the other case is symmetric. 

Since $|\cal P^L_1|=1$, the claim holds trivially for $t=1$, hence assume $t>1$.
Note that $\cal P_t$ differs from $\cal P_{t-1}$ in that there are two parts $C,D\in \cal P_t$ that in $\cal P_{t-1}$ are replaced by $C\cup D$, and otherwise all the parts of $\cal P_t$ and $\cal P_{t-1}$ are the same. We consider two cases: either $C,D\in \cal P^L_t$ or $C,D\in \cal P^{R}_t$.

In the first case we have $\cal P^{R}_{t-1}=\cal P^{R}_t$ and $\cal P^L_{t-1}=\cal (P^L_t\setminus \{C,D\})\cup \{C\cup D\}$. Consider any $A\in \cal F^L_t$. Since $A$ got frozen at time $t$ and not at time $t-1$, it must be the case that $A\in \{C,D\}$. It follows that $|\cal F^L_t|\leq 2\leq d$.

In the second case we have $\cal P^{R}_{t-1}=\cal (P^{R}_t\setminus \{C,D\})\cup \{C\cup D\}$ and $\cal P^L_{t-1}=\cal P^L_t$. Again, consider any $A\in \cal F^L_t$ and note that since $A$ got frozen at time $t$ and not at time $t-1$, it must be the case that the index of $G[A,C\cup D]$ is equal to $k$. As $k\geq 2$, this implies that the pair $A,C\cup D$ is impure. By the assumption on the width of the uncontraction sequence, there are at most $d$ such parts $A$ in $\cal P^L_{t-1}$, implying that $|\cal F^L_t|\leq d$.
\end{proof}

For $t\in [n]$ we denote
$$\cal F_t\coloneqq \cal F^L_t\cup \cal F^R_t.$$
Further, let $\cal F$ be the set of all parts frozen at any moment, that is,
$$\cal F\coloneqq \cal F_1\cup \cdots \cup \cal F_n.$$
We observe the following.

\begin{lemma}\label{lem:Fdivision}
    $\cal F$ is a  division of $G$.
\end{lemma}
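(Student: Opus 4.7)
The plan is to verify the three requirements for $\cal F$ to be a division of $G$: that it is a partition of $V(G)$, that each part is convex, and that each part is contained in either $L$ or $R$. The last two properties are immediate, since by construction each element of $\cal F$ is a part of some division $\cal P_t$ in the convex uncontraction sequence, and by definition every such part is convex and lies entirely in one side. The only real content is thus verifying that $\cal F$ is a partition of $V(G)$, i.e.\ that every vertex belongs to exactly one frozen part.

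For existence, I would fix any vertex $v$ and track the sequence $A_1 \supseteq A_2 \supseteq \cdots \supseteq A_n$ of parts containing $v$, where $A_t \in \cal P_t$. Note $A_n = \{v\}$ because $\cal P_n$ is the discrete partition. I claim some $A_t$ must be frozen. Suppose no $A_t$ with $t < n$ is frozen; it suffices to argue that then $A_n = \{v\}$ itself is frozen at time $n$. Say $v \in Q$ for $Q\in\{L,R\}$. For any $B \in \cal P^{\rv{Q}}_n$, the set $B$ is a singleton, so $G[\{v\},B]$ is a bipartite graph on at most two vertices; such a graph trivially has quasi-ladder index at most $1$, which is strictly less than $k$ since $k \geq 2$. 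Since no ancestor of $\{v\}$ was frozen before time $n$ by assumption, $\{v\}$ satisfies the freezing condition at time $n$ and so $\{v\} \in \cal F_n$.

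For uniqueness, suppose $v \in A \in \cal F_s$ and $v \in B \in \cal F_t$ with $s \leq t$. Since $\cal P_t$ is obtained from $\cal P_s$ by a sequence of splittings, and both $A$ and $B$ contain $v$, we have $B \subseteq A$; that is, $A$ is an ancestor of $B$. If $s < t$, then $A$ is a proper ancestor of $B$ that was frozen at time $s < t$, which directly contradicts the first condition in the definition of freezing applied to $B$ at time $t$. Hence $s = t$ and, since $\cal P_s = \cal P_t$ partitions $V(G)$ with $v$ in a unique part, $A = B$.

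Combining existence and uniqueness, $\cal F$ partitions $V(G)$, and together with convexity and side-containment noted above, this shows that $\cal F$ is a division of $G$. I do not anticipate any genuine obstacle here; the lemma is essentially a bookkeeping consequence of how the freezing rule is defined, with the hypothesis $k \geq 2$ used precisely to handle the base case at the discrete partition $\cal P_n$.
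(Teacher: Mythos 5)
Your proof is correct and follows essentially the same route as the paper's: existence comes from observing that singletons in the discrete partition $\cal P_n$ always satisfy the ladder-index part of the freezing condition (for $k\geq 2$), so every vertex lies in some frozen part, and uniqueness comes from the observation that if two frozen parts containing a common vertex existed, one would be an ancestor frozen earlier, contradicting the freezing rule; convexity and side-containment are immediate since parts of $\cal F$ are parts of divisions $\cal P_t$. The paper phrases the disjointness step slightly differently (comparing $A$ with the ancestor $B'\in\cal P_s$ of $B$ rather than tracking a shared vertex), but the substance is identical.
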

\begin{proof}
Consider a vertex $u\in L$.
As $\set u$ is a part of $\cal P^L_n$ that
satisfies the second condition in the definition of a frozen part, it follows that either $\set u$ is frozen at time $n$, or some ancestor of $\set u$ got frozen at some earlier time. Either way, $u$ belongs to some frozen part, so we conclude that $\bigcup \cal F\supseteq L$. A symmetric argument shows that $\bigcup \cal F\supseteq R$ as well.

Next, we argue that the elements of $\cal F$ are pairwise disjoint. Consider any $A,B\in \cal F$, $A\neq B$.
If $A,B\in \cal F_t$ for some $t\in [n]$, then $A$ and $B$ are different parts of the division $\cal P_t$, hence they are disjoint. Suppose then that $A\in \cal F_s$ and $B\in \cal F_t$ for some $s<t$. Note that $B$ has an ancestor $B'\in \cal P_s$. Since $B$ is frozen at time $t$, it follows that $B'$ is not frozen at time $s$, hence $A\neq B'$. Then $A$ and $B'$ are different parts of the division~$\cal P_s$, hence they need to be disjoint, which implies that $A$ and $B$ are disjoint as well.

Finally, all the elements of $\cal F$ are convex and entirely contained either in $L$ or in $R$, as they originate from the divisions~$\{\cal P_t\colon t\in [n]\}$.
\end{proof}

Note that the proof of Lemma~\ref{lem:Fdivision} relies only on the property that elements of $\cal F$ are pairwise not bound by the ancestor/descendant relation. The particular choice of the freezing condition --- which in our case is based on measuring the indices of subgraphs induced by pairs of parts --- is motivated by the following observation.

\begin{lemma}\label{lem:index-single-edge}
 For each $A\in \cal F^L$ and $B\in \cal F^R$, the index of $G[A,B]$ is smaller than $k$.
\end{lemma}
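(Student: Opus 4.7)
The plan is essentially to unwind the definition of being frozen and combine it with the monotonicity of the quasi-ladder index under taking bipartite induced subgraphs.

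First I would pick $A\in \cal F^L$ and $B\in \cal F^R$, say $A$ is frozen at time $s$ and $B$ is frozen at time $t$, and by symmetry assume $s\le t$ (if $s>t$ the argument is identical after swapping the roles of $L$ and $R$). Since $B\in \cal P_t^R$ and the uncontraction sequence starts from a coarser division at time $s$, the part $B$ has a unique ancestor $B'\in \cal P_s^R$, i.e.\ $B\subseteq B'$.

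Next I would invoke the freezing condition at time $s$ applied to $A$: by definition of $\cal F^L_s$, for every part $C\in \cal P_s^R$ the index of $G[A,C]$ is strictly less than $k$. Specializing to $C=B'$ yields that the index of $G[A,B']$ is smaller than $k$.

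Finally I would observe that $G[A,B]$ is a bipartite induced subgraph of $G[A,B']$, since $B\subseteq B'$. A quasi-ladder of order $m$ in $G[A,B]$ is, by definition, also a quasi-ladder of order $m$ in $G[A,B']$ (the defining adjacency/non-adjacency conditions depend only on the pairs of chosen vertices). Hence the index of $G[A,B]$ is at most the index of $G[A,B']$, which is less than $k$, as desired. There is no real obstacle here; the only thing worth being careful about is handling the case $s>t$ by the symmetric argument that uses the freezing condition for $B$ at time $t$ together with the ancestor $A'\in \cal P_t^L$ of $A$.
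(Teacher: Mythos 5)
Your proof is correct and follows essentially the same route as the paper's: pick the earlier freezing time (WLOG that of $A$), take the ancestor $B'\supseteq B$ in the division at that time, apply the freezing condition to bound the index of $G[A,B']$, and conclude by monotonicity of the quasi-ladder index under passing to the induced subgraph $G[A,B]$. The paper just states the monotonicity step implicitly, while you spell it out.
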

\begin{proof}
 Let $s,t\in [n]$ be such that $A\in \cal F^L_s$ and $B\in \cal F^R_t$. Without loss of generality assume that $s\leq t$. Let $B'$ be the unique ancestor of $B$ in $\cal F^R_s$. As $A$ is frozen at time $s$, the index of $G[A,B']$ is smaller than $k$, which implies that the index of $G[A,B]$ is smaller than $k$ as well.
\end{proof}

We remark that in the sequel we will not rely only on Lemma~\ref{lem:index-single-edge}, but also on its stronger variants that take multiple parts of $\cal F$ into account.


%
%

Let $\tau$ be the least positive integer such that
$$\left|\cal P^L_\tau\right|>d\qquad\textrm{and}\qquad \left|\cal P^R_\tau\right|>d.$$
Note that since $|\cal P^L_n|=|L|>d$ and $|\cal P^R_n|=|R|>d$, $\tau$ is well-defined and we have $\tau\leq n$. Also, by minimality we have
\begin{equation}\label{eq:wydra0}
\left|\cal P^L_\tau\right|=d+1\qquad\textrm{or}\qquad \left|\cal P^R_\tau\right| = d+1.
\end{equation}

%

We now analyze the properties of $\cal F$ implied by the construction. 
The first lemma presents a key observation about the adjacencies between  parts that are not yet frozen and the rest of the graph.

\begin{lemma}\label{lem:index}
 Let $Q\in \{L,R\}$, $t\geq \tau$, and $B\in \cal P^Q_t$ be such that $B$ is not frozen at time $t$, and no ancestor of $B$ was frozen at any time $s<t$. Let $\cal N$ be the set of all those parts $A\in \cal P^{\rv{Q}}_t$ for which the pair $A,B$ is impure. Denote
 $$W\coloneqq \rv{Q}\setminus \bigcup \cal N.$$
 Then the pair $B,W$ is pure.
\end{lemma}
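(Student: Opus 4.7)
The plan is to argue by contradiction: assuming that the pair $B, W$ is not pure, I extract a quasi-ladder of order $k$ from the failure of the freezing condition at $B$, and then extend it by one rung to obtain a quasi-ladder of order $k+1$ in $G$, contradicting the assumption that the quasi-ladder index of $G$ is at most $k$.

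First I would unpack what it means for $B$ to not be frozen at time $t$. Since the first clause of the freezing definition is satisfied by hypothesis (no ancestor of $B$ was frozen earlier), the second clause must fail, so there exists $A_0 \in \cal P^{\rv Q}_t$ for which the index of $G[A_0, B]$ is at least $k$. Because a pure pair of sets has index at most $1 < k$, necessarily $A_0, B$ is impure, so $A_0 \in \cal N$. Fix a quasi-ladder of order $k$ in $G[A_0, B]$ given by sequences $v_1, \ldots, v_k \in A_0$ and $u_1, \ldots, u_k \in B$.

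Suppose now for contradiction that $B, W$ is not pure; then there are parts $A_1, A_2 \in \cal P^{\rv Q}_t \setminus \cal N$ with $A_1, B$ complete and $A_2, B$ anti-complete. Here the choice of $\tau$ pays off: since $t \geq \tau$ we have $|\cal P^Q_t| > d$, and the width bound on the convex uncontraction sequence guarantees that $A_0$ is impure with at most $d$ parts of $\cal P^Q_t$. Hence there exists some $B' \in \cal P^Q_t$ such that $A_0, B'$ is pure; because $A_0, B$ is impure we have $B' \neq B$, and any vertex $u_{k+1} \in B'$ is either adjacent to every element of $A_0$ (when $A_0, B'$ is complete) or non-adjacent to every element of $A_0$ (when $A_0, B'$ is anti-complete).

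The final step is to extend the quasi-ladder by one position using one of $A_1$ or $A_2$. If $A_0, B'$ is complete, I would pick $v_{k+1}$ to be any vertex of $A_2$: then $u_{k+1}$ is adjacent to all of $v_1, \ldots, v_k$ while $v_{k+1}$ is non-adjacent to all of $u_1, \ldots, u_k$ (by anti-completeness of $A_2, B$), fulfilling the first bullet of the quasi-ladder definition at position $k+1$. If instead $A_0, B'$ is anti-complete, I would pick $v_{k+1}$ in $A_1$, and the second bullet is satisfied symmetrically. Either way, the extended sequences form a quasi-ladder of order $k+1$ in $G$, a contradiction. The only nontrivial ingredient, beyond bookkeeping the quasi-ladder conditions, is securing the pivot vertex $u_{k+1}$ inside a part $B' \neq B$ that is pure with $A_0$, which is exactly what the choice of $\tau$ makes available.
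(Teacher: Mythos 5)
Your proof is correct and takes essentially the same route as the paper's: both extract a length-$k$ quasi-ladder from the unfrozen part $B$ against some impure partner $C = A_0$, exploit $t \geq \tau$ to find a part on $B$'s side that is pure with $A_0$, and append one rung using a vertex of $W$ with the appropriate purity type towards $B$. Your use of whole parts $A_1, A_2 \subseteq W$ in place of the paper's witness vertices $u^-, u^+$ is a cosmetic difference only.
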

\begin{proof}
 We give a proof for the case $Q=L$, the other case is symmetric. Thus, we have $B\in \cal P^L_t$ and $W\subseteq R$.

 First, consider any $u\in W$ and suppose $u$ is impure towards $B$; that is, the pair $\{u\},B$ is impure. Then the part of $\cal P^R_t$ to which $u$ belongs must form an impure pair with $B$, hence it is contained in $\cal N$. But $\bigcup \cal N$ is disjoint with $W$. This contradiction shows that every $u\in W$ is pure towards $B$.
 
 We now prove that the pair $B,W$ is pure.
 Suppose this is not the case. Then from the observation of the previous paragraph it follows that there exist vertices $u^-,u^+\in W$ such that $u^-$ is non-adjacent to all the vertices of $B$, while $u^+$ is adjacent to all the vertices of $B$.
 
 Since $B$ is not frozen at time $t$, nor it has an ancestor frozen earlier, there exists $C\in \cal P^R_t$ such that $G[B,C]$ has index exactly $k$. Let then $x_1,\ldots,x_k\in B$ and $y_1,\ldots,y_k\in C$ be a quasi-ladder of length $k$ in $G[B,C]$. Since $t\geq \tau$, we have $|\cal P_t^L|>d$.
 On the other hand, by the assumption on the width of the uncontraction sequence, there are at most $d$ parts $M\in \cal P_t^L$ for which the pair $M,C$ is impure. Therefore, there exists $D\in \cal P^L_t$ such that the pair $D,C$ is pure. Now if the pair $D,C$ is complete, then by selecting any $x_{k+1}\in D$ and setting $y_{k+1}\coloneqq u^-$, we obtain sequences $x_1,\ldots,x_k,x_{k+1}$ and $y_1,\ldots,y_k,y_{k+1}$ that form a quasi-ladder of length $k+1$ in $G$, a contradiction. If the pair $D,C$ is anti-complete, then setting $y_{k+1}\coloneqq u^+$ yields a contradiction in the same way.
\end{proof}

Let
$$\cal S\coloneqq \bigcup_{s\leq \tau} \cal F_s.$$
Consider any $A\in \cal F\setminus \cal S$, say $A\in \cal F^Q_t$ for some $t>\tau$ and $Q\in \{L,R\}$. We define the {\em{type}} of $A$, denoted $\tp(A)\in \{+,-\}$, as follows. Let $B$ be the unique ancestor of $A$ in $\cal P^Q_{t-1}$. Noting that $B$ satisfies the prerequisites of Lemma~\ref{lem:index}, we let $\tp(A)$ be the purity type of the pair $B,W$, where $W$ is defined as in Lemma~\ref{lem:index} (the lemma also asserts that this pair is pure).
Note here that it will never be the case that $W$ is empty. This is because due to $\tau>t$ we have $|\cal P^{\rv{Q}}_{t-1}|>d$, while the set $\cal N$ defined in the statement of Lemma~\ref{lem:index} has cardinality at most $d$.

The next observation will be a crucial combinatorial tool for the analysis of pairs $A,B\in \cal F$ that mismatch the type of $B$, where $B$ is frozen later than $A$. It will be reused several times in the sequel.

\begin{lemma}\label{lem:sandwich}
 Let $Q\in \{L,R\}$ and let $x,y,z\in [n]$ be such that $x\leq y<z$ and $y\geq \tau$. Further, let $X,Y,Z\subseteq V(G)$ be such that:
 \begin{itemize}
  \item $X\in \cal F^{\rv{Q}}_x$, $Y\in \cal P^Q_y$, and $Z\in \cal F^Q_z$;
  \item $Y$ is an ancestor of $Z$; and
  \item the pair $X,Z$ mismatches the type $\tp(Z)$.
 \end{itemize}
 Then there exists $U\in \cal P^{\rv{Q}}_y$ such that $U$ is a descendant of $X$ and the pair $U,Y$ is impure.
\end{lemma}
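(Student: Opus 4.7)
The plan is to argue by contradiction. Suppose every descendant $U \in \cal P^{\rv{Q}}_y$ of $X$ forms a pure pair with $Y$; I will show this forces $(X, Z)$ to match $\tp(Z)$, contradicting the hypothesis. The key idea is to connect the pair $(X, Z)$ to the pair $(Y', W)$ that defines $\tp(Z)$, where $Y'$ is the ancestor of $Z$ in $\cal P^Q_{z-1}$ and $W$ is the ``pure side'' produced by Lemma~\ref{lem:index} applied to $Y'$.

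First I would set up Lemma~\ref{lem:index} for $Y'$: because $Z$ is frozen at time $z$, no ancestor of $Z$ was ever frozen, so in particular $Y'$ is not frozen at time $z - 1$ and has no previously frozen ancestor; moreover $z - 1 \ge y \ge \tau$, so $Y'$ is eligible. Lemma~\ref{lem:index} then yields the set $\cal N \subseteq \cal P^{\rv Q}_{z-1}$ of parts impure with $Y'$, the residual set $W = \rv{Q} \setminus \bigcup \cal N$, and the assertion that $(Y', W)$ is pure; by definition of $\tp(Z)$ its purity type is $\tp(Z)$.

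Next I would look at the descendants $V_1, \ldots, V_n \in \cal P^{\rv Q}_{z-1}$ of $X$; they partition $X$. Each $V_j$ is contained in a unique descendant $U_i \in \cal P^{\rv Q}_y$ of $X$. By the standing assumption $(U_i, Y)$ is pure, and since $Y$ is an ancestor of $Y'$, $Y' \subseteq Y$, so $(V_j, Y')$ is pure as well. This shows $V_j \notin \cal N$, hence $V_j \subseteq W$. Inheriting the purity type of $(W, Y')$ gives that $(V_j, Y')$ is pure of type $\tp(Z)$; restricting to $Z \subseteq Y'$ gives that $(V_j, Z)$ is pure of type $\tp(Z)$. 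Taking the union over $j$, the pair $(X, Z) = (\bigcup_j V_j, Z)$ is pure of type $\tp(Z)$, matching $\tp(Z)$ and yielding the desired contradiction.

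The only subtlety is bookkeeping the chain of ancestors and purity inheritance: verifying $Z \subseteq Y' \subseteq Y$ and $V_j \subseteq U_i \subseteq X$, that Lemma~\ref{lem:index} applies to $Y'$, and that purity of a pair is preserved when we shrink either side (which is immediate). Everything else is a direct consequence of unpacking the definition of $\tp(Z)$.
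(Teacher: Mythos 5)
Your proof is correct and rests on the same underlying mechanism as the paper's: apply Lemma~\ref{lem:index} to the ancestor $Y'$ (the paper calls it $Z'$) of $Z$ in $\cal P^Q_{z-1}$, use the resulting pure set $W$ whose purity type with $Y'$ is $\tp(Z)$ by definition, and exploit purity inheritance under shrinking both sides. The only difference is organizational: the paper argues directly by showing that $W_Y$ (the analogous set at time $y$) is pure with $Y$ of type $\tp(Z)$, which requires checking that $W_Y$ is nonempty (using $y\ge\tau$ and $|\cal P^{\rv Q}_y|>d$) to pin down its purity type via the nesting $W_Y\subseteq W_{Z'}$; you instead assume the negation and descend from the parts $U_i\in\cal P^{\rv Q}_y$ all the way to the parts $V_j\in\cal P^{\rv Q}_{z-1}$, placing each $V_j$ inside $W_{Z'}$ directly. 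This contrapositive route bypasses the nonemptiness verification for $W_Y$ entirely, which is a small but genuine simplification; everything else is the same chain of reasoning.
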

\begin{proof}
 We consider the case $Q=L$, the proof in the other case is symmetric.

 Let $Z'$ be the unique ancestor of $Z$ in $\cal P^L_{z-1}$. 
 Note that $Z'$ is also a descendant of $Y$.
 Let us define sets $\cal N_{Y}$ and $\cal N_{Z'}$ as in the statement of Lemma~\ref{lem:index}:
 \begin{itemize}
  \item $\cal N_{Y}$ comprises all parts $D\in \cal P^R_{y}$ such that the pair $D,Y$ is impure.
  \item $\cal N_{Z'}$ comprises all parts $D\in \cal P^R_{z-1}$ such that the pair $D,Z'$ is impure.
 \end{itemize}
 Observe the following: for each $D\in \cal N_{Z'}$, the unique ancestor $D'$ of $D$ in $\cal P^R_{y}$ belongs to $\cal N_{Y}$. Indeed, the pair $D,Z'$ is impure by the definition of $\cal N_{Z'}$, so as $D'$ is an ancestor of $D$ and $Y$ is an ancestor of $Z'$, it follows that the pair $D',Y$ is impure as well.
 
 This observation implies that if we define
 $$W_{Y}\coloneqq R\setminus \bigcup \cal N_{Y}\qquad \textrm{and}\qquad W_{Z'}\coloneqq R\setminus \bigcup \cal N_{Z'},$$
 then $W_{Z'}\supseteq W_{Y}$. Noting that $Y$ and $Z'$ satisfy the prerequisites of Lemma~\ref{lem:index}, we infer that the pairs $Y,W_{Y}$ and $Z',W_{Z'}$ are pure.
 Further, as $y\geq \tau$, we have $|\cal P^R_{y}|>d$, which together with $|\cal N_Y|\leq d$ implies that $W_{Y}$ is non-empty.
 As $Y\supseteq Z'$ and $W_{Y}\subseteq W_{Z'}$, we conclude that the pairs $Y,W_{Y}$ and $Z',W_{Z'}$ have the same purity type. In other words, the purity type of the pair $Y,W_{Y}$ is equal to $\tp(Z)$.
 
 Recall that the pair $X,Z$ mismatches the type $\tp(Z)$. 
 From $Z \subseteq Y$ and the fact that the pair $Y,W_{Y}$ matches $\tp(Z)$, it follows that $X$ must have a descendant $U$ among the parts of $\cal P^R_{y}$ that are not contained in $W_{Y}$, that is, among the elements of $\cal N_{Y}$.
\end{proof}

Observe that if in an application of Lemma~\ref{lem:sandwich} we have $x=y$, then we necessarily have $U=X$, because $X$ has only one descendant in $\cal P^{\rv{Q}}_y$, namely $X$ itself. Hence, in this case we can simply conclude that the pair $X,Y$ is impure. We will use this particular variant of Lemma~\ref{lem:sandwich} a few times in the sequel.

\medskip

Before we continue with the analysis, we need to take a closer look at the set $\cal S$, which consists of all sets frozen until the time $\tau$ --- the first moment when both partitions $\cal P^L_\tau$ and $\cal P^R_\tau$ contain more than $d$ parts. As observed in~\eqref{eq:wydra0}, at least one of the sets $\cal P^L_\tau$ or $\cal P^R_\tau$ has size $d+1$. We now break the symmetry and assume without loss of generality that the first case holds:
\begin{equation}\label{eq:wydra}
\left|\cal P^L_\tau\right|=d+1.
\end{equation}
With this in mind, we analyze the structure of $\cal S$. 
Denote $\cal S^L\coloneqq \cal S\cap \cal F^L$ and $\cal S^R\coloneqq \cal S\cap \cal F^R$.

\begin{lemma}\label{lem:vc}
We have $|\cal S^L|\leq d+1$.
\end{lemma}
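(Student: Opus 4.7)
The plan is to prove the lemma by an injective counting argument into the partition $\cal P^L_\tau$, exploiting the fact that $|\cal P^L_\tau|=d+1$ by our symmetry-breaking assumption in~\eqref{eq:wydra}, and that frozen parts are pairwise disjoint by Lemma~\ref{lem:Fdivision}.

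First, I would fix an arbitrary $A\in\cal S^L$. By definition of $\cal S$, there is some $s\le\tau$ with $A\in \cal F^L_s \subseteq \cal P^L_s$. Since $s\le\tau$, every vertex of $A$ belongs to some part of $\cal P^L_\tau$, and because the partitions $\cal P^L_s,\cal P^L_{s+1},\ldots,\cal P^L_\tau$ arise from successive splittings, the part of $\cal P^L_\tau$ containing any such vertex is a subset of $A$, i.e., a descendant of $A$. As $A$ is non-empty, $A$ admits at least one descendant in $\cal P^L_\tau$.

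Next, I would observe that if $A,A'$ are distinct elements of $\cal S^L$, then $A\cap A'=\emptyset$ by Lemma~\ref{lem:Fdivision}, so the descendants of $A$ in $\cal P^L_\tau$ are disjoint from those of $A'$. Consequently, the map sending each $A\in \cal S^L$ to (say) any of its descendants in $\cal P^L_\tau$ is injective, yielding
\[
|\cal S^L|\le |\cal P^L_\tau|=d+1,
\]
where the last equality uses our symmetry-breaking assumption~\eqref{eq:wydra}. There is no real obstacle here; the substance of this lemma is simply to record the bound implied by the combination of the freezing rules, the disjointness of frozen parts, and the minimality of $\tau$. The symmetric bound $|\cal S^R|\le d+1$ need not hold (we broke symmetry), which is presumably why only the $L$-side is stated.
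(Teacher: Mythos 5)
Your proof is correct and follows essentially the same argument as the paper's: each element of $\cal S^L$ has at least one descendant in $\cal P^L_\tau$, these descendants are distinct by disjointness of the frozen parts, and $|\cal P^L_\tau|=d+1$ by~\eqref{eq:wydra}.
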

\begin{proof}
 Note that every element of $\cal S^L$ must have at least one descendant in $\cal P^L_\tau$, and these descendants must be pairwise different due to the elements of $\cal S^L$ being pairwise disjoint. It follows that $|\cal S^L|\leq |\cal P^L_\tau|$, and by~\eqref{eq:wydra} we have $|\cal P^L_\tau|=d+1$.
\end{proof}

Note that Lemma~\ref{lem:vc} still leaves the possibility that the cardinality of $\cal S^R$ is very large compared to $d$. This can indeed be the case, but the next lemma shows that this may happen only due to having a large number of twins. Here, two vertices $u,v$ are {\em{twins}} if they belong to the same side of $G$ ($L$ or $R$) and have exactly the same neighbors on the other side.

\newcommand{\sm}{\mathsf{simple}}
\newcommand{\hd}{\mathsf{hard}}

\begin{lemma}\label{lem:simple-hard}
 The set $\cal S^R$ can be partitioned into $\cal S^R_\hd$ and $\cal S^R_\sm$ so that:
 \begin{itemize}
  \item $|\cal S^R_\hd|\leq d(d+1)$; and
  \item $\bigcup \cal S^R_\sm$ can be partitioned into at most $2^{d+1}$ parts, each consisting of twins.
 \end{itemize}
\end{lemma}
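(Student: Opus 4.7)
The strategy is to classify each part $A \in \cal S^R$ by its local interaction with the $d+1$-part partition $\cal P^L_\tau$ guaranteed by \eqref{eq:wydra}. Write $\cal P^L_\tau = \{L_1, \ldots, L_{d+1}\}$. For a vertex $v \in R$, say $v$ is \emph{mixed} over $L_i$ if it has both a neighbor and a non-neighbor inside $L_i$, i.e.\ the pair $\{v\}, L_i$ is impure. Declare $A \in \cal S^R$ to be \emph{hard} if some $v \in A$ is mixed over some $L_i$, and \emph{simple} otherwise; this defines $\cal S^R = \cal S^R_\hd \sqcup \cal S^R_\sm$. The two bounds will then be proved separately by ``pushing down'' the witnesses: hard parts get accounted for against the width bound at time $\tau$, while simple parts are forced into a small number of twin classes by their profile over $L_1, \ldots, L_{d+1}$.

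For the bound $|\cal S^R_\hd| \leq d(d+1)$, I would argue as follows. Every $A \in \cal S^R$ is frozen at some time $s \leq \tau$, so $A$ is refined by a (non-empty) collection of parts of $\cal P^R_\tau$ contained in $A$; call these the $\tau$-descendants of $A$. By Lemma~\ref{lem:Fdivision} the elements of $\cal S^R$ are pairwise disjoint, so the $\tau$-descendants coming from distinct parts of $\cal S^R$ are pairwise disjoint as elements of $\cal P^R_\tau$. Now fix $A \in \cal S^R_\hd$ with witness $v \in A$ and $i \in [d+1]$, and let $A' \in \cal P^R_\tau$ be the $\tau$-descendant of $A$ containing $v$. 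Since $v$ already witnesses impurity against $L_i$, the pair $A', L_i$ is impure as well. By the width bound on the convex uncontraction sequence $\cal P_1, \ldots, \cal P_n$, for each fixed $i$ at most $d$ parts of $\cal P^R_\tau$ are impure toward $L_i$; summing over $i \in [d+1]$ gives at most $d(d+1)$ such ``impure'' $\tau$-descendants in total, and hence $|\cal S^R_\hd| \leq d(d+1)$.

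For the structure of $\cal S^R_\sm$, the point is that being simple rules out mixed vertices entirely: for every $v \in \bigcup \cal S^R_\sm$ and every $i \in [d+1]$, the pair $\{v\}, L_i$ is pure, so $v$ acquires a well-defined \emph{profile} $(\sigma_1(v), \ldots, \sigma_{d+1}(v)) \in \{+, -\}^{d+1}$ recording whether $v$ is complete or anti-complete to each $L_i$. Since $\{L_1, \ldots, L_{d+1}\}$ partitions $L$, the profile of $v$ determines the entire neighborhood $N_G(v) \subseteq L$; hence any two simple vertices sharing the same profile are twins in $G$. Partitioning $\bigcup \cal S^R_\sm$ by profile produces at most $2^{d+1}$ classes, each consisting of twins, as required.

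I do not expect a genuine obstacle. The only subtle point is keeping the distinction between ``$A, L_i$ impure'' (a property of the pair, which the width condition directly controls) and ``some $v \in A$ is mixed over $L_i$'' (a stronger vertex-level property) straight: the hardness classification must use the vertex-level notion, so that simple parts deliver actual twins, and the counting argument nevertheless still goes through because a single mixed vertex forces its $\tau$-descendant to witness impurity at the pair level.
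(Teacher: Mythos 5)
Your proof is correct and takes essentially the same approach as the paper: classify $\cal S^R$ by interaction with the $(d+1)$-part partition $\cal P^L_\tau$, use the width bound at time $\tau$ to bound the hard parts, and use purity profiles over $\cal P^L_\tau$ to produce at most $2^{d+1}$ twin classes among the simple parts. The only (harmless) difference is where you draw the hard/simple line: you classify $A$ as hard when some single vertex $v\in A$ is mixed over some $L_i$, while the paper classifies $A$ as hard when $A$ has a $\tau$-descendant $D\in\cal P^R_\tau$ that is impure towards some $L_i$ (a weaker condition that can be triggered by two different vertices of $D$). Both choices yield a hard set of size at most $d(d+1)$ and a simple set whose vertices are pure towards every $L_i$; in the paper's version the vertex-level purity of simple vertices is derived as a consequence rather than being part of the definition. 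Incidentally, this means your closing remark that ``the hardness classification must use the vertex-level notion'' overstates the case — the coarser part-level classification works too, precisely because part-level purity of the $\tau$-descendant containing $v$ implies vertex-level purity of $v$.
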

\begin{proof}
 Let $\cal N\subseteq \cal P^R_\tau$ be the family of all those sets $D\in \cal P^R_\tau$ for which there is $C\in \cal P^L_\tau$ such that the pair $C,D$ is impure. As $|\cal P^L_\tau|=d+1$ and the uncontraction sequence has width at most~$d$, we have $|\cal N|\leq d(d+1)$. Let $\cal S^R_\hd$ comprise all the elements of $\cal S^R$ that have a descendant in~$\cal N$. Also, let $\cal S^R_\sm\coloneqq \cal S^R\setminus \cal S^R_\hd$. As $\cal S^R\subseteq \cal F$ and every element of $\cal N$ has at most one frozen ancestor, we have $|\cal S^R_\hd|\leq d(d+1)$. We are left with verifying the postulated property of~$\bigcup \cal S^R_\sm$.


 Observe that for every $B\in \cal S^R_\sm$, $b\in B$, and $A\in \cal P^L_\tau$, the pair $A,\{b\}$ is pure. Indeed, otherwise the part $U$ of $\cal P^R_\tau$ that contains $b$ would be a descendant of $B$ such that the pair $A,U$ is impure, implying that $B\in \cal S^R_\hd$. Since $\cal P^L_\tau$ is a partition of $L$, this means that the neighborhood of $b$ in $L$ can be described by stating to which parts of $\cal P^L_\tau$ the vertex $b$ is complete and to which it is anti-complete. As there are at most $2^{|\cal P^L_\tau|}\leq 2^{d+1}$ choices for such a description, it follows that $\bigcup \cal S^R_\sm$ can be partitioned into at most $2^{d+1}$ sets, each consisting only of twins. 
\end{proof}

With all the technical observations prepared, we can proceed to the construction of the graph~$H$.
First, construct an ordering $\preceq$ of $\cal F$ as follows:
\begin{itemize}
 \item The elements of $\cal S$ are placed at the front: $A\prec B$ for all $A\in \cal S$ and $B\in \cal F\setminus \cal S$. Moreover, the elements of $\cal S^L,\cal S^R_\hd,\cal S^R_\sm$ are placed in $\preceq$ in this order, but within each of these sets the elements are ordered arbitrarily.
 \item The elements of $\cal F\setminus \cal S$ are ordered according to their freezing times. That is, whenever $A\in \cal F_s$ and $B\in \cal F_t$ for $\tau<s<t$, we also have $A\prec B$. Note that the elements of a single set $\cal F_t$ are ordered arbitrarily.
\end{itemize}

We define a bipartite graph $H$ with sides $\cal F^L$ and $\cal F^R$ as follows. Consider a pair of distinct sets $A,B\in \cal F$, say $A\prec B$. Then:
\begin{itemize}
 \item If $A,B\in \cal S$, then make $A$ and $B$ adjacent in $H$ if and only if $A\in \cal S^L$, $B\in \cal S^R_\hd$, and the pair $A,B$ is not anti-complete.
 \item Otherwise, that is, if $B\in \cal F\setminus \cal S$, then make $A$ and $B$ adjacent in $H$ if and only if $A\notin \cal S^R_\sm$ and the pair $A,B$ mismatches the type $\tp(B)$.
\end{itemize}
Note that thus, the elements of $\cal S^R_\sm$ are isolated in the graph $H$.
Also, note that the last prerequisite in Lemma~\ref{lem:sandwich} --- that the pair $X,Z$ mismatches the type $\tp(Z)$ --- is implied if we require that $X$ and $Z$ are adjacent in $H$. This is because from the assumptions of the lemma it follows that $Z\in \cal F\setminus \cal S$.

We will later construct a $q$-flip $G'$ of $G$ so that $H=\quo{G'}{\cal F}$, where $q$ is a constant depending only on $d$ and~$k$. However, for now let us focus on studying the properties of $H$ implied by the construction.
The next lemma will be used to control the adjacency in $H$ between parts from a prefix and from the corresponding suffix of the ordering $\preceq$, and it follows quite directly from Lemma~\ref{lem:sandwich}.

\begin{lemma}\label{lem:earlier}
 Let $t>\tau$ and let $B\in \cal P_t$ be such that $B$ has no ancestor frozen at any time $s<t$ (but it may happen that $B$ is frozen at time $t$). Let $\cal F_B\subseteq \cal F$ be the set of frozen descendants of $B$. Then
 $$\left|N_H(\cal F_B)\cap \bigcup_{s<t} \cal F_s\right|\leq d.$$
\end{lemma}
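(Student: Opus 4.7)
The plan is to establish an injection from $\cal X \coloneqq N_H(\cal F_B)\cap \bigcup_{s<t}\cal F_s$ into the collection of parts in $\cal P^{\rv Q}_{t-1}$ that form an impure pair with $B'$, where $Q\in\{L,R\}$ is the side containing $B$ and $B'\in \cal P^Q_{t-1}$ is the unique ancestor of $B$. By the width assumption on the uncontraction sequence, at most $d$ such impure parts exist, which will yield the desired bound.

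First I would observe that since $H$ is bipartite with sides $\cal F^L$ and $\cal F^R$, every $X\in \cal X$ lies on the side opposite to $B$; that is, $\cal X\subseteq \cal F^{\rv Q}$. Fix any $X\in \cal X$, let $s<t$ be its freezing time, and pick a neighbor $Z_X\in \cal F_B$ of $X$ in $H$, frozen at some time $z\ge t$. Since $z\ge t>\tau$ we have $Z_X\in \cal F\setminus \cal S$, so $Z_X$ lies past $X$ in $\preceq$, and the ``otherwise'' clause of the definition of adjacency in $H$ gives that the pair $X,Z_X$ mismatches $\tp(Z_X)$. I would then apply Lemma~\ref{lem:sandwich} with parameters $x=s$, $y=t-1$, $z=z$, and sets $X$, $Y=B'$, $Z=Z_X$. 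All hypotheses hold: $x\le y$ follows from $s<t$, $y<z$ from $z\ge t$, $y\ge\tau$ from $t>\tau$, and $Y=B'$ is an ancestor of $Z=Z_X$ because $Z_X\subseteq B\subseteq B'$. The lemma produces $U_X\in \cal P^{\rv Q}_{t-1}$ with $U_X\subseteq X$ such that the pair $U_X,B'$ is impure.

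To conclude, the assignment $X\mapsto U_X$ is injective: for distinct $X_1,X_2\in \cal X$, Lemma~\ref{lem:Fdivision} ensures $X_1\cap X_2=\emptyset$, so $U_{X_1}\subseteq X_1$ and $U_{X_2}\subseteq X_2$ are disjoint and in particular distinct. Since $B'\in \cal P^Q_{t-1}$ is impure with at most $d$ members of $\cal P^{\rv Q}_{t-1}$ by the width bound on the uncontraction sequence, the image of the map has size at most $d$, hence $|\cal X|\le d$.

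The subtle choice here is taking $y=t-1$ rather than the more tempting $y=t$. If one aimed at an impurity with $B$ itself at time $t$, then the case $z=t$ (in which $B$ is frozen at time $t$, forcing $Z_X=B$) would violate the hypothesis $y<z$ of Lemma~\ref{lem:sandwich}; moreover, even if one tried to transport an impurity at time $t-1$ forward to time $t$, the single split between $\cal P_{t-1}$ and $\cal P_t$ could destroy it (a split on the $\rv Q$ side might turn an impure pair $U',B$ into two pure pairs $U_1,B$ and $U_2,B$ of opposite types, while a split of $B'$ into $B$ and $B'\setminus B$ could leave $U',B'$ impure while $U',B$ becomes pure). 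Working one step earlier against the ancestor $B'$ sidesteps both difficulties and lets a single uniform application of Lemma~\ref{lem:sandwich} cover every $X\in \cal X$.
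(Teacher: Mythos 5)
Your proof is correct and follows essentially the same route as the paper: both apply Lemma~\ref{lem:sandwich} with $Y=B'$ (the ancestor of $B$ in $\cal P_{t-1}$) and $Z$ a frozen descendant of $B$, and both bound the count by the at-most-$d$ parts of $\cal P_{t-1}$ impure with $B'$. The only cosmetic difference is that you phrase the final count as an injection $X\mapsto U_X$ via disjointness of the $X$'s, while the paper argues the same thing by observing that each $U$ has at most one frozen ancestor.
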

\begin{proof}
 Let $B'$ be the unique ancestor of $B$ in $\cal P_{t-1}$. Consider any sets $A\in \bigcup_{s<t} \cal F_s$ and $C\in \cal F_B$ that are adjacent in $H$. Noting that the prerequisites of Lemma~\ref{lem:sandwich} are satisfied for $(X,Y,Z)=(A,B',C)$, we conclude that there exists a descendant $U$ of $A$ such that $U\in \cal P_{t-1}$ and the pair $B',U$ is impure.
 However, there are at most $d$ such sets $U$, and each of them has at most one frozen ancestor. It follows that the total number of different sets $A$ that can be as above is bounded by $d$.
\end{proof}

From Lemma~\ref{lem:earlier} we can easily derive an upper bound on the degeneracy of $\preceq$.

\begin{lemma}\label{lem:degeneracy}
 For every $B\in \cal F$ there are at most $2d$ sets $A\in \cal F$ such that $A\prec B$ and $A$ and $B$ are adjacent in $H$.
\end{lemma}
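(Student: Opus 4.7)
The plan is to split on whether $B$ lies in $\cal S$ or in $\cal F\setminus\cal S$, and in the latter case to partition the set of $A\prec B$ into those frozen strictly before $B$'s freezing time and those frozen at exactly the same time.

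First I would handle the easy case $B\in\cal S$. By construction of $H$, edges inside $\cal S$ only run from $\cal S^L$ to $\cal S^R_\hd$. So if $B\in\cal S^L$ then no $A\prec B$ is adjacent (all earlier parts lie in $\cal S^L$, and we also know no part of $\cal F\setminus\cal S$ is $\prec B$). If $B\in\cal S^R_\sm$ then $B$ is isolated in $H$. The remaining subcase $B\in\cal S^R_\hd$ is bounded directly by $|\cal S^L|\leq d+1\leq 2d$, using Lemma~\ref{lem:vc} and the standing assumption $d\geq 2$.

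The main case is $B\in\cal F\setminus\cal S$, say $B\in\cal F_t$ with $t>\tau$; then a part $A\prec B$ either belongs to $\bigcup_{s<t}\cal F_s$, or it belongs to $\cal F_t$ itself (and precedes $B$ in the arbitrary ordering chosen inside $\cal F_t$). For the first family, I would apply Lemma~\ref{lem:earlier} directly to $B$ at time $t$: since $B$ is frozen at time $t$, none of its ancestors were frozen earlier, so the hypotheses are satisfied; taking $\cal F_B\supseteq\{B\}$ yields at most $d$ neighbors of $B$ coming from earlier freezing times (and this conveniently absorbs the whole of $\cal S$, since $\tau<t$ implies $\cal S\subseteq\bigcup_{s<t}\cal F_s$). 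For the second family, I use that $H$ is bipartite with sides $\cal F^L,\cal F^R$, so if $B\in\cal F^Q_t$ any $A\in\cal F_t$ adjacent to $B$ in $H$ must lie in $\cal F^{\overline Q}_t$; Lemma~\ref{lem:freezing-tiny} caps this contribution by $d$.

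Adding the two contributions in the main case gives at most $d+d=2d$, which matches the bound in the easy case; this completes the plan. I do not anticipate a serious obstacle here, since both combinatorial inputs (Lemma~\ref{lem:earlier} and Lemma~\ref{lem:freezing-tiny}) are already tailored exactly to these two regimes; the only mildly delicate point is making sure that parts frozen at time $\leq\tau$ are correctly accounted for by Lemma~\ref{lem:earlier} via the inclusion $\cal S\subseteq\bigcup_{s<t}\cal F_s$, and that the bipartiteness of $H$ is used to keep the ``same freezing time'' contribution to $d$ rather than $2d$.
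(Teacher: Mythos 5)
Your proof is correct and follows essentially the same two-case structure as the paper's proof, bounding the $\cal F_t$ contribution via Lemma~\ref{lem:freezing-tiny} and the earlier-time contribution via Lemma~\ref{lem:earlier} applied to $B$ itself. You are just slightly more explicit than the paper in spelling out that the bipartiteness of $H$ is what caps the same-time contribution at $d$ rather than $2d$, and in noting that $\cal S\subseteq\bigcup_{s<t}\cal F_s$ absorbs the $\cal S$-parts into the Lemma~\ref{lem:earlier} count.
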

\begin{proof}
 Let $t$ be such that $B\in \cal F_t$. We consider two cases: either $t\leq \tau$ or $t>\tau$.
 
 In the first case we have $B\in \cal S$. If $B\in \cal S^L$ then there are no sets $A\prec B$ adjacent to $B$ in $H$. Otherwise $B\in \cal S^R$, so all the sets $A\prec B$ adjacent to $B$ in $H$ are contained in $\cal S^L$, and therefore their number is bounded by $|\cal S^L|\leq d+1$, by~Lemma~\ref{lem:vc}.
 
 In the second case, the sets $A\prec B$ that are adjacent to $B$ in $H$ can be divided into those that belong to $\cal F_t$ and those that belong to $\bigcup_{s<t} \cal F_s$. The number of sets of the first kind is bounded by $d$ by Lemma~\ref{lem:freezing-tiny}, while for the second kind we also have an upper bound of $d$ following from Lemma~\ref{lem:earlier} applied to $B$.
\end{proof}

We can now lift the reasoning presented in Lemmas~\ref{lem:earlier} and~\ref{lem:degeneracy} to give a bound on the strong $2$-coloring number of $\preceq$. 
We remark that a similar reasoning can be also applied to bound the strong $r$-coloring numbers for larger values of $r$, but we will not need this  later on.

\begin{lemma}\label{lem:scols}
 We have
 $$\scol_2(H,\preceq)\leq 2d^2+3d.$$
\end{lemma}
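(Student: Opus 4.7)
The plan is to bound, for each $B\in\cal F$, the size of $\SReach_2^{H,\preceq}[B]$. Write
\[
\SReach_2^{H,\preceq}[B] \;=\; \{B\} \,\cup\, \{A\prec B : AB\in E(H)\} \,\cup\, T(B),
\]
where $T(B)=\{A\prec B : \exists\, C\succ B \text{ with } AC,BC\in E(H)\}$ collects the vertices that are strongly $2$-reachable via some intermediate $C$. Lemma~\ref{lem:degeneracy} already yields at most $2d$ vertices in the middle set, so it remains to bound $|T(B)|$.

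First I would dispatch the corner case $B\in \cal S$: every $A\prec B$ then lies in $\cal S^L\cup \cal S^R_{\hd}$ (elements of $\cal S^R_{\sm}$ are isolated in $H$, and parts of $\cal F\setminus \cal S$ come later in $\preceq$), so Lemmas~\ref{lem:vc} and~\ref{lem:simple-hard} give $|\SReach_2^{H,\preceq}[B]|\le 1+(d+1)+d(d+1)$, which is below $2d^2+3d$ for $d\ge 2$.

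For the main case $B\in\cal F^Q_t$ with $t>\tau$, bipartiteness of $H$ forces every $A\in T(B)$ to lie on the same side $Q$ of $G$ as $B$. I would split $T(B) = T_1(B) \cup T_2(B)$ with $T_1(B) = T(B)\cap \cal F_t$ and $T_2(B) = T(B)\cap \bigcup_{s'<t}\cal F_{s'}$. Lemma~\ref{lem:freezing-tiny} gives $|T_1(B)| \le |\cal F^Q_t| - 1 \le d-1$. To bound $|T_2(B)|$, I classify each intermediate $C\succ B$ with $BC\in E(H)$ by the time $s$ at which $C$ was frozen. If $s=t$, then $C\in \cal F^{\rv Q}_t$ and there are at most $d$ such $C$ by Lemma~\ref{lem:freezing-tiny}; for each, applying Lemma~\ref{lem:earlier} to $C$ at time $t$ bounds $|N_H(C)\cap \bigcup_{s'<t}\cal F_{s'}|\le d$. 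If $s>t$, I invoke Lemma~\ref{lem:sandwich} with $(X,Y,Z)=(B,C^{(t)},C)$ and parameters $x=y=t$, $z=s$, where $C^{(t)}\in \cal P^{\rv Q}_t$ denotes the ancestor of $C$ in $\cal P_t$. Since $B\in\cal P_t$ has only itself as a descendant in $\cal P^Q_t$, the part $U'$ asserted by the conclusion of Lemma~\ref{lem:sandwich} must equal $B$, forcing the pair $B,C^{(t)}$ to be impure in $G$. The width bound on the uncontraction sequence then restricts such ancestors $U=C^{(t)}$ to at most $d$ parts of $\cal P^{\rv Q}_t$, and Lemma~\ref{lem:earlier} applied to each such $U$ at time $t$ yields $|N_H(\cal F_U)\cap \bigcup_{s'<t}\cal F_{s'}| \le d$, covering all corresponding $C$'s at once. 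Each branch contributes at most $d^2$ to $|T_2(B)|$, so $|T_2(B)|\le 2d^2$. Summing, $|\SReach_2^{H,\preceq}[B]| \le 1 + 2d + (d-1) + 2d^2 = 2d^2+3d$.

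The key pivot is the application of Lemma~\ref{lem:sandwich} with $x=y=t$: it exploits that a frozen part $B$ is present in its own partition $\cal P_t$, so the descendant ``$U'$'' produced by the lemma must coincide with $B$ itself, and the resulting impurity of the pair $B,C^{(t)}$ becomes the tool that funnels all late-frozen intermediate vertices $C$ through a bounded family of ancestor parts controlled by the uncontraction width.
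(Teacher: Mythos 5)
Your proof is correct and follows essentially the same route as the paper: in the main case $B\in\cal F\setminus\cal S$ you split $\SReach_2^{H,\preceq}[B]$ slightly differently (the paper lumps all of $\cal F_t$ into one bucket of size $\le 2d$, while you separate $\{B\}$, the $\preceq$-smaller $H$-neighbors, and $T_1(B)\subseteq\cal F^Q_t\setminus\{B\}$ via the bipartiteness of $H$), but the key mechanism is identical. In particular, the two-way case analysis on the freezing time of the intermediate vertex $C$, the use of Lemma~\ref{lem:sandwich} with $x=y=t$ to force the pair $B,C^{(t)}$ to be impure, followed by Lemma~\ref{lem:earlier} applied to each $C^{(t)}$ to obtain the $d^2+d^2$ contribution, is exactly the paper's argument.
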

\begin{proof}
 Let $B\in \cal F$.
    We consider three cases: either $B\in \cal S^L$, or $B\in \cal S^R$, or $B\in \cal F\setminus \cal S$.
 
 First, if $B\in \cal S^L$, then $\SReach^{H,\preceq}_2[B]\subseteq \cal S^L$, implying that $|\SReach^{H,\preceq}_2[B]|\leq d+1$.
 
 Second, if $B\in \cal S^R$, then each $A\in \SReach^{H,\preceq}_2[B]$ is either equal to $B$, or belongs to $\cal S^L$, or is an element of $\cal S^R_\hd$. By Lemma~\ref{lem:simple-hard}, the number of elements of the last kind is bounded by $d(d+1)$. Hence, in total we have $|\SReach^{H,\preceq}_2[B]|\leq 1+(d+1)+d(d+1)=d^2+2d+2\leq 2d^2+3d$.
 
 We are left with the main case: $B\in \cal F\setminus \cal S$. Then $B\in \cal F_t$ for some $t> \tau$. 
 The sets $A\in \SReach^{H,\preceq}_2[B]$ can be partitioned into three kinds:
 \begin{itemize}[nosep]
  \item those that belong to $\cal F_t$;
  \item those that belong to $\bigcup_{s<t} \cal F_s$ and are adjacent to $B$ in $H$; and
  \item those that belong to $\bigcup_{s<t} \cal F_s$, are not adjacent to $B$ in $H$, but for which there exists $C\in \cal F$ such that $A\prec B\prec C$ and $C$ is adjacent both to $A$ and to $B$ in $H$.
 \end{itemize}
 Lemmas~\ref{lem:freezing-tiny} and~\ref{lem:earlier} respectively imply upper bounds of $2d$ and $d$ on the numbers of sets of the first two kinds. Therefore, it remains to show that there are at most $2d^2$ sets of the third kind.
 
 For each set $A$ of the third kind let us fix some set $C$ as above. We partition the sets $A$ of the third kind into two further subkinds depending on the placement of $C$:
 \begin{itemize}[nosep]
  \item those for which $C\in \cal F_t$; and
  \item those for which $C\in \bigcup_{t'>t} \cal F_{t'}$.
 \end{itemize}
 For the first subkind, observe that by Lemma~\ref{lem:freezing-tiny}, $\cal F_t$ contains at most $d$ sets $C$ that are adjacent to $B$ in $H$, and each of them has at most $d$ neighbors in $H$ that belong to $\bigcup_{s<t} \cal F_s$, by Lemma~\ref{lem:earlier}. Therefore, there are at most $d^2$ sets $A$ of the first subkind. We are left with proving that the number of sets $A$ of the second subkind is bounded by $d^2$ as well.

 Consider any set $A$ of the second subkind and let $C\in \bigcup_{t'>t} \cal F_{t'}$ be the common neighbor of $A$ and $B$ that has been fixed for $A$. Let $C'$ be the unique ancestor of $C$ in $\cal P_t$. Since $t>\tau$, we may apply Lemma~\ref{lem:sandwich} to $(X,Y,Z)=(B,C',C)$ to infer that the pair $B,C'$ is impure. Now, Lemma~\ref{lem:earlier} applied to $C'$ implies that the descendants of $C'$ that belong to $\cal F$ have at most $d$ different neighbors in $\bigcup_{s<t} \cal F_s$ in total. Noting that there are at most $d$ sets $C'\in \cal P_t$ for which the pair $B,C'$ is impure, we conclude that the total number of sets of the second subkind is at most~$d^2$.
\end{proof}
Let $$p\coloneqq 6d^2+3d.$$ 
Then Proposition~\ref{prop:scol-wcol} together with Lemmas~\ref{lem:degeneracy} and~\ref{lem:scols} implies that $$\wcol_2(H,\preceq)\le \scol_2(H,\preceq)+(\scol_1(H,\preceq)-1)^2\le p.$$
We now may apply Lemma~\ref{lem:star-coloring} to the graph $H$ and thus obtain a coloring $\lambda\colon \cal F\to [p]$ that satisfies the following.
\begin{lemma}\label{lem:star-coloring'}
 For any $i,j\in [p]$, the graph $H[\lambda^{-1}(\{i,j\})]$ is a star forest. Moreover, in each star of this forest that has at least three elements, the $\preceq$-minimum element of the star is the center.
\end{lemma}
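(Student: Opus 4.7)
The plan is to observe that Lemma~\ref{lem:star-coloring'} is essentially a direct application of Lemma~\ref{lem:star-coloring} to the graph $H$ equipped with the order $\preceq$. All the heavy lifting has already been done in the lemmas that precede it; what remains is to check that the hypotheses of Lemma~\ref{lem:star-coloring} are met and to reinterpret its conclusion in the terminology used in Lemma~\ref{lem:star-coloring'}.

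First, I would record that $\wcol_2(H,\preceq) \leq p$. This is exactly the computation carried out in the paragraph preceding the lemma: by Lemma~\ref{lem:degeneracy} we have $\scol_1(H,\preceq) \leq 2d+1$, by Lemma~\ref{lem:scols} we have $\scol_2(H,\preceq) \leq 2d^2+3d$, and plugging these into Proposition~\ref{prop:scol-wcol} gives $\wcol_2(H,\preceq) \leq p = 6d^2+3d$. Hence Lemma~\ref{lem:star-coloring} is applicable to $(H,\preceq)$ with this bound and yields the coloring $\lambda\colon \cal F \to [p]$ described in the text.

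Next, I would unpack the conclusion of Lemma~\ref{lem:star-coloring} for each pair of colors $i,j \in [p]$. That lemma tells us that every connected component $D$ of $H[\lambda^{-1}(\{i,j\})]$ is a star whose center is the $\preceq$-minimum vertex of $V(D)$. A graph whose connected components are all stars is, by definition, a star forest; this gives the first assertion of Lemma~\ref{lem:star-coloring'}. For the second assertion, consider any star in this forest with at least three vertices. In such a star the center is uniquely determined as the only vertex of degree $\geq 2$, so the identification of the center given by Lemma~\ref{lem:star-coloring} is unambiguous and yields exactly the claim that the center equals the $\preceq$-minimum element of the star.

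There is no real obstacle here beyond checking that the preceding lemmas supply all the ingredients. The restriction to stars with at least three vertices in the statement of Lemma~\ref{lem:star-coloring'} is precisely what allows one to speak of ``the'' center without ambiguity; stars consisting of a single edge are ignored because either endpoint could formally be called the center, and making such a choice is not needed for the downstream arguments in Section~\ref{sec:main-lemma}.
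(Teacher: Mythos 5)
Your proof is correct and coincides with the paper's reasoning: the paper also treats Lemma~\ref{lem:star-coloring'} as an immediate corollary of Lemma~\ref{lem:star-coloring} applied to $(H,\preceq)$, justified by the computation $\wcol_2(H,\preceq)\le \scol_2(H,\preceq)+(\scol_1(H,\preceq)-1)^2\le (2d^2+3d)+(2d)^2=p$ carried out in the paragraph preceding the lemma statement. Your remark about the ``at least three elements'' clause serving only to make the notion of center unambiguous is also the right reading.
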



Lemma~\ref{lem:star-coloring'} suggests that in the statement of Lemma~\ref{lem:main},
as $\cal U_1,\ldots,\cal U_{\ell}$ we could take sets $\lambda^{-1}(\{i,j\})$ for all $1\leq i<j\leq p$. However, then verifying the last condition of Lemma~\ref{lem:main}, about the index of the bipartite graphs induced by the stars, would be problematic. The next lemma will be used to resolve this issue by appropriately refining the choice of $\cal U_1,\ldots,\cal U_{\ell}$ sketched as above.

\begin{lemma}\label{lem:refine}
 Let $A\in \cal F$ and let $\cal N\subseteq \cal F$ be the set of all $B\in \cal F$ such that $A\prec B$ and $A$ and $B$ are adjacent in $H$. Then $\cal N$ can be partitioned into sets $\cal N_1,\ldots,\cal N_c$ for some $c\leq 2d^2+3d+1$ so that for each $i\in [c]$, the index of $G[A,\bigcup \cal N_i]$ is smaller than $k$.
\end{lemma}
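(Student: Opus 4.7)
My general strategy will be to group the members $B$ of $\cal N$ by an ``ancestor witness'', namely the ancestor of $B$ at some judiciously chosen moment $y$ of the uncontraction sequence. The key observation is that once $A$ is frozen at time $s$, the freezing condition guarantees that $G[A,Y]$ has index smaller than $k$ for every part $Y\in \cal P^{\rv Q}_s$, where $Q$ is the side containing $A$. Consequently, once I ensure that all $B$'s in one group sit inside a single part $Y\in\cal P^{\rv Q}_s$ (or a descendant thereof), monotonicity of the quasi-ladder index will give the desired bound. The bookkeeping then reduces to counting the number of possible witnesses, and the choice of $y$ will differ depending on whether $A\in\cal S$ or not.

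\textbf{Case $A\notin\cal S$.} Let $s$ be the freezing time of $A$, so $s>\tau$. I will split $\cal N$ by the freezing time $t$ of $B$ (the case $t<s$ does not occur by the definition of $\preceq$). When $t=s$, Lemma~\ref{lem:freezing-tiny} bounds the number of such $B$'s by $d$, and each of them can be placed into its own singleton group via Lemma~\ref{lem:index-single-edge}. When $t>s$, I will apply Lemma~\ref{lem:sandwich} with $x=y=s$ and $z=t$: since $A$ is its own unique descendant in $\cal P^Q_s$, the conclusion reduces to asserting that the pair $A,Y_B$ is impure, where $Y_B\in\cal P^{\rv Q}_s$ is the ancestor of $B$ at time $s$. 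The width bound then permits at most $d$ choices of $Y_B$, yielding at most $d$ further groups; each is contained in the corresponding $Y_B$, for which freezing of $A$ gives $G[A,Y_B]$ index $<k$. Total: at most $2d$ groups.

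\textbf{Case $A\in\cal S$.} If $A\in\cal S^R_\sm$, then $\cal N=\emptyset$ by the definition of $H$, and there is nothing to prove; assume therefore $A\in\cal S^L\cup\cal S^R_\hd$. Split $\cal N=\cal N_1\sqcup\cal N_2$, where $\cal N_1=\cal N\cap\cal S$ and $\cal N_2=\cal N\cap(\cal F\setminus\cal S)$. The edge rule inside $\cal S$ forces $\cal N_1\subseteq\cal S^R_\hd$, so by Lemma~\ref{lem:simple-hard} $|\cal N_1|\leq d(d+1)$; I will place each element of $\cal N_1$ in its own singleton group, with the index bound supplied by Lemma~\ref{lem:index-single-edge}. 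For $\cal N_2$, I plan to apply Lemma~\ref{lem:sandwich} with $x=s\leq\tau$, $y=\tau$, and $z=t>\tau$: this produces for each $B\in\cal N_2$ an ancestor $Y_B\in\cal P^{\rv Q}_\tau$ of $B$ and a descendant $U_B$ of $A$ in $\cal P^Q_\tau$ such that the pair $U_B,Y_B$ is impure. Group $\cal N_2$ by $Y_B$. Using $|\cal P^L_\tau|=d+1$ together with the width bound, the number of impure pairs $(U,Y)$ with $U$ a descendant of $A$ is at most $d(d+1)$, yielding at most $d(d+1)$ groups in $\cal N_2$. Each group sits inside the corresponding $Y_B$, which in turn sits inside its ancestor $Y'_B\in\cal P^{\rv Q}_s$; freezing of $A$ at time $s$ gives $G[A,Y'_B]$ index $<k$, and monotonicity transfers this to each group.

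\textbf{Main obstacle.} Summing, every case produces at most $d(d+1)+d(d+1)=2d^2+2d$ groups, comfortably within $2d^2+3d+1$. I expect the most delicate step to be the simultaneous use of two different reference times $s$ and $\tau$ in the $A\in\cal S$ case: $s$ is needed to invoke the freezing condition on $A$, while $\tau$ is what makes $|\cal P^L_\tau|=d+1$ available to cap the number of possible witnesses. The only technical checkpoint will be confirming that the freezing bound established at time $s$ still controls $G[A,Y_B]$ when $Y_B$ is defined at the later time $\tau$, which follows directly from $Y_B$ being a descendant of its $s$-ancestor $Y'_B$ and the monotonicity of the quasi-ladder index under passing to induced subgraphs.
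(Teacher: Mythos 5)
Your proof is correct and follows essentially the same strategy as the paper's: split $\cal N$ into a small part that you treat by singletons (via Lemma~\ref{lem:index-single-edge}) and a remaining part that you group by a common ancestor, using Lemma~\ref{lem:sandwich} to bound the number of groups and the freezing condition to control the index of each group. The only cosmetic difference is that the paper packages all three cases into a single uniform claim about an $\cal M\subseteq\cal P_{t'+1}$ with $t'=\max(t,\tau)$, and your bookkeeping actually gives a marginally tighter count ($2d^2+2d$ versus the stated $2d^2+3d+1$).
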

\begin{proof}
 Let $t\in [n]$ be such that $A\in \cal F_t$ and let $t'\coloneqq \max(t,\tau)$.
 Since $\cal N\cap (\cal S^L\cup \cal S^R_\sm)=\emptyset$, by Lemmas~\ref{lem:freezing-tiny} and~\ref{lem:simple-hard} we have
 $$\left|\cal N\setminus \bigcup_{s>t'} \cal F_s\right|\leq d+d(d+1)=d^2+2d.$$
 In the constructed partition of $\cal N$ we put each $B\in \cal N\setminus \bigcup_{s>t'} \cal F_s$ into a separate part $\cal N_i$ consisting only of $B$. Then the index of $G[A,\bigcup \cal N_i]=G[A,B]$ is smaller than $k$ by Lemma~\ref{lem:index-single-edge}. 
 Thus, we are left with partitioning the remaining sets, that is, $\cal N'\coloneqq \cal N\cap \bigcup_{s>t'} \cal F_s$, into at most $d^2+d+1$ parts that satisfy the requested property.
 
 We shall prove the following claim: There exists $\cal M\subseteq \cal P_{t'+1}$ such that $|\cal M|\leq d^2+d+1$ and every $B\in \cal N'$ is a descendant of an element of $\cal M$.
 Before we proceed to the proof, let us verify that the lemma follows from this claim. For each $C\in \cal M$ we define
 $$\cal N_C\coloneqq \cal N'\cap \{B\colon B\textrm{ is a descendant of }C\}.$$
 Then from the claim it follows that $\{\cal N_C\colon C\in \cal M\}$ is a partition of $\cal N'$. Observe that for each $C\in \cal M$ the index of $G[A,C]$ is smaller than $k$, because this holds for the graph $G[A,C']$, where $C'$ is the unique ancestor of $C$ in $\cal P_t$. Since $\bigcup \cal N_C\subseteq C$, it follows that the index of $G[A,\bigcup \cal N_C]$ is smaller than $k$, for each $C\in \cal M$. As $|\cal M|\leq d^2+d+1$, $\{\cal N_C\colon C\in \cal M\}$ is the desired partition.
 
 It remains to prove the claim. We distinguish three cases: either $A\in \cal S^L$, or $A\in \cal S^R$, or $A\in \cal F\setminus \cal S$.
 
 Suppose first that $A\in \cal S^L$. Then $t'=\tau$. Let $\cal M_0$ be the set of all $C\in \cal P^R_{\tau}$ for which there exists $D\in \cal P^L_{\tau}$ such that the pair $C,D$ is impure. Since $|\cal P^L_\tau|=d+1$, we have $|\cal M_0|\leq d(d+1)$. By Lemma~\ref{lem:sandwich}, every $B\in \cal N'$ is a descendant of an element of $\cal M_0$. Hence, as $\cal M$ we can take the set of all descendants of the elements of $\cal M_0$ in $\cal P^R_{\tau+1}$. Since $|\cal M_0|\leq d^2+d$ and $\cal P^R_{\tau+1}$ differs from $\cal P^R_{\tau}$ by splitting exactly one set, we have $|\cal M|\leq d^2+d+1$.
 
 Suppose now that $A\in \cal S^R$. Again $t'=\tau$. Then we can simply take $\cal M$ to be the whole $\cal P^L_{\tau+1}$. Note that as above, $|\cal P^L_{\tau+1}|\leq |\cal P^L_\tau|+1=d+2$.
 
 Finally, suppose that $A\in \cal F\setminus \cal S$. Then $t'=t$. Let $\cal M_0$ be the set of all $C\in \cal P_t$ that together with $A$ form an impure pair. Then $|\cal M_0|\leq d$ and from Lemma~\ref{lem:sandwich} it follows that every $B\in \cal N'$ has an ancestor in $\cal M_0$. Hence, again we define $\cal M$ to be the set of all descendants of the elements of $\cal M_0$ in $\cal P_{t+1}$, and we have $|\cal M|\leq |\cal M_0|+1\leq d+1$.
\end{proof}

Let $r\coloneqq 2d^2+3d+1$. We define sets $\cal U_1,\ldots,\cal U_\ell$ for $\ell\coloneqq \binom{p}{2}r$ as follows. 
For every pair of distinct colors $1\leq i<j\leq p$, let $\cal U^{i,j}\coloneqq \lambda^{-1}(\{i,j\})$. By Lemma~\ref{lem:star-coloring}, $H[\cal U^{i,j}]$ is a star forest, so let $\cal C^{i,j}$ be the set of centers of the stars in $H[\cal U^{i,j}]$ (in each star with two elements, we pick the $\preceq$-smaller one as the center). For each $C\in \cal C^{i,j}$ let us fix the partition $\cal N^C_1,\ldots,\cal N^C_r$ of $\{A\colon C\preceq A\textrm{ and }AC\in E(H)\}$ provided by Lemma~\ref{lem:refine} (where we add some empty parts if necessary). Finally, for each $h\in [r]$ define
$$\cal U^{i,j,h}\coloneqq \cal C^{i,j}\cup \bigcup_{C\in \cal C^{i,j}} \left(\cal N^C_h\cap \cal U^{i,j}\right).$$
Thus, we have obtained $\ell$ sets $\cal U^{i,j,h}$ as above, and we reindex them as $\cal U_1,\ldots,\cal U_\ell$ arbitrarily.

That condition \ref{c:star-simpler} is satisfied follows directly from the construction and from Lemma~\ref{lem:refine}. For condition \ref{c:impure}, observe that for each edge $AB\in E(H)$, say with $A\prec B$, we have that $A,B\in \cal U^{i,j}$ where $(i,j)=(\lambda^{-1}(A),\lambda^{-1}(B))$, and $A$ is the center of the star of $H[\cal U^{i,j}]$ that contains both $A$ and $B$. Then $A,B\in \cal U^{i,j,h}$ where $h$ is such that $B\in \cal N^A_h$. 

\medskip

We are left with the construction of a graph $G'$ such that $H=\quo{G'}{\cal F}$ and $G'$ is a $q$-flip of $G$, for some constant $q$ depending only on $d$ and $k$. In fact, we use
$$q\coloneqq 6k(d+1)+3+2^{d+1}.$$
Intuitively, the idea is to analyze the elements of $\cal F\setminus \cal S$ as ordered by $\preceq$ and partition them into a bounded number of blocks, each behaving in a somewhat homogeneous way.

A {\em{block}} is a nonempty subset of $\cal F\setminus \cal S$ that is convex in the ordering $\preceq$. Call a block $\cal I$
\begin{itemize}
 \item 
{\em{sign-homogeneous}} if either $\tp(A)=-$ for all $A\in \cal I$, or $\tp(A)=+$ for all $A\in \cal I$; and
 \item
{\em{side-homogeneous}} if either $\bigcup_{A \in \cal I} A\subseteq L$ or $\bigcup_{A \in \cal I} A\subseteq R$.
\end{itemize}
A block is {\em{homogeneous}} if it is sign-homogeneous or side-homogeneous. Note that a block $\cal I$ is not homogeneous if and only if it contains a {\em{diagonal}}: a pair of sets $A,B$ such $\tp(A)\neq \tp(B)$ and exactly one of $A$ and $B$ is contained in $L$.

The next observation is crucial: in $\cal F\setminus \cal S$ there is no long sequence of diagonals placed one after the other.

\begin{lemma}\label{lem:many-diagonals}
 Suppose 
 $$A_1\prec B_1\prec A_2\prec B_2\prec \cdots \prec A_s\prec B_s$$
 are elements of $\cal F\setminus \cal S$ such that for each $i\in [s]$, the pair $A_i,B_i$ is a diagonal. Then $s\leq k(4d+1)$. 
\end{lemma}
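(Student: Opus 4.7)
The plan is to argue by contradiction: assume $s > k(4d+1)$ and use the long diagonal sequence to manufacture a quasi-ladder of order $k+1$ in $G$, contradicting the assumption that $G$ has quasi-ladder index at most $k$. To set things up uniformly, for each $i\in[s]$ let $P_i\in\Ff^L$ and $Q_i\in\Ff^R$ be the renaming of $\{A_i,B_i\}$ according to side. Since $A_1\prec B_1\prec A_2\prec\cdots$, for $i<j$ both $P_i,Q_i$ are $\preceq B_i \prec A_j$, hence $\preceq$-before both $P_j,Q_j$. The diagonal condition becomes $\tp(P_i)\neq\tp(Q_i)$.

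For every $l\in[s]$ I fix an option for the would-be quasi-ladder: set $\alpha_l=a$ if $\tp(P_l)=+$ (so $\tp(Q_l)=-$), and $\alpha_l=b$ if $\tp(P_l)=-$ (so $\tp(Q_l)=+$). The point of this choice is that for any $l'<l$, the option-$\alpha_l$ quasi-ladder constraint translates, through pure behavior, into the requirement that the pair $(P_l,Q_{l'})$ match $\tp(P_l)$ and the pair $(Q_l,P_{l'})$ match $\tp(Q_l)$. Call the ordered pair $(l',l)$ \emph{bad} if at least one of these two matchings fails.

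The core estimate is that each $l$ has at most $4d$ bad $l'<l$. For the mismatching of $(P_l,Q_{l'})$ with $\tp(P_l)$: if $t_{Q_{l'}}<t_{P_l}$, Lemma~\ref{lem:sandwich} applied with $X=Q_{l'},\ Z=P_l$ and $y=t_{P_l}-1\ge\tau$ produces a descendant of $Q_{l'}$ at time $y$ that is impure with the ancestor of $P_l$ at time $y$; the width bound of the uncontraction sequence forces at most $d$ such impure $R$-parts and hence at most $d$ such bad $l'$. The degenerate case $t_{Q_{l'}}=t_{P_l}$ contributes at most $|\Ff^R_{t_{P_l}}|\le d$ extra $l'$ by Lemma~\ref{lem:freezing-tiny}, summing to $2d$. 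The symmetric analysis applied to $(Q_l,P_{l'})$ with $\tp(Q_l)$ (now $X=P_{l'},Z=Q_l$) yields another $2d$, for a grand total of $4d$.

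Finally I finish with a standard degeneracy argument. Build the digraph $D$ on $[s]$ whose arcs are the bad pairs $l'\to l$; each vertex has in-degree at most $4d$, so the underlying undirected graph is $4d$-degenerate and hence $(4d+1)$-colorable, yielding an independent set $I$ of size at least $\lceil s/(4d+1)\rceil$. The hypothesis $s>k(4d+1)$ forces $s/(4d+1)>k$, so $|I|\ge k+1$. Write $I=\{i_1<\cdots<i_{k+1}\}$ and pick arbitrary $x_l\in P_{i_l},\ y_l\in Q_{i_l}$. No pair of indices in $I$ is bad, so for every $l$ and every $l'<l$ the pairs $(P_{i_l},Q_{i_{l'}})$ and $(Q_{i_l},P_{i_{l'}})$ are pure and of the type matching $\tp(P_{i_l})$ and $\tp(Q_{i_l})$ respectively; this exactly witnesses option $\alpha_{i_l}$ of the quasi-ladder definition at position $l$. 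Thus $(x_l,y_l)_{l\in[k+1]}$ is a quasi-ladder of order $k+1$ in $G$, the desired contradiction. The main obstacle is designing $\alpha_l$ so that the two seemingly-separate conditions (on $(P_l,Q_{l'})$ and on $(Q_l,P_{l'})$) both reduce to matching statements that can be controlled by the sandwich lemma; the factor of $4$ in front of $d$ splits into $2+2$ (two conditions, each with a strict-time and an equal-time contribution).
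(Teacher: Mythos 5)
Your proof is correct and takes essentially the same approach as the paper: your notion of a \emph{bad} pair $(l',l)$ coincides exactly with $H$-adjacency between $\{P_{l'},Q_{l'}\}$ and $\{P_l,Q_l\}$, so your in-degree bound of $4d$ is a re-derivation of Lemma~\ref{lem:degeneracy} (applied once to $P_l$ and once to $Q_l$), and the independent-set extraction plus quasi-ladder construction matches the paper's greedy argument. The only cosmetic difference is that the paper cites Lemma~\ref{lem:degeneracy} directly and uses a greedy right-to-left selection rather than a degeneracy coloring, while you unfold that lemma's proof inline and make the quasi-ladder verification more explicit.
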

\begin{proof}
 To reach a contradiction, suppose that $s>k(4d+1)$.
 Call a pair of distinct indices $i,j\in [s]$ {\em{in conflict}} if any of the sets $\{A_i,B_i\}$ is adjacent to any of the sets $\{A_j,B_j\}$ in $H$. By Lemma~\ref{lem:degeneracy}, every index $j$ is in conflict with at most $4d$ indices $i<j$. As $s>k(4d+1)$, by applying a greedy right-to-left procedure we can select a set $J\subseteq [s]$ of cardinality $k+1$ such that the indices in $J$ are pairwise not in conflict. For each $i\in J$, pick arbitrary $a_i\in A_i$ and $b_i\in B_i$. Now, since $A_i,B_i$ is a diagonal for each $i\in J$, the vertices $\{a_i,b_i\colon i\in J\}$ form a quasi-ladder of length $k+1$ in $G$. This is a contradiction with the assumption that $G$ has index at most $k$.
\end{proof}

We may now derive the following corollary of Lemma~\ref{lem:many-diagonals}.

\begin{lemma}\label{lem:few-blocks}
 The set $\cal F\setminus \cal S$ can be partitioned into at most $2k(4d+1)+1$ homogeneous blocks.
\end{lemma}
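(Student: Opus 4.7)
The plan is to construct the desired partition greedily along the order $\preceq$ and then use Lemma~\ref{lem:many-diagonals} to bound the number of blocks produced. I would define blocks $\cal B_1, \cal B_2, \ldots, \cal B_N$ inductively: $\cal B_i$ is the longest $\preceq$-prefix of $(\cal F \setminus \cal S) \setminus (\cal B_1 \cup \cdots \cup \cal B_{i-1})$ that is homogeneous. By construction this yields a partition of $\cal F \setminus \cal S$ into homogeneous blocks, so all that remains is to bound $N$ by $2k(4d+1)+1$.

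The key observation is that each transition $\cal B_i \to \cal B_{i+1}$ witnesses a diagonal. Let $y_i$ be the $\preceq$-smallest element of $\cal B_{i+1}$; by maximality of $\cal B_i$, the set $\cal B_i \cup \{y_i\}$ is neither sign- nor side-homogeneous. A short case analysis then yields some $z_i \in \cal B_i$ for which $(z_i, y_i)$ is a diagonal. Indeed, if $\cal B_i$ is sign-homogeneous of type $\sigma$, then $y_i$ must have the opposite type (otherwise sign-homogeneity survives), and the loss of side-homogeneity forces the existence of $z_i \in \cal B_i$ on the opposite side of $y_i$; a symmetric argument handles the case when $\cal B_i$ is side-homogeneous but not sign-homogeneous.

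Finally, I would extract from these diagonals a strictly $\preceq$-ordered sequence by keeping only every other transition, i.e., $\cal B_1/\cal B_2,\ \cal B_3/\cal B_4,\ldots$. Setting $s \coloneqq \lfloor N/2 \rfloor$, this produces diagonals $(z_1, y_1), (z_3, y_3), \ldots, (z_{2s-1}, y_{2s-1})$ with
\[
z_1 \prec y_1 \prec z_3 \prec y_3 \prec \cdots \prec z_{2s-1} \prec y_{2s-1},
\]
where each middle inequality $y_{2j-1} \prec z_{2j+1}$ is strict because $y_{2j-1} \in \cal B_{2j}$ while $z_{2j+1} \in \cal B_{2j+1}$, so these elements lie in distinct blocks. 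Applying Lemma~\ref{lem:many-diagonals} to this sequence gives $s \leq k(4d+1)$, hence $N \leq 2k(4d+1)+1$, as required.

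The argument is essentially bookkeeping on top of Lemma~\ref{lem:many-diagonals}; I expect the only subtle point to be securing the strict ordering demanded by that lemma, which is precisely what forces us to sample every other transition rather than trying to use all $N-1$ of them.
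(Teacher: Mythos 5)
Your proof is correct and follows essentially the same strategy as the paper's: extract a diagonal from each transition between consecutive homogeneous blocks, skip every other transition to get a strictly $\preceq$-increasing sequence of diagonals, and invoke Lemma~\ref{lem:many-diagonals}. The paper phrases it slightly differently (it takes a minimum-size partition and observes by minimality that each union $\cal I_{2i-1}\cup\cal I_{2i}$ of two consecutive blocks is non-homogeneous, hence contains a diagonal), but the extraction of $k(4d+1)+1$ ordered diagonals and the contradiction with Lemma~\ref{lem:many-diagonals} is the same idea; your greedy variant and explicit case analysis produce the same bound.
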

\begin{proof}
 Let $\cal I_1,\cal I_2,\ldots,\cal I_m$ be a partition of $\cal F\setminus \cal S$ into homogeneous blocks that minimizes $m$, the total number of blocks. Note here that such a partition always exists, as a single element of $\cal F\setminus \cal S$ always forms a homogeneous block. We assume that $\cal I_1,\cal I_2,\ldots,\cal I_m$ are ordered naturally by $\preceq$, that is, if $i<j$ then $A\prec B$ for all $A\in \cal I_i$ and $B\in \cal I_j$.
 
 For contradiction suppose $m\geq 2k(4d+1)+2$.
 By minimality, for each $i\in [k(4d+1)+1]$ the block $\cal I_{2i-1}\cup \cal I_{2i}$ is not homogeneous, hence it contains a diagonal, say $A_i,B_i$ where $A_i\prec B_i$. Now the existence of sets $A_1,B_1,A_2,B_2,\ldots,A_{k(4d+1)+1},B_{k(4d+1)+1}$ stands in contradiction with Lemma~\ref{lem:many-diagonals}.
\end{proof}

Let
$$\cal I_1,\cal I_2,\ldots,\cal I_m$$
be the partition of $\cal F\setminus \cal S$ into homogeneous blocks provided by Lemma~\ref{lem:few-blocks}, where $m\leq 2k(4d+1)+1$ and the blocks are ordered naturally by $\preceq$. We construct a graph $G'$ from $G$ by performing the following flips:
\begin{itemize}
 \item For each $j\in [m]$ such that $\cal I_j$ is side-homogeneous with $\bigcup \cal I_j\subseteq L$, flip the pair
 $$\bigcup\left(\cal S^R_\hd\cup \left(\cal F^R\cap \bigcup_{i<j} \cal I_i\right)\right),\quad\bigcup\{A\in \cal I_j~|~\tp(A)=+\}.$$
 \item For each $j\in [m]$ such that $\cal I_j$ is side-homogeneous with $\bigcup \cal I_j\subseteq R$, flip the pair
 $$\bigcup\left(\cal S^L\cup \left(\cal F^L\cap \bigcup_{i<j} \cal I_i\right)\right),\quad\bigcup\{A\in \cal I_j~|~\tp(A)=+\}.$$
 \item For each $j\in [m]$ such that $\cal I_j$ is not side-homogeneous, but is sign-homogeneous and $\tp(A)=+$ for all $A\in \cal I_j$, flip the pairs
 $$\bigcup\left(\cal S^R_\hd\cup \left(\cal F^R\cap \bigcup_{i<j} \cal I_i\right)\right),\quad\bigcup\{A\in \cal I_j~|~A\subseteq L\}$$
 and
 $$\bigcup\left(\cal S^L\cup \left(\cal F^L\cap \bigcup_{i<j} \cal I_i\right)\right),\quad\bigcup\{A\in \cal I_j~|~A\subseteq R\}$$
 and
 $$\bigcup\{A\in \cal I_j~|~A\subseteq L\},\quad\bigcup\{A\in \cal I_j~|~A\subseteq R\}.$$
 \item Let $\cal T$ be the partition of $\bigcup \cal S^R_\sm$ into at most $2^{d+1}$ sets of twins, provided by Lemma~\ref{lem:simple-hard}. Then for each $C\in \cal T$, flip the pair
 $$C,N_C$$
 where $N_C\subseteq L$ is the common neighborhood of all the twins in $C$. 
\end{itemize}
Observe that thus, we performed at most $3m+2^{d+1}\leq q$ flips. So $G'$ is a $q$-flip of $G$. It now follows directly from the construction that $H=\quo{G'}{\cal F}$.
This concludes the proof of Lemma~\ref{lem:main}.

\section{Proof of Theorem~\ref{thm:main}}\label{sec:main-theorem}

With Lemma~\ref{lem:main} in place, we may proceed to the proof of the main result, Theorem~\ref{thm:main}.
The argument will follow easily from the following three lemmas.

The first lemma is an easy translation of a graph to a bipartite graph.

\begin{lemma}\label{lem:bipartite-reduction}
    There is a pair of transductions $T,T'$ such that 
    for every graph $G$, the structure
     $T(G)$ is a bipartite graph and  $G\in T'(T(G))$.
\end{lemma}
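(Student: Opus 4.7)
The plan is to encode each vertex $v$ of $G$ as a four-vertex ``gadget'' in the bipartite graph $T(G)$, arranged so that guessed unary predicates in $T'$ allow each gadget to be uniquely located, and the adjacency relation of $G$ then recovered by a short first-order formula.

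For $T$, I would use a four-copy transduction. For each $v\in V(G)$, introduce $v_1,v_2$ on the left side (copies~1 and~2) and $v_3,v_4$ on the right side (copies~3 and~4). Include the four gadget edges $v_1v_3,\ v_1v_4,\ v_2v_3,\ v_2v_4$ (forming a $4$-cycle per vertex), together with the adjacency edges $v_1u_3$ for every $uv\in E(G)$. The unary predicates added by $T$ are used only to distinguish the four copies; the interpretation then defines the side predicates $L,R$ and the edge relation from these predicates, the matching $M$ of the $k$-copy structure, and the edge relation of $G$ transported between copies via $M$. The resulting structure $T(G)$ is genuinely a bipartite graph with $L=\{v_1,v_2:v\in V(G)\}$ and $R=\{v_3,v_4:v\in V(G)\}$.

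For $T'$, I would use a copyless transduction with two guessed unary predicates $P_L$ and $P_R$ intended to mark the ``marker'' copies, $P_L=\{v_2:v\in V(G)\}$ and $P_R=\{v_4:v\in V(G)\}$. The domain formula $\delta(x)=L(x)\wedge\neg P_L(x)$ cuts the output's vertex set down to copy~1 (identified with $V(G)$). Define the edge relation of the output by the formula $\phi(v_1,u_1)$ asserting: $v_1\neq u_1$ and there exist $z_1\in R\setminus P_R$, $z_2\in P_L$, $z_3\in P_R$ forming a $4$-cycle $u_1\sim z_1\sim z_2\sim z_3\sim u_1$ in $T(G)$ such that additionally $v_1\sim z_1$.

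The key observation making this work is that the marker vertices have their edges confined to their own gadget: $v_2\in P_L$ has only $v_3,v_4$ as right-neighbors, and $v_4\in P_R$ has only $v_1,v_2$ as left-neighbors. Chasing these constraints in $\phi$ forces in turn $z_3=u_4$, then $z_2=u_2$, and finally $z_1=u_3$ (the unique non-$P_R$ right-neighbor of $u_2$). The condition $v_1\sim u_3$ then holds precisely when $v=u$ or $uv\in E(G)$, so combined with $v_1\neq u_1$ it is equivalent to $uv\in E(G)$, as desired. The main technical point is this uniqueness of the $4$-cycle: it requires a short case analysis on the edges of $T(G)$ to rule out spurious $4$-cycles that could in principle arise through the adjacency edges $v_1u_3$, and this is exactly what the confinement of the marker-vertex neighborhoods prevents.
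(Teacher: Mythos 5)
Your proposal is correct and follows essentially the same strategy as the paper's: encode each vertex $v\in V(G)$ by a four-vertex gadget in a bipartite graph, mark gadget components with guessed unary predicates in $T'$, and recover the edge relation by a short first-order formula that traces through the gadget. The only material difference is the gadget shape: the paper uses a length-$3$ path $(v,0)-(v,1)-(v,2)-(v,3)$ (with both $(v,0)$ and $(v,3)$ carrying adjacency edges $(v,0)(w,3)$ and $(w,0)(v,3)$), while you use a $4$-cycle $v_1-v_3-v_2-v_4-v_1$ with adjacency edges only incident to the $v_1/v_3$ corners, and two marker predicates $P_L,P_R$ instead of the paper's $S_0,S_3$. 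Your chasing argument $z_3=u_4 \Rightarrow z_2=u_2 \Rightarrow z_1=u_3$, resting on the confinement of the marker vertices' neighborhoods to their own gadgets, is a valid replacement for the paper's observation that the only length-$3$ path from $(v,0)$ to an $S_3$-vertex through non-marked vertices is $(v,0)-(v,1)-(v,2)-(v,3)$; both arguments correctly rule out spurious edges.
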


For $t\in\N$, \emph{$t$-equivalence structure} is a set equipped with $t$ equivalence relations $\sim_1,\ldots,\sim_t$. Our main lemma, Lemma~\ref{lem:main}, can be applied recursively  to obtain the following lemma, which is the heart of the proof. Intuitively, it says that the edge relation of a bipartite graph $G$ of bounded twin-width and bounded quasi-ladder index
can be encoded using a bounded number of equivalence relations which can be defined in $G$.

\begin{lemma}\label{lem:main-transduction}
    Let $\DD_\le$ be a class of ordered bipartite  graphs of bounded convex twin-width and bounded quasi-ladder index.
    Then there is  $t\in\N$ and a pair of domain-preserving transductions $I,I'$
      such that     
    for  every $G_\le\in\DD_\le$, 
the set $I(G_\le)$ consists only of $t$-equivalence structures and
 $G\in I'(I(G_\le))$, where $G$ is the bipartite graph  $G_\le$ without the order.
\end{lemma}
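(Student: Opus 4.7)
We proceed by induction on the bound $k_0$ on the quasi-ladder index of graphs in $\DD_\le$. The base case $k_0 \le 1$ is handled directly: a short analysis of the definition of quasi-ladder shows that any bipartite graph of quasi-ladder index at most $1$ is either edgeless or complete bipartite, and each case can be encoded by two equivalence relations (one marking the bipartition $L|R$, another whose number of classes distinguishes the two cases). For the inductive step, fix $k_0 \ge 2$ and let $d$ bound the convex twin-width of $\DD_\le$; assume the lemma holds for classes of quasi-ladder index at most $k_0 - 1$. Given $G_\le \in \DD_\le$, we apply Lemma~\ref{lem:main} with $k = k_0$ to obtain a division $\cal F$ of $V(G)$, subfamilies $\cal U_1, \ldots, \cal U_\ell \subseteq \cal F$, and a $q$-flip $G'$ of $G$ such that $H \coloneqq \quo{G'}{\cal F}$ satisfies the two conclusions; here $\ell, q$ depend only on $k_0, d$. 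The transduction $I$ first guesses, via unary predicates, the $q$ flip pairs $(X_j, Y_j)$, the $\le$-minimum of each part of $\cal F$, and for each $i \in [\ell]$ the vertices belonging to parts that serve as centers of the stars in $H[\cal U_i]$. From these guesses and $\le$, we define by first-order formulas the equivalence relations $\sim_{LR}$ for the bipartition, $\sim_{\mathrm{flip}, j}$ for $j \in [q]$ marking $X_j \cup Y_j$, $\sim_{\cal F}$ recovering the partition $\cal F$ (using convexity of its parts in $\le$ together with the marked minima), and $\sim_{\star, i}$ for $i \in [\ell]$ relating two vertices whenever they lie in parts belonging to the same star of $H[\cal U_i]$ (definable since stars have radius one and $H$-adjacency is first-order definable from $G$, the flips, and $\sim_{\cal F}$).

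The crux is handling the bipartite subgraphs induced by stars. For each $i \in [\ell]$, each class of $\sim_{\star, i}$ consists of the vertices of a single star $(C, K_1, \ldots, K_m)$ of $H[\cal U_i]$; by condition~(2) of Lemma~\ref{lem:main}, the ordered bipartite subgraph of $G_\le$ induced on $C \cup K_1 \cup \cdots \cup K_m$ (with inherited sides and order) has quasi-ladder index at most $k_0 - 1$, and by restricting the convex uncontraction sequence of $G_\le$ to this vertex set it has convex twin-width at most $d$. The inductive hypothesis therefore provides a constant $t'$ and transductions $I_{\mathrm{ind}}, I'_{\mathrm{ind}}$ for this subclass. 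Using Lemma~\ref{lem:parallel-application} with $\approx$ set to $\sim_{\star, i}$, we apply $I_{\mathrm{ind}}$ to each star in parallel, yielding $t'$ further equivalence relations per $i$ (extended trivially to vertices outside the domain of $\sim_{\star, i}$ by making them singleton classes). Combining everything via Lemma~\ref{lem:super-impose}, $I$ produces a $t$-equivalence structure with $t \coloneqq 2 + q + \ell + \ell t'$.

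The inverse transduction $I'$ reads $\sim_{LR}$, $\sim_{\cal F}$, the $\sim_{\mathrm{flip}, j}$, and the $\sim_{\star, i}$ from the $t$-equivalence structure, thereby recovering $\cal F$ and every star. For each star $(C, K_1, \ldots, K_m)$ it invokes $I'_{\mathrm{ind}}$ on the corresponding inductive substructure to recover the bipartite subgraph $G[C, K_1 \cup \cdots \cup K_m]$, which supplies all $G$-edges within stars. For pairs of parts $A, B \in \cal F$ not lying together in any star, condition~(1) of Lemma~\ref{lem:main} forces $(A, B)$ to be a non-edge of $H$, so $G'[A, B]$ is empty and the $G$-edges between $A$ and $B$ are determined solely by the cumulative effect of the flips (readable from the $\sim_{\mathrm{flip}, j}$ together with $\sim_{LR}$). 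Assembling these pieces yields $G$. The main technical obstacle is verifying first-order definability of every guessed set and every derived equivalence relation within the transduction framework: this rests on the convexity of the parts of $\cal F$ in $\le$, the bounded radius of the stars of $H[\cal U_i]$, and uniform boundedness of $q$, $\ell$, and $t'$ by constants depending only on $d$ and $k_0$.
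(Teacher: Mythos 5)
Your proposal is correct and follows essentially the same strategy as the paper's own proof: induction on the quasi-ladder index, an application of Lemma~\ref{lem:main} in the inductive step, a definable star-membership equivalence $\sim_{\star,i}$, parallel application of the inductive transductions to the stars via Lemma~\ref{lem:parallel-application}, and superposition of the resulting equivalence relations via Lemma~\ref{lem:super-impose}. The only (cosmetic) divergences are that you encode the bipartition and the $q$ flip sets as extra equivalence relations in the output structure, whereas the paper leaves these to be guessed by unary predicates in the decoding transduction, and you feed the subgraphs $G[C,K_1\cup\cdots\cup K_m]$ of $G$ directly into the inductive hypothesis rather than going through the subgraphs of $G'$ and then applying $F_q$ — both choices give the same result.
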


Finally, the third lemma provides an encoding of a $t$-equivalence structure in a sparse graph.

\begin{lemma}\label{lem:equivalences}
    Fix $t\in\N$. There is a pair of transductions $K$ and $K'$ such that for every $t$-equivalence structure $\str S$,
     all structures in 
     $K(\str S)$ are $K_{t+1,t+1}$-free graphs, and
     $\str S\in K'(K(\str S))$.
\end{lemma}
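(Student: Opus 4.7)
The plan is to realize the canonical ``incidence'' encoding $G_{\str S}$ of a $t$-equivalence structure $\str S$ with domain $S$ and equivalence relations $\sim_1, \ldots, \sim_t$. The graph $G_{\str S}$ is bipartite with vertex set $S \sqcup \bigsqcup_{i=1}^{t}(S/{\sim_i})$, where each element $s \in S$ is adjacent precisely to the $t$ class-vertices $[s]_1, \ldots, [s]_t$ representing its $\sim_i$-equivalence classes. Every element has degree exactly $t$, so by a straightforward bipartition analysis $G_{\str S}$ is $K_{t+1,t+1}$-free: any complete bipartite subgraph with sides $A, B$ of sizes $\geq t+1$ must, by independence within sides, have $A$ entirely on one side of the bipartition and $B$ on the other, after which the degree bound contradicts $|B| \geq t+1$.

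The transduction $K$ would take $t+1$ copies of $\str S$ and introduce unary predicates $C_0, \ldots, C_t$ (intended to mark the copies exclusively) and $R_1, \ldots, R_t$ (intended so that $R_i$ picks exactly one representative per $\sim_i$-class inside copy $i$). The fixed interpretation enforces these intentions locally: its domain formula $\delta(x)$ keeps $x$ only if $x$ satisfies exactly one $C_i$, and, when $i \geq 1$, additionally $R_i(x)$ together with the uniqueness condition $\forall x'.\,(C_i(x') \land R_i(x') \land x' \sim_i x \to x' = x)$; its edge formula declares $xy$ an edge iff, for some $i \in [t]$, one endpoint lies in $C_0$ and the other in $C_i \land R_i$, and their copy-$0$ representatives (accessed via the copy-relation $M$) are $\sim_i$-equivalent. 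For the intended choice of predicates, the output is isomorphic to $G_{\str S}$, so $G_{\str S} \in K(\str S)$.

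The verification that \emph{every} output of $K(\str S)$ is $K_{t+1,t+1}$-free --- not merely the intended one --- is the key non-trivial point. For any choice of predicates, $\delta$ forces each output vertex into exactly one of the $t+1$ roles $V_0, V_1, \ldots, V_t$; the edge formula only creates edges between $V_0$ and $\bigcup_{i \geq 1} V_i$, so the output is bipartite. The uniqueness clause built into $\delta$ guarantees that each $V_0$-vertex has at most one neighbour in each $V_i$, hence total degree at most $t$ on that side. The same biclique analysis as for $G_{\str S}$ then applies verbatim.

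Finally, $K'$ takes a graph $G$, introduces unary predicates $C_0', C_1', \ldots, C_t'$ guessing the intended role-partition of $V(G)$, and applies the interpretation with domain formula $\delta'(x) \coloneqq C_0'(x)$ and, for each $i \in [t]$, relation formula $\phi_{\sim_i}(x, y) \coloneqq (x = y) \lor \exists v.\,(C_i'(v) \land E(x, v) \land E(y, v))$. For the correct guess applied to $G_{\str S}$, two element-vertices share a common $C_i'$-neighbour iff they lie in a common $\sim_i$-class, so $\str S$ is recovered, yielding $\str S \in K'(K(\str S))$. The only subtlety is the local validation built into $K$, which ensures robustness to non-deterministic predicate choices; beyond that, there are no real obstacles.
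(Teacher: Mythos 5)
Your construction, the degree-$t$ biclique analysis, and the decoding transduction $K'$ coincide with the paper's; the one place you diverge is in how $K$ ensures that \emph{every} output, not merely the intended one, is $K_{t+1,t+1}$-free. The paper's $K$ simply tests $K_{t+1,t+1}$-freeness (an FO-expressible sentence for fixed $t$) and falls back to an edgeless graph if the test fails; you instead fold a per-vertex degree bound into the domain formula, which is a neat idea but has a bug as stated. The uniqueness clause $\forall x'.\,(C_i(x')\land R_i(x')\land x'\sim_i x\to x'=x)$ uses the relation $\sim_i$ of the $(t+1)$-fold copied structure, which holds only between two vertices lying in the \emph{same} copy. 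Since nothing in your $\delta$ forces the guessed predicate $C_i$ to be contained in a single copy (FO cannot distinguish the copies except via $M$), an adversarial choice of predicates can place in $C_i\cap R_i$ two vertices from different copies that project to the same $\sim_i$-class; both pass your uniqueness test, and a $V_0$-vertex can then acquire two neighbours in $V_i$, breaking the claimed degree bound. The repair is routine: replace $x'\sim_i x$ by a formula expressing that the \emph{underlying} elements are $\sim_i$-equivalent, e.g.\ $\exists z.\,(M(x',z)\land z\sim_i x)$, or simply follow the paper and have $K$ test $K_{t+1,t+1}$-freeness directly.
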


The first and third lemma above are rather straightforward,
while the second one follows from our main Lemma~\ref{lem:main}. Before presenting their proofs, we show how Theorem~\ref{thm:main} follows from the above lemmas.

    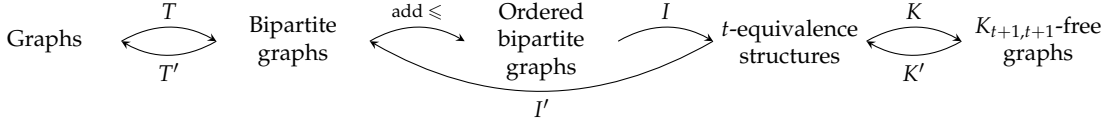
\begin{figure}
    \begin{tikzpicture}[%
        >=stealth,
        node distance=3.8cm,
        on grid,
        auto,
        scale=1, every node/.style={scale=0.86}
      ]
        \node (A) [text width=2.1cm, align=center] {Graphs};
        \node (B) [right of=A,text width=2.1cm,align=center]{Bipartite graphs};
        \node (C) [right of=B,text width=2.1cm,align=center]{Ordered bipartite graphs};
        \node (D) [right of=C,text width=2.1cm,align=center]{$t$-equivalence structures};
        \node (E) [right of=D,text width=2.1cm,align=center]{$K_{t+1,t+1}$-free graphs};
        \path[->] (A.east) edge[bend left] node {$T$}  (B.west);
        \path[<-] (A.east) edge[bend right] node[below] {$T'$} (B.west);
        \path[->] (B.east) edge[bend left] node {\footnotesize{add $\le$}} (C.west);
        \path[->] (C.east) edge[bend left] node {$I$} (D.west);
        \path[<-] (B.east) edge[bend right] node[below] {$I'$} (D.west);
        \path[->] (D.east) edge[bend left] node {$K$} (E.west);
        \path[<-] (D.east) edge[bend right] node[below] {$K'$} (E.west);
      \end{tikzpicture}
      \caption{Proof of Theorem~\ref{thm:main}. 
      The right arrows are transductions or operations on structures which preserve bounded twin-width. The left arrows are transductions.
     The composition of the right arrows, restricted to a stable class $\CC$
     of bounded twin-width, yields a class of graphs $\EE$ with bounded twin-width which excludes $K_{t+1,t+1}$ as a subgraph, so $\EE$ is a class of bounded sparse twin-width. The composition of the left arrows is a transduction which applied to $\EE$ yields a class containing $\CC$.
      }
      \label{fig:main proof}
    \end{figure}

\begin{proof}[Proof of Theorem~\ref{thm:main}]\label{proof:main}
    Let $\CC$ be a class of graphs which is monadically stable and has bounded twin-width.
    The proof is illustrated in Fig.~\ref{fig:main proof}.

Let $T$ and $T'$ be the transductions provided by Lemma~\ref{lem:bipartite-reduction} and
define $\DD\coloneqq T(\CC)$.
Then $\DD$ is a class of bipartite graphs which 
has bounded twin-width (as a binary structure) by Theorem~\ref{thm:interp_bd_tww}, and
has bounded ladder index by Lemma~\ref{lem:mon-stable-transduction}. In particular, $\DD$ has bounded quasi-ladder index, by Lemma~\ref{lem:index}.

Using Lemma~\ref{lem:conv_tww}, equip every bipartite graph $G\in \DD$ with a total order 
$\le$ yielding an ordered bipartite graph $G_\le$
of convex twin-width bounded by some constant.
Let $\DD_\le=\setof{G_\le}{G\in \DD}$.
Then $\DD_\le$ is a class of ordered bipartite graphs 
of bounded convex twin-width
and of bounded quasi-ladder index.

Apply Lemma~\ref{lem:main-transduction} to obtain transductions $I,I'$ and a number $t\in\N$.
Then $I(\DD_\le)$ is a class of $t$-equivalence structures.
Let $K$ and $K'$ be the transductions provided by Lemma~\ref{lem:equivalences} for the number $t$. Then the class $\EE=K(I(\DD_\le))$ 
is a class of $K_{t+1,t+1}$-free graphs which has bounded twin-width, hence
it has bounded sparse twin-width. 
Moreover, by the properties of $T',I'$ and $K'$, we have $\CC\subset T'(I'(K'(\EE)))$.
In particular, $\CC$ can be transduced from the class $\EE$ that has  bounded sparse twin-width.

By Theorem~\ref{thm:sparse_tww_be}, $\EE$ has bounded expansion. Since, $\CC$ can be transduced from $\EE$, it follows that $\CC$ has structurally bounded expansion. By Theorem~\ref{thm:SBE}, $\CC$ is transduction equivalent to some class of bounded expansion $\CC'$. It remains to note that $\CC'$ is $K_{s,s}$-free for some $s\in \N$ (due to being of bounded expansion) and has bounded twin-width (due to being transduction equivalent to a class of bounded twin-width). Hence $\CC'$ has bounded sparse twin-width.
\end{proof}
%
%

We now prove Lemmas~\ref{lem:bipartite-reduction},
~\ref{lem:main-transduction} and~\ref{lem:equivalences}.
We start with Lemma~\ref{lem:bipartite-reduction}.
\begin{proof}[Proof of Lemma~\ref{lem:bipartite-reduction}]
    Given a graph $G=(V,E)$, construct a bipartite graph 
    $\wt G$ as follows. The vertex set of $\wt G$ is $V\times \{0,1,2,3\}$, while the edge set comprises of the following edges:
    \begin{itemize}
     \item edges $(v,0)(w,3)$ and $(w,0)(v,3)$ for each $vw\in E$; and
     \item edges $(v,0)(v,1)$, $(v,1)(v,2)$, and $(v,2)(v,3)$ for each $v\in V$.
    \end{itemize}
    The left side of $\wt G$ is $\{(v,0),(v,2)\colon v\in V\}$ and the right side is $\{(v,1),(v,3)\colon v\in V\}$. 
    
    It is straightforward to see that there is a transduction $T$, independent of $G$, such that for each graph $G$, $T(G)$ consists of bipartite graphs only and $\wt G\in T(G)$. Furthermore, there is a transduction $T'$, independent of $G$ and $\wt G$, such that $G\in T'(\wt G)$. 
%
%
    %
    The transduction $T'$
    introduces two unary predicates $S_0,S_3$ to mark the vertices of the form $V\times\set0$ and $V\times\set3$.
     The result of $T'$ consists only of the vertices satisfying $S_0$, and the edge relation of $G$ is recovered using the edges in $\wt G$ as follows. 
     Observe that there is a formula $\phi(x,y)$ which holds of two vertices $a,b$ in $\wt G$ if and only if $a=(v,0)$ and $b=(v,3)$, for some $v\in V(G)$. 
    The formula $\phi(x,y)$ expresses that $a\in S_0$, $b\in S_3$, and $a$ and $b$ are connected by a path of length $3$ whose internal vertices satisfy neither $S_0$ nor $S_3$. Now the formula $\psi(x,y)\coloneqq S_0(x)\land S_0(y)\land \exists y'.(\phi(y,y')\land E(x,y'))$  holds of two vertices $a,b$ in $\wt G$ if and only if $a=(v,0)$ and $b=(w,0)$ for some edge $vw$ of $G$.
\end{proof}

We now prove Lemma~\ref{lem:equivalences}.
\begin{proof}[Proof of Lemma~\ref{lem:equivalences}]Let $\str S$ be a $t$-equivalence structure. For each $i\in\set{1,\ldots,t}$, let $W_i$ be the set of equivalence classes of $\sim_i$. 
    Let $W_0$ be the domain of $\str S$.
    Consider a graph $G=(V,E)$ constructed as follows: the vertex set $V$ is the disjoint union of the sets $W_0,W_1,\ldots,W_t$, while the edge set $E$ consists of edges $vw$
    such that $v\in W_0$, $w\in W_i$ for some $1\le i\le t$, and $v$ belongs to the $\sim_i$-equivalence class $w$.
By construction, the induced subgraphs $G[W_0]$ and $G[W_1\cup \ldots \cup W_t]$ are edgeless and each $v \in W_0$ has degree equal to $t$. It follows that $G$ does not contain $K_{t+1,t+1}$ as a subgraph.

We now argue that there are transductions $K$ and $K'$, independent of $\str S$, such that $K(\str S)$ contains only $K_{t+1,t+1}$-free subgraphs, and among them
a graph isomorphic to $G$,
while the transduction $K'$ is such that $\str S\in K'(G)$.
The transduction $K$ produces a copy of the input structure for each $i\in \{0,\ldots,t\}$; marks a set of representatives of equivalence classes of $\sim_i$ in the $i$th copy, for $i\in \{1,\ldots,t\}$; creates edges between each representative and all the $\sim_i$-equivalent elements in the original ($0$th) copy; and disposes of all non-representative vertices within copies $1,\ldots,t$. (For formal reasons, $K$ should also verify that the produced graph is $K_{t+1,t+1}$-free, otherwise it outputs an edgeless graph instead.) The transduction $K'$ introduces unary predicates $U_0,U_1,\ldots,U_t$, with the intention that $U_i$ marks the set $W_i$ as described above,
and introduces for each $i=1,\ldots,t$ an equivalence relation $\sim_i$ which holds of two elements of $U_0$ if and only if they have a common neighbor in $U_i$.
\end{proof}

It remains to prove Lemma~\ref{lem:main-transduction}.

\begin{proof}[Proof of Lemma~\ref{lem:main-transduction}]
Throughout the entire proof fix $d\in\N$.
For a given $k\in\N$ denote by  $\DD_k$ the class 
of all ordered bipartite graphs of convex twin-width at most $d$ and of index at most $k$.
We prove the statement of Lemma~\ref{lem:main-transduction} 
for the class $\DD_k$, by induction on $k$.
Precisely, we prove that there is  a pair of domain-preserving transductions $I_k$ and $I_k'$ and a number $t_k\in\N$ such that for every (ordered) $G\in\DD_k$, $I_k(G)$ consists of $t_k$-equivalence structures and $I_k'(I_k(G))$ contains $G$ without the order.

For $k=1$, note that a bipartite graph has index $1$ if and only if it is either complete or edgeless. Therefore,
the interpretation $I_1$ may simply output the domain of the 
given input structure, while the transduction $I_1'$
introduces unary predicates $L$ and $R$ marking the left and right parts, and then nondeterministically either introduces all or none of the edges between $L$ and $R$.

We proceed to the inductive step. Suppose $k\ge 1$ and the statement holds for $\DD_k$.
Let $t_k$ be the number obtained from the inductive assumption, and 
let $\ell,q$ be the numbers provided by Lemma~\ref{lem:main}. Set $t_{k+1}\coloneqq \ell\cdot (t_k+1)$.

Fix an ordered bipartite graph $G\in \DD_{k+1}$ with vertex set $V$.  We will construct transductions 
$J_{k+1}$ and $J_{k+1}'$ such that $J_{k+1}(G)$ consists of $t_{k+1}$-equivalence structures
and $J_{k+1}'(J_{k+1}(G))$ contains $G$ without the order. We will carry out the proof for a fixed $G$, but it will be clear that the transductions constructed throughout the proof do not depend on $G$.

Apply Lemma~\ref{lem:main} to $G$ to obtain a division $\cal F$ of $G$, 
    sets $\cal U_1,\ldots,\cal U_\ell\subset \cal F$, and a $q$-flip $G'$ of $G$ such that the following holds for $H\coloneqq \quo{G'}{\cal F}$:
    \begin{enumerate}[label=(\arabic*),ref=(\arabic*),leftmargin=*]          
        \item\label{cc:impure} For every edge $AB$ of $H$ there exists $i\in \set{1,\ldots,\ell}$ such that $A,B\in \cal U_i$.
        \item\label{cc:star-simpler} Each set $\cal U_i$, $i\in \set{1,\ldots,\ell}$, induces in $H$ a star forest. Moreover, for each star in this star forest, say with center $K_0$ and leaves $K_1,\ldots,K_m$, we have $G[K_0,K_1\cup\cdots\cup K_m]\in\DD_k$.
        \end{enumerate}

Let $\sim$ be the equivalence relation on $V$
signifying membership to the same part of $\cal F$, let $E'$ be the edge set of $G'$,
and let $E^H$ be the binary relation which holds of two vertices $v,w\in V$  if and only if $[v]_{\sim}$ and $[w]_{\sim}$ are adjacent in $H$.
\begin{claim}\label{cl:E^H}
    There is a domain-preserving transduction $P$ 
such that $(V,\sim,E',E^H)\in P(G)$.
\end{claim}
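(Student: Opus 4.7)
The plan is to realise $P$ as a domain-preserving transduction that guesses, via a bounded number of unary predicates, enough information about $G$ to reconstruct each of $\sim$, $E'$, and $E^H$ by a first-order formula. Since $G'$ is a $q$-flip of $G$, there exist at most $q$ flip pairs $(X_1,Y_1),\ldots,(X_{q'},Y_{q'})$ with $X_i\subseteq L$, $Y_i\subseteq R$, and $q'\le q$, whose successive application to $G$ yields $G'$. I will have $P$ introduce unary predicates $X_1,Y_1,\ldots,X_q,Y_q$ intended to mark these flip sets (with $X_i=Y_i=\emptyset$ for $i>q'$), together with one further unary predicate $M$ intended to mark the $\le$-minimum of every part of $\cal F$. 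This $M$ is well-defined because $\cal F$ is a division of the ordered graph $G$, so every part of $\cal F$ is a nonempty convex subset of $V$ with respect to $\le$.

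I then define the three relations of the output over the expanded signature as follows. For $E'$, the key observation is that each flip operation toggles adjacency on the symmetric pair $(X_i\times Y_i)\cup(Y_i\times X_i)$ independently of the current edge relation, so applying the flips in sequence amounts to XOR-ing these toggles. As $q$ is a fixed constant, this produces a first-order formula $\phi_{E'}(u,v)$ over $E$, $L$, $R$, and the predicates $X_i,Y_i$. For $\sim$, because each part of $\cal F$ is convex in $\le$, two vertices $v,w$ with $v\le w$ will lie in the same part if and only if no element of $M$ lies in the interval $(v,w]$; so I will take
\[
\phi_\sim(v,w)\;\coloneqq\;\bigl(v\le w\wedge\lnot\exists u.\,M(u)\wedge v<u\wedge u\le w\bigr)\;\lor\;\bigl(w\le v\wedge\lnot\exists u.\,M(u)\wedge w<u\wedge u\le v\bigr).
\]
For $E^H$, by the definition of the quotient graph $H=\quo{G'}{\cal F}$, the relation $E^H(v,w)$ holds exactly when the $\sim$-classes of $v$ and $w$ are joined by an edge of $G'$, so it suffices to existentially quantify over representatives, and I will take
\[
\phi_{E^H}(v,w)\;\coloneqq\;\exists v'\exists w'.\,\phi_\sim(v,v')\wedge\phi_\sim(w,w')\wedge\phi_{E'}(v',w').
\]

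The transduction $P$ is then the interpretation with domain formula $x=x$, the unary predicates above, and the three relation formulas $\phi_\sim,\phi_{E'},\phi_{E^H}$. The unary expansion of $G$ in which these predicates take their intended meaning witnesses that $(V,\sim,E',E^H)\in P(G)$. The only step requiring any thought is the treatment of $E'$: I need that each flip operates by an unconditional toggle on a pair of unary-definable sets, so that the whole flip sequence is FO-definable from a bounded amount of guessed information rather than requiring one to track intermediate graphs; everything else follows immediately from the convexity of parts of $\cal F$ and the definition of the quotient.
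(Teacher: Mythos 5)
Your proposal is correct and takes essentially the same route as the paper: mark the $\le$-minima of the parts of $\cal F$ by a guessed unary predicate, recover $\sim$ from convexity, recover $E'$ from guessed flip sets, and then define $E^H$ by existential quantification over $\sim$-representatives connected by $E'$. The only (cosmetic) difference is that the paper phrases the recovery of $E'$ as a composition of $q$ single-flip transductions $F_q$, whereas you observe directly that the $q$ flips commute as toggles on unary-definable rectangles, so the whole flip sequence collapses into one FO formula over the $2q$ guessed predicates; both handle the flip part correctly.
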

\begin{proof}
As $\cal F$ is a convex partition of $V$
it can be represented by a unary predicate $M_{\cal F}$
such that $M_{\cal F}=\setof{\min C}{C\in\cal F}$.
Then $\sim$
is definable by a formula using the predicate $M$,
expressing that for $u\le v$ in $V$,
$u\sim v$ if and only if the interval $(u,v]$ contains no element of $M_\cal F$. Hence, there is a transduction $T_{\sim}$ such that $T_{\sim}(G)$ contains $(V,{\sim})$.

Let $F_q$ be the transduction which performs an arbitrary $q$-flip of a given bipartite graph. Such a transduction is a composition of $q$ transductions performing single flips.
Since $G'$ is a $q$-flip of~$G$, we have that $G'\in F_q(G)$. 

To define the relation $E^H$,
observe that $E^H(u,v)$ if and only if there exist vertices $u',v'$ such that $u\sim u'$ and $v\sim v'$ and $uv$ is an edge in $E'$. Hence, $E^H$ can be constructed by a transduction,
which performs $F_q$ to define $E'$ and performs $T_{\sim}$ to define $\sim$,
and finally defines $E^H$.
\cqed\end{proof}

For $i=1,\ldots,\ell$, let $\approx_i$ be the equivalence relation 
which holds of two distinct vertices $v,w\in V$
if and only if $[v]_{\sim}$ and $[w]_{\sim}$ 
belong to the same star in the star forest induced by $\cal U_i$. Note that $\approx_i$ can be defined by a first-order formula using the relation $E^H$, since the components of a star forest have bounded radius.
Let $E_i'\subset E'$ be the set of edges of $G'$ whose endpoints are $\approx_i$-equivalent.

Properties~\ref{cc:impure} and~\ref{cc:star-simpler} above translate to the following properties:
\begin{enumerate}[label=(\arabic*'),ref=(\arabic*'),leftmargin=*]          
    \item\label{cc:impure'}         
        $E_1'\cup\cdots\cup E_\ell'=E'$, and
    
    \item\label{cc:star-simpler'} For every $i=1,\ldots,\ell$
    and $\approx_i$-equivalence class $C\subset V$,
    if $G_C'$ is the ordered bipartite subgraph of  $(V,L,R,\le,E_i')$ induced by $C$, then $G_C'$ has a $q$-flip $G_C$ that belongs to $\DD_k$.
    \end{enumerate}

By $\wh G$ denote the structure with 
domain $V$, order $\le$, and binary relations $E_1',\ldots,E_\ell'$ and  $\approx_1,\ldots,\approx_\ell$ defined above.
As the relations $E_1',\ldots,E_\ell',\approx_1,\ldots,\approx_\ell$ can be defined by first-order formulas using the relations $E^H, E',\sim$, as well as the unary relations  $\cal U_i'=\setof{v\in V}{[v]_{\sim}\in \cal U_i}$, for $i=1,\ldots,\ell$, from  Claim~\ref{cl:E^H} we get:
        \begin{claim}\label{cl:whG}
        There is a domain-preserving transduction $Q$ 
such that $\wh G\in Q(G)$.
\end{claim}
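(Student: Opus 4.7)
The plan is to build $Q$ by composing the transduction $P$ from Claim~\ref{cl:E^H} with a domain-preserving transduction that enriches $(V,\sim,E',E^H)$ first with guessed unary predicates $\cal U_1',\ldots,\cal U_\ell'$ and then with the binary relations $\approx_1,\ldots,\approx_\ell,E_1',\ldots,E_\ell'$ defined by first-order formulas over the resulting vocabulary.

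First, since each $\cal U_i'$ is simply a subset of $V$, it can be provided nondeterministically: the transduction introduces $\ell$ fresh unary predicates and lets them be set freely, among all possible ways. By choosing the guess that matches the sets $\cal U_1,\ldots,\cal U_\ell$ supplied by Lemma~\ref{lem:main} (that is, $\cal U_i'=\setof{v\in V}{[v]_\sim\in \cal U_i}$), we will obtain the correct expansion among the outputs.

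Second, for each $i\in\{1,\ldots,\ell\}$, the relation $\approx_i$ is first-order definable from $\sim$, $E^H$, and $\cal U_i'$. Indeed, the components of the star forest induced by $\cal U_i$ in $H$ have radius at most one and hence diameter at most two, so two elements $v,w\in V$ with $[v]_\sim,[w]_\sim\in\cal U_i$ belong to the same component if and only if at least one of the following holds: $v\sim w$; $E^H(v,w)$ and the parts $[v]_\sim,[w]_\sim$ both lie in $\cal U_i$ (witnessed by $\cal U_i'(v)\land \cal U_i'(w)$); or there exists $u\in \cal U_i'$ with $E^H(v,u)\land E^H(w,u)$. This is written as a single FO formula in the enriched vocabulary. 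The relation $E_i'$ is then obtained from the quantifier-free formula $E_i'(v,w)\equiv E'(v,w)\land v\approx_i w$.

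Composing $P$ with this copyless, domain-preserving transduction yields the desired $Q$ such that $\wh G\in Q(G)$. No significant obstacle is expected here: all nontrivial work has already been done in Lemma~\ref{lem:main} and Claim~\ref{cl:E^H}, and the present step is essentially a bookkeeping translation from the combinatorial partition data into predicates and quantifier-bounded formulas that transductions can manipulate.
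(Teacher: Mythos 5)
Your proposal is correct and takes essentially the same route as the paper: compose $P$ from Claim~\ref{cl:E^H} with a transduction that nondeterministically guesses the unary predicates $\cal U_i'$ and then defines $\approx_i$ (via the bounded radius of stars) and $E_i'$ by first-order formulas over $\sim$, $E^H$, $E'$, and the $\cal U_i'$. The paper merely states that these relations are FO-definable without spelling out the formula, whereas you write one out explicitly; just be sure the guard $\cal U_i'(v)\land\cal U_i'(w)$ is conjoined throughout (in particular to the existential disjunct) so that $\approx_i$ does not accidentally identify parts outside $\cal U_i$.
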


We now show, using the inductive assumption for $\DD_k$, that each of the relations $E_i'$ can be represented by equivalence relations.
\begin{claim}\label{cl:J_i}
    Fix $i\in\set{1,\ldots,\ell}$.
    There is a pair of domain-preserving transductions $J_i,J_i'$ such that each structure in $J_i(G)$ is a $(t_k+1)$-equivalence structure and $(V,E_i')\in J_i'(J_i(G))$. 
\end{claim}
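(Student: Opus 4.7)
The plan is to invoke the inductive hypothesis separately on each star of the star forest induced by $\cal U_i$, and then stitch the resulting per-star $t_k$-equivalence structures together into one global structure, using the equivalence relation $\approx_i$ itself as the $(t_k{+}1)$-th equivalence relation. This is feasible because, by property \ref{cc:star-simpler'}, each $\approx_i$-class $C$ (a star of the forest, or a singleton if $C$ lies outside $\bigcup\cal U_i$) carries an ordered bipartite subgraph $G_C'$ admitting a $q$-flip $G_C \in \DD_k$; by induction, $I_k$ produces from $G_C$ a $t_k$-equivalence structure from which $I_k'$ recovers the bipartite graph $G_C$ without the order.

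To define $J_i$, first apply the domain-preserving transduction $Q$ from Claim~\ref{cl:whG} to expose $\approx_i$, $E_i'$, $\le$, $L$, and $R$. Next, nondeterministically guess $2q$ unary predicates marking, on each $\approx_i$-class $C$ individually, the sequence of at most $q$ pairs of subsets $(X_j,Y_j)$ whose combined symmetric difference with $G_C'$ yields the desired $G_C \in \DD_k$; such a guess exists by property \ref{cc:star-simpler'}. Define an auxiliary edge relation $E^{*}$ on $V$ as the XOR of $E_i'$ with the union of the guessed pairs, intersected with $\approx_i$ so that each flip is confined to a single class. For the correct guess, the induced substructure on $(V,L,R,\le,E^{*})[C]$ equals $G_C$ for every $\approx_i$-class $C$. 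Now apply $\coprod_{\approx_i} I_k$ from Lemma~\ref{lem:parallel-application} to this structure, obtaining a $t_k$-equivalence structure on $V$ which, on each class $C$, is an output of $I_k$ on $G_C$. Finally, superimpose $\approx_i$ as an additional equivalence relation via Lemma~\ref{lem:super-impose}, yielding a $(t_k{+}1)$-equivalence structure. Every step is domain-preserving, so $J_i$ is domain-preserving.

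For $J_i'$, treat the $(t_k{+}1)$-th equivalence relation of the input as $\approx_i$ and apply $\coprod_{\approx_i} I_k'$ to the remaining $t_k$ equivalence relations, recovering on each class $C$ the bipartite graph $G_C$ (without order). Then nondeterministically re-guess the same $2q$ unary predicates as in $J_i$ and apply the inverse $q$-flip (the same XOR, which is self-inverse) within each class, thereby recovering the edge set $E_i'\cap (C\times C)$. Forgetting $L,R$ and taking the union of the per-class edge sets reconstructs the graph $(V,E_i')$, as required.

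The main subtlety is coordinating the per-class $q$-flips across the two transductions: the $(t_k{+}1)$-equivalence output of $J_i$ cannot directly carry arbitrary unary predicates encoding the flip. However, since both $J_i$ and $J_i'$ are nondeterministic, it suffices that for some output of $J_i$ (using the flip guided by property \ref{cc:star-simpler'}) and some matching guess in $J_i'$ (using the same flip), the composition $J_i'\circ J_i$ yields $(V,E_i')$; this is precisely what the claim demands.
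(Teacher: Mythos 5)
Your proof is correct and takes essentially the same route as the paper: apply the inductive transductions $I_k, I_k'$ per $\approx_i$-class via parallel application (Lemma~\ref{lem:parallel-application}), use property~\ref{cc:star-simpler'} to flip each class into $\DD_k$, and adjoin $\approx_i$ as the $(t_k{+}1)$-th equivalence relation via Lemma~\ref{lem:super-impose}. The only cosmetic difference is that you spell out the flip via $2q$ explicit unary predicates and a single global XOR restricted to $\approx_i$-classes, whereas the paper packages it as the nondeterministic transduction $F_q$ and composes it per class inside $\coprod_{\approx_i}(I_k\circ F_q)$ and $\coprod_{\approx_i}(F_q\circ I_k')$; your closing remark correctly identifies that the nondeterminism of both $J_i$ and $J_i'$ resolves the coordination concern, which is exactly why the paper's formulation also works.
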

\begin{proof}
    Let $I_k,I_k'$ be the transductions obtained by inductive assumption. 
    Let $G$ and $\wh G$ be as described above and fix a $\approx_i$-equivalence class $C\subset V$.
    
    Let $G_C'$ be the ordered bipartite subgraph  of $(V,L,R,\le,E_i')$ induced by $C$. 
By~\ref{cc:star-simpler'},
$G_C'$ has a $q$-flip $G_C$ that belongs to $\DD_k$. Hence, $I_k(G_C)$ contains a $t_{k}$-equivalence structure $\str S^C_i$ with vertices $C$ such that  $G_C\in I_k'(\str S^C_i)$.
Then $G_C'\in F_q(I_k'(\str S^C_i))$, where $F_q$ the transduction performing $q$ flips, as in the proof of Claim~\ref{cl:E^H}.

    Let $\str S_i$ be 
    the disjoint union of the structures $\str S^C_i$, over all $\approx_i$-equivalence classes $C$,
    additionally equipped with the equivalence relation $\approx_i$. Then $\str S_i$ is
    a $(t_k+1)$-equivalence structure with vertices $V$, equipped with equivalence relations $\approx_i$ and $\sim_1,\ldots,\sim_{t_k}$,
    such that the substructure of $\str S_i$ induced by each $\approx_i$-equivalence class $C$ is the structure $\str S^C_i$, extended with the total relation $\approx_i$ on its domain $C$.

    We now observe that there is a transduction $J_i$ such that $\str S_i\in J_i(G)$. The transduction $J_i$ is the composition of the transduction $Q$ from Claim~\ref{cl:whG}, followed by a parallel application (see Lemma~\ref{lem:parallel-application}) of the transduction $I_k\circ F_q$, applied to each substructure induced by some $\approx_i$-equivalence class of the input structure.
    
    Moreover,
    there is a transduction $J_i'$ such that $J_i'(J_i(G))=(V,E_i')$. The transduction $J_i'$ is the parallel application of the transduction $F_q\circ I_k'$, applied to each substructure induced by some $\approx_i$-equivalence class of the input structure.
\cqed\end{proof}

Now, we can combine the transductions provided by Claim~\ref{cl:J_i}.

\begin{claim}There is a pair of domain-preserving transductions $J$ and $J'$ such that each structure in $J(G)$ is an $\ell\cdot (t_k+1)$-equivalence structure and $(V,E')\in J'(J(G))$. 
\end{claim}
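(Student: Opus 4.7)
The plan is to assemble $J$ and $J'$ by combining in parallel the $\ell$ domain-preserving transduction pairs $(J_i, J_i')$ provided by Claim~\ref{cl:J_i}. Since each $J_i$ is a domain-preserving transduction whose outputs are $(t_k+1)$-equivalence structures on the common domain $V$, iterated application of Lemma~\ref{lem:super-impose} (applied $\ell-1$ times, viewing the $\ell$ output signatures as literal disjoint unions to avoid clashes of relation symbols) yields a single domain-preserving transduction
\[
J \;\coloneqq\; J_1 \,\&\, J_2 \,\&\, \cdots \,\&\, J_\ell,
\]
whose output on input $G$ is the superposition $\str S_1 \,\&\, \cdots \,\&\, \str S_\ell$ for arbitrary $\str S_i \in J_i(G)$. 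After the obvious relabelling of relation symbols, such a superposition is exactly an $\ell(t_k+1)$-equivalence structure on $V$.

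For $J'$, the idea is to invert each $J_i$ separately on the corresponding block of $t_k+1$ equivalence relations, and then take the union of the resulting edge relations, which is justified by property~\ref{cc:impure'}. Concretely, each $J_i'$ can be regarded as a domain-preserving transduction whose source is the full $\ell(t_k+1)$-equivalence signature, with the relations contributed by $J_j$ for $j \neq i$ simply ignored (this is a standard extension of the source signature that does not affect domain-preservation). Applying Lemma~\ref{lem:super-impose} once more, I combine all the extended $J_i'$ into a single domain-preserving transduction producing the superposition of the structures $(V, E_i')$, i.e.\ a structure on $V$ equipped with $\ell$ binary relations $E_1', \ldots, E_\ell'$. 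Post-composing with the trivial domain-preserving interpretation that defines
\[
E'(x,y) \;\coloneqq\; \bigvee_{i=1}^{\ell} E_i'(x,y),
\]
which is correct by~\ref{cc:impure'}, yields the desired $J'$ with $(V, E') \in J'(J(G))$. Composition of transductions is provided by Corollary~\ref{cor:transductions-compose}.

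This step is essentially bookkeeping on top of the structural work already done: all combinatorial substance is concentrated in Claim~\ref{cl:J_i}, which in turn rests on Lemma~\ref{lem:main} and the inductive hypothesis. Consequently I do not anticipate any genuine obstacle here; the only care needed is the mild signature hygiene mentioned above when iterating Lemma~\ref{lem:super-impose}, and checking that the union defining $E'$ is expressible as a single first-order formula in the combined output signature, which is immediate.
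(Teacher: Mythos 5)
Your proof is correct and follows essentially the same route as the paper: build $J$ by superimposing $J_1,\ldots,J_\ell$ via Lemma~\ref{lem:super-impose}, then build $J'$ by applying each $J_i'$ to its own block of $t_k+1$ equivalence relations and defining $E'$ as the union of the resulting $E_i'$, justified by~\ref{cc:impure'}. The remarks about signature bookkeeping are accurate and match what the paper does implicitly with the indexing $\sim^i_j$.
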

\begin{proof}
    The transduction $J$ applies each of the transductions $J_1,\ldots,J_\ell$ from Claim~\ref{cl:J_i} to $G$, obtaining 
    jointly $\ell\cdot (t_k+1)$ equivalence relations 
    $\sim^i_j$, for $i\in \set{1,\ldots,\ell}$ and $j\in\set{1,\ldots,t_k+1}$.
    The structure resulting structure is the set $V$ equipped with all those equivalence relations $\sim^i_j$. Such a combination of transductions is a transduction, see  Lemma~\ref{lem:super-impose}.

    The transduction $J'$ applies each of the transductions $J_1',\ldots,J_\ell'$, where $J_i'$ is applied to the $(t_k+1)$-equivalence structure with equivalence relations $\sim^i_1,\ldots,\sim^i_{t_k+1}$
    and outputs the relation~$E_i'$.
    The result of $J'$ is the structure consisting of $V$ and the relation $E'=E_1'\cup\cdots\cup E_\ell'$.    
\cqed\end{proof}

As noted before, we set $t_{k+1}\coloneqq\ell\cdot(t_k+1)$. Further, we define the transduction $J$ to be $J_{k+1}$, and the transduction $J'$ to be $J_{k+1}'$ followed by the transduction $F_q$ performing $q$ flip operations, and followed by an introduction of two unary predicates $L$ and $R$.
Then $J_{k+1}(G)$ is a $t_{k+1}$-equivalence structure and $(V,L,R,E)\in J_{k+1}'(J_{k+1}(G))$, since $G$ is a $q$-flip of $G'$. 

As the constructed transductions $J_{k+1}$ and $J_{k+1}'$ do not depend on  $G\in\DD_{k+1}$,
this finishes the inductive step,
and the proof of Lemma~\ref{lem:main-transduction}.
\end{proof}

\section{Linear $\chi$-boundedness}\label{sec:lin-chi}

In this section we discuss the implications of our results for $\chi$-boundedness of stable classes of bounded twin-width. Let us first recall the definitions.

Let $G$ be a graph. The {\em{chromatic number}} of $G$, denoted $\chi(G)$, is the least number of colors needed for a {\em{proper coloring}} of $G$: a coloring of vertices of $G$ where no two adjacent vertices receive the same color. The {\em{clique number}} of $G$, denoted $\omega(G)$, is the largest size of a clique in~$G$. Clearly, $\chi(G)\geq \omega(G)$ for every graph $G$. A graph class $\CC$ is {\em{$\chi$-bounded}} if a converse inequality holds in the following sense: there exists a function $f\colon \N\to \N$ such that
$$\chi(G)\leq f(\omega(G))\qquad\textrm{for each }G\in \CC.$$
If $f$ can be chosen to be a polynomial or a linear function, then we respectively say that $\CC$ is {\em{polynomially}} or {\em{linearly $\chi$-bounded}}.

In~\cite{bonnet2021tww2} it was proved that every graph class of bounded twin-width is $\chi$-bounded.
From our main result, Theorem~\ref{thm:main}, it follows that every stable class of bounded twin-width has structurally bounded expansion, as it can be transduced from a class of bounded sparse twin-width, and such classes have bounded expansion~\cite{bonnet2021tww2}. On the other hand, from the results of~\cite{gajarsky2020sbe} it easily follows that classes with structurally bounded expansion are linearly $\chi$-bounded (see the discussion in~\cite{nesetril2021linrw_stable}). Hence, the same can be also concluded about every stable class of bounded twin-width. 

In this section we present a direct proof of this fact, which avoids the need of using the results of~\cite{gajarsky2020sbe} and is based on a small subset of the reasoning presented in Section~\ref{sec:main-lemma}. The additional benefit is that we also obtain precise bounds on $\chi(G)$ in terms of $\omega(G)$.

For the purpose of this section, we will use the notion of (quasi-ladder) index in general graphs, which we recall for convenience: the index of a graph $G$ is the index of the bipartite graph whose sides are two copies of $V(G)$, where a vertex $u$ from the first copy is adjacent to a vertex $v$ from the second copy if and only if $u$ and $v$ are adjacent in $G$. This is easily equivalent to bounding the order of quasi-ladders in $G$, where the elements of a quasi-ladder are not bound to respective sides of a bipartite graph, but can be chosen freely among all the vertices. Clearly, every monadically stable class of graphs has a bounded index in this~sense.

The following notion will be useful. A {\em{cograph}} is a graph that does not contain $P_4$ --- the path on $4$ vertices --- as an induced subgraph. It is well-known that cographs admit the following recursive characterization:
\begin{itemize}
 \item A one-vertex graph is a cograph.
 \item If $G_1,\ldots,G_k$ are cographs, then their disjoint union is also a cograph.
 \item If $G_1,\ldots,G_k$ are cographs, then their {\em{join}} is also a cograph, where the join is obtained from the disjoint union by making every pair of vertices $u\in V(G_i)$ and $v\in V(G_j)$ adjacent whenever $i\neq j$.
\end{itemize}
The class of cographs is very well-understood. In particular, cographs are {\em{perfect}}: $\chi(G)=\omega(G)$ whenever $G$ is a cograph. This fact can be combined with the following result, which we will obtain using a variation on the reasoning from Section~\ref{sec:main-lemma}.

\begin{theorem}\label{thm:cograph-coloring}
 Let $G$ be a graph of twin-width at most $d$ and index at most $k$. Then there is a vertex coloring of $G$ using at most $(2d+4)^{k-1}$ colors such that every color class induces a cograph in $G$.
\end{theorem}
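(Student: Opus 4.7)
The proof is by induction on $k$. For the base case $k=1$, one checks that any induced $P_4$ on vertices $a,b,c,d$ yields a quasi-ladder of order $2$ (take $x_1=a$, $x_2=b$, $y_1=c$, $y_2=d$; then $x_2$ is adjacent to $y_1$ while $y_2$ is non-adjacent to $x_1$). Hence a graph of index at most $1$ is $P_4$-free, i.e., a cograph, and a single color suffices, matching $(2d+4)^{0}=1$.

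For the inductive step with $k\ge 2$, it is enough to exhibit a vertex coloring of $G$ with $2d+4$ colors whose color classes each induce a subgraph of index at most $k-1$: applying the inductive hypothesis within each such class refines it into $(2d+4)^{k-2}$ cograph classes, giving $(2d+4)^{k-1}$ colors in total. To build this coloring the plan is to replay the freezing construction from the proof of Lemma~\ref{lem:main} on $G$ itself, using any uncontraction sequence $\cal P_1,\ldots,\cal P_n$ of width at most $d$: a part $A\in\cal P_t$ gets frozen at time $t$ when no ancestor was frozen earlier and, for every other $B\in\cal P_t$, the induced subgraph $G[A\cup B]$ has index at most $k-1$. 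Adaptations of Lemma~\ref{lem:Fdivision} and Lemma~\ref{lem:index-single-edge} should yield that the collection $\cal F$ of frozen parts partitions $V(G)$ and that each $G[A]$ for $A\in\cal F$ itself has index at most $k-1$ (take $B$ any singleton). Ordering $\cal F$ by freezing time gives an order $\preceq$, and the impurity graph $H$ on $\cal F$ has $\preceq$-degeneracy at most $2d$ by analogues of Lemmas~\ref{lem:earlier} and~\ref{lem:degeneracy}; combined with the type labelling $\tp(A)\in\{+,-\}$ and the block-alternation analysis of Lemmas~\ref{lem:sandwich} and~\ref{lem:many-diagonals}, one would refine a greedy degeneracy coloring of $H$ into a coloring of $\cal F$ with at most $2d+4$ colors, the extra budget absorbing the small set of initial parts $\cal S$ from Lemma~\ref{lem:vc} together with the sign-alternation boundaries.

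The main obstacle is verifying that each color class $C\subseteq V(G)$, lifted from this coloring of $\cal F$, really induces a subgraph of $G$ of index at most $k-1$. Pairwise purity of the parts in $C$ is not sufficient on its own: for instance, $P_4$ consists of four pairwise-pure singletons but has index $2$. The fix is to use the extra colors to ensure that within each color class the pure inter-part pairs are type-uniform enough to prevent long diagonal alternations; any quasi-ladder of order $k$ in $G[C]$ would then either have to be absorbed into a single $G[A]$, contradicting $G[A]$ having index at most $k-1$, or would produce a long sequence of diagonal pairs in the sense of Lemma~\ref{lem:many-diagonals}, which is excluded by the choice of coloring. This type-uniformity is precisely where the surcharge $+4$ on top of the raw degeneracy $2d$ is spent, and the delicate combinatorial check that it really suffices is the main step that needs care.
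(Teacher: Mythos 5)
Your overall architecture (induction on $k$, a freezing mechanism producing a partition $\cal F$ into parts of small index, a degeneracy-plus-type coloring of $\cal F$ with $2d+4$ colors) parallels the paper's Lemma~\ref{lem:cograph-coloring}, but your inductive step commits to a reduction that is genuinely different from the paper's and does not hold.

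You propose to produce a $(2d+4)$-coloring of $G$ whose \emph{color classes} each induce a subgraph of index at most $k-1$, and to recurse inside each color class. This is not what the construction gives, and it cannot be salvaged by type-uniformity. A color class here is a union $C=\bigcup f^{-1}(i)$ of many parts of $\cal F$, and even when all those parts are pairwise complete (the most type-uniform situation possible) and each has index $\leq k-1$, the union can have index strictly larger than $k-1$. Concretely, take $k=2$ and $G=K_n$: the frozen parts are singletons, all pairwise complete, all of type $+$, hence they land in a single $f$-color class; that class is $K_n$, which has quasi-ladder index at least $2$ (for $n\ge 2$), so your inductive hypothesis cannot be applied to it. Your ``diagonal alternation'' argument, borrowed from Lemma~\ref{lem:many-diagonals}, is a tool for the bipartite setting with two distinguished sides and does not control the quasi-ladder index of a join of many parts; it cannot plug this hole.

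The paper avoids the issue entirely. Lemma~\ref{lem:cograph-coloring} makes no claim about the index of the unions $\bigcup f^{-1}(i)$; it only claims (a) each \emph{part} $A\in\cal F$ has $G[A]$ of index $<k$, and (b) each $f$-color class is either pairwise complete or pairwise anti-complete. The recursion is then applied \emph{inside each part $A$}, not inside each color class, producing colorings $h_A$; the combined coloring $h(u)=(f(u),h_A(u))$ has color classes that are joins or disjoint unions of the cographs $G[h_A^{-1}(j)]$, and cographs are closed under join and disjoint union. The bounded-index-of-the-union claim is never needed. So the fix is not to strengthen your type-uniformity argument but to change the shape of the induction: apply it within parts and argue cograph-ness via the closure property. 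Until that change is made, the proof has a genuine gap at exactly the point you flag as ``the main step that needs care.''

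Two minor additional points: your freezing condition (``for every other $B\in\cal P_t$ the graph $G[A\cup B]$ has index at most $k-1$'') is the bipartite-style condition of Section~\ref{sec:main-lemma}, which is stronger than the condition the paper actually uses in Lemma~\ref{lem:cograph-coloring} (namely that $G[A]$ alone has index $<k$); this difference doesn't help here. Your base case argument via $P_4$-freeness is fine, though the paper's observation that index-$1$ graphs are edgeless or complete is more direct.
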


By combining the perfectness of cographs with Theorem~\ref{thm:cograph-coloring} we can immediately derive the following.

\begin{corollary}
 Let $G$ be a graph of twin-width at most $d$ and index at most $k$. Then $$\chi(G)\leq (2d+4)^{k-1}\cdot \omega(G).$$
\end{corollary}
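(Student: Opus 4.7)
The plan is to derive the corollary as a direct consequence of Theorem~\ref{thm:cograph-coloring} combined with the classical fact that cographs are perfect. First I would invoke Theorem~\ref{thm:cograph-coloring} to obtain a partition $V(G) = V_1 \sqcup \cdots \sqcup V_N$ with $N \leq (2d+4)^{k-1}$ such that each induced subgraph $G[V_i]$ is a cograph.

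Next, for each $i \in \{1,\ldots,N\}$ I would use the perfectness of cographs, which follows immediately from their recursive characterization (disjoint union preserves both $\chi$ and $\omega$, and taking joins adds the $\chi$'s and the $\omega$'s alike). This gives $\chi(G[V_i]) = \omega(G[V_i]) \leq \omega(G)$, so each $G[V_i]$ admits a proper coloring using at most $\omega(G)$ colors.

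Finally, I would assemble a proper coloring of the whole graph $G$ by equipping each part $V_i$ with its own palette of $\omega(G)$ colors, with the $N$ palettes being pairwise disjoint. Edges inside a single $V_i$ are handled by the internal coloring of $G[V_i]$, while edges joining distinct parts $V_i$ and $V_j$ are automatically properly colored because the two palettes are disjoint. This uses at most $N \cdot \omega(G) \leq (2d+4)^{k-1} \cdot \omega(G)$ colors, proving the claimed inequality.

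There is essentially no obstacle here: all of the work has been pushed into Theorem~\ref{thm:cograph-coloring}, and the only ingredient beyond that theorem is the perfectness of cographs, which is standard. If desired, one can even avoid quoting perfectness as a black box and instead observe inductively that the recursive construction of a cograph $H$ yields an optimal coloring with $\omega(H)$ colors directly.
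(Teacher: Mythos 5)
Your proposal is correct and matches the paper's intended argument exactly: the paper derives the corollary ``by combining the perfectness of cographs with Theorem~\ref{thm:cograph-coloring},'' which is precisely the cograph-partition-plus-disjoint-palettes argument you spell out. Nothing to add.
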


We are left with proving Theorem~\ref{thm:cograph-coloring}. The main tool will be the following variation on Lemma~\ref{lem:main}, which uses only a subset of arguments, but works on not necessarily bipartite graphs.

\begin{lemma}\label{lem:cograph-coloring}
 Let $G$ be a graph of twin-width at most $d$ and index equal to $k$, where $k\geq 2$. Then there is a partition $\cal F$ of vertices of $G$ and a coloring $f\colon \cal F\to [2d+4]$ satisfying the following properties:
 \begin{itemize}
  \item Each $A\in \cal F$ induces a subgraph $G[A]$ of index smaller than $k$.
  \item For every $i\in [2d+4]$, one of the following holds: each pair of distinct $A,B\in f^{-1}(i)$ is complete, or each pair of distinct $A,B\in f^{-1}(i)$ is anti-complete. 
 \end{itemize}
\end{lemma}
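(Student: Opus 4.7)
I would follow the same template as the proof of Lemma~\ref{lem:main} in Section~\ref{sec:main-lemma}, adapted to general (non-bipartite) graphs and with a freezing rule tailored to condition~(1). Fix an uncontraction sequence $\cal P_1, \dots, \cal P_n$ of $G$ of width~$d$, and declare a part $A \in \cal P_t$ \emph{frozen} at time~$t$ if no ancestor of $A$ was frozen at any time $s < t$ and the quasi-ladder index of $G[A]$ is strictly less than $k$. Let $\cal F$ be the set of frozen parts. By essentially the same argument as in Lemma~\ref{lem:Fdivision}, $\cal F$ is a partition of $V(G)$, and by construction each $A \in \cal F$ satisfies condition~(1).

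To build the coloring, I would order $\cal F$ by freezing time $\preceq$ and analyze pairs $A \prec B$ in the spirit of Lemmas~\ref{lem:index} and~\ref{lem:sandwich}. At the freezing time $t_A$, the partition $\cal P_{t_A}$ consists of $A$, at most $d$ parts impure with $A$, and the rest of $\cal P_{t_A}$, each of which is pure with $A$. For any $B \succ A$ in $\cal F$, the ancestor $B^*$ of $B$ in $\cal P_{t_A}$ determines the pair $A, B$: if $B^*$ is pure with $A$ then so is $B$, with the same purity type; otherwise $B^*$ lies among the $\leq d$ impure neighbors of $A$. Symmetrically, from $B$'s viewpoint at time $t_B$, the earlier frozen parts $A \prec B$ that are impure with $B$ split into those whose pieces in $\cal P_{t_B}$ include one of $B$'s $\leq d$ impure neighbors, and those ``diagonal'' ones whose pieces are all pure with $B$ but of mixed types. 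The number of the latter can be bounded using an analog of Lemma~\ref{lem:many-diagonals}: a long diagonal sequence would yield a long quasi-ladder in $G$, contradicting the index bound $k$.

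I would then color $\cal F$ greedily in the order $\preceq$, choosing for each $B$ a color $f(B) \in [2d+4]$ that avoids both impurity and purity-type conflicts with the already-colored $A \prec B$. The bound $2d+4$ should decompose into the $d$ forbidden colors coming from $B$'s impure neighbors at time $t_B$, the $d$ coming from a symmetric ``left-side'' accounting of earlier parts via the sandwich argument, and a constant $4$ absorbing the two purity types and the initial-segment boundary analogous to the set $\cal S$ in Section~\ref{sec:main-lemma}. The main obstacle will be obtaining the constant exactly $2d+4$: each color class must simultaneously be pairwise pure \emph{and} pairwise of the same purity type, so once a color $c$ has been assigned to two earlier parts $A_1, A_2 \prec B$, the pair $A_1, A_2$ itself must already be pure and of the type associated with $c$. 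Maintaining this ``pairwise-pure and pairwise-same-type'' invariant throughout the greedy coloring, while keeping the color budget tight at $2d+4$, is the principal technical content.
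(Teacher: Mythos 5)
Your outline agrees with the paper's proof through the construction of $\cal F$ and the sandwich analysis: you use exactly the same freezing rule (a part freezes when $G[A]$ has index $<k$ and no ancestor is frozen), the same $\preceq$-ordering by freezing time, and the same appeal to a sandwich lemma to bound impure interactions with earlier parts. However, the point you explicitly flag as ``the principal technical content'' --- keeping each color class pairwise pure \emph{and} pairwise of the same purity type with a budget of only $2d+4$ --- is left unresolved, and the budget breakdown you propose ($d + d + 4$, with a remark about an initial-segment boundary analogous to $\cal S$) is not what makes the argument go through. There is no $\cal S$-boundary in this lemma's proof; the treatment is uniform over all freezing times. More importantly, the $2d+4$ arrives multiplicatively, not additively.

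The paper's resolution is this: assign each part $B$ a type $\tp(B)\in\{+,-\}$, namely the purity type of $B$ against the union of all \emph{earlier} frozen parts outside a set $\cal N$ of at most $d$ exceptions (the sandwich claim). Then build an auxiliary graph $H$ on $\cal F$ where $A\prec B$ are adjacent precisely when the pair $A,B$ \emph{mismatches $\tp(B)$}. This graph has $\preceq$-degeneracy at most $d+1$: at most one back-neighbor from the same $\cal F_t$ (since $|\cal F_t|\le 2$), and at most $d$ from $\bigcup_{s<t}\cal F_s$ by the sandwich. Greedy coloring of $H$ along $\preceq$ then uses $d+2$ colors with every color class an $H$-independent set; the crucial observation is that if $A\prec B$ are non-adjacent in $H$ and moreover share the same type, then the pair $A,B$ matches $\tp(B)$, hence is complete (if $+$) or anti-complete (if $-$). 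So one refines the $(d+2)$-coloring by splitting every class into $+$-parts and $-$-parts, giving $2(d+2)=2d+4$ colors with the required homogeneity. This two-step structure (independent set in $H$, then doubling by type) is exactly what dissolves the invariant-maintenance difficulty you identified, and it is the missing piece of your sketch.
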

\begin{proof}
 Let $\cal P_1,\ldots,\cal P_n$ be an uncontraction sequence of $G$ of width at most $d$, where $n\coloneqq |V(G)|$.
 As in Section~\ref{sec:main-lemma}, we say that for $s,t\in [n]$ with $s\leq t$, a part $A\in \cal P_s$ is an {\em{ancestor}} of a part $B\in \cal P_t$ if $A\supseteq B$. Then also $B$ is a {\em{descendant}} of $A$.
 
 We use the following freezing mechanism, which differs from the one used in the proof of Lemma~\ref{lem:main}, but is based on a similar principle.
 For $t\in [n]$, a part $A\in \cal P_t$ is {\em{frozen}} at time $t$ if:
 \begin{itemize}
  \item no ancestor of $A$ was frozen at any time $s<t$, and
  \item the induced subgraph $G[A]$ has index smaller than $k$.
 \end{itemize}
 Let $\cal F_t$ be the set of parts of $\cal P_t$ frozen at time $t$. Note the following.
 
 \begin{claim}\label{cl:freezing-really-tiny}
  For each $t\in [n]$, $|\cal F_t|\leq 2$.
 \end{claim}
 \begin{proof}
 Note that $\cal P_t$ differs from $\cal P_{t-1}$ by that a part of $\cal P_{t-1}$ is replaced by two its subsets. Then only those two subsets can belong to $\cal F_t$. 
 \cqed\end{proof}

 Let $\cal F\coloneqq \cal F_1\cup \cal \cdots \cup \cal F_n$ be the set comprising all parts frozen at any moment. The same reasoning as in the proof of Lemma~\ref{lem:Fdivision} yields the following.
 
 \begin{claim}
  $\cal F$ is a partition of $V(G)$.
 \end{claim}

 That $G[A]$ has index smaller than $k$ for each $A\in \cal F$ follows directly from the construction. We are left with constructing a suitable coloring $f$.
 
 The following claim is a simple analogue of Lemma~\ref{lem:sandwich} that will be sufficient for our needs.
 
 \begin{claim}\label{cl:light-sandwich}
  Let $t\in [n]$ and let $B\in \cal F_t$. Then there is a set $\cal N\subseteq \bigcup_{s<t} \cal F_s$ such that $|\cal N|\leq d$ and denoting
  $$W\coloneqq \bigcup\left(\bigcup_{s<t} \cal F_s \setminus \cal N\right),$$
  the pair $B,W$ is pure.
 \end{claim}
 \begin{proof}
  We may assume that $t>1$, as the claim holds for $t=1$ trivially. Let $B'$ be the unique ancestor of $B$ in $\cal P_{t-1}$.
  Let $\cal M$ be the set of all those parts $D\in \cal P_{t-1}$ for which the pair $B,D$ is impure. Further, let $\cal N$ comprise all the sets in $\bigcup_{s<t} \cal F_s$ that have a descendant in $\cal M$. By the assumption on the width of the uncontraction sequence, we have $|\cal M|\leq d$. Since every element of $\cal M$ has at most one frozen ancestor, it follows that $|\cal N|\leq d$. We are left with proving that the pair $B,W$ is pure.
  
  First, we observe that every vertex $u\in W$ is pure towards $B'$, that is, the pair $\{u\},B'$ is pure.
  Indeed, if this was not the case, then the part $U\in \cal P_{t-1}$ that contains $u$ would form an impure pair with $B'$, implying that $U\in \cal M$. Hence, the set $A\in \bigcup_{s<t} \cal F_s$ that contains $u$ would belong to $\cal N$, but $W$ is disjoint with $\bigcup \cal N$; a contradiction.
  
  Next, suppose for contradiction that the pair $B,W$ is impure. The observation of the previous paragraph implies that there must exists $u^-,u^+\in W$ such that $u^-$ is non-adjacent to all the vertices of $B'$ while $u^+$ is adjacent to all the vertices of $B'$. Since $B$ is frozen at time $t$, $B'$ was not frozen at time $t-1$, hence in $G[B']$ there exists a quasi-ladder of length $k$, say formed by sequences $x_1,\ldots,x_k$ and $y_1,\ldots,y_k$. Now $x_1,\ldots,x_k,u^-$ and $y_1,\ldots,y_k,u^+$ form a quasi-ladder of length $k+1$ in $G$, a contradiction. 
 \cqed\end{proof}
 
 For every $B\in \cal F$ define the type $\tp(B)\in \{+,-\}$ as the purity type of the pair $B,W$, where $W$ is defined as in Claim~\ref{cl:light-sandwich}. Note that it may happen that $W$ is empty; in this case $\tp(B)$ is chosen arbitrarily. Further, let $\preceq$ be any ordering of $\cal F$ that respects the freezing times: whenever $A\in \cal F_s$ and $B\in \cal F_t$ for $s<t$, then $A\prec B$. Note that for every $t\in [n]$, the (at most two) elements of $\cal F_t$ are ordered arbitrarily in $\preceq$. Finally, define a graph $H$ on vertex set $\cal F$ as follows: for distinct sets $A,B\in \cal F$, say $A\prec B$, make $A$ and $B$ adjacent in $H$ if and only if the pair $A,B$ mismatches the type $\tp(B)$.
 
 From Claim~\ref{cl:light-sandwich} we immediately obtain a bound on the degeneracy of $\preceq$.
 
 \begin{claim}\label{cl:light-degeneracy}
  For every $B\in \cal F$ there are at most $d+1$ sets $A\in \cal F$ such that $A\prec B$ and $A$ and $B$ are adjacent in $H$.
 \end{claim}
 \begin{proof}
  Let $t$ be such that $B \in \cal F_t$.
  Each set $A$ for which $A \prec B$ holds and which is adjacent to $B$ in $H$ belongs to either $\cal F_t$ or $\bigcup_{s<t} \cal F_s$. Then Claims~\ref{cl:freezing-really-tiny} and~\ref{cl:light-sandwich} respectively imply upper bounds of $1$ and of $d$ on the number of sets $A$ falling into these cases.
 \cqed\end{proof}
 
 By Claim~\ref{cl:light-degeneracy}, we may apply a greedy left-to-right procedure on the ordering $\preceq$ to find a coloring $g\colon \cal F\to [d+2]$ where each color class is an independent set in $H$. We may further refine $g$ to a coloring $f\colon \cal F\to [2d+4]$ by splitting every color class of $g$ into two according to the types, as follows:
 $$f(A)\coloneqq \begin{cases}2\cdot g(A)&\qquad \textrm{if }\tp(A)=-,\\2\cdot g(A)-1&\qquad \textrm{if }\tp(A)=+.\end{cases}$$
 That $f$ satisfies the required properties follows directly from the construction.
\end{proof}
%

We can now prove Theorem~\ref{thm:cograph-coloring} using Lemma~\ref{lem:cograph-coloring}.

\begin{proof}[Proof of Theorem~\ref{thm:cograph-coloring}]
 We apply induction on $k$.
 For the base case, every graph of index~$1$ is either complete or edgeless, and hence a cograph. Thus, for $k=1$ one color suffices.
 
 Assume that $k\geq 2$. Apply Lemma~\ref{lem:cograph-coloring}, yielding a suitable partition $\cal F$ of $V(G)$ and coloring $f\colon \cal F\to [2d+4]$. By induction, for each $A\in \cal F$ we find a coloring $h_A\colon A\to [(2d+4)^{k-2}]$ in which every color class induces a cograph in $G[A]$. Construct a coloring $h$ of $G$ by overlaying colorings $f$ and $\{h_A\colon A\in \cal F\}$ as follows: for each vertex $u$, say belonging to $A\in \cal F$,  set
 $$h(u)\coloneqq (f(u),h_A(u)).$$
 To see that each color class $(i,j)\in [2d+4]\times [(2d+4)^{k-2}]$ induces a cograph in $G$, observe that $G[h^{-1}(i,j)]$ is either the disjoint union or the join of graphs $\{G[h_A^{-1}(j)]\colon A\in f^{-1}(i)\}$, and these graphs are cographs by induction.
\end{proof}

\section{Discussion}
\label{sec:discussion}

\begin{figure}[h!]\centering
    \includegraphics[page=3,scale=0.60]{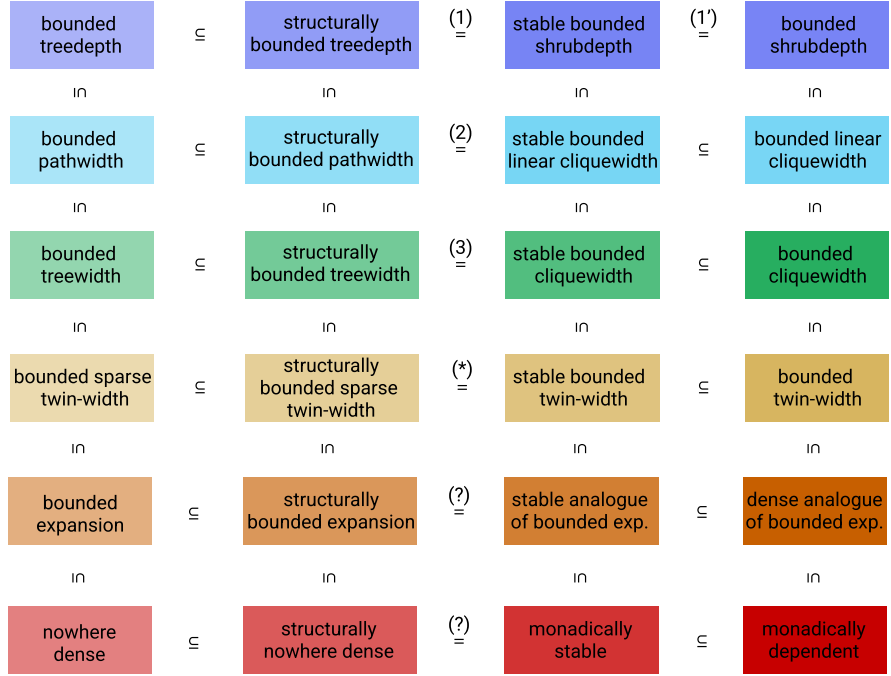}
    \caption{Some properties of graph classes that form transduction 
    ideals (in the second, third, fourth column), or weakly sparse transduction ideals (in the first column).
    For every row $(\cal P_1,\cal P_2,\cal P_3, \cal P_4)$ in the table,
    the property $\cal P_1$ consists of all classes in $\cal P_4$ that are weakly sparse;
    the property $\cal P_2$ is the property of being `structurally $\cal P_1$'; the property $\cal P_3$ consists of all classes in $\cal P_4$ which are stable.
    The inclusion $\cal P_2\subset \cal P_3$ holds in each row, and equality $\cal P_2=\cal P_3$ holds for the first four rows $(1)$, $(2)$, $(3)$, $(*)$,
    with $(*)$ being our main result, Theorem~\ref{thm:main}.
    The first equality $(?)$ is our
    Conjecture~\ref{dense-sparse be},
    and the second equality $(?)$ is 
    Conjecture~\ref{conj:sparsification}.
    All remaining inclusions in the figure are strict.
    }
    \label{fig:discussion}
\end{figure}

We now place our result in the broader context of monadically dependent graph classes. Our discussion is based on Figure~\ref{fig:discussion}, which is a version of Figure~\ref{fig:intro}, extended with the `bounded expansion' row. One of the goals of this discussion is to solve this crossword puzzle: 
 propose candidate notions of properties that can be put in that row, which we dub
`dense analogue of bounded expansion', and `stable analogue of bounded expansion'.
In fact, we will propose a generic construction of lifting properties of sparse graph classes to their dense analogues.  This will lead to several conjectures, that are related to some important known conjectures.

Let us start by analyzing the figure.
The second, third and fourth column in the figure consist of \emph{transduction ideals}, that is, properties of graph classes that are closed under taking transductions.
Properties $\cal P$ in the first column consists of \emph{sparse transduction ideals}:
if $\CC$ has property $\cal P$ and $\CC$  transduces a weakly sparse class $\DD$, then $\DD$ has property $\cal P$. 
In fact, the properties in the first column are precisely the weakly sparse parts of the corresponding properties in the last column (ignore the `bounded expansion' row for the moment).

Our main result says that 
stable classes of bounded twin-width are precisely transductions (even interpretations) of sparse classes of bounded twin-width.
This proves equality among properties in the second column and in the third column of the figure, within the first four rows. 
Conjecture~\ref{conj:sparsification} predicts that such an equality holds also for the last row. Thus we confirm this conjecture in the case of classes of bounded twin-width.

So the properties in the third column can be defined --- conjecturally at least --- in terms of properties in the first column, by taking the transduction closure. 
Properties in the first and third column can be defined in terms of properties in the last column, by restricting to weakly sparse/stable classes, respectively.
Can properties in the last column be defined in terms of properties in the first column? Specifically, is there a {\em{generic way}} of generalizing properties of sparse classes to properties of unstable classes such that boundedness of pathwidth/treewidth is mapped to boundedness of (linear) cliquewidth, boundedness of sparse twin-width is mapped to boundedness of twin-width, and nowhere denseness is mapped to monadic dependence?

A  recipe for defining the properties in the fourth column from the properties in the first one would in particular allow to answer the following question: what is the dense, unstable analogue of classes with bounded expansion? 
In other words, we are seeking
to define the properties in the  `bounded expansion' row. The dense analogue of classes with bounded expansion should have the following properties:
\begin{itemize}[nosep]
 \item it should form a transduction ideal (be closed under taking transductions);
 \item it should contain all classes with bounded expansion and all classes with bounded twin-width;
 \item it should consists only of monadically dependent classes; and
 \item its weakly sparse (resp. stable) classes should be exactly the classes with bounded expansion (resp. structurally bounded expansion).
\end{itemize}
There is an easy answer to this question: the property of having structurally bounded expansion or having bounded twin-width. 
This answer is not very illuminating, however. It is the smallest possible transduction ideal which satisfies the above requirements.
But what if we consider the \emph{largest} such transduction ideal instead?




\medskip
This motivates the following attempt at
defining the properties in the fourth column in terms of the properties in the first column.
For a sparse transduction ideal $\mathcal P$ define the \emph{dense analogue} of $\cal P$, denoted $\overline{\mathcal P}$, as the property consisting of all classes $\CC$ such that every weakly sparse class $\DD$ which can be transduced from $\CC$ belongs to  $\mathcal P$.
Note that if $\mathcal P$ is any transduction ideal then $\mathcal P\subset \overline{\mathcal P\cap\mathcal W}$, where $\mathcal W$ consists of all weakly sparse classes.
We conjecture that the transduction ideals in the last column in Figure \ref{fig:intro}
are dense analogues of the corresponding sparse transduction ideals in the first column.
More precisely, consider the following five statements:
\begin{enumerate}
    \item\label{st:1} the dense analogue of `bounded treedepth' is `bounded shrubdepth',
    \item\label{st:2} the dense analogue of `bounded pathwidth' is `bounded linear cliquewidth',
    \item\label{st:3} the dense analogue of `bounded treewidth' is `bounded cliquewidth',
    \item\label{st:4} the dense analogue of `bounded sparse twin-width' is `bounded twin-width', and
    \item\label{st:5} the dense analogue of `nowhere dense' is `monadically dependent'.
\end{enumerate}
As we argue below, statements \eqref{st:1} and \eqref{st:5} hold, and we conjecture that statements \eqref{st:2}, \eqref{st:3}, and \eqref{st:4} are true.

Note that each of those statements is a duality statement: for instance, the statement \eqref{st:3} says that every graph class~$\CC$ either has bounded cliquewidth, or transduces a weakly sparse class of unbounded treewidth (those are mutually exclusive).
In other words, the statements claim that $\cal P=\overline{\cal P\cap\cal W}$,
for each of the properties $\cal P$ among `bounded shrubdepth', `boudned linear cliquewidth',
`boudned cliquewidth', `bounded twin-width', and `monadically dependent'.
The  inclusion $\cal P\subset \overline{\cal P\cap\cal W}$ is clear, since $\cal P$ is a transduction ideal, so the relevant claims concern the other inclusion.
This is equivalent to stating that every class $\CC$ that does \emph{not} have the property $\cal P$, transduces some weakly sparse class~$\DD$ that also does not have the property $\cal P$.

The statement \eqref{st:5} holds: if a class $\CC$ is not monadically dependent then it transduces every class, in particular, it transduces the class of $1$-subdivisions of all graphs, which is weakly sparse but not monadically dependent. Hence, `monadically dependent' is the dense analogue of `nowhere dense'.
The statement \eqref{st:1} is equivalent to the following, recent result.

\begin{theorem}[\cite{SD-paths}]\label{thm:sd}If $\CC$ is a class which does not have bounded shrubdepth, then $\CC$ transduces the class of all paths.
\end{theorem}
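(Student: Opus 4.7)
The natural strategy is to prove the theorem by exhibiting a single transduction $T$ such that for any class $\CC$ of unbounded shrubdepth, $T(\CC)$ contains arbitrarily long paths. I would rely on the well-known equivalent characterizations of bounded shrubdepth: bounded SC-depth (recursive construction via disjoint union and complementation on a subset), bounded depth of tree-models with a bounded alphabet of labels, and FO-transduction equivalence with classes of bounded tree-depth. Of these, the tree-model viewpoint is the cleanest to exploit here.

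Given graphs $G_n \in \CC$ with shrubdepth tending to infinity, first fix for each $n$ a tree-model of near-optimal depth; a root-to-leaf branch of unbounded length is then forced to exist. The plan is to apply Simon's factorization theorem (in the same spirit as its use in \cite{nesetril2021linrw_stable,nesetril2021rw_stable} for the proofs of Theorems~\ref{thm:sparse-lin-cw} and~\ref{thm:sparse-cw}) to the sequence of label-transitions along such a branch, so as to extract a long sub-branch on which all transitions are uniform with respect to a single FO formula $\phi(x,y)$. The leaves of this sub-branch yield a long sequence $v_1, \ldots, v_m$ of vertices of $G_n$ all pairwise related by $\phi$ in a controlled, monotone way. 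The transduction $T$ then uses two unary predicates to $2$-color the $v_i$'s alternately, and outputs the graph whose edge relation is ``$\phi(x,y)$ holds and the two colors of $x$ and $y$ differ''. On the alternately colored sub-sequence this is precisely the edge relation of an induced path of unbounded length, so $T$ produces arbitrarily long paths as required.

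The main obstacle is the uniformization step: in a tree-model, adjacency in the underlying graph is given by an arbitrary Boolean function of the labels at the least common ancestor, so a naive descent need not yield semantically uniform behavior along the branch. Simon's factorization must therefore be carried out on the \emph{syntactic} transitions of the tree-model rather than on the derived graph adjacency, and one has to argue that no matter how adjacency is reconstructed from the labels, a long enough uniform sub-branch survives. Getting this step right is the crux of the argument, and is where the result of \cite{SD-paths} contributes genuinely new ideas beyond what is already present in \cite{nesetril2021linrw_stable,nesetril2021rw_stable} for cliquewidth and linear cliquewidth.
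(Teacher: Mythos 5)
The paper does not prove this theorem: it is cited verbatim from~\cite{SD-paths}, and the surrounding text only argues that it is equivalent to the duality statement~\eqref{st:1}. So there is no ``paper's own proof'' to compare against; you should be aware that what you were asked to prove is treated as a black box here.

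That said, your sketch has a gap more serious than the one you flag. Suppose the uniformization succeeds and you obtain a long sub-branch on which all labels (of the side-vertices $v_i$ and of the branch nodes $u_i$ at which they hang off) are constant. In a tree-model, adjacency of $v_i$ and $v_j$ (say $i<j$, so the LCA is $u_i$) depends only on the three labels $\bigl(\text{label}(v_i), \text{label}(u_i), \text{label}(v_j)\bigr)$, and after uniformization these are all fixed. Hence the induced subgraph on $v_1,\ldots,v_m$ is constant: either complete or edgeless. Your alternating $2$-coloring then recovers a complete bipartite graph or the edgeless graph, not a path. The real difficulty is not uniformizing --- it is the opposite: you must engineer a situation in which the graph itself carries enough information, after adding colors, to \emph{distinguish} $v_i$ from $v_j$ and recover a linear order or successor relation on a large set; a fully uniformized branch throws exactly that information away. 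Simon's factorization gives you homogeneity, which is the wrong direction for this theorem. The ladder-based arguments of~\cite{nesetril2021linrw_stable,nesetril2021rw_stable} could lean on instability to produce an order; here, however, the class may well be monadically stable (e.g.\ bounded-degree trees have unbounded shrubdepth and are stable), so no ladder is available and the order must be extracted by some other combinatorial device. That is the crux that your proposal does not address.
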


We now show that Theorem~\ref{thm:sd} is equivalent to the statement~\eqref{st:1}.
As remarked,~\eqref{st:1} is equivalent to the statement (1') that every class of unbounded shrubdepth transduces some weakly sparse class of unbounded shrubdepth (equivalently, of unbounded treedepth).

Theorem~\ref{thm:sd} implies the statement~(1'), since the class of all paths is weakly sparse and has unbounded shrubdepth. In the other direction, suppose that from $\CC$ one can transduce a weakly sparse class $\CC'$ of unbounded shrubdepth. If $\CC'$ is monadically stable then, being weakly sparse, $\CC'$ is also nowhere dense.
By~\cite[Proposition 8.2]{sparsity}\footnote{More generally, this statement holds for weakly sparse classes $\CC'$, as proved by Atminas et al.~\cite[Theorem~3]{AtminasLR12}.},
if $\CC'$ contains arbitrarily long paths as subgraphs then it contains arbitrarily long paths as induced subgraphs, which yields the conclusion of Theorem~\ref{thm:sd}. Otherwise, if $\CC'$ does not contain arbitrarily long paths as subgraphs then $\CC'$ has bounded treedepth, in particular has bounded shrubdepth, a contradiction.
Finally, if $\CC'$ is not monadically stable then $\CC'$ transduces the class of all ladders, which in turn transduces the class of all paths. In any case, the class of all paths can be transduced from $\CC$.

Hence, Theorem~\ref{thm:sd} is equivalent to the statements \eqref{st:1} and (1').
In particular, this implies that bounded shrubdepth is the largest transduction ideal that can be put into the upper-right corner of Fig.~\ref{fig:intro}, which in conjunction with weak sparsity implies bounded treedepth.
Since classes of bounded shrubdepth are monadically stable, this explains why the two properties in that row of the figure are equal.

\medskip
The statements \eqref{st:2} and \eqref{st:3}
can be formulated more explicitly, as follows. 
A \emph{subdivision} of a graph $G$ is any graph obtained from $G$ by replacing each edge by some path of positive length.
Statement \eqref{st:2} is equivalent to the following, more precise conjecture.

\begin{conjecture}\label{conj:lcw}
    Let $\CC$ be a class with unbounded linear cliquewidth. Then there is a class $\DD$ which can be transduced from $\CC$ and which contains some subdivision of every binary tree.
\end{conjecture}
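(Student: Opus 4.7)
The plan is to split on whether $\CC$ is monadically stable, and in both cases to reduce the conjecture to the classical characterization of pathwidth by excluded trees: a weakly sparse class has bounded pathwidth if and only if it excludes some fixed tree as a minor, and since binary trees have maximum degree $3$, being a minor coincides with being a topological minor for them. Consequently, a weakly sparse class of unbounded pathwidth contains, as a subgraph of one of its members, a subdivision of every binary tree; a further transduction (combined with the known upgrading of topological minors to induced subdivisions in weakly sparse classes, in the spirit of~\cite{AtminasLR12}) extracts these subdivisions as induced subgraphs. Thus it suffices to exhibit, for each $\CC$ of unbounded linear cliquewidth, a weakly sparse class $\DD$ of unbounded pathwidth transducible from $\CC$.

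For the monadically stable case, I would examine the quantitative content of the proof of Theorem~\ref{thm:sparse-lin-cw} in~\cite{nesetril2021linrw_stable}. That argument, based on Simon's factorization, assigns to each $G \in \CC$ a weakly sparse ``skeleton'' $R_G$ from which $G$ is recoverable by a fixed transduction. Tracking dependencies, the pathwidth of $R_G$ is a function of the linear cliquewidth of $G$ and the stability parameters of $\CC$; by reversing the viewpoint, one obtains a uniform transduction producing $R_G$ from $G$. Hence $\DD \coloneqq \setof{R_G}{G \in \CC}$ is weakly sparse and transducible from $\CC$, and its pathwidth must be unbounded: otherwise $\DD$ would have bounded linear cliquewidth, and so would $\CC$ (bounded linear cliquewidth is closed under transductions by~\cite{gajarsky2020sbe}), contradicting the assumption. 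The classical characterization then closes this case.

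The harder case is when $\CC$ is not monadically stable. Although $\CC$ transduces all ladders, the ladders themselves have bounded linear cliquewidth, which is preserved by transductions, so transduced ladders alone cannot yield $\DD$; the proof must extract subdivided-binary-tree structure directly from the unboundedness of linear cliquewidth, without invoking stability. My approach here is to adapt the scheme of the present paper: work with ordered bipartite graphs witnessing unbounded linear cliquewidth, develop a freezing mechanism on a linear-cliquewidth-style uncontraction sequence analogous to Section~\ref{sec:main-lemma}, and iteratively refine the resulting partition $\cal F$ so that the quotient graph $\quo{G'}{\cal F}$ reveals a branching path rather than merely a star forest. The main obstacle is that the freezing argument of Lemma~\ref{lem:main} relies crucially on bounded twin-width to cap the number of impure pairs per part at each step, whereas bounded linear cliquewidth provides no such per-step bound: the number of impure interactions can blow up, and the partition into homogeneous blocks via the ladder-index bound (Lemma~\ref{lem:many-diagonals}) is no longer available in the unstable regime. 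Overcoming this will likely require a Ramsey-theoretic refinement akin to a stability-free strengthening of Simon's factorization, or an entirely new decomposition tailored to linear cliquewidth that directly converts unbounded complexity into an unbounded-pathwidth weakly sparse transduction.
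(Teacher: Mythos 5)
This statement is posed as a \emph{conjecture} in the paper, not a theorem; the paper contains no proof of it. The only progress the paper makes is Theorem~\ref{thm:conjectures-hold}, which establishes Conjecture~\ref{conj:lcw} under the additional hypothesis that $\CC$ is transduction equivalent to a class of bounded star chromatic number (in particular, for classes of structurally bounded expansion). The proof of that partial result does not split on stability at all: it uses Lemma~\ref{lem:transducing subgraphs} to transduce the subgraph closure of the bounded-star-chromatic-number class $\CC'$, then invokes the classical fact~\cite{ROBERTSON198339} that a weakly sparse class of unbounded pathwidth contains every forest as a minor, hence some subdivision of every binary tree as a subgraph. Stability, Simon's factorization, and the machinery of~\cite{nesetril2021linrw_stable} are never invoked for this implication.

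Your proposal is therefore not comparable to a paper proof, and as you yourself concede, it is not a complete proof either: the non-monadically-stable case is explicitly left open, and your sketch there amounts to restating the difficulty rather than resolving it. Beyond that acknowledged gap, the monadically stable case as you describe it also does not go through. The skeleton construction $R_G$ from~\cite{nesetril2021linrw_stable} is only available when $G$ has linear cliquewidth bounded by a fixed constant, and the transduction producing $R_G$ (as well as the pathwidth bound on $R_G$) depends on that constant. Under the hypothesis of Conjecture~\ref{conj:lcw} the linear cliquewidth is unbounded over $\CC$, so there is no single fixed transduction $T$ with $R_G\in T(G)$ for all $G\in\CC$; your proposed class $\DD=\setof{R_G}{G\in\CC}$ is not, as written, transducible from $\CC$. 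To make this route work you would need $\CC$ to have structurally bounded expansion (or bounded star chromatic number up to transduction equivalence), which for monadically stable classes is precisely Conjecture~\ref{dense-sparse be} and is open. So even restricted to stable classes, your argument silently assumes an open conjecture. The portion of your plan that does succeed --- reducing the problem to finding a weakly sparse transduced class of unbounded pathwidth and then using the tree-minor characterization of pathwidth --- is exactly the final step of Theorem~\ref{thm:conjectures-hold}, but what is missing in both your stable and unstable branches is a way to actually produce such a class, and that is the substance of the conjecture.
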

It is well-known that the class of all binary trees has unbounded pathwidth, and that subdivisions cannot decrease the pathwidth. Hence, if a class $\DD$  contains some subdivision of every binary tree, then $\DD$ has unbounded pathwidth. Therefore, Conjecture~\ref{conj:lcw} implies the statement \eqref{st:2}. 
The converse implication will be shown later below.

Similarly, the statement~\eqref{st:3} is equivalent to the following conjecture, which replaces trees with walls.
A \emph{wall} is a variation of a grid with maximum degree $3$, as depicted in Fig.~\ref{fig:wall}. 
    \begin{figure}\centering
        \includegraphics[page=3,scale=0.6]{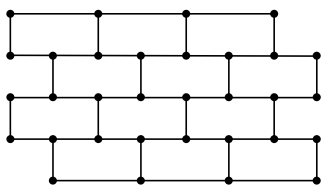}
    \caption{A wall graph.}
    \label{fig:wall}
    \end{figure}

\begin{conjecture}\label{conj:cw}
    Let $\CC$ be a class with unbounded  cliquewidth. Then there is a class $\DD$ which can be transduced from $\CC$ and which contains some subdivision of every wall.
\end{conjecture}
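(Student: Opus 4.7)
The plan is to reduce Conjecture~\ref{conj:cw} to a sparsification statement in the spirit of the equivalence that the paper already sketches between \eqref{st:2} and Conjecture~\ref{conj:lcw}, and then invoke classical extremal results to extract subdivided walls. First I would verify that Conjecture~\ref{conj:cw} is equivalent to statement \eqref{st:3}, i.e.\ to the assertion that every class of unbounded cliquewidth transduces some weakly sparse class of unbounded treewidth. The forward direction uses that any subdivision of a wall is a $K_{2,3}$-subgraph-free graph and still has the original wall as a minor, hence has treewidth tending to infinity with the size of the wall. The backward direction relies on a Ramsey-type fact that is already folklore in the sparse graphs community: in a weakly sparse class of unbounded treewidth one can always find subdivisions of arbitrarily large walls as subgraphs. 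This in turn follows from the Robertson--Seymour grid minor theorem combined with a K\"uhn--Osthus-style refinement converting a large grid minor into a topological wall under the assumption of an excluded biclique subgraph.

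With that reformulation in hand, I would split on monadic dependence. If $\CC$ is not monadically dependent, then by definition $\CC$ transduces every graph class, including the class of all subdivided walls, and we are done. So assume from now on that $\CC$ is monadically dependent with unbounded cliquewidth; the goal becomes producing a weakly sparse class $\DD$ of unbounded cliquewidth (equivalently, of unbounded treewidth) transducible from $\CC$. Once such a $\DD$ is found, the extraction of subdivided walls is performed as above, and the two transductions compose by Corollary~\ref{cor:transductions-compose}.

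The crux of the argument is this sparsification step. For monadically stable classes of bounded twin-width, it is exactly the content of Theorem~\ref{thm:main} of the present paper, and for arbitrary monadically stable classes it would follow from Conjecture~\ref{conj:sparsification}, which is open. The natural attack is to try to imitate the proof of Lemma~\ref{lem:main} with the twin-width uncontraction sequence replaced by a rankwidth decomposition tree: one would freeze parts of the decomposition at the moment when their internal quotient subgraph drops below a suitable cliquewidth-like complexity threshold, and then argue that the interaction graph of the frozen parts has bounded degeneracy and a bounded star chromatic number by the same degeneracy/scol$_2$ reasoning as in Lemmas~\ref{lem:degeneracy}--\ref{lem:scols}. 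In the monadically stable setting, the (quasi-)ladder index plays the role of this decreasing complexity measure and supports the induction; the same role should be playable in the monadically dependent setting by some analogue of the VC dimension or shatter function of the class, but making this work combinatorially is presently out of reach.

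The hard part is without doubt this sparsification step outside the bounded twin-width regime. Adapting the freezing/colouring machinery of Section~\ref{sec:main-lemma} to rankwidth decompositions is plausible at a high level, but requires substituting for the ladder index, which is meaningful only under stability, a measure of combinatorial complexity that behaves well under the rankwidth decomposition and under monadic dependence. For classes that are monadically dependent but not stable one would additionally have to isolate, within each graph of $\CC$, an ``orderfull'' part handled via the ordered twin-width theory of Theorem~\ref{thm:ordered} and an ``orderless'' part handled via the stable argument, along the lines conjecturally sketched in the introduction. Without such a decomposition theorem, the approach outlined above stops short precisely at Step~3 (sparsification), and genuinely new structural ideas beyond those developed in this paper appear to be required to complete the proof.
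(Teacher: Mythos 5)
Conjecture~\ref{conj:cw} is stated in the paper as an \emph{open} conjecture, and the paper does not claim to prove it; what the paper establishes is (a) the equivalence of Conjecture~\ref{conj:cw} with statement~\eqref{st:3}, (b) that it implies Seese's conjecture, and (c) that it holds for classes transduction equivalent to a class of bounded star chromatic number, hence for all classes with structurally bounded expansion (Theorem~\ref{thm:conjectures-hold}). Your proposal is, appropriately, not a proof either, and you correctly identify both the reduction to statement~\eqref{st:3} and the fact that the genuinely hard step --- sparsifying a monadically dependent class of unbounded cliquewidth into a weakly sparse one of unbounded treewidth --- is open and lies beyond the techniques of Section~\ref{sec:main-lemma}, which hinge on the ladder index and stability. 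To that extent the high-level assessment matches the paper's own view of the problem.

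There is, however, one concrete gap in your reduction that the paper is careful to avoid. In your backward direction you argue that a weakly sparse class $\DD$ of unbounded treewidth contains subdivisions of arbitrarily large walls \emph{as subgraphs} and then declare yourself done; but transductions, as defined here, extract only induced substructures (plus colorings and copies), so a subgraph-subdivision does not by itself yield a class transducible from $\DD$ that \emph{contains} the subdivided walls as members. The paper's proof that \eqref{st:3} implies Conjecture~\ref{conj:cw} therefore splits into two cases: if $\DD$ has bounded expansion, Lemma~\ref{lem:transducing subgraphs} (via bounded star chromatic number) lets one transduce the entire subgraph closure of $\DD$, which does contain the wall-subdivisions as members; if instead $\DD$ has unbounded expansion, Dvo\v{r}\'ak's theorem (Theorem~\ref{thm:dvorak}) gives the stronger conclusion that $\DD$ already contains subdivisions of every graph as \emph{induced} subgraphs, which can be extracted directly. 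Your ``K\"uhn--Osthus-style refinement'' and the grid minor theorem get you a subgraph wall-subdivision, but they do not bridge the induced-versus-subgraph gap, so that step needs to be replaced by the case split above (or some comparable argument) for the reduction to be sound.
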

Note that Conjecture~\ref{conj:cw} implies the following well-known conjecture, which is often referred to as Seese's conjecture
(see~\cite{dawar2020mso}). Seese's original conjecture \cite[Problem 1]{SEESE1991169}, regarding 
graph classes with undecidable MSO theory, is an immediate consequence of it.

\begin{conjecture}[Variant of Seese's conjecture]\label{conj:seese}
    Let $\CC$ be a class with unbounded  cliquewidth. Then the class of all grids can be transduced from $\CC$ using an MSO transduction.
\end{conjecture}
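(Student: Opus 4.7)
The plan is to derive Conjecture~\ref{conj:seese} from Conjecture~\ref{conj:cw} by composing two MSO transductions: first, the FO transduction produced by Conjecture~\ref{conj:cw} (which is automatically an MSO transduction), and second, an MSO transduction that recovers grids from subdivided walls. The whole strategy rests on the ability of MSO to express reachability, which lets us ``contract'' subdivision paths and then extract a grid from a wall topologically.

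Concretely, given a class $\CC$ of unbounded cliquewidth, Conjecture~\ref{conj:cw} furnishes an FO transduction $T_1$ and a class $\DD\subseteq T_1(\CC)$ such that for every $n$ there is some subdivision of the wall $W_n$ in~$\DD$. Since every FO transduction is also an MSO transduction, it suffices to build an MSO transduction $T_2$ such that $T_2(\DD)$ contains the class of all grids; the composition $T_2\circ T_1$ (whose existence is the MSO analogue of Corollary~\ref{cor:transductions-compose}) then witnesses Conjecture~\ref{conj:seese}. The transduction $T_2$ proceeds in two stages. Given an input $H\in\DD$, which is some subdivision of a wall $W$, $T_2$ nondeterministically introduces a unary predicate $B$ intended to mark the branch vertices of $W$ inside $H$, and then applies the MSO interpretation that restricts the domain to $B$ and declares $u,v\in B$ adjacent iff there is a path in $H$ from $u$ to $v$ whose internal vertices all satisfy $\neg B$. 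Reachability through a set-constrained class of vertices is expressible in MSO (indeed via transitive closure), so this interpretation is well-defined. For the unique coloring $B$ equal to the genuine branch-vertex set of $W$ in $H$, the output is $W$ itself; other colorings contribute irrelevant structures to $T_2(\DD)$, but this is harmless by the semantics of transductions. A second MSO interpretation then turns $W$ into a grid, exploiting the standard fact that a wall of side-length~$2n$ contains $G_n$ as a topological minor whose branch sets and connecting paths are definable in MSO with the help of additional nondeterministic unary predicates marking the branch vertices. Because $\DD$ contains subdivisions of walls of all sizes, $T_2\circ T_1$ hits grids of all sizes.

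The principal obstacle is the first stage of $T_2$: the original degree-$2$ vertices of $W$ and the subdivision vertices added in passing from $W$ to $H$ cannot be told apart by any FO-definable local criterion, so there is no canonical way of reading off the branch-vertex set from $H$ alone. The standard remedy --- and the reason the conclusion of Conjecture~\ref{conj:seese} only asks for an MSO \emph{transduction} rather than an MSO \emph{interpretation} --- is to delegate this identification to the nondeterministic unary-predicate expansion inherent to transductions: we do not compute $B$, we only need one legal choice of $B$ to yield the intended wall, and that choice exists by construction. A mild technical care is needed to ensure that when we subsequently pass from a wall to a grid by topological-minor extraction, the target grid $G_n$ appears as an output from a sufficiently large wall in $\DD$; but since Conjecture~\ref{conj:cw} guarantees \emph{all} walls, this is a matter of choosing the appropriate wall size for each desired grid. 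No further ingredients are required beyond MSO-definability of reachability and the compositionality of MSO transductions, so the proposed argument reduces Conjecture~\ref{conj:seese} strictly to the strengthening asserted by Conjecture~\ref{conj:cw}.
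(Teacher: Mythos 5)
Your proposal correctly reduces Conjecture~\ref{conj:seese} to Conjecture~\ref{conj:cw}, and the core mechanism --- compose the (FO hence MSO) transduction from Conjecture~\ref{conj:cw} with an MSO transduction that contracts the subdivision paths via MSO-definable reachability and then passes from walls to grids --- is the same as the paper's. The only variation is that the paper identifies the branch vertices deterministically as the degree-$3$ vertices, yielding an MSO \emph{interpretation}, whereas you guess them with a nondeterministic unary predicate $B$ and restrict to $B$; both are sound, with your variant trading the degree-based criterion for some harmless extraneous outputs that transduction semantics tolerates.
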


Monadic second-order logic (MSO), is a powerful extension of first-order logic, which in particular allows to express the property $\phi(u,v)$ that two vertices $u$ and $v$ are connected by a path consisting of vertices of degree $2$.  Hence Conjecture~\ref{conj:cw} immediately implies Conjecture~\ref{conj:seese}, since if $\DD$ contains some subdivision of every wall then $\DD$ interprets the class of all walls  using an MSO interpretation, namely one whose 
domain formula restricts to vertices of degree~$3$, while edge-defining formula is the formula $\phi(u,v)$ above.
Finally, the class of grids can be obtained from the class of walls using an MSO transduction.

Note that a weaker version of Conjecture~\ref{conj:seese} holds, where MSO is replaced by the slightly more powerful C$_2$MSO logic~\cite{COURCELLE200791}, extending MSO by the capability of counting modulo $2$.

\medskip
A positive answer to Conjectures~\ref{conj:lcw} and~\ref{conj:cw} would resolve 
a question of Blumensath and Courcelle \cite[Open Problem 9.3]{lmcs:1208}, about the MSO-transduction hierarchy. That question asks 
whether, up to MSO-transduction equivalence, every class is equivalent to either a class of trees of depth $n$, for some $n\ge 0$, or to the class of paths, or the class of all trees, or to the class of all graphs. 
By~\cite[Theorem 4.9]{lmcs:5149}, every class of bounded shrubdepth is MSO-transduction equivalent to the class of trees of depth $n$, for some $n\ge 0$.
To answer the question of Blumensath and Courcelle,
it remains to show that:
\begin{itemize}
    \item 
     every class of unbounded shrubdepth MSO-transduces the class of all paths; this is now confirmed by Theorem~\ref{thm:sd},
    \item every class of unbounded linear cliquewidth MSO-transduces the class of all trees (this is implied by 
    Conjecture~\ref{conj:lcw}),
    \item 
    every class of unbounded cliquewidth MSO-transduces the class of all grids (this is the above variant of Seese's conjecture, and is implied by 
    Conjecture \ref{conj:cw}). 
\end{itemize}

\bigskip
We now partially confirm Conjectures \ref{conj:lcw} and~\ref{conj:cw} for a wide range of classes, including all classes with structurally bounded expansion.
We first recall the following notion.

 Say that a class $\DD$ has \emph{star chromatic number} at most $N$ if 
every $G\in \DD$ can be vertex-colored using at most $N$ colors such that every color is an independent set and every pair of colors induces a star forest in $G$. 
Note that every class with bounded expansion has bounded star chromatic number, by Lemma~\ref{lem:star-coloring}. We use the following observation,
due to Ossona de Mendez
(see e.g.~\cite[Lemma 34]{nesetril2020arboretum}, and~\cite[Lemma 2.1]{nesetril2021linrw_stable} for a related result).
\begin{lemma}\label{lem:transducing subgraphs}
Let $\DD$ be a class of graphs with
bounded star chromatic number. Then the subgraph closure of $\DD$ can be transduced from $\DD$. 
\end{lemma}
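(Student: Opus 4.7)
The plan is to exhibit a single transduction $T$ that, on input $G\in\DD$, can output an arbitrary subgraph $H$ of $G$ via a suitable choice of unary predicates. Fix the integer $N$ provided by the hypothesis: every $G\in\DD$ admits a vertex coloring $\lambda\colon V(G)\to[N]$ such that for each pair $i\ne j$ in $[N]$ the subgraph $F_{ij}$ of $G$ induced by $\lambda^{-1}(i)\cup\lambda^{-1}(j)$ is a star forest. Each edge of $G$ lies in exactly one of the forests $F_{ij}$, and inside its connected component (a single star) one endpoint is the \emph{center} and the other the \emph{leaf}; when a component has only two vertices, we designate one endpoint as the leaf arbitrarily.

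The transduction $T$ introduces the following unary predicates: $D$ (to select the vertex set of $H$), $C_1,\dots,C_N$ (to encode the coloring $\lambda$), and one predicate $P_{ij}$ for every pair $1\le i<j\le N$. The intended semantics of $P_{ij}$ is that $v\in P_{ij}$ iff $v$ is a leaf of some star of $F_{ij}$ and the unique edge joining $v$ to the center of that star is to be retained in $H$; centers of stars and vertices not incident to any edge of $F_{ij}$ are never placed in $P_{ij}$. The output edge relation is defined by the first-order formula
\[
\phi_E(x,y)\equiv D(x)\land D(y)\land E_G(x,y)\land\!\!\!\bigvee_{1\le i<j\le N}\!\!\!\Bigl(\bigl((C_i(x)\land C_j(y))\lor (C_j(x)\land C_i(y))\bigr)\land\bigl(P_{ij}(x)\lor P_{ij}(y)\bigr)\Bigr),
\]
together with domain formula $D(x)$.

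To verify correctness, given a target subgraph $H\subseteq G$ one sets $D$ to $V(H)$, the $C_i$ to the color classes of $\lambda$, and each $P_{ij}$ as prescribed above. For any edge $uv\in E(G)$ with $\{\lambda(u),\lambda(v)\}=\{i,j\}$, the edge lies in a single star of $F_{ij}$, exactly one of $u,v$ is the designated leaf of that star, and the inner disjunction $P_{ij}(u)\lor P_{ij}(v)$ evaluates to the single bit recorded at that leaf, which by construction equals $[uv\in E(H)]$. Hence $T(G)$ contains every subgraph of $G$, so $T(\DD)$ contains the subgraph closure of $\DD$. Since the signature of $T$ and the formula $\phi_E$ depend only on $N$, the transduction is uniform across the class. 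I do not anticipate any substantial obstacle; the only point requiring a little care is the arbitrary choice of leaf within two-vertex components of $F_{ij}$, but either choice suffices to record the presence or absence of the unique edge.
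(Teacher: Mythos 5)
Your proof is correct and uses essentially the same approach as the paper: fix a star coloring with $N$ colors, and for each pair of colors encode the retained edges by a unary predicate on the leaves of the corresponding star forest, reading the bit off the leaf endpoint of each edge. The paper organizes this as a transduction for a single star forest followed by composition over all color pairs, whereas you write one formula uniformly over all pairs and keep a separate domain predicate $D$; the latter is a slightly cleaner way to handle vertices of $H$ that become isolated, a point the paper's sketch glosses over.
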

\begin{proof}[Proof (sketch)]
First consider the case when $\DD$ is the class of star forests. Let $G\in \DD$ and $H\subset G$ be its subgraph. 
Introduce two unary predicates $R$ and $U$, where $R$ consists of the centers of the stars in $G$ (in a two-vertex star we choose any vertex to be the center), and $U$ consists of those vertices $v\in V(H)$ 
such that either $v\in R$, or $v\notin R$ and $vv'\in E(H)$, where $v'$ is the unique neighbor of $v$ in $G$. 
The transduction $T$ first introduces the predicates $U$ and $R$, and then restricts the domain to $U$ and creates an edge between two vertices $u$ and $v$ if and only if $u\in R$, $v\in U$, and $uv\in E(G)$, or symmetrically with $v$ and $u$ replaced. Then $H\in T(G)$.

We now consider the general case. 
Given $G\in\DD$, the transduction first colors $V(G)$ using $N$ colors so that every two colors induce a star forest. Then, for each pair $\set{c,d}\subset [N]$ of colors (in arbitrary order), apply the transduction $T$ defined above
to the subgraph of $G$ induced by those two colors, obtaining a graph $H_{cd}$.
The transduction outputs the edge-union of the resulting graphs $H_{cd}$, for all $\set{c,d}\subset [N]$.
\end{proof}

\begin{theorem}\label{thm:conjectures-hold}
    Conjectures \ref{conj:lcw}, \ref{conj:cw}, and \ref{conj:seese} hold for every class 
  which is transduction equivalent with a class of bounded star chromatic number.
 In particular, they hold for all classes with structurally bounded expansion.
\end{theorem}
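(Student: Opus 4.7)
The plan is to reduce each of the three conjectures to its sparse counterpart by combining the transduction equivalence hypothesis with classical excluded-minor theorems. First I would fix a class $\CC'$ of bounded star chromatic number with $\CC \foeq \CC'$. The first observation is that bounded star chromatic number implies bounded degeneracy: in a coloring witnessing star chromatic number $N$, if one orders vertices by color, each vertex has at most $N-1$ earlier neighbors (one per pair of colors). Consequently $\CC'$ is weakly sparse.

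For Conjecture~\ref{conj:lcw}, I would assume $\CC$ has unbounded linear cliquewidth. Since linear cliquewidth is preserved under FO transductions, $\CC'$ also has unbounded linear cliquewidth. As $\CC'$ is weakly sparse, the standard sparse-dense correspondence equates bounded linear cliquewidth on $\CC'$ with bounded pathwidth, so $\CC'$ has unbounded pathwidth. Next I would invoke the Robertson-Seymour excluded-forest theorem: a class has bounded pathwidth iff it excludes some fixed tree as a minor. Since every tree is a minor of a sufficiently large binary tree, unbounded pathwidth of $\CC'$ implies that every binary tree is a minor of some graph in $\CC'$. Because binary trees have maximum degree $3$, minor containment coincides with topological-minor containment there, so $\CC'$ contains a subdivision of every binary tree as a subgraph. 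Applying Lemma~\ref{lem:transducing subgraphs}, the subgraph closure of $\CC'$ is transducible from $\CC'$, and composing with the transduction witnessing $\CC' \fole \CC$ produces a class $\DD$ transducible from $\CC$ that contains a subdivision of every binary tree.

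For Conjecture~\ref{conj:cw} I would run the symmetric argument with three replacements: linear cliquewidth becomes cliquewidth, pathwidth becomes treewidth, and binary tree becomes wall. The excluded-grid (equivalently, excluded-wall) theorem gives that a class has bounded treewidth iff it excludes some fixed wall as a minor; since walls also have maximum degree $3$, the same passage through topological minors yields that $\CC'$ contains a subdivision of every wall as a subgraph, and Lemma~\ref{lem:transducing subgraphs} plus composition finishes the argument. Conjecture~\ref{conj:seese} then follows from Conjecture~\ref{conj:cw} exactly as described in the discussion preceding the theorem: the MSO formula expressing ``connected by an internally degree-$2$ path'', combined with the restriction to vertices of degree $3$, recovers a wall from any of its subdivisions, and the class of grids is in turn MSO-interpretable in the class of walls; composition gives an MSO transduction from $\CC$ producing all grids.

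For the ``in particular'' clause, Theorem~\ref{thm:SBE} guarantees that any class of structurally bounded expansion is transduction equivalent to a class of bounded expansion, and such classes have bounded star chromatic number by Lemma~\ref{lem:star-coloring} applied to an ordering realizing bounded $\wcol_2$. The main technical ingredients sit in the sparse-dense correspondences of the second step (weakly sparse + bounded cliquewidth $\Leftrightarrow$ bounded treewidth, and its linear analogue); once these are in hand, no step presents a genuine obstacle, and the remainder is careful composition of transductions with classical minor-theoretic facts.
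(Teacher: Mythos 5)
Your proof follows essentially the same route as the paper's: pass to the transduction-equivalent class $\CC'$ of bounded star chromatic number, use weak sparsity of $\CC'$ to convert unbounded (linear) cliquewidth into unbounded pathwidth/treewidth, invoke the excluded-forest theorem (resp. the grid minor theorem) to get subdivisions of binary trees (resp. walls) as subgraphs, then apply Lemma~\ref{lem:transducing subgraphs} and compose with $\CC' \fole \CC$. The reduction of Conjecture~\ref{conj:seese} to Conjecture~\ref{conj:cw} and the handling of the ``in particular'' clause via Theorem~\ref{thm:SBE} and Lemma~\ref{lem:star-coloring} also match. So this is the intended argument, worked out in a bit more detail than the paper gives.

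One small inaccuracy: your justification that bounded star chromatic number implies bounded degeneracy is not quite right. You claim that ordering by color gives each vertex at most one earlier neighbor per pair of colors, hence at most $N-1$ earlier neighbors. But a vertex of color $d$ can be the \emph{center} of a star in the star forest $H_{cd}$ induced by colors $c<d$, in which case it may have many color-$c$ neighbors all ordered before it. The correct argument is by edge counting: the $\binom{N}{2}$ star forests partition $E(G)$, each has at most $|V(G)|-1$ edges, and since every edge lies in exactly one pair-subgraph and each vertex participates in at most $N-1$ pairs, one gets $|E(G)| \le (N-1)|V(G)|$; since this bound is hereditary under subgraphs, the class is $2(N-1)$-degenerate, hence weakly sparse. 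The conclusion you need (weak sparsity of $\CC'$) is therefore correct, so this does not affect the validity of the overall proof, but the local reasoning should be replaced.
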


\begin{proof}
    The second part of the theorem follows from the first one by Theorem~\ref{thm:SBE} and Lemma~\ref{lem:star-coloring}.
    We prove the first part. The argument below is for the case Conjecture~\ref{conj:cw} (which implies Conjecture~\ref{conj:seese}), while the argument for Conjecture \ref{conj:lcw} is analogous.

    Suppose  $\CC$ is transduction equivalent to a class $\CC'$ which has bounded star chromatic number.  
    If $\CC$ has unbounded cliquewidth then also $\CC'$ has unbounded cliquewidth, and hence has unbounded treewidth. 
    By the grid minor theorem \cite{robertson86gridminor}, graphs from $\CC'$ contain arbitrarily large grids as minors,
    and therefore also subdivisions of all walls as subgraphs.
    Hence, the subgraph closure of $\CC'$ contains a subdivision of every wall.
    By Lemma~\ref{lem:transducing subgraphs}, the subgraph closure of $\CC'$ can be transduced from $\CC'$, and hence also from $\CC$. This proves that $\CC$ transduces a class which contains some subdivision of every wall.

In the case of Conjecture~\ref{conj:lcw}, the argument is the same, but instead of the grid minor theorem we use the fact that every  class with unbounded pathwidth contains every forest as a minor~\cite{ROBERTSON198339}, and hence some subdivision of every binary tree as a subgraph. 
\end{proof}

\begin{example}\label{ex:power graphs}
    We apply Theorem~\ref{thm:conjectures-hold} to a class of unbounded cliquewidth considered in~\cite{10.1007/978-3-662-53174-7_25} (see also \cite[Section 6]{dawar2020mso}) and defined as follows.
    A \emph{power graph} $G_n$ of order $n$ is the graph with vertices $1,\ldots,n$,
    where $i$ is adjacent with $j$ if and only if $|i-j|=1$ or 
    $i$ and $j$ are divisible exactly by the same powers of $2$.

    Let $\CC$ be the hereditary closure of the class of all power graphs.
Then $\CC$ has unbounded cliquewidth~\cite{10.1007/978-3-662-53174-7_25}
and transduces the class of all grids by an MSO transduction~\cite[Theorem 19]{dawar2020mso}.
We show that $\CC$ is (FO) transduction equivalent with a class $\CC'$ of bounded star chromatic number, and thus transduces a class containing some subdivision of every wall.

For each $n$ consider the graph $G_n'$ with vertices $1,\ldots,n,c_0,\ldots,c_{m}$
where $m=\lfloor\log_2n\rfloor$,
and where the vertices $1,\ldots,n$ form a path 
(in that order) and $c_i$ is adjacent with all vertices $k$ such that $1\le k\le n$
and $k=j\cdot 2^i$ for some odd integer $j$.
Thus, $c_i$ represents the equivalence class of those numbers $k\in \set{1,\ldots,n}$ such that $2^i$  is the highest power of $2$ by which $k$ is divisible.
Note that $G_n'$ has a star coloring with $4$ colors, in which the vertices $1\le i\le n$ are colored with color $(i \textrm{ mod } 3)$, while the vertices $c_1,\ldots,c_m$ are colored with color $3$. 

Let $\CC'$ be the hereditary closure of the class of all graphs of the form $G_n'$. It is easy to see that $\CC$ and $\CC'$ are transduction equivalent 
(the transduction from $\CC$ to $\CC'$ uses copying). Hence the class $\CC$ of power graphs is transduction equivalent with the class $\CC'$ which has star chromatic number bounded by $4$.
(In fact, $\CC'$ has bounded expansion and therefore $\CC$ has structurally bounded expansion).
As $\CC$ has unbounded cliquewidth, by Theorem~\ref{thm:conjectures-hold},
$\CC$ transduces a class which contains some subdivision of every wall.
\end{example}

In a similar fashion, it is easy to verify that each of the remaining classes studied by Dawar and Sankaran 
\cite{dawar2020mso} of unbounded clique-width -- bichain graphs, split permutation graphs, bipartite permutation graphs, and unit interval graphs --
transduces some class of unbounded cliquewidth that is transduction equivalent to a class of bounded star-chromatic number.
Thus, each of those classes transduces some class that contains some subdivision of every wall, by Theorem~\ref{thm:conjectures-hold}.
In particular, those classes MSO-transduce the class of all graphs, 
reproving \cite[Theorems 13 and 19]{dawar2020mso}.
Thus, the approach to 
Seese's conjecture proposed by Dawar and Sankanaran~\cite{dawar2020mso}  
--- to consider \emph{minimal} hereditary classes of unbounded cliquewidth, and antichains of unbounded cliquewidth with respect to the induced subgraph relation --- 
may be also applied to Conjecture~\ref{conj:cw}.
Reassuming, a possible route to proving Seese's conjecture is to prove (the much stronger) Conjecture~\ref{conj:cw}.

\medskip
We now show that the statements~\eqref{st:2}, \eqref{st:3}
imply Conjectures \ref{conj:lcw} and \ref{conj:cw}, respectively. We use the following result of 
Dvo\v r\'ak \cite[Theorem 3]{DBLP:journals/ejc/Dvorak18}.
We thank Patrice Ossona de Mendez for suggesting to us this result in connection with the considered implication.
\begin{theorem}[\cite{DBLP:journals/ejc/Dvorak18}]\label{thm:dvorak}
    Let $\CC$ be a weakly sparse class of graphs and suppose there is a graph $H$ such that no subdivision of $H$ is an induced subgraph of a graph in $\CC$. Then $\CC$ has bounded expansion.
\end{theorem}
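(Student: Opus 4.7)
The plan is to use a well-known characterization of bounded expansion through shallow topological minors (see~\cite{sparsity}): a class $\CC$ has bounded expansion if and only if for every $r \in \N$ there exists a constant $C_r$ such that every graph obtained as an $r$-shallow topological minor of a member of $\CC$ has average degree at most $C_r$. Assuming towards a contradiction that $\CC$ does not have bounded expansion, I would fix such an $r$ together with a sequence $G_n \in \CC$ so that each $G_n$ admits some graph $M_n$ as an $r$-shallow topological minor with average degree tending to infinity. By standard density arguments, for every $N$ one can then find some $G \in \CC$ that contains, as a subgraph (not necessarily induced), a $({\le}2r{+}1)$-subdivision of $K_N$, realized by a collection of $N$ branch vertices connected pairwise via internally vertex-disjoint paths of length at most $2r+1$.

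Having such a subdivided $K_N$ as a subgraph, a standard pigeonhole argument would let me first clean it up so that all subdivision paths have the same length $\ell \le 2r+1$; this costs replacing $N$ by $N^{1/\binom{\ell}{2}}$ or so, but is harmless if $N$ was chosen sufficiently large in terms of $r$ and $|V(H)|$. From there, the target is to extract an \emph{induced} subdivision of $H$ inside $G$, which would contradict our assumption on $\CC$.

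The main obstacle, and the step where weak sparsity crucially enters, is the promotion of this ``bare subgraph'' subdivision of $K_N$ to an induced subdivision of $H$. The idea is to bound the number of \emph{chords} --- edges of $G$ whose endpoints lie in the chosen subdivision but are not themselves consecutive on it. Chords come in three flavours: between two different subdivision paths, between a subdivision path and a branch vertex not incident to it, and within one subdivision path. Being $K_{s,s}$-subgraph-free, by the Kővári-Sós-Turán theorem the bipartite subgraph between any two path interiors or between a path interior and the set of branch vertices contains only $O((|V|)^{2 - 1/s})$ edges. Summing this over all pairs yields a polynomial bound in $N$ on the total chord count, which is negligible compared with the $\binom{N}{2}$ pairs of paths we have to play with. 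A Ramsey/greedy extraction then selects a sub-family of $|V(H)|$ branch vertices and, inside each retained path, a sub-path, so that no chords remain inside the resulting structure; this yields an induced subdivision of $K_{|V(H)|}$ in $G$, and in particular an induced subdivision of $H$, contradicting the hypothesis on $\CC$.

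The delicate part is calibrating the parameters: we must pick $N$ large enough that after (i) regularizing path lengths, (ii) paying the Kővári-Sós-Turán toll between every pair of paths, and (iii) paying a further toll to kill intra-path shortcuts, there still remain $|V(H)|$ branch vertices spanning a chord-free configuration. This is a routine but careful counting argument; it closely follows Dvo\v{r}\'ak's proof in \cite{DBLP:journals/ejc/Dvorak18}, and the only non-mechanical input is the observation that induced subdivisions of $K_{|V(H)|}$ automatically contain induced subdivisions of $H$.
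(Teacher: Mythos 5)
The paper does not prove this statement; it is Theorem~3 of Dvo\v r\'ak's paper \cite{DBLP:journals/ejc/Dvorak18}, cited as a black box. So I am comparing your sketch against Dvo\v r\'ak's original argument rather than against anything in the present paper.

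There is a genuine gap at the very first step. From unbounded expansion you correctly get, for some fixed $r$, graphs $G\in\CC$ with $r$-shallow topological minors $M$ of unbounded average degree. But from this you \emph{cannot} conclude that some $G\in\CC$ contains a $({\le}2r{+}1)$-subdivision of $K_N$ as a subgraph. What you get is a $({\le}2r{+}1)$-subdivision of the dense graph $M$; to extract $K_N$ from $M$ you must pass through a topological-minor theorem (Bollob\'as--Thomason or Koml\'os--Szemer\'edi), and those give no control on the lengths of the subdivision paths inside $M$. Composing the two subdivisions destroys the length bound, so the ``bounded-depth $K_N$-subdivision'' you want to stand on does not exist in general. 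The natural attempted repair---observe that the subdivided copy $G'\subseteq G$ of $M$ is itself $K_{s,s}$-free, so apply K\"uhn--Osthus to $G'$ (or to $G[V(G')]$) to get an induced subdivision of $K_k$---also fails, because $G'$ does \emph{not} inherit high average degree from $M$: a $({\le}2r{+}1)$-subdivision of an $n$-vertex, $m$-edge graph has roughly $2rm$ extra degree-$2$ vertices, and its average degree tends to a constant depending only on $r$, not on $\operatorname{avgdeg}(M)$. This density dilution is precisely the reason Dvo\v r\'ak's theorem is strictly harder than the K\"uhn--Osthus theorem (dense and $K_{s,s}$-free implies induced topological $K_k$); your sketch essentially tries to reduce the former to the latter without addressing the dilution, and that reduction is the heart of Dvo\v r\'ak's actual argument.

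A secondary issue, less central but worth flagging: the K\H{o}v\'ari--S\'os--Tur\'an count bounds the \emph{total} number of chords, but what you need is a large collection of branch vertices whose $\binom{h}{2}$ connecting paths are \emph{pairwise} chord-free (together with no branch-to-path and no intra-path chords). Passing from a global edge bound to such an ``all-or-nothing'' structured selection is a Ramsey-type extraction, and it is exactly where the delicacy of K\"uhn--Osthus lives; the phrase ``routine but careful counting'' undersells what is needed. So even granting a bounded-length $K_N$-subdivision, the cleaning step would need to be carried out with much more care than your sketch suggests.
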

We now show that the statement \eqref{st:3} implies Conjecture~\ref{conj:cw}.
For the statement \eqref{st:2} and Conjecture~\ref{conj:lcw}, the argument is analogous.

Let $\CC$ be a class with unbounded cliquewidth. According to \eqref{st:3}, there is a weakly sparse class $\DD$ with unbounded cliquewidth that can be transduced from $\CC$. If $\DD$ has bounded expansion then by Theorem~\ref{thm:conjectures-hold}, $\DD$ transduces a weakly sparse class $\DD'$ which contains some subdivision of every wall.
Otherwise, if $\DD$ has unbounded expansion, then by Theorem~\ref{thm:dvorak}, $\DD$ contains some subdivision of every graph $H$ as an induced subgraph. In particular, $\DD$ transduces a class 
$\DD'$ which contains some subdivision of every graph. 



\medskip

Finally, let us propose a solution 
to the crossword puzzle formed by Figure~\ref{fig:discussion},
by filling in the last column of the `bounded expansion' row
with the property `dense analogue of bounded expansion', 
which has now been formally defined. The third column is then the stable counterpart of that property, and we conjecture that this
coincides with structurally bounded expansion. This can be phrased as the following duality statement.

\begin{conjecture}\label{dense-sparse be}
    Exactly one of the following conditions holds for every stable class $\CC$ of graphs:
    \begin{enumerate}
        \item $\CC$ has structurally bounded expansion,
        \item $\CC$ transduces some weakly sparse class that does not have bounded expansion.
    \end{enumerate}
\end{conjecture}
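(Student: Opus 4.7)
The plan is to separate the two implications: the exclusivity of conditions~(1) and~(2) is the easier direction, while the dichotomy itself is conceptually harder. For exclusivity, suppose $\CC$ has structurally bounded expansion and let $\mathcal E$ be weakly sparse with $\mathcal E \fole \CC$. By Corollary~\ref{cor:transductions-compose}, $\mathcal E$ also has structurally bounded expansion, and a weakly sparse class of structurally bounded expansion is known to have bounded expansion --- this is a standard consequence of the theory developed in~\cite{gajarsky2020sbe}, and follows by combining Theorem~\ref{thm:SBE} with the fact that weakly sparse transductions of bounded expansion classes themselves have bounded expansion. Hence~(1) precludes~(2).

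For the dichotomy itself, my strategy would be to reduce the problem to a strengthening of Conjecture~\ref{conj:sparsification}: that every monadically stable class is transduction-\emph{equivalent} to a nowhere dense class, not merely transducible from one. Granting such a strengthening, given a stable $\CC$ pick a nowhere dense $\DD$ with $\CC \foeq \DD$. Then $\DD$ is weakly sparse (nowhere dense classes are weakly sparse) and $\DD \fole \CC$. If $\DD$ has bounded expansion then $\CC \fole \DD$ places $\CC$ in structurally bounded expansion, yielding~(1); otherwise $\DD$ itself is a weakly sparse transduction of $\CC$ of unbounded expansion, yielding~(2). This strategy succeeds unconditionally in the bounded twin-width fragment, because Theorem~\ref{thm:main} precisely delivers the transduction-equivalence form of sparsification in that setting: the forward direction is our main construction, and the backward direction combines Lemma~\ref{lem:main-transduction} with Theorem~\ref{thm:SBE}.

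The hardest step will be establishing this strengthened sparsification for classes of unbounded twin-width. All the machinery in Section~\ref{sec:main-lemma} rests on the uncontraction sequence of bounded width, which underpins the freezing mechanism, the ordering $\preceq$, and ultimately the partition $\mathcal F$ with its sparse quotient. Without bounded twin-width this structural backbone is absent, and one would need to substitute it with some decomposition of arbitrary monadically stable classes. Model-theoretic candidates exist --- trees of definable types, Shelah's forking calculus, indiscernible sequences --- but the decisive step would be to extract from such a decomposition a sparse skeleton of bounded star chromatic number together with a bounded-size cover by subgraphs of strictly smaller ladder index, mirroring the conclusion of our Lemma~\ref{lem:main}. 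This, in my view, is where genuinely new ideas are needed, and it is what makes Conjecture~\ref{dense-sparse be} a natural step beyond our current techniques.
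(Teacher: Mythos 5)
This statement is Conjecture~\ref{dense-sparse be} in the paper's discussion section, not a theorem: the paper offers no proof of it, so there is nothing to compare your argument against. What the paper does establish is a special case, namely that monadically stable classes of bounded twin-width fall into case~(1) (Theorem~\ref{thm:main} combined with Theorem~\ref{thm:sparse_tww_be} and Theorem~\ref{thm:SBE}), which you correctly note.

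Your proposal is a conditional reduction, not a proof, and you are candid about this. Concretely, you reduce the dichotomy to a strengthening of Conjecture~\ref{conj:sparsification} asserting transduction-\emph{equivalence} with a nowhere dense class (and, implicitly, that when that nowhere dense class has bounded expansion one lands in case~(1), and otherwise in case~(2)); that strengthening is itself open, so the argument does not close the conjecture. Two smaller remarks. First, the exclusivity direction, as you present it, leans on the assertion that a weakly sparse class that can be transduced from a bounded-expansion class itself has bounded expansion; this is indeed the key sparsification fact the paper implicitly relies on for the `bounded expansion' row of Figure~\ref{fig:discussion}, but it deserves an explicit citation rather than being folded into ``the theory developed in \cite{gajarsky2020sbe}.'' Second, even granting transduction-equivalence with a nowhere dense $\DD$, your dichotomy step requires that if $\DD$ has unbounded expansion then $\DD$ witnesses case~(2) --- which is immediate since $\DD\fole\CC$ and $\DD$ is weakly sparse --- so that half of the reduction is sound; the genuine gap is, as you identify, that there is currently no substitute for the uncontraction-sequence machinery of Section~\ref{sec:main-lemma} outside the bounded twin-width regime, and no known way to produce a sparse skeleton of bounded star chromatic number from an arbitrary monadically stable class. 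Your assessment of where the difficulty lies matches the paper's own framing of the problem.
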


\bibliographystyle{alpha}
\bibliography{ref}

\end{document}